\numberwithin{equation}{section}
\newtheorem{theorem}{Theorem}[section]
\newtheorem{corollary}[theorem]{Corollary}
\newtheorem{proposition}[theorem]{Proposition}
\newtheorem{lemma}[theorem]{Lemma}
\theoremstyle{definition}
\newtheorem{definition}[theorem]{Definition}
\theoremstyle{remark}
\newtheorem{remark}[theorem]{Remark}
\newcommand{\R}{\mathbb{R}}
\newcommand{\C}{\mathbb{C}}
\newcommand{\Z}{\mathbb{Z}}
\newcommand{\N}{\mathbb{N}}
\newcommand{\B}{\mathcal{B}}
\newcommand{\brill}{{\mathcal{B}}}
\newcommand{\bk}{{\bf k}}
\newcommand{\bfm}{{\bf m}}
\newcommand{\bn}{{\bf n}}
\newcommand{\bK}{{\bf K}}
\newcommand{\bKp}{{\bf K'}}
\newcommand{\bv}{{\bf v}}
\newcommand{\bw}{{\bf w}}
\newcommand{\bx}{{\bf x}}
\newcommand{\bX}{{\bf X}}
\newcommand{\by}{{\bf y}}
\newcommand{\vtilde}{{\bm{\mathfrak{v}}}}
\newcommand{\ktilde}{{\bm{\mathfrak{K}}}}
\newcommand{\smallktilde}{{\mathfrak{K}}}
\newcommand{\kpar}{{k_{\parallel}}}
\newcommand{\kparpi}{{\kpar=2\pi/3}}
\newcommand{\kparv}{{\kpar=\bK\cdot\vtilde_1}}
\newcommand{\abs}[1]{\left\lvert#1\right\rvert}
\newcommand{\norm}[1]{\left\lVert#1\right\rVert}
\newcommand{\inner}[1]{\left\langle#1\right\rangle}
\newcommand{\D}{\partial}
\newcommand{\eps}{\varepsilon}
\newcommand{\nit}{\noindent}
\newcommand{\nn}{\nonumber}
\newcommand{\lamsharp}{{\lambda_{\sharp}}}
\newcommand{\thetasharp}{{\vartheta_{\sharp}}}
\newcommand{\exponent}{\nu}
\newcommand{\tsigma}{{\tilde{\sigma}}}
\newcommand{\eig}{E}
\newcommand{\supp}{\text{supp}}
\newcommand{\nofold}{no-fold }
\newcommand{\pit}{(\lambda^2+\eps)(\lambda^2+\delta^2)}
\begin{document}

\title{Edge states in honeycomb structures}

\author{C. L. Fefferman}
\address{Department of Mathematics, Princeton University, Princeton, NJ, USA}
\email{cf@math.princeton.edu}
\thanks{The first author was supported in part by NSF grant DMS-1265524.}

\author{J. P. Lee-Thorp}
\address{Department of Applied Physics and Applied Mathematics, Columbia University, New York, NY, USA}
\email{jpl2154@columbia.edu}

\author{M. I. Weinstein}
\address{Department of Applied Physics and Applied Mathematics and Department of Mathematics, Columbia University, New York, NY, USA}
\email{miw2103@columbia.edu}
\thanks{The second and third authors were supported in part by NSF grants: DMS-10-08855, DMS-1412560, DGE-1069420 and Simons Foundation Math + X Investigator grant \#376319 (MIW)}

\date{\today}

\subjclass[2010]{Primary 35J10 35B32;\\Secondary 35P  35Q41 37G40}

\keywords{Schr\"odinger equation, Dirac equation, Dirac point, Floquet-Bloch spectrum, Topological insulator,  Edge states, Domain wall, Honeycomb lattice}

\begin{abstract} 
An edge state is a time-harmonic solution of a conservative wave system, {\it e.g.} Schr\"odinger, Maxwell,  which is propagating (plane-wave-like) parallel to, and  localized transverse to, a line-defect or  ``edge". Topologically protected edge states are edge states which are stable against spatially localized (even strong) deformations of the edge.  First studied in the context of the quantum Hall effect, protected edge states have attracted huge interest  due to their role in the field of topological insulators. Theoretical understanding of topological protection has mainly come from discrete (tight-binding) models and direct numerical simulation. In this paper we consider a rich family of \underline{continuum} PDE models for which we rigorously study regimes where topologically protected edge states exist.
 
Our model is  a class of Schr\"odinger operators on $\R^2$ with a background two-dimensional honeycomb potential perturbed by an ``edge-potential''. The edge potential is a domain-wall interpolation, transverse to a prescribed ``rational'' edge,  between two distinct periodic structures. General conditions are given for the bifurcation of a branch of topologically protected edge states from {\it Dirac points} of the background honeycomb structure. The bifurcation is seeded by the zero mode of a one-dimensional effective Dirac operator. A key condition is a spectral \nofold condition for the prescribed edge.
 We then use this result to prove the existence of topologically protected edge states along zigzag edges of certain honeycomb structures.
Our results are consistent with the physics literature and appear to be the first rigorous results on the existence of topologically protected edge states for continuum 2D PDE systems describing waves in a non-trivial periodic medium.
We also show that the family of Hamiltonians we study contains cases where zigzag edge states exist, but which are not topologically protected. 
\end{abstract}

\maketitle


\section{Introduction and Outline}\label{intro}

This paper is motivated by a remarkable physical observation. When two distinct 2-dimensional materials with favorable crystalline structures are joined along an edge, there exist propagating modes, {\it e.g.} electronic or photonic, whose energy remains localized in a neighborhood of the edge without spreading into the ``bulk''. Furthermore, these modes and their properties persist in the presence of  arbitrary local, even large,  perturbations of the edge.   An understanding  of such ``protected edge states" in periodic structures
 has so far mainly  been obtained by analyzing discrete ``tight-binding" models
 and from numerical simulations.  In this paper we prove that edge states arise from the Schr\"odinger equation for a class of potentials that have many features (not all) in common with the relevant experiments.
  A central role is played by 
  a spectral ``no-fold" condition. In the case of small amplitude (low-contrast) honeycomb potentials, this reduces to a sign condition of a particular Fourier coefficient of the potential.
 A combination of numerical simulation and heuristic argument suggests
that if the ``no-fold" condition fails, then edge
states need not be topologically protected.  Let us explain these ideas in more detail.

Wave transport in periodic structures with honeycomb symmetry has been an area of intense activity  catalyzed by the study of graphene, a single atomic layer two-dimensional honeycomb structure of carbon atoms. The  remarkable electronic properties exhibited by graphene  \cites{geim2007rise, RMP-Graphene:09, Katsnelson:12, zhang2005experimental} have inspired the study  of waves in general honeycomb structures or  ``artificial graphene''   in electronic \cites{artificial-graphene} and  photonic \cites{HR:07,RH:08,Chen-etal:09,bahat2008symmetry,lu2014topological,BKMM_prl:13} contexts. 
One such property, observed in electronic and photonic systems with honeycomb symmetry is the existence of 
topologically protected {\it edge states}.   Edge states are modes which are
(i) pseudo-periodic (plane-wave-like  or propagating) parallel to a line-defect, and (ii)  localized transverse to the line-defect; see Figure \ref{fig:mode_schematic}. 
{\it Topological protection} refers to the persistence of these modes and their properties, even when the line-defect is subjected to strong local or random perturbations.  In applications, edge states are of great interest due to their potential as robust vehicles for channeling energy. 

The extensive physics literature on topologically robust edge states goes back to investigations of the quantum Hall effect; see, for example, \cites{H:82, TKNN:82, Hatsugai:93, wen1995topological} and the rigorous mathematical articles \cites{Macris-Martin-Pule:99,EG:02, EGS:05,Taarabt:14}.  
In \cites{HR:07, RH:08} a proposal for realizing {\it photonic edge states}  in periodic electromagnetic structures which exhibit the magneto-optic effect was made. In this case, the edge is realized via a domain wall across which 
the Faraday axis is reversed. 
Since the magneto-optic effect breaks time-reversal symmetry, as does the magnetic field in the Hall effect, the resulting edge states are unidirectional.

Other realizations of edges in photonic and electromagnetic systems, {\it e.g.} between periodic dielectric and conducting structures, between periodic structures and free-space, have been explored through experiment and numerical simulation; see, for example \cites{Soljacic-etal:08,Fan-etal:08,Rechtsman-etal:13a,Shvets-PTI:13,Shvets:14}.
     In the context of tight-binding models, the existence and robustness of edge states has been related to topological invariants (Chern index or  Berry / Zak phase \cites{delplace2011zak}) associated with the ``bulk'' (infinite periodic honeycomb) band-structure.
      
We are interested in exploring these phenomena in general energy-conserving wave equations in continuous media. We consider the case of the Schr\"odinger equation  on $\R^2$,  $i\D_t\psi=H\psi$, and  study the existence and robustness of edge states of time-harmonic form: $\psi=e^{-iEt}\Psi$. Our model consists of a honeycomb background potential, the ``bulk'' structure, and a perturbing ``edge-potential''.  The  edge-potential interpolates between two distinct asymptotic periodic structures, via a {\it domain wall} which varies transverse to a specified line-defect (``edge'') in the direction
 of some element of the period lattice, $\Lambda_h$. 
 In the context of honeycomb structures, the most frequently studied edges are the ``zigzag'' and ``armchair'' edges; see Figure  \ref{fig:edges}.

Our model of an edge is motivated by the domain-wall construction of \cites{HR:07, RH:08}. A difference is that 
we break spatial-inversion symmetry, while preserving time-reversal symmetry.  Hence, the edge states -- though topologically robust -- may travel in either direction along the edge.  In \cites{FLW-PNAS:14,FLW-MAMS:15} we proved that a one-dimensional variant of such edge-potentials  gives rise to topologically protected edge states in periodic structures with symmetry-induced linear band crossings, the analogue in one space dimension of Dirac points (see below).  We explore a photonic realization of such states in coupled waveguide arrays  in \cites{Thorp-etal:15}.

Our goal is to clarify the  underlying mechanisms for the existence of topologically protected edge states.  In Theorem \ref{thm-edgestate} we give general conditions for a topologically protected  bifurcation of edge states from {\it Dirac points} of the background (bulk) honeycomb structure.  
The bifurcation is seeded by the robust zero mode of a one-dimensional effective Dirac equation.
A key hypothesis  is a {\it spectral \nofold condition} for the prescribed edge, assumed to be a {\it rational edge}.
In one-dimensional continuum models \cites{FLW-MAMS:15}, this condition is a  consequence of monotonicity properties of dispersion curves.  For continuous $d$-dimensional structures, with $d\ge2$, the spectral \nofold condition may or may not hold; see Section \ref{zz-gap}. Moreover, by varying a parameter, such as the lattice scale of a periodic structure, one can continuously tune between cases where the condition holds or does not hold; see Appendix \ref{V11-section}.
In Theorem \ref{SGC!}  and Theorem \ref{Hepsdelta-edgestates} we verify the spectral \nofold condition for the zigzag edge, for a family of Hamiltonians with weak (low-contrast) potentials, and obtain the existence of {\it zigzag edge states} in this setting.

In a forthcoming article \cites{FLW-sb:16}, we study the strong binding regime (deep potentials) for a large  class of honeycomb Schr\"odinger operators. We prove 
that the two lowest energy dispersion surfaces, after a rescaling by the potential well's depth, converge uniformly to those of the celebrated Wallace (1947) 
 \cites{Wallace:47} tight-binding model of graphite. A corollary of this result is that the spectral no-fold condition, as stated in the present article, 
 is satisfied for sufficiently deep potentials (high contrast) for a very large classes of edge directions in $\Lambda_h$ (including the zigzag edge). In fact, we believe that the analysis of the present article can be extended and together with \cites{FLW-sb:16} will yield the existence of edge states
 which are localized, transverse to {\it arbitrary} edge directions $\vtilde_1\in\Lambda_h$.  This is work in progress.
  For a detailed discussion of examples and motivating numerical simulations, see \cites{FLW-2d_materials:15}.
 
The types of edge states which exist for edges generated by domain walls stand in contrast to those which exist in the case of ``hard edges'', {\it i.e.} edges defined by the tight-binding bulk Hamiltonian on one side of an edge with Dirichlet (zero) boundary condition imposed on the edge; see parenthetical remark in Figure \ref{fig:edges}. In this case, it is well-known that 
zigzag (hard) edges support edge states, while armchair (hard) edges do not support edge states; see, for example, \cites{Graf-Porta:13}. 

Finally, we believe that failure of the spectral \nofold condition implies that there are no topologically protected edge states, although there is evidence that there are meta-stable edge states, which are localized near the edge for a long time; see Section \ref{meta-stable?}.

%
%
%
\begin{figure}
\centering
\includegraphics[width=0.8\textwidth]{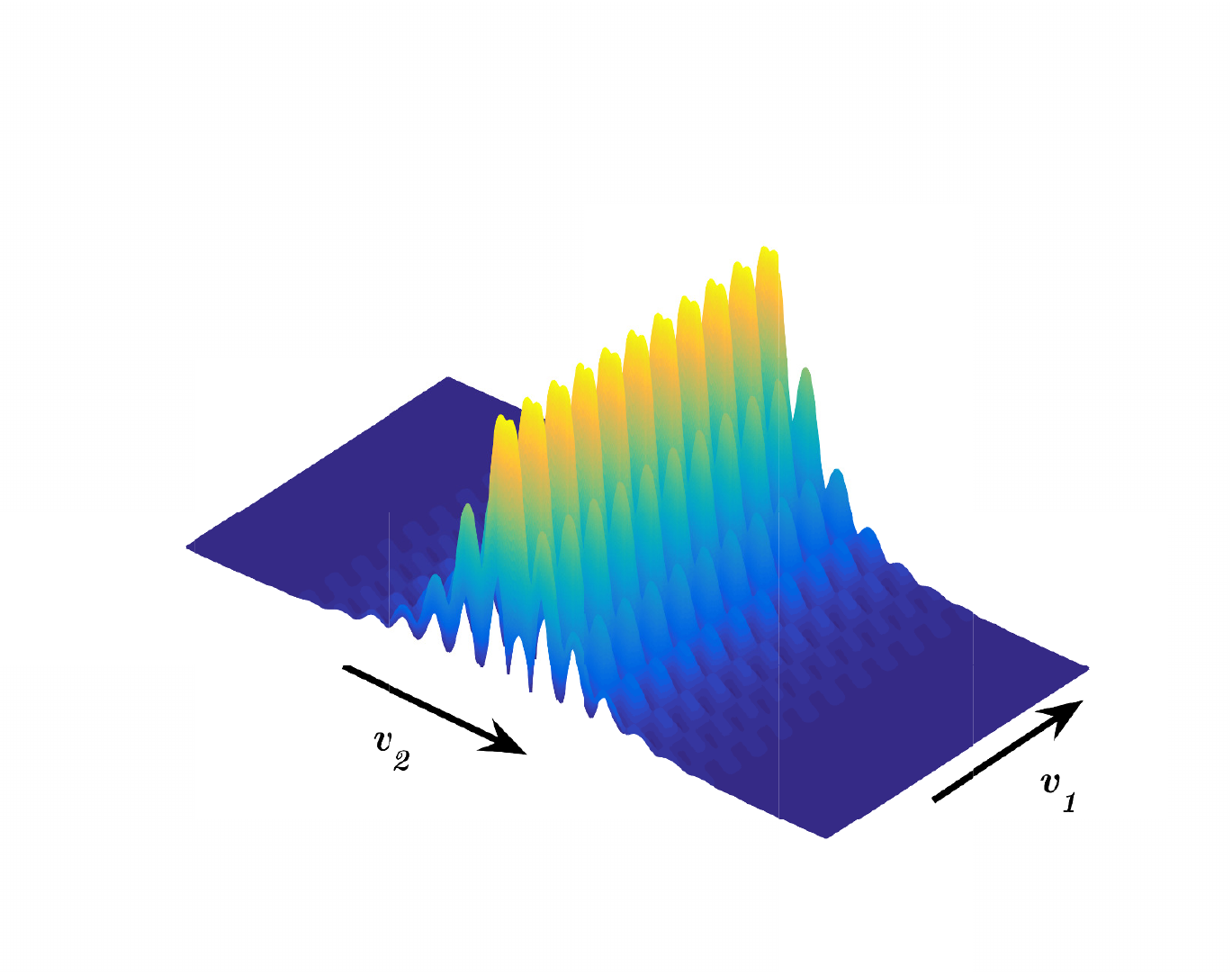}
\caption{\footnotesize
Edge state -- propagating (plane-wave like) parallel to a zigzag edge ($\R\bv_1$) and localized transverse to the edge.
\label{fig:mode_schematic}
}
\end{figure}

\begin{figure}
\centering
\includegraphics[width=0.8\textwidth]{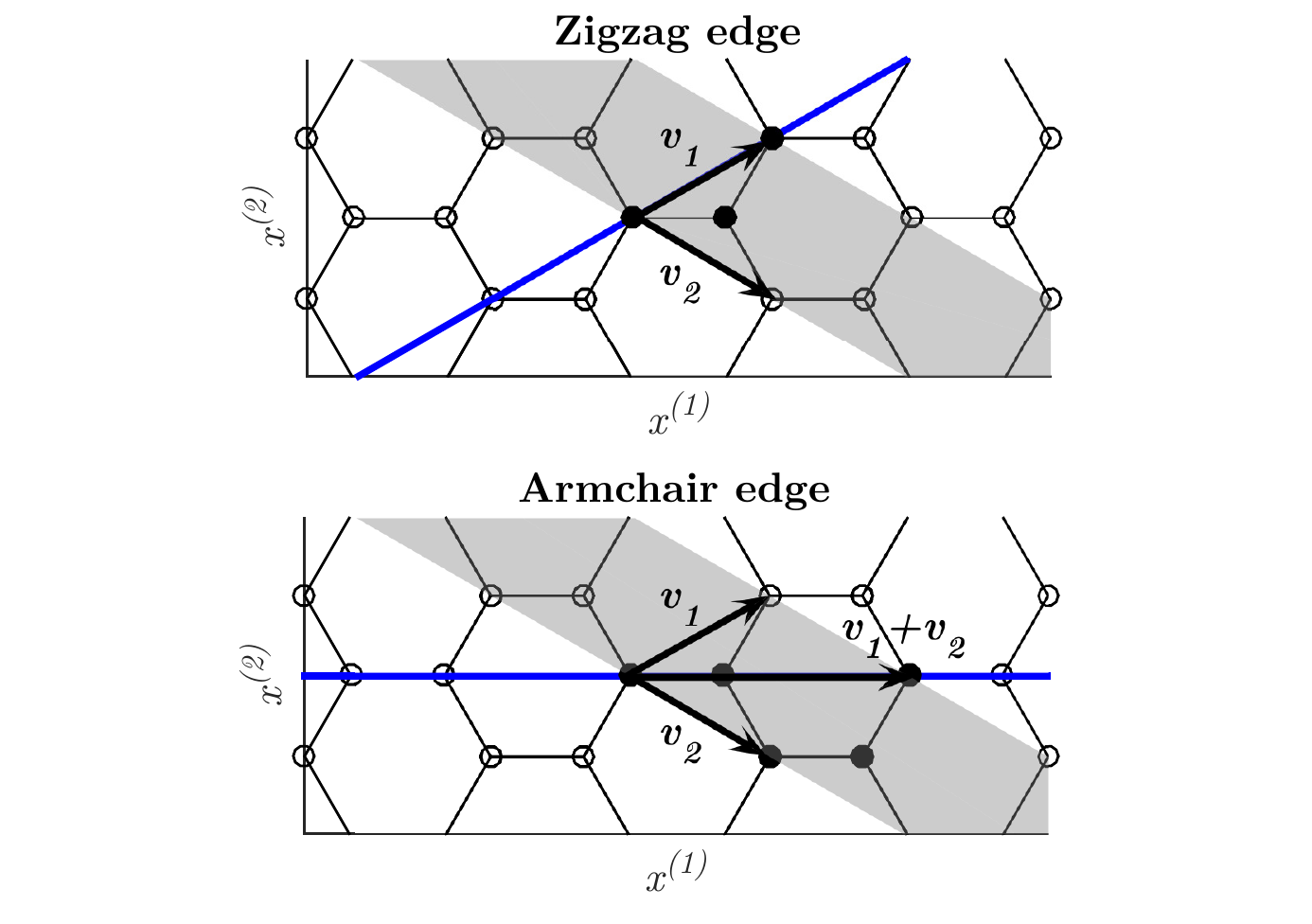}
\caption{\footnotesize
Bulk honeycomb structure, ${\bf H} =  ({\bf A } + \Lambda_h) \cup   ( {\bf B} + \Lambda_h)$.
{\bf Top panel}: Zigzag edge (blue line), $\R\bv_1 = \{\bx : \bk_2\cdot\bx=0\}$. 
Shaded region is the fundamental domain of the cylinder, $\Sigma_{ZZ}$, corresponding to the zigzag edge.
{\bf Bottom panel}: Armchair edge (blue line), $\R\left(\bv_1+\bv_2\right) = \{\bx : (\bk_1-\bk_2)\cdot\bx=0\}$. 
Fundamental domain of the cylinder, $\Sigma_{AC}$, corresponding to the armchair edge, also indicated.
(Darkened vertices are sites at which zero-boundary conditions are imposed in tight-binding models of
 ``hard'' edges.)
\label{fig:edges}
}
\end{figure}

\subsection{Detailed discussion of main results}\label{detailed-intro}

Let $\Lambda_h = \Z\bv_1\oplus \Z\bv_2$ denote the regular (equilateral) triangular lattice
and $\Lambda_h^* = \Z\bk_1\oplus \Z\bk_2$ denote the associated dual lattice, with relations  $\bk_l\cdot\bv_m=2\pi \delta_{lm},\ l,m=1,2$. The expressions for $\bk_l$ and $\bv_m$ are displayed in Section \ref{sec:honeycomb}. The honeycomb structure, ${\bf H}$, is the union of two interpenetrating triangular lattices: ${\bf A} + \Lambda_h$ and ${\bf B} + \Lambda_h$; see Figures \ref{fig:edges} and \ref{fig:lattices}.
\begin{figure}
\centering 
\includegraphics[width=0.8\textwidth]{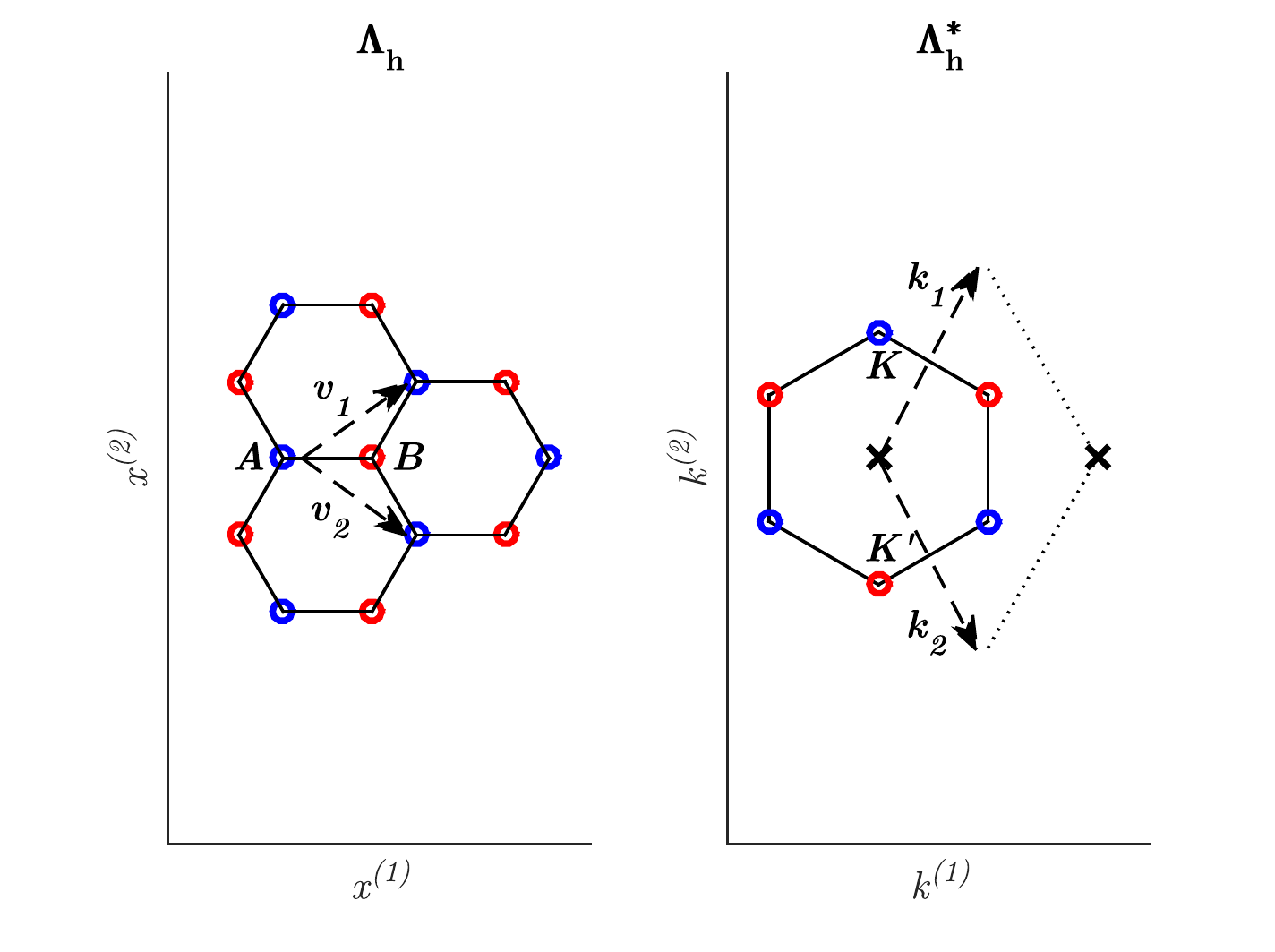}
\caption{\footnotesize
{\bf Left panel:} ${\bf A}=(0,0)$, ${\bf B}=(\frac{1}{\sqrt3},0)$. 
The honeycomb structure, ${\bf H}$  is the union of two interpenetrating sublattices:  $\Lambda_{\bf A}={\bf A}+\Lambda_h$ (blue) 
and $\Lambda_{\bf B}={\bf B}+\Lambda_h$ (red). The lattice vectors  $\{\bv_1,\bv_2\}$ generate $\Lambda_h$. 
Colors designate sublattices; in graphene the atoms occupying $\Lambda_{\bf A}-$ and $\Lambda_{\bf B}-$ sites are identical. 
{\bf Right panel:}
Brillouin zone, $\brill_h$, and dual basis $\{\bk_1,\bk_2\}$. $\bK$ and $\bK'$ are labeled. Other vertices of $\brill_h$ obtained via application of $R$, a rotation by $2\pi/3$.
\label{fig:lattices}
}
\end{figure}

A {\it honeycomb lattice potential}, $V(\bx)$,  is a real-valued,  smooth function, which is $\Lambda_h-$ periodic and, relative to some origin of coordinates,  inversion symmetric  (even) and invariant under a $2\pi/3$ rotation; see Definition \ref{honeyV}.  A choice of period cell is $\Omega_h$, the parallelogram in $\R^2$ spanned by $\{\bv_1, \bv_2\}$.
   
We begin with the Hamiltonian for the unperturbed honeycomb structure:
\begin{align*}
H^{(0)}  &= -\Delta +V(\bx). \label{H0}
\end{align*}
The {\it band structure} of the $\Lambda_h-$ periodic Schr\"odinger operator, $H^{(0)}$, is obtained by considering the
family of eigenvalue problems, parametrized by $\bk\in\mathcal{B}_h$, the Brillouin zone: 
 $(H^{(0)}-E)\Psi=0,\ \Psi(\bx+\bv)=e^{i\bk\cdot\bv}\Psi(\bx),\ \ \bx\in\R^2,\ \bv\in\Lambda_h$.
 Equivalently,   $\psi(\bx)=e^{-i\bk\cdot\bx}\Psi(\bx)$, satisfies the periodic eigenvalue problem:
$ \left(H^{(0)}(\bk)-E(\bk)\right)\psi=0$ and $\psi(\bx+\bv)=\psi(\bx)$ for all $\bx\in\R^2$ and 
$\bv\in\Lambda_h$, where   $H^{(0)}(\bk)=-(\nabla+i\bk)^2+V(\bx)$.
 For each $\bk\in\brill_h$, the spectrum is real and consists of discrete eigenvalues  $E_b(\bk),\ b\ge1, $ where $E_j(\bk)\le E_{j+1}(\bk)$. The maps $\bk\mapsto E_b(\bk)\in\R$ are called the dispersion surfaces of $H^{(0)}$. The collection of these surfaces constitutes the {\it band structure} of $H^{(0)}$. As $\bk$ varies over $\mathcal{B}_h$, each map $\bk\to E_b(\bk)$ is Lipschitz continuous and sweeps out a  closed interval in $\R$. The union of these intervals is the $L^2(\R^2)-$ spectrum of $H^{(0)}$.
A more detailed discussion is presented in Section \ref{honeycomb_basics}.
 
A central role is played by the {\it Dirac points}  of $H^{(0)}$. 
These are quasi-momentum / energy pairs, $(\bK_\star,E_\star)$,  in the band structure of $H^{(0)}$ at which neighboring dispersion surfaces touch conically at a point \cites{RMP-Graphene:09,Katsnelson:12,FW:12}. The existence of Dirac points, located at the six vertices of the Brillouin zone, $\mathcal{B}_h$ (regular hexagonal dual period cell) for generic honeycomb structures was proved in \cites{FW:12,FLW-MAMS:15}; see also \cites{Grushin:09,berkolaiko-comech:15}.
The quasi-momenta of Dirac points partition into two equivalence classes; the $\bK-$ points consisting of $\bK, R\bK$ and $R^2\bK$, where $R$ is a rotation by $2\pi/3$ and  $\bK'-$ points  consisting of $\bK'=-\bK, R\bK'$ and $R^2\bK'$. 
The time evolution of a  wavepacket, with data spectrally localized near a Dirac point, is governed by a massless two-dimensional Dirac system \cites{FW:14}.  
  
\begin{figure}
\centering
\includegraphics[width=0.8\textwidth]{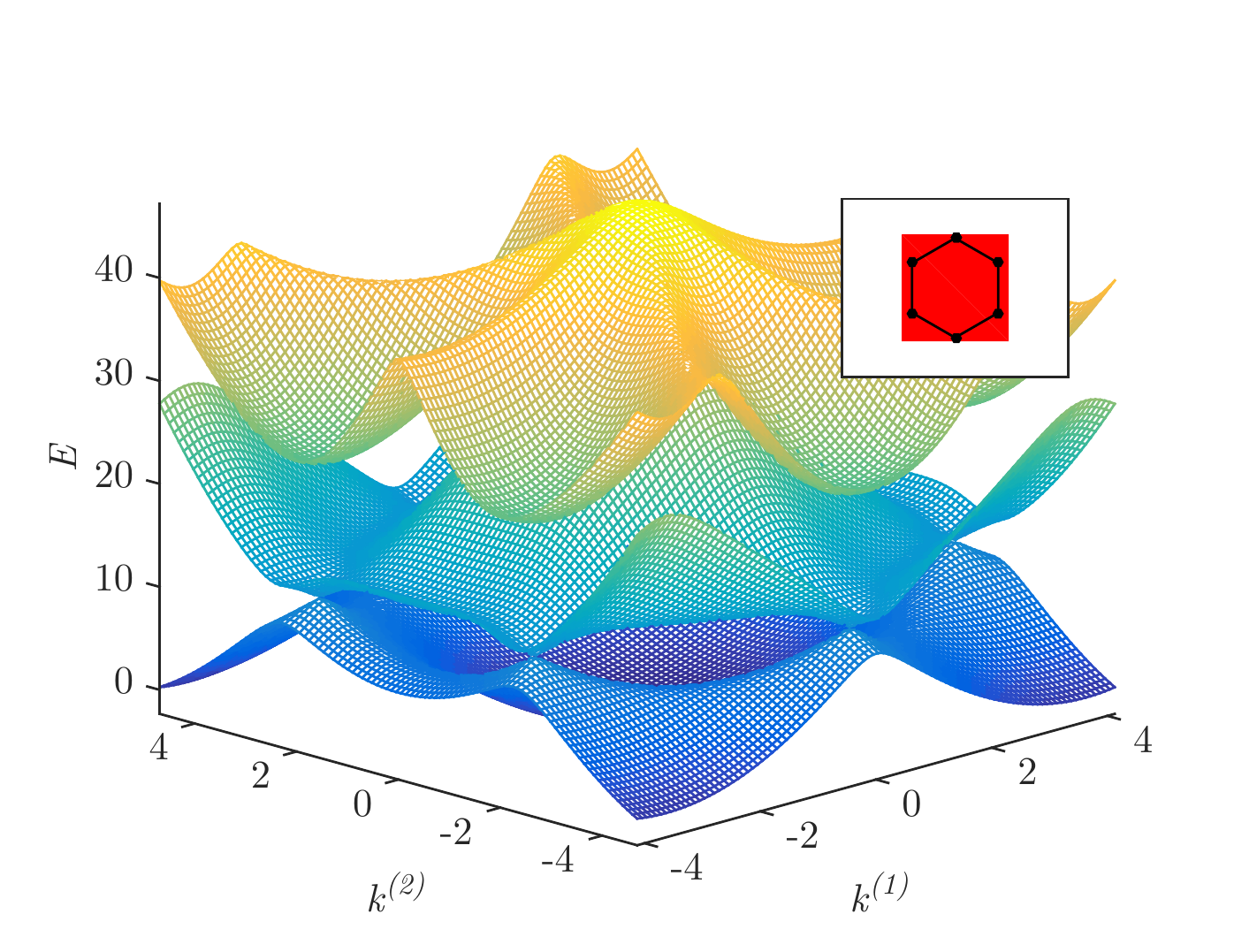}
 \caption{\footnotesize
Lowest three dispersion surfaces $\bk\equiv(k^{(1)},k^{(2)})\in\mathcal{B}_h\mapsto E(\bk)$ of the band structure of  $H^{(0)}\equiv -\Delta + V(\bx)$, where 
$V$ is the honeycomb potential: $V(\bx) = 10 \left(\cos(\bk_1 \cdot\bx)+\cos(\bk_2 \cdot\bx)+\cos((\bk_1+\bk_2)\cdot\bx)\right)$.
Dirac points occur at the intersection of the lower two dispersion surfaces, at the six vertices of the Brillouin zone, $\mathcal{B}_h$. 
\label{fig:E_mesh_3}
}
\end{figure}
  
Figure \ref{fig:E_mesh_3} displays the first three dispersion surfaces of $H^{(0)}$ for a honeycomb potential. The lowest two of these surfaces touch conically at the six vertices of $\mathcal{B}_h$ (inset). Associated with the Dirac point $(\bK_\star,E_\star)$
is a two-dimensional eigenspace of $\bK_\star-$ pseudo-periodic states,\ ${\rm span}\{\Phi_1,\Phi_2\}$:  
\[ H^{(0)}\Phi_j(\bx)=E_\star \Phi_j(\bx),\ \bx\in\R^2,\ \  j=1,2\ ,\ \textrm{ where}\ \  \Phi_j(\bx+\bv)=e^{i\bK_\star\cdot\bv}\Phi_j(\bx),\ \ \bv\in\Lambda_h ;\]
see Definition \ref{dirac-pt-defn}.
It is also shown in \cites{FW:12} that a $\Lambda_h-$ periodic perturbation of $V(\bx)$, which breaks inversion or time-reversal
symmetry lifts the eigenvalue degeneracy; a (local) gap is opened about the Dirac points and the perturbed dispersion surfaces are locally smooth. The perturbation of $H^{(0)}$ by an edge potential (see \eqref{schro-domain}) takes advantage of this instability of Dirac points with symmetry breaking perturbations.
 
To construct our  Hamiltonian, perturbed by an edge-potential, we first choose a vector $\vtilde_1\in\Lambda_h$, the period lattice,  and consider the  line 
$\R\vtilde_1$, the ``edge''. Choose $\vtilde_2$ such that  $\Lambda_h=\Z\vtilde_1\oplus\Z\vtilde_2$. Also introduce dual basis  vectors, $\ktilde_1$ and $ \ktilde_2$,  satisfying  $\ktilde_l\cdot\vtilde_m=2\pi\delta_{lm},\ l,m=1,2$; see Section \ref{ds-slices} for a detailed discussion. 
The choice $\vtilde_1=\bv_1$ (or equivalently $\bv_2$) is  a  {\it zigzag edge} and the choice $\vtilde_1=\bv_1+\bv_2$ is an {\it armchair edge}; see Figure \ref{fig:edges}.
Introduce the perturbed Hamiltonian:
\begin{equation}
H^{(\delta)} \equiv -\Delta + V(\bx) + \delta\kappa(\delta\ktilde_2\cdot \bx)W(\bx) \ =\ H^{(0)} + \delta\kappa(\delta\ktilde_2\cdot \bx)W(\bx) .
\label{schro-domain}
\end{equation}
Here, $\delta$ is real and will be taken to be sufficiently small, and $W(\bx)$ is $\Lambda_h-$ periodic
and odd. The function $\kappa$, defines a {\it domain wall}. We choose $\kappa$ to be sufficiently smooth and to satisfy $\kappa(0)=0$ and    $\kappa(\zeta)\to\pm\kappa_\infty\ne0$
as $\zeta\to\pm\infty$. Without loss of generality, we assume $\kappa_\infty>0$, {\it e.g.} $\kappa(\zeta)=\tanh(\zeta)$. We refer to the line $\R\vtilde_1$ as a $\vtilde_1-$ edge. 

Note that $H^{(\delta)}$ is invariant under translations  parallel to the  $\vtilde_1-$ edge, $\bx\mapsto\bx+\vtilde_1$, and hence there is a well-defined {\it parallel quasi-momentum}, denoted $\kpar$.  Furthermore, 
$H^{(\delta)}$ transitions adiabatically between the  asymptotic Hamiltonian $H_-^{(\delta)}=H^{(0)}\ -\ \delta\kappa_\infty W(\bx)$ as $\ktilde_2\cdot\bx\to-\infty$ to the asymptotic Hamiltonian $H_+^{(\delta)}=H^{(0)}\ +\ \delta\kappa_\infty W(\bx)$
as $\ktilde_2\cdot\bx\to\infty$.  In the case where $\kappa$ changes sign once across $\zeta=0$, the  domain wall modulation of $W(\bx)$ realizes a phase-defect
 across the edge (line-defect) $\R\vtilde_1$. A variant of this construction was used in \cites{FLW-MAMS:15} to insert a phase defect between asymptotic dimer periodic potentials. 
 \bigskip
 \bigskip
 
Suppose $H^{(0)}$ has a Dirac point at $(\bK_\star,E_\star)$. It is important to note that while $H^{(0)}$ is inversion symmetric,  $H^{(\delta)}_\pm$ is not. 
For $\delta\ne0$,  $H^{(\delta)}_\pm$ does not have Dirac points; its dispersion surfaces are locally smooth and for quasi-momenta $\bk$ such that if $|\bk-\bK_\star|$ is sufficiently small, there is an open neighborhood of $E_\star$ not contained in the  $L^2(\R^2/\Lambda_h)-$ spectrum of $H^{(\delta)}_\pm(\bk)$. This  ``spectral gap'' about $E=E_\star$ may however only be local about $\bK_\star$  \cites{FW:12}.  If there is a real open neighborhood of $E_\star$, not contained in the  spectrum of $H^{(\delta)}_\pm(\bk)=-(\nabla+i\bk)^2+V\pm\delta\kappa_\infty W$ for \emph{all} $\bk\in\B_h$, then 
$H_\pm^{(\delta)}$ is said to have a  (global) omni-directional spectral gap about $E=E_\star$.
We'll see, in our discussion of the {\it spectral \nofold condition}, 
 that it is a ``directional spectral gap'' that plays a key role in the existence of edge states;
  see  Section \ref{intro-nofold} and  Definition \ref{SGC}.

Under suitable hypotheses, we shall construct  {\it $\vtilde_1-$ edge states} of $H^{(\delta)}$, which are spectrally localized near the Dirac point, $(\bK_\star,E_\star)$. 
These are non-trivial solutions $\Psi$, with energies $E\approx E_\star$,  of the $\kpar-$ eigenvalue problem:
\begin{align}
 H^{(\delta)}\Psi\ &=\ E\Psi,\label{edge-evp}\\
  \ \Psi(\bx+\vtilde_1)\ &=\ e^{i\kpar}\Psi(\bx)\  (\textrm{propagation parallel to}\ \R\vtilde_1), \label{edge-bc1}\\
|\Psi(\bx)|\ &\to\ 0,\ \ {\rm as}\ \ |\ktilde_2\cdot\bx|\to\infty\quad  (\textrm{localization transverse to}\ \R\vtilde_1), \label{edge-bc2}
\end{align}
for $\kpar\approx\bK_\star\cdot\vtilde_1$. 
To formulate the eigenvalue problem in an appropriate Hilbert space, we introduce the cylinder $\Sigma\equiv \R^2/ \Z\vtilde_1$. If $f(\bx)$ satisfies the pseudo-periodic boundary condition \eqref{edge-bc1}, then $f(\bx)e^{-i\frac{\kpar}{2\pi}\ktilde_1\cdot\bx}$ is well-defined on the cylinder $\Sigma$.  Denote by  $H^s(\Sigma),\ s\ge0$, the  Sobolev spaces of functions defined on $\Sigma$. The pseudo-periodicity and decay conditions \eqref{edge-bc1}-\eqref{edge-bc2} are encoded by requiring $ \Psi \in H^s_\kpar(\Sigma)$, for some $s\ge0$,  where
\begin{equation*}
  H^s_\kpar=H^s_\kpar(\Sigma)\ \equiv \ \left\{f : f(\bx)e^{-i\frac{\kpar}{2\pi}\ktilde_1\cdot\bx}\in H^s(\Sigma) \right\} .\label{Hs-kpar}
  \end{equation*}
Thus we formulate the EVP \eqref{edge-evp}-\eqref{edge-bc2} as:
\begin{equation}
H^{(\delta)}\Psi\ =\  E\Psi,\ \ \Psi\in H^2_{\kpar}(\Sigma).
\label{EVP}\end{equation}

\begin{remark}[Symmetry relation among $\bK-$ and $\bK'-$ points] 
Note that if $\Psi(\bx)=e^{i\bk\cdot\bx}Z(\bx)$ is a solution of the eigenvalue problem \eqref{EVP}, then $\psi_{\bK}=e^{-i(E t-\bK\cdot\bx)} Z(\bx),$
where $Z(\bx+\vtilde_1)=Z(\bx)$ and $Z(\bx)\rightarrow0$ as $|\ktilde_2\cdot\bx| \rightarrow \infty$,
is a propagating edge state of the time-dependent Schr\"odinger equation:
$ i\D_t\psi(\bx,t) = H^{(\delta)} \psi(\bx,t)$ 
with parallel quasi-momentum $\kpar=\bK\cdot\vtilde_1$.
Since the time-dependent Schr\"odinger equation has the invariance $\psi(\bx,t)\mapsto\overline{\psi(\bx,-t)}$, it follows  that
\[  \overline{\psi_\bK(\bx,-t)} = e^{-i(Et+\bK\cdot\bx)} \overline{Z(\bx)} = e^{-i(Et-\bK'\cdot\bx)} \overline{Z(\bx)} =  \psi_{\bK'}(\bx,t) . \]
Thus $\psi_{\bK'}(\bx,t)$ is a  counterpropagating edge state with parallel quasi-momentum, $k_\parallel=\bK'\cdot\vtilde_1=-\bK\cdot\vtilde_1$.
Due to these  symmetry considerations and the equivalence of $\bK-$ points: $\{\bK,R\bK,R^2\bK\}$, without loss of generality, we henceforth  restrict our attention to the Dirac point $(\bK,E_\star)$.
\end{remark}

\subsection{Summary of main results}\label{results-summary}

\subsubsection{General conditions for the existence of topologically protected edge states; Theorem \ref{thm-edgestate} and Corollary \ref{vary_k_parallel}} 

In {\bf Theorem \ref{thm-edgestate}} we formulate hypotheses  on the honeycomb potential,  $V$, domain wall function, $\kappa(\zeta)$, and asymptotic periodic structure, $W(\bx)$, which
imply  the existence of  topologically protected $\vtilde_1-$ edge states, constructed as   non-trivial  eigenpairs $\delta\mapsto (\Psi^\delta, E^\delta)$ of \eqref{EVP} with $\kpar=\bK\cdot\vtilde_1$, 
defined for all $|\delta|$ sufficiently small. This branch of non-trivial  states    bifurcates  from the trivial solution branch $E\mapsto(\Psi\equiv0,E)$ at $E=E_\star$, the energy of the Dirac point. 
Key among the hypotheses is the spectral \nofold condition, discussed below in Section \ref{intro-nofold}.
At leading order in $\delta$, the edge state, $\Psi^\delta(\bx)$, is a slow modulation of the degenerate nullspace of $H^{(0)}-E_\star$:
\begin{align}
  \Psi^\delta(\bx) &\approx \alpha_{\star,+}(\delta\ktilde_2\cdot\bx)\Phi_+(\bx) + \alpha_{\star,-}(\delta\ktilde_2\cdot\bx)\Phi_-(\bx) \ \ \text{in} \ \ H_{\kpar=\bK\cdot\vtilde_1}^2(\Sigma) , \label{multiscale-formal0} \\
  E^\delta &= E_\star + \mathcal{O}(\delta^2),\ \ 0<|\delta|\ll1,\label{multiscale-formal1}
\end{align} 
where $\Phi_+$ and $\Phi_-$ are the appropriate linear combinations of $\Phi_1$ and $\Phi_2$,
defined in \eqref{Phi_pm-def}.
 The envelope amplitude-vector, $\alpha_\star(\zeta)=(\alpha_{\star,+}(\zeta),\alpha_{\star,-}(\zeta))^T$,  is a zero-energy eigenstate, $\mathcal{D}\alpha_\star=0$,  of the one-dimensional Dirac operator (see also \eqref{multi-dirac-op}): 
 \[ \mathcal{D} \equiv -i|\lambda_\sharp||\ktilde_2|\sigma_3\D_\zeta + \vartheta_\sharp\kappa(\zeta)\sigma_1,\] 
 where the Pauli matrices $\sigma_j$ are displayed in \eqref{Pauli-sigma}.
Here $\lambda_\sharp\in\C$ (see \eqref{lambda-sharp2}) depends on the unperturbed honeycomb potential, $V$, and is non-zero for generic $V$. The constant
  $\vartheta_\sharp\equiv\left\langle \Phi_1,W\Phi_1\right\rangle_{L^2(\Omega_h)}$ is real and is also generically nonzero.  
$\mathcal{D}$ has a spatially localized zero-energy eigenstate for any $\kappa(\zeta)$ having asymptotic limits of opposite sign at $\pm\infty$. Therefore, the zero-energy eigenstate, which seeds the bifurcation, persists   for   {\it localized} perturbations of $\kappa(\zeta)$.
In this sense, the bifurcating branch of edge states is topologically protected against a class of local perturbations of the edge.

Section \ref{formal-multiscale} gives an account of a formal multiple scale expansion, to any order in the small parameter, $\delta$, of a solution to the eigenvalue problem \eqref{EVP}. The expression in  \eqref{multiscale-formal0} is the leading order term in this expansion. Our methods can be used to prove the validity of the multiple scale expansion, at any finite order.

{\bf  Corollary \ref{vary_k_parallel}} ensures, under the conditions of  Theorem \ref{thm-edgestate}, the existence of edge states, 
$\Psi(\bx;\kpar)\in H^2_{\kpar}(\Sigma)$ for all  $\kpar$ in a neighborhood of $\kpar=\bK\cdot\vtilde_1$, and by symmetry (see Remark \ref{kpar-symmetry}) for all $\kpar$ in a neighborhood of $\kpar=-\bK\cdot\vtilde_1=\bK'\cdot\vtilde_1$.
Thus,  
by taking a continuous superposition of states given by Corollary \ref{vary_k_parallel}, one obtains states that remain localized about (and dispersing along) the zigzag edge for all time.

\begin{remark}\label{key-hyp}
A key hypothesis in Theorem \ref{thm-edgestate}  is a  {\it spectral  no-fold} condition at $(\bK,E_\star)$ for the $\vtilde_1-$ edge of the band-structure of $-\Delta+V$. This (essentially) ensures  the existence of a $L^2_{\kpar=\bK\cdot\vtilde_1}(\Sigma)-$ spectral gap containing $E_\star$ for the perturbed  Hamiltonian, $H^{(\delta)}$; see Definition \ref{SGC} and the discussion in Section \ref{intro-nofold}. 
\end{remark}

\subsubsection{Theorem \ref{Hepsdelta-edgestates}; Existence of topologically protected zigzag edge states}\label{zigzag-summary}

We consider the case of zigzag edges corresponding to the choice $\vtilde_1=\bv_1$, $\vtilde_2=\bv_2$, and $\ktilde_1=\bk_1$, $\ktilde_2=\bk_2$.  Recall that  $\Lambda_h=\Z\bv_1\oplus\Z\bv_2$. The choice $\vtilde_1=\bv_2$ would lead to equivalent results.

We consider the zigzag edge state eigenvalue problem
\begin{equation}
H^{(\eps,\delta)}\Psi\ =\  E\Psi, \quad \Psi\in H^2_{\kpar}(\Sigma) \qquad \text{(see also \eqref{EVP})} ,
\label{EVP-1}\end{equation}
with Hamiltonian
\begin{equation}
H^{(\eps,\delta)} \equiv -\Delta + \eps V(\bx) + \delta\kappa(\delta\bk_2\cdot \bx)W(\bx) \ =\ H^{(\eps)} + \delta\kappa(\delta\bk_2\cdot \bx)W(\bx) .
\label{Ham-ZZ}
\end{equation}
Here, $\eps$ and $\delta$ are chosen to satisfy
\begin{equation}
0<|\delta|\lesssim \eps^2 \ll1 .
\label{small-eps-delta}\end{equation}

  There are two cases, which are delineated by the sign of the distinguished Fourier coefficient, $\eps V_{1,1}$, of the unperturbed (bulk) honeycomb potential, $\eps V(\bx)$. Here,
  \begin{equation*}
V_{1,1}\ \equiv\ 
\frac{1}{|\Omega_h|} \int_{\Omega_h} e^{-i(\bk_1+\bk_2)\cdot\by}\ V(\by)\ d\by,
\label{V11eq0-intro}
\end{equation*}
is assumed to be non-zero. We designate these cases:
\[ \textrm{\bf Case (1)}\qquad  \eps V_{1,1}>0\ \ \ \textrm{ and}\ \ \   \textrm{\bf Case (2)}\qquad \eps V_{1,1}<0.\]
 In Appendix \ref{V11-section} we give two explicit families of potentials, a superposition of ``bump-functions'' concentrated, respectively, on a triangular lattice, $\Lambda_h^{(a)}$, and a  honeycomb structure, ${\bf H}$, that  can be tuned between these two cases 
by variation of a lattice scale parameter.

Under the condition $\eps V_{1,1}>0$ (Case (1)) and \eqref{small-eps-delta}, we verify the spectral \nofold condition for the zigzag edge in {\bf Theorem \ref{SGC!}}. The existence of zigzag edge states 
({\bf Theorem \ref{Hepsdelta-edgestates}}) then follows from Theorem \ref{thm-edgestate} and Corollary \ref{vary_k_parallel}. In particular,  for all $\eps$ and $\delta$ satisfying \eqref{small-eps-delta} and for each $\kpar$ near $\bK\cdot\bv_1=2\pi/3$, 
the zigzag edge state eigenvalue problem \eqref{EVP}
has topologically protected edge states with energies sweeping out a neighborhood of $E_\star^\eps$, where $(\bK,E_\star^\eps)$ is a Dirac point. 

\begin{remark}[Directional versus omnidirectional spectral gaps]\label{BS-conj}
 While the regime of weak potentials, implied by \eqref{small-eps-delta},  would at first seem to be a simplifying assumption, we wish  to remark on a subtlety for $H^{(\eps,\delta)}_\pm = -\Delta+\eps V \pm \delta\kappa_\infty W$ ($\eps, \delta$ small), which arises precisely in this regime.     It is well-known that for sufficiently weak periodic potentials  on $\R^d, d\ge2$, that there are no spectral gaps; this is related to the  
 ``Bethe-Sommerfeld conjecture'' \cites{Bethe-Sommerfeld:33,Skriganov:79,Dahlberg-Trubowitz:82}. Nevertheless,
 if  $\eps V_{1,1}>0$, and $\eps$ and $\delta$ are related as in  \eqref{small-eps-delta},  then a {\it directional} spectral gap, {\it i.e.} an $L_\kpar^2(\Sigma)-$ spectral gap  exists; see Theorem \ref{delta-gap} and Section \ref{intro-nofold}. 
 \end{remark}

Figure \ref{fig:spectra_vary_delta}
 and  Figure \ref{fig:k_parallel3} are illustrative of Cases (1) and (2).
The  simulations were done for the Hamiltonian
 $H^{(\eps,\delta)}$ with $\eps=\pm10$ and $0\le\delta\le10$:
\begin{equation}
\label{VW-numerics}
\begin{split}
H^{(\eps,\delta)}&= -\Delta +\eps V(\bx) + \delta\kappa(\delta\bk_2\cdot\bx)W(\bx),\ \ \kappa(\zeta)=\tanh(\zeta),\\
 V(\bx)&= \sum_{j=0}^2\cos(R^j\bk_1 \cdot\bx),\ \ 
W(\bx)= \sum_{j=0}^2(-1)^{\delta_{j2}}\sin(R^j\bk_1 \cdot\bx).
\end{split}
\end{equation}
Here, $R$ is the $2\pi/3-$ rotation matrix displayed in \eqref{Rdef}.
Figure \ref{fig:spectra_vary_delta} displays, for fixed $\eps$,  the $L^2_{\kparpi}(\Sigma)-$ spectra (plotted horizontally) of $H^{(\eps,\delta)}$ corresponding to a  range of  $\delta$ values (strength / scale of domain wall -perturbation) for 
Cases (1) $\eps V_{1,1}>0$ (top panel) and (2) $\eps V_{1,1}<0$ (middle and bottom panels). 
Figure \ref{fig:k_parallel3} displays, for these cases, the $L^2_{\kpar}(\Sigma)-$ spectra (plotted vertically) 
for a range of parallel-quasi-momentum, $\kpar$.  

\begin{remark}[Symmetries of $\kpar\mapsto E(\kpar)$] \label{kpar-symmetry}
 Figure \ref{fig:k_parallel3} exhibits some elementary symmetries.
Since the boundary condition for the EVP \eqref{EVP-1}, $\Psi(\bx+\bv_1)=e^{i\kpar}\Psi(\bx)$ 
is   $2\pi-$ periodicity in $\kpar$, the mapping  $k_\parallel\mapsto E(k_\parallel)$ is  $2\pi-$ periodic. Furthermore, invariance under complex conjugation, implies symmetry of  $k_\parallel\mapsto E(k_\parallel)$ about $\kpar=0$ and $\kpar=\pi$.
\end{remark}

\subsubsection{Non-topologically protected bifurcations of edge states}\label{unprotected}

In  Case (2), where $\eps V_{1,1}<0$, {\bf Theorem \ref{NO-directional-gap!}} implies that  the spectral \nofold condition fails and we do not obtain a bifurcation from the Dirac point.
 However, through a combination of formal asymptotic analysis and numerical computations, we do find bifurcating branches of  edge states. These branches do not emanate  from Dirac points (the \nofold condition fails), but rather from a spectral band edge. Moreover, as we discuss below, these states are  \underline{not} topologically protected; they may be destroyed by an appropriate localized perturbation of the edge.  Case (2) ($\eps V_{1,1}<0)$ is illustrated by Figures \ref{fig:spectra_vary_delta} (middle and bottom panels) and Figure \ref{fig:k_parallel3} (bottom panel).  

\begin{figure}
\centering
\includegraphics[width=0.8\textwidth]{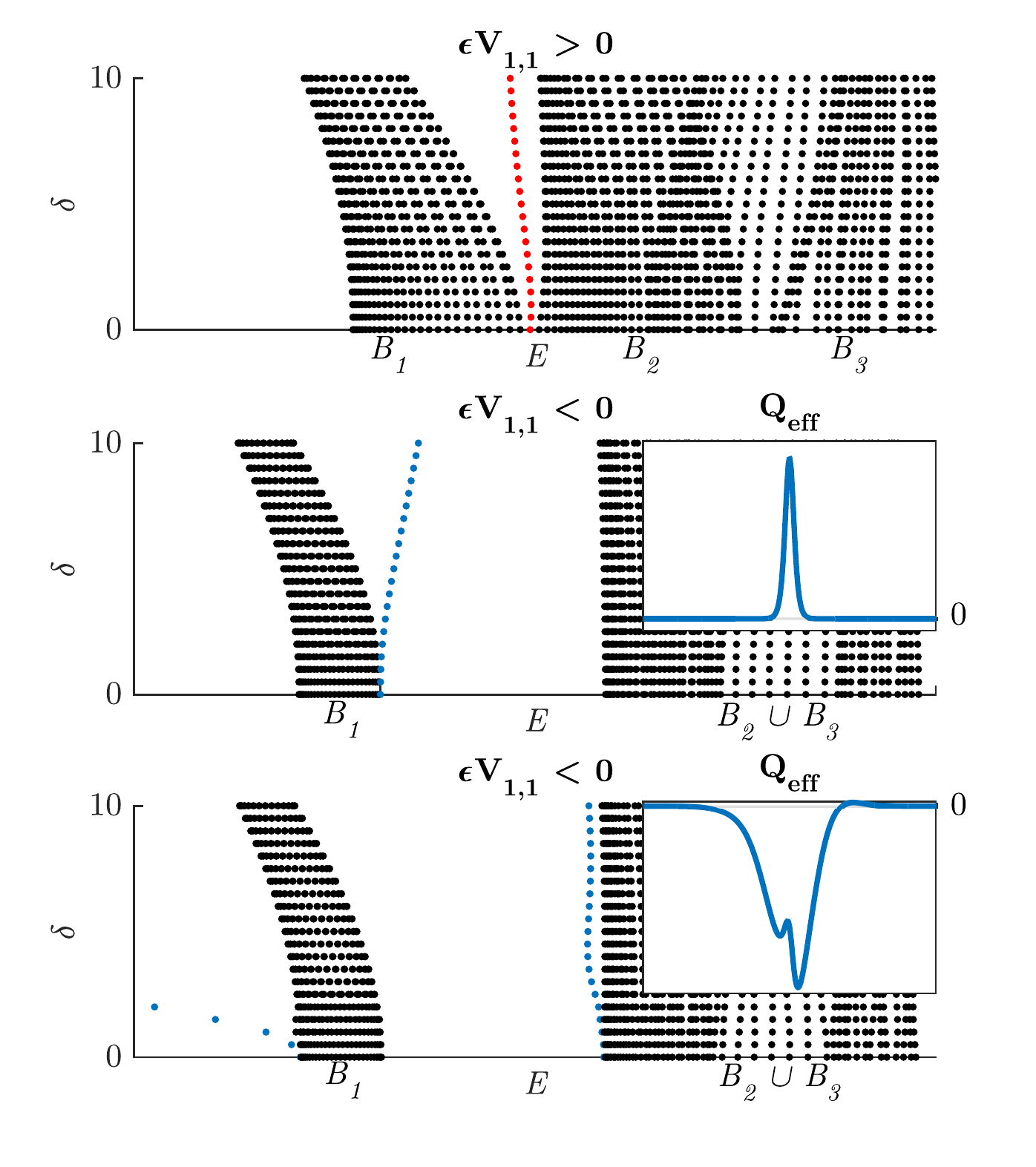}
\caption{\footnotesize
$L^2_{\kpar=\bK\cdot\bv_1}(\Sigma)-$ spectra, where $\bK\cdot\bv_1=\frac{2}{3}\pi$,
of the Hamiltonian $H^{(\eps,\delta)}$ (\eqref{VW-numerics}) for the zigzag edge ($\R\bv_1$). 
 {\bf Top panel:} Case (1) $\eps V_{1,1}>0$. Topologically protected bifurcation of  edge states, described by Theorem \ref{SGC!} (dotted red curve), is seeded by zero-energy mode of a Dirac operator \eqref{multi-dirac-op}. The branch of edge states emanates from intersection of first and second bands ($B_1$ and $B_2$) at $E=E_\star^\eps$ for $\delta=0$; see discussion in Section \ref{zigzag-summary}.
   {\bf Middle panel:}  Case (2) $\eps V_{1,1}<0$ with domain wall function $\kappa$. Spectral \nofold condition does not hold.  
Bifurcation of zigzag edge states from upper endpoint, $E=\widetilde{E}^\eps$, of the first spectral band. This bifurcation is seeded by a bound state of a Schr\"odinger operator \eqref{effective-schroedinger} with effective mass $m_{\rm eff}<0$
 and effective potential $Q_{\rm eff}(\zeta)$ (displayed in the inset) and is  \emph{not} topologically protected; see discussion in Section \ref{unprotected}.
 {\bf Bottom panel:}  Case (2) $\eps V_{1,1}<0$ with domain wall function $\kappa_\natural$. Bifurcation from upper endpoint of $B_1$ is destroyed. Bound states bifurcate from the lower edges of the first two spectral bands.
\label{fig:spectra_vary_delta}
}
\end{figure}

In particular, Dirac points occur at the intersection of the second and third spectral bands of $H^{(\eps,0)}=-\Delta+\eps V(\bx)$ (see Theorem \ref{diracpt-small-thm}),  and the failure of the spectral \nofold condition implies that an $L^2_\kpar-$ spectral gap does not open about $E=E_\star^\eps$  for $\delta\ne0$ and small. However, for $\eps V_{1,1}<0$ there is a spectral gap between the first and second spectral bands of $H^{(\eps,0)}$. For the choice of edge-potential   
 displayed in \eqref{VW-numerics} with $\eps=-10$,  a family of nontrivial edge states bifurcates,  for  $0<|\delta|$ sufficiently small, from the upper edge of the first (lowest) $L^2_{\kparpi}-$ spectral band into the spectral gap (dotted blue curve); see middle panel of Figure \ref{fig:spectra_vary_delta}.  A bifurcation of a similar nature is discussed in \cites{plotnik2013observation}.
 
A formal multiple scale analysis clarifies this latter bifurcation. 
For $\bk\in\brill_h$, let $(\widetilde{E}^\eps(\bk),\widetilde{\Phi}^\eps(\bx;\bk))$ denote the eigenpair associated with a lowest spectral band.
In \cites{FLW-2d_materials:15}, We calculate that the edge state bifurcation is seeded by a discrete eigenvalue effective Schr\"odinger operator: 
\begin{equation}
H^\eps_{\rm eff}= -\frac{1}{2m^\eps_{\rm eff}}\ \frac{\D^2}{\D\zeta^2}\ +\ Q^\eps_{\rm eff}(\zeta;\kappa),\ \ {\rm where}\ \   \frac{1}{m^\eps_{\rm eff}}=\  \sum_{i,j=1,2} [ D^2\widetilde{E}^\eps(\bK)]_{ij} \ \smallktilde_2^i\ \smallktilde_2^j,\label{effective-schroedinger}\end{equation}
and $Q_{\rm eff}(\zeta;\kappa,\widetilde{\Phi^\eps}) = a\ \kappa'(\zeta) + b\ \left(\kappa^2_\infty-\kappa^2(\zeta) \right)$ is a spatially localized effective potential, depending on $\kappa(\zeta)$, and constants $a$ and $b$, with  $b>0$, which depend on $V$, $W$ and $\widetilde{\Phi}^\eps$.
For the above choice of the zigzag edge-potential (middle panel of Figure \ref{fig:spectra_vary_delta}), we have $m^\eps_{\rm eff}<0$ and the effective potential $Q^\eps_{\rm eff}$, displayed in the figure inset, induces a bifurcation into the gap above the first band.

Now, we can construct domain wall functions,  $\kappa_{_\natural}(\zeta)$, for which the corresponding $H^\eps_{\rm eff}$ has no point eigenvalues in a neighborhood of the right (upper) edge of the first spectral band; see bottom panel of Figure \ref{fig:spectra_vary_delta}.
If $\kappa(\zeta)$ is chosen as above, then 
$Q_{\rm eff}(\zeta; (1-\theta)\kappa+\theta\kappa_{_\natural})$, $0\leq\theta\leq1$, provides a smooth homotopy from a Schr\"odinger Hamiltonian for which there is a bifurcation of edge states ($H^{(\eps,\delta)}$ with domain wall $\kappa$) to one for which the branch of edge states does not exist ($H^{(\eps,\delta)}$ with domain wall $\kappa_\natural$). Therefore, this type of bifurcation is not topologically protected; see \cites{FLW-2d_materials:15} for a more detailed discussion.
 This contrast between topologically protected  states and non-protected states is explained 
  and explored numerically, in a one-dimensional setting in \cites{Thorp-etal:15}.

\begin{figure}
\centering
\includegraphics[width=0.8\textwidth]{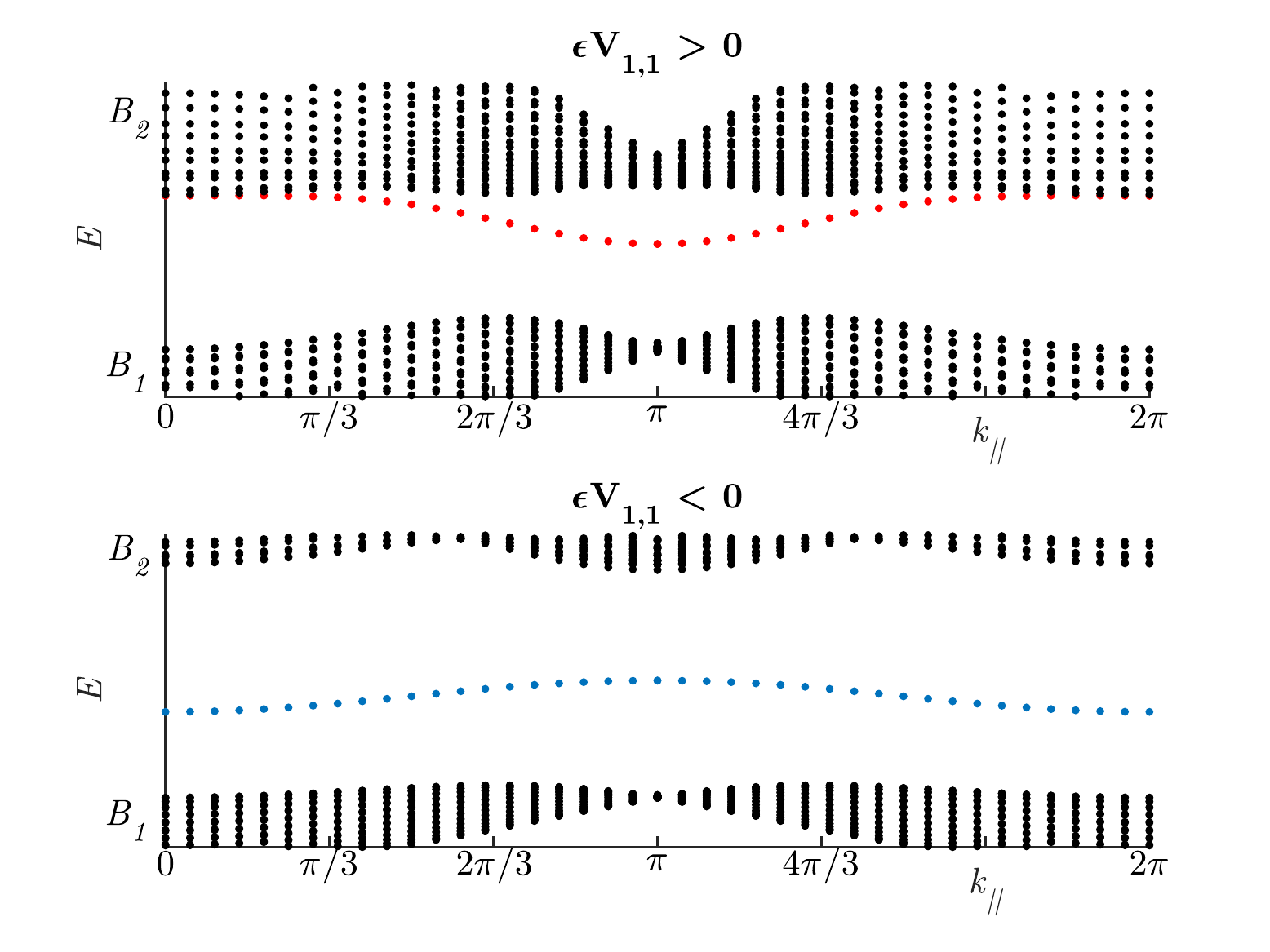}
 \caption{\footnotesize
 {\bf Top panel:} $L^2_{\kpar}(\Sigma)-$ spectrum of protected states of $H^{(\eps,\delta)}$,  for the case $\eps V_{1,1}>0$. {\bf Bottom panel:}  $L^2_{\kpar}(\Sigma)-$ spectrum of non-protected states of $H^{(\eps,\delta)}$ for the case $\eps V_{1,1}<0$.  $V$, $W$ and $\kappa$ are chosen as in \eqref{VW-numerics}.
 For each fixed $\kpar$, edge states shown in the top panel ($\eps V_{1,1}>0$) arise due to a protected bifurcation from a Dirac point displayed in the top panel of Figure \ref{fig:spectra_vary_delta}. Those edge states indicated in the bottom panel ($\eps V_{1,1}<0$) arise via an edge bifurcation of the type shown in the middle and bottom panels of Figure \ref{fig:spectra_vary_delta}. The band edge energies from which this latter bifurcation takes place is well-separated
 from the energy of
 the Dirac point which, when $\eps V_{1,1}<0$,  lies within the overlap of the second and third spectral bands.
 }
 \label{fig:k_parallel3}
\end{figure}

\subsection{Remarks on the spectral \nofold condition}\label{intro-nofold} 

 The spectral \nofold hypothesis of Theorem \ref{thm-edgestate}  requires that the dispersion curves obtained by slicing the band structure (situated in $\R^2_\bk\times\R_E$) with a plane through the Dirac point $(\bK,E_\star)$  containing the direction $\ktilde_2$ (dual direction to the $\vtilde_1-$ edge) do not fold-over
 and fill out energies arbitrarily near $E_\star$.
 This  essentially implies that via a small perturbation which breaks inversion symmetry (as we do with  $H^{(\delta)}=-\Delta+V(\bx)+\delta\kappa(\delta\ktilde_2\cdot\bx)W(\bx)$ for $\delta\ne0$)  we open a $L^2_\kpar(\Sigma)-$ spectral gap about $E_\star$.
  Figure \ref{fig:eps_V11_neg} is illustrative.
 %
 %

In the  first row of plots in Figure \ref{fig:eps_V11_neg}, we consider whether the spectral \nofold condition holds at the Dirac point $(\bK,E_\star^\eps)$
 for the zigzag edge, in the two cases: (1) $\eps V_{1,1}>0$ and (2) $\eps V_{1,1}<0$, as well as for the armchair edge. 
The energy level $E=E_\star^\eps$ is indicated with the dotted line. In the left panel we see that for the zigzag edge, the spectral \nofold condition holds if $\eps V_{1,1}>0$. In this case, there is a topologically protected branch of edge states. In the center panel we see that the spectral \nofold condition fails if $\eps V_{1,1}<0$.
Finally, in the right panel we see that it also fails for the armchair slice. 
  
The   second row of plots  in Figure \ref{fig:eps_V11_neg},  illustrates that  the spectral \nofold condition controls whether a full $L^2_\kpar-$ spectral gap opens when breaking inversion symmetry.
In particular, for $\delta>0$, $H^{(\eps,\delta)}$ is no longer inversion symmetric. For $\eps V_{1,1}>0$, a spectral gap opens about the Dirac point, between the first and second spectral bands (see Theorem \ref{diracpt-small-thm}). For the zigzag edge with $\eps V_{1,1}<0$ there is no spectral gap about the Dirac point. 
(Note, however, that there is a spectral gap between the first and second spectral bands; see the discussion above in Section \ref{unprotected}.)  
 Similarly, for the armchair edge (right panel) there is no spectral gap for $\delta>0$.
 
\begin{figure}
\centering
\includegraphics[width=0.8\textwidth]{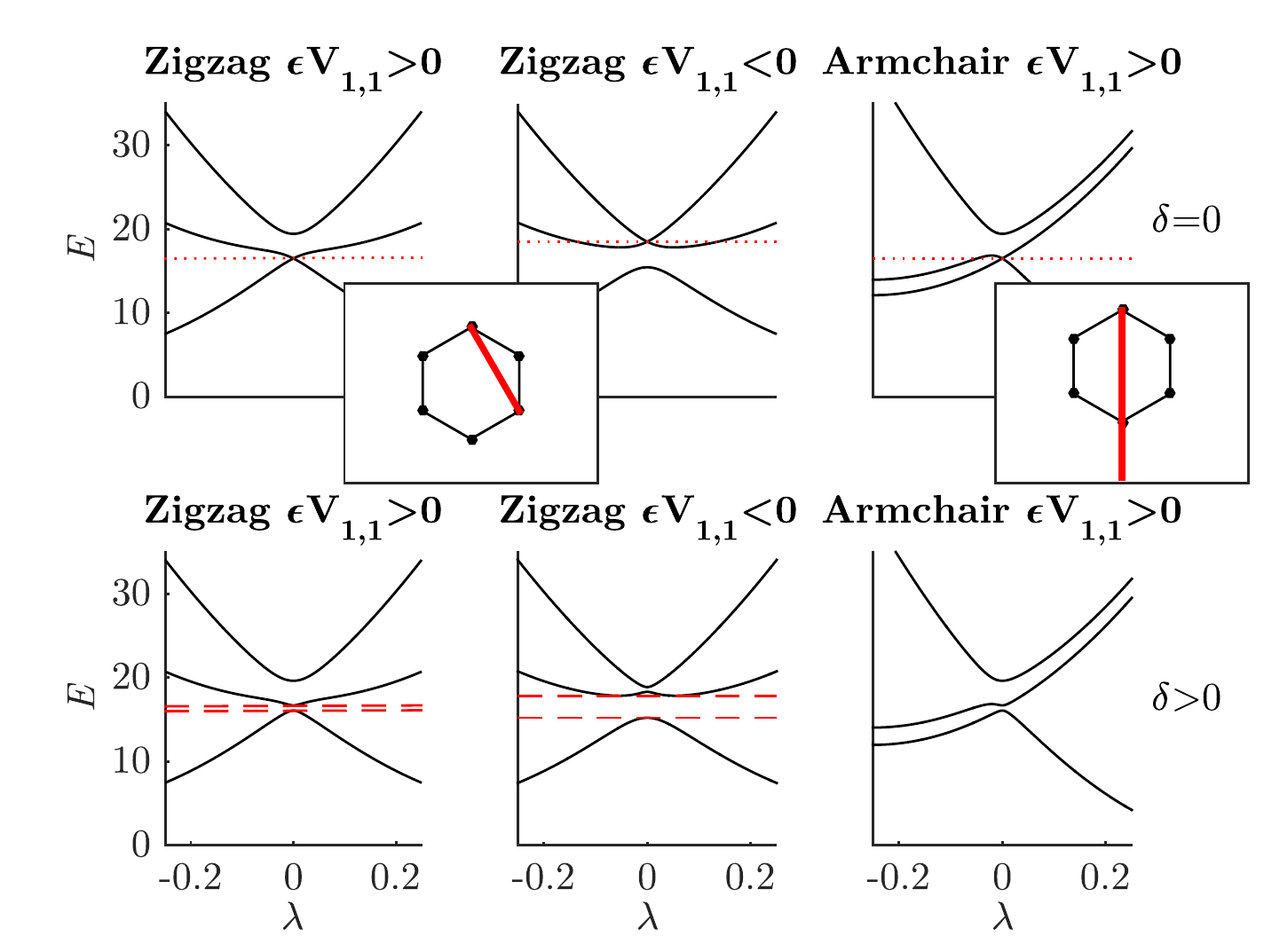}
 \caption{\footnotesize
 Zigzag and armchair slices at the Dirac point $(\bK,E^\eps_\star)$ of the band structure of $-\Delta+\eps V+\delta\kappa_\infty W$ for $\delta=0$ ({\bf first row}) and $\delta>0$ ({\bf second row}). Insets indicate zigzag and armchair quasi-momentum segments (one-dimensional Brillouin zones) parametrized by $\lambda$, for $0\leq\lambda\leq1$. See discussion of Section \ref{meta-stable?} and Theorem \ref{fourier-edge}.
 \label{fig:eps_V11_neg}
 }
\end{figure}

\subsection{Are there meta-stable edge states?}\label{meta-stable?}

Consider the Hamiltonian
 $ H^{(\delta)} = -\Delta_\bx +  V(\bx) + \delta\kappa\left(\delta\ktilde_2\cdot\bx\right) W(\bx)$ (as in  \eqref{schro-domain}), corresponding to an {\it arbitrary} rational edge, $\R\vtilde_1$, {\it i.e.} $\vtilde_1=a_1\bv_1+b_1\bv_2$, $a_1$ and $b_1$ co-prime integers, as introduced in the discussion leading up to  \eqref{schro-domain}; see also Section \ref{ds-slices}. 
Irrespective of whether the spectral \nofold condition holds for the $\vtilde_1-$ edge (see Section \ref{intro-nofold}  and Definition \ref{SGC}),  the multiple scale expansion of Section \ref{formal-multiscale} produces a formal edge state to {\it  any finite order} in the small parameter $\delta$. 

 \nit  {\it But is this formal expansion the expansion of a true edge state?} 
We believe the answer is no, if the spectral \nofold condition fails. 

Indeed,  from Theorem \ref{fourier-edge}, we have that any $\vtilde_1-$  edge state, $\Psi\in L^2_{\kpar=\bK\cdot\vtilde_1}$, is a superposition of Floquet-Bloch modes  of $H^{(0)}=-\Delta+V$ along the quasimomentum segment:
 $\bK+\lambda\ktilde_2,\ |\lambda|\le1/2$.
The formal expansion of Section \ref{formal-multiscale} however is spectrally concentrated on Floquet-Bloch components along this segment, which are near the Dirac point, corresponding to $|\lambda|\ll1$. If the spectral \nofold condition fails, the expansion does not capture the effect of resonant coupling to quasi-momenta along this segment ``far from $\bK$'' (corresponding to $\lambda$ bounded away from $\lambda=0$ in Figure \ref{fig:eps_V11_neg}).
\medskip
  
\nit{\bf Conjecture:}
    {\sl Suppose the spectral \nofold condition fails for the $\vtilde_1-$ edge $\R\vtilde_1$. Then, $H^{(\delta)}$ 
has topologically protected long-lived (meta-stable) edge quasi-modes, $\Psi\in H^2_{\kpar=\bK\cdot\vtilde_1,{\rm loc}}(\Sigma)$, but generically has no topologically protected edge states.}
%
%
%

%

\subsection{Outline}
{\ }

In {\bf Section \ref{honeycomb_basics}} we review spectral theory for two-dimensional periodic Schr\"odinger operators, introduce the triangular lattice, the honeycomb structure and honeycomb lattice potentials.

In {\bf Section \ref{sec:dirac-pts}} we define Dirac points and review the results on the existence of Dirac points for generic honeycomb potentials from \cites{FW:12,FW:14}.

In {\bf Section \ref{ds-slices}} we introduce the notion of an edge or line defect in a bulk (unperturbed) honeycomb structure. Honeycomb structures with edges parallel to a period lattice direction, have a translation invariance.
Thus, an important tool  is the Fourier decomposition of states which are $L^2$ (localized) in the unbounded direction, transverse to the edge, and propagating (plane-wave like) parallel to the edge.

In {\bf Section \ref{zigzag-edges}} we introduce our class of Hamiltonians, consisting of a bulk honeycomb potential, perturbed by a general  line-defect / $\vtilde_1-$ edge potential.

In {\bf Section \ref{formal-multiscale}} we give a  formal multiple scale construction of  edge states to any finite order in the small parameter $\delta$.


In {\bf Section \ref{thm-edge-state}} we formulate general hypotheses which imply the existence of a branch of topologically protected $\vtilde_1-$ edge states, bifurcating from the Dirac point.   The proof uses a  Lyapunov-Schmidt reduction strategy, applied to a system for the Floquet-Bloch amplitudes which is equivalent to the eigenvalue problem. Such a strategy was implemented in a 1D setting in \cites{FLW-MAMS:15}. First, the edge-state eigenvalue problem is formulated in (quasi-) momentum space as an infinite system for the  Floquet-Bloch mode amplitudes. We view this system as consisting of two coupled subsystems; one is for the  quasi-momentum / energy components ``near'' the Dirac point, $(\bK,E_\star)$, and the second governs 
the components which are ``far'' from the Dirac point. We next solve for the far-energy components as a functional of the near-energy components and thereby obtain a reduction to a closed system for the near-energy components. The construction of this map requires that the spectral \nofold condition holds.

In {\bf Section \ref{zz-gap}} we consider the Hamiltonian, introduced in Section \ref{thm-edge-state}, in the weak-potential (low-contrast) regime and prove the existence of topologically protected {\it zigzag} edge states, under the condition $\eps V_{1,1}>0$.

In {\bf  Appendix \ref{V11-section}} we give two families of honeycomb potentials, depending on the lattice scale parameter, $a$, where we can tune between Case (1) $\eps V_{1,1}>0$ and Case (2) $\eps V_{1,1}<0$ by continuously varying the lattice scale parameter.

In a number of places, the proofs of certain assertions are very similar to those of corresponding assertions in \cites{FLW-MAMS:15}.  In such cases, we do not repeat a variation on the proof in \cites{FLW-MAMS:15}, but rather refer to the specific proposition or lemma in \cites{FLW-MAMS:15}.

\subsection{Notation\label{subsec:notation}}

\begin{enumerate}[(1)]
%
\item  $\bv_j,\ j=1,2$ are basis vectors of the triangular lattice in $\R^2$, $\Lambda_h$.
 $\bk_\ell,\ \ell=1,2$ are dual basis vectors of $\Lambda_h^*$, which satisfy $\bk_\ell\cdot\bv_j=2\pi\delta_{\ell j}$. 
 \item For $\bfm=(m_1,m_2)\in\Z^2$, $\bfm\vec\bk=m_1\bk_1+m_2\bk_2$.
 \item $\vtilde_1=a_1\bv_1+a_2\bv_2\in\Lambda_h$,\ $a_1, a_2$ co-prime integers. The $\vtilde_1-$ edge is $\R\vtilde_1$.
 $\vtilde_j,\ j=1,2$, is an alternate basis for $\Lambda_h$ with corresponding dual basis, $\ktilde_\ell, \ell=1,2$, satisfying $\ktilde_\ell\cdot\vtilde_j=2\pi\delta_{\ell j}$. 
 \item  $\ktilde=(\mathfrak{K}^{(1)},\mathfrak{K}^{(2)})$, $\mathfrak{z}\equiv\mathfrak{K}^{(1)} + i \mathfrak{K}^{(2)}$, $|\mathfrak{z}|=|\ktilde|$.
 
%
 \item $\B$ denotes the Brillouin Zone, associated with $\Lambda_h$, shown in the right panel of Figure \ref{fig:lattices}.
 \item $\inner{f,g} = \int\overline{f}g$.
 \item $x\lesssim y$ if and only if there exists $C>0$ such that $x \leq Cy$. $x \approx y$ if and only if $x \lesssim y$ and $y \lesssim x$.
 \item $L^{p,s}(\R)$ is the space of functions $F:\R\rightarrow\R$ such that $(1+\abs{\cdot}^2)^{s/2}F\in L^p(\R)$, endowed with the norm
\[\norm{F}_{L^{p,s}(\R)} \equiv \norm{(1+\abs{\cdot}^2)^{s/2}F}_{L^p(\R)} \approx
  \sum_{j=0}^{s}\norm{\abs{\cdot}^jF}_{L^p(\R)} < \infty,~~~ 1\leq p\leq \infty.\]
 \item For $f,g\in L^2(\R^d)$, the Fourier transform and its inverse are given by 
 {\small
  \begin{equation*}
  \mathcal{F}\{f\}(\xi)\equiv\widehat{f}(\xi)=\frac{1}{(2\pi)^d}\int_{\R^d}e^{-iX\cdot\xi}f(X)dX,~~~
  \mathcal{F}^{-1}\{g\}(X)\equiv\check{g}(X)=\int_{\R^d}e^{ iX\cdot\xi}g(\xi)d\xi.
  \label{FT-def}
  \end{equation*}
  }
The Plancherel relation states:
$\int_{\R^d} f(x)\overline{g(x)} dx = (2\pi)^d\ \int_{\R^d} \widehat{f}(\xi)\overline{\widehat{g}(\xi)} d\xi .$

  
 \item $\sigma_j$, $j=1,2,3$, denote the Pauli matrices, where
  \begin{equation}\label{Pauli-sigma}
  \sigma_1 = \begin{pmatrix}0&1\\1&0\end{pmatrix},~~
  \sigma_2 = \begin{pmatrix}0&-i\\i&0\end{pmatrix},~~\text{and}~~
  \sigma_3 = \begin{pmatrix}1&0\\0&-1\end{pmatrix}.
  \end{equation}
  \end{enumerate}

\subsection{Acknowledgements}

We would like to thank I. Aleiner, A. Millis, J. Liu and  M. Rechtsman for stimulating discussions.

\section{Floquet-Bloch Theory and Honeycomb Lattice Potentials}\label{honeycomb_basics} 

We begin with a review of Floquet-Bloch theory; see, for example, \cites{Eastham:74, RS4, kuchment2012floquet, kuchment2016overview}.

\subsection{Fourier analysis on $L^2(\R/\Lambda)$ and $L^2(\Sigma)$}\label{fourier-analysis}

Let $\{\vtilde_1,\vtilde_2\}$ be a linearly independent set in $\R^2$ and introduce the
\begin{align}
&\text{\bf Lattice: } \Lambda = \Z\vtilde_1\oplus\Z\vtilde_2 = \{m_1\vtilde_1 + m_2\vtilde_2 \ : \ m_1,m_2 \in \Z \} ;   \nn \\
&\text{\bf Fundamental period cell: }  \Omega = \{\theta_1\vtilde_1 + \theta_2\vtilde_2 \ : \ 0\leq\theta_j\leq1,\ j=1,2\} ;   \label{Omega-def} \\
&\text{\bf Dual lattice: }
\Lambda^\ast = \Z\ktilde_1\oplus\Z\ktilde_2 = \{ \bfm\vec\ktilde=m_1\ktilde_1 + m_2\ktilde_2  : m_1,m_2 \in \Z \} , \nn\\ 
&\qquad\qquad\qquad\qquad\qquad\qquad \ktilde_i\cdot\vtilde_j = 2\pi\delta_{ij},\ 1\leq i,j \leq 2;   \nn\\
&\text{\bf Brillouin zone: }  \mathcal{B},\ \textrm{a choice of fundamental dual cell} ; \nn \\
&\text{\bf Cylinder: } \Sigma\equiv  \R^2/\Z\vtilde_1 ;  \nn \\
&\text{\bf Fundamental domain for $\Sigma$: } \Omega_\Sigma\equiv \{\tau_1\vtilde_1 + \tau_2\vtilde_2 : 0\leq\tau_1\leq1, \tau_2\in\R\} .\label{Omega-Sigma-def}
 \end{align}
 
 \nit We denote by $L^2(\Omega)$ and $L^2(\Omega_\Sigma)$  the standard
  $L^2$  spaces on the domains $\Omega$ and $\Omega_\Sigma$, respectively.

 \begin{definition}\label{L2-spaces}[The spaces $L^2(\R^2/\Lambda)$ and $L^2_\bk$]
 \begin{enumerate}
  \item[(a)] $L^2(\R^2/\Lambda)$ denotes the space of $L^2_{loc}$ functions which are  $\Lambda-$ periodic: 
  $ f\in L^2(\R^2/\Lambda)$ if and only if $f(\bx+\vtilde)=f(\bx)$ for all $ \bx\in\R^2,\ \ \vtilde\in\Lambda$ and 
$f\in L^2(\Omega)$.
  \item[(b)] $L^2_\bk$ denotes the space of $L^2_{loc}$ functions which satisfy a pseudo-periodic boundary condition: $ f(\bx+\vtilde)=e^{i\bk\cdot\vtilde}f(\bx)$ for all $\bx\in\R^2,\ \  \vtilde\in\Lambda$ and $e^{-i\bk\cdot\bx}f(\bx)\in L^2(\R^2/\Lambda)$.
  For $f$ and $g$ in $L^2_\bk$, $\overline{f}g$ is in $L^1(\R^2/\Lambda)$ and we define their inner product by
  \begin{equation*}
  \label{L2k_inner_def}
   \inner{f,g}_{L^2_\bk} = \int_\Omega \overline{f(\bx)} g(\bx) d\bx.
  \end{equation*}
 \end{enumerate}
 \end{definition}

\begin{definition}\label{L2Sigma-spaces} [The spaces $L^2(\Sigma)$ and $L^2_{k_\parallel}$]
 \begin{enumerate}
 \item [(a)] $L^2(\Sigma)=L^2(\R^2/\Z\vtilde_1)$ denotes the space of  $L^2_{loc}$ functions, which are periodic in the direction of $\vtilde_1$: $f(\bx+\vtilde_1)=f(\bx), \text{\ \ for\ all\ } \bx\in\R^2$ and such that $f\in L^2(\Omega_\Sigma)$,
  where $\Omega_\Sigma$ is the fundamental domain for $\Sigma$; see \eqref{Omega-Sigma-def}.

  \item [(b)] $L^2_{\kpar}(\Sigma)=L^2_{\kpar}$ denotes the space of $L^2_{loc}$ functions:
  \begin{enumerate}
  \item [(1)] which are $\kpar-$ pseudo-periodic in the direction $\vtilde_1$:
  \begin{equation*}
  f(\bx+\vtilde_1)=e^{i\kpar}f(\bx), \text{\ \ for } \bx\in\R^2, \ \ \text{and}
  \end{equation*}
  \item [(2)] such that $e^{-i (1/2\pi ) k_{\parallel}\ktilde_1\cdot\bx}f(\bx)$, which is defined on $\Sigma$, is in $L^2(\Omega_\Sigma)$.
  \end{enumerate}
  
  For $f$ and $g$ in $L^2_\kpar(\Sigma)$, $\overline{f}g$ is in $L^2(\Sigma)$ and we define their inner product by
   \begin{equation*}
  \label{L2kpar_inner_def}
   \inner{f,g}_{L^2_\kpar} = \int_{\Omega_\Sigma} \overline{f(\bx)} g(\bx) d\bx.
  \end{equation*}
 \end{enumerate}
 %
 %
 The respective Sobolev spaces $H^s(\R^2/\Lambda)$, $H^s_\bk$, $H^s(\Sigma)$ and $H^s_{\kpar}(\Sigma)=H^s_{\kpar}$ are defined in a natural way. 
 
 \nit {\bf Simplified notational convention:} We shall do many calculations requiring us to explicitly write out inner products like   $\inner{f,g}_{L^2(\Sigma)}$ and $\inner{f,g}_{L^2_\kpar(\Sigma)}$. We shall write these as $ \int_\Sigma \overline{f(\bx)}g(\bx)\ d\bx$  rather than as
 $ \int_{\Omega_{_\Sigma}} \overline{f(\bx)}g(\bx)\ d\bx$.
\end{definition}

If $f\in L^2(\R^2/\Lambda)$, then it can be expanded in a Fourier series:
\begin{equation}
 \label{Omega-fourier}
 f(\bx) = \sum_{\bfm\in\Z^2} f_{\bfm} e^{i\bfm\vec\ktilde\cdot\bx}, \quad f_{\bfm} = \frac{1}{|\Omega|}\int_\Omega e^{-i\bfm\vec\ktilde\cdot\by}f(\by)d\by \ , \ \ \bfm\vec\ktilde=m_1\ktilde_1+m_2\ktilde_2 \ ,
\end{equation}
where $|\Omega|$ denotes the area of the fundamental cell, $\Omega$.
In Section \ref{Fourier-edge}, we show that,
if $g\in L^2(\Sigma)$, then it can be expanded in a Fourier series in $\vtilde_1\cdot\bx$ and Fourier transform in $\vtilde_2\cdot\bx$:
\begin{align*}
 g(\bx) &= 2\pi\ \sum_{n\in\Z}\int_\R \widehat{g}_n(2\pi\xi) e^{i\xi\ktilde_2\cdot\bx} d\xi e^{in\ktilde_1\cdot\bx}\ , \\
 2\pi\ \widehat{g}_n(2\pi\xi) &= \frac{1}{\left|\vtilde_1\wedge\vtilde_2\right| } \int_\Sigma e^{-i\xi\ktilde_2\cdot\by} e^{-in\ktilde_1\cdot\by} g(\by) d\by \ .
\end{align*}

\subsection{Floquet-Bloch Theory}\label{flo-bl-theory}

Let $Q(\bx)$ denote a real-valued potential which is periodic with respect to $\Lambda$. We shall assume throughout this paper that $Q\in C^\infty(\R^2/\Lambda)$, although we expect that this condition can be relaxed without much extra work.
Introduce the Schr\"odinger Hamiltonian 
$H \equiv -\Delta + Q(\bx)$. 
For each $\bk\in\R^2$, we study the {\it Floquet-Bloch eigenvalue problem} on $L^2_\bk$:
\begin{align}
 \label{fl-bl-evp}
&H \Phi(\bx;\bk) = E(\bk) \Phi(\bx;\bk), \ \ \bx\in\R^2, \\
&\Phi(\bx+\vtilde)=e^{i\bk\cdot\vtilde}\Phi(\bx;\bk), \ \ \forall \vtilde\in\Lambda \nn .
\end{align}
An $L^2_\bk$ solution of \eqref{fl-bl-evp} is called a {\it Floquet-Bloch} state.

Since the $\bk-$ pseudo-periodic boundary condition in \eqref{fl-bl-evp} is invariant under translations in the dual period lattice, $\Lambda^\ast$, it suffices to restrict our attention to $\bk\in\B$, where  $\B$, the {\it Brillouin Zone},  is a fundamental cell in $\bk-$ space.

An equivalent formulation to \eqref{fl-bl-evp} is obtained by setting $\Phi(\bx;\bk)=e^{i\bk\cdot\bx}p(\bx;\bk)$. Then, 
\begin{equation}
 \label{fl-bl-evp-per}
 H(\bk) p(\bx;\bk) = E(\bk) p(\bx;\bk), \ \bx\in\R^2, \quad p(\bx+\vtilde)=p(\bx;\bk),\ \  \vtilde\in\Lambda,
\end{equation}
where
$ H(\bk) \equiv -(\nabla+i\bk)^2 + Q(\bx)$ 
is a self-adjoint operator on  $L^2(\R^2/\Lambda)$.
 The eigenvalue problem  \eqref{fl-bl-evp-per}, has a discrete set of eigenvalues
$E_1(\bk)\leq E_2(\bk)\leq \cdots \leq E_b(\bk)\leq \cdots$,
with $L^2(\R^2/\Lambda)-$ eigenfunctions $p_b(\bx;\bk),\  b=1,2,3,\ldots$. 
 The maps $\bk\in\mathcal{B}\mapsto E_j(\bk)$ are, in general, Lipschitz continuous functions; see, for example, Appendix A of \cites{FW:14}. For each $\bk\in\B$, the set $\{p_j(\bx;\bk)\}_{j\geq1}$ can be taken to be a  complete orthonormal basis for $L^2(\R^2/\Lambda)$.
%
%
%
%

As $\bk$ varies over $\B$, $E_b(\bk)$ sweeps out a closed real interval. The union over $b\ge1$ of these closed intervals is exactly the $L^2(\R^2)-$ spectrum of $-\Delta+V(\bx)$:
$
 \text{spec} \left(H\right) = \bigcup_{\bk\in\B} \text{spec} \left(H(\bk)\right).
$
Furthermore, the set $\{\Phi_b(\bx;\bk)\}_{b\geq1,\bk\in\B}$ is complete in $L^2(\R^2)$:
\begin{equation*}
 \label{fb-R2-completeness}
 f(\bx) = \sum_{b\geq1} \int_\B \inner{\Phi_b(\cdot;\bk),f(\cdot)}_{L^2(\R^2)} \Phi_b(\bx;\bk) d\bk
 \equiv \sum_{b\geq1} \int_\B \widetilde{f}_b(\bk) \Phi_b(\bx;\bk) d\bk ,
\end{equation*}
where the sum converges in the $L^2$ norm. 
%
%

\subsection{The honeycomb period lattice, $\Lambda_h$, and its dual, $\Lambda_h^*$}\label{sec:honeycomb}

Consider $\Lambda_h=\Z{\bf v}_1 \oplus \Z{\bf v}_2$, the equilateral triangular lattice generated by the basis vectors: ${\bf v}_1=(\frac{\sqrt{3}}{2} , \frac{1}{2})^T$,  ${\bf v}_2=(\frac{\sqrt{3}}{2} , -\frac{1}{2})^T$;  see Figure  \ref{fig:lattices}, left panel.
The dual lattice $\Lambda_h^* =\ \Z {\bf k}_1\oplus \Z{\bf k}_2$ is spanned by the dual basis vectors: $\bk_1=  q( \frac{1}{2} , \frac{\sqrt{3}}{2} )^T$, $\bk_2=  q( \frac{1}{2} , -\frac{\sqrt{3}}{2} )^T$,
 where $ q\equiv \frac{4\pi}{\sqrt{3}}$, 
with the biorthonormality relations ${\bf k}_{i}\cdot {\bf v}_{{j}}=2\pi\delta_{ij}$. Other useful relations are:
 $|\bv_1|=|\bv_2|=1$, $\bv_1\cdot\bv_2=\frac{1}{2}$,  $|\bk_1|=|\bk_2|=q$ and
  $\bk_1\cdot\bk_2=-\frac{1}{2}q^2$.  
The Brillouin zone, $\brill_h$, is a regular hexagon in $\R^2$. Denote by $\bK$ and $\bKp$ its top and bottom vertices (see right panel of Figure \ref{fig:lattices}) given by:\ 
 $\bK\equiv\frac{1}{3}\left(\bk_1-\bk_2\right),\ \ \bKp\equiv-\bK=\frac{1}{3}\left(\bk_2-\bk_1\right)$. 
All six  vertices of $\brill_h$ can be generated by application of the matrix $R$,
 which rotates a vector in $\mathbb{R}^2$ clockwise by $2\pi/3$: 
\begin{equation}
R\ =\ \left(
\begin{array}{cc}
-\frac{1}{2} & \frac{\sqrt{3}}{2}\\
{} & {}\\
-\frac{\sqrt{3}}{2} & -\frac{1}{2}
\end{array}\right)\ .
\label{Rdef}\end{equation}
The vertices of $\brill_h$ fall into two groups, generated by the action of $R$ on $\bK$ and $\bK'$:
$\bK-$ type-points: $ \bK,\ R\bK=\bK+\bk_2,\ R^2\bK=\bK-\bk_1$, and 
$\bKp-$ type-points: $ \bKp,\ R\bKp=\bK'-\bk_2,\ R^2\bKp=\bKp+\bk_1$. 

Functions which are periodic on $\R^2$ with respect to the lattice $\Lambda_h$ may be viewed as functions on the torus, $\R^2/\Lambda_h$.
 As a fundamental period cell, we choose the parallelogram spanned by $\bv_1$ and $\bv_2$, denoted
  $\Omega_h$.
  
\begin{remark}[Symmetry Reduction]\label{symmetry-reduction}
Let $(\Phi(\bx;\bk), E(\bk))$ denote a Floquet-Bloch eigenpair for the eigenvalue problem \eqref{fl-bl-evp} with quasi-momentum $\bk$. Since $V$ is real, 
$(\tilde{\Phi}(\bx;\bk)\equiv\overline{\Phi(\bx;\bk)}, E(\bk))$ is a Floquet-Bloch eigenpair for the eigenvalue problem with quasi-momentum $-\bk$. The above relations among the vertices of $\brill_h$   and the $\Lambda_h^*$- periodicity of: $\bk\mapsto E(\bk)$ and $\bk\mapsto \Phi(\bx;\bk)$ imply that  the local character of the dispersion surfaces in a neighborhood of any vertex of $\brill_h$ is determined by its character about any other vertex of $\brill_h$.
\end{remark}

\subsection{Honeycomb potentials}\label{honeycomb-potentials}

\begin{definition}\label{honeyV}[Honeycomb potentials]
Let $V$ be  real-valued and  $V\in C^\infty(\R^2)$.
$V$ is  a \underline{honeycomb potential} 
 if there exists $\bx_0\in\mathbb{R}^2$ such that $\tilde{V}(\bx)=V(\bx-\bx_0)$ has the following properties:
\begin{enumerate}
\item[(V1)] $\tilde{V}$ is $\Lambda_h-$ periodic, {\it i.e.}  $\tilde{V}(\bx+\bv)=\tilde{V}(\bx)$ for all $\bx\in\mathbb{R}^2$ and $\bv\in\Lambda_h$.  
\item[(V2)] $\tilde{V}$ is even or inversion-symmetric, {\it i.e.} $\tilde{V}(-\bx)=\tilde{V}(\bx)$.
\item[(V3)] $\tilde{V}$  is $\mathcal{R}$- invariant, {\it i.e.}
$ \mathcal{R}[\tilde{V}](\bx)\ \equiv\ \tilde{V}(R^*\bx)\ =\ \tilde{V}(\bx),$
  where, $R^*$ is the counter-clockwise rotation matrix by $2\pi/3$, {\it i.e.} $R^*=R^{-1}$, where $R$ is given by \eqref{Rdef}. 
 \end{enumerate}
N.B. Throughout this paper, we shall omit the tildes on $V$ and choose coordinates with  $\bx_0=0$.
 \end{definition}
 
 
Introduce the mapping $\widetilde{R}:\Z^2\to\Z^2$ which acts on the indices of the Fourier coefficients of $V$:
$ \widetilde{R} (m_1, m_2) = (-m_2, m_1-m_2)$ and therefore 
$ \widetilde{R}^2 (m_1, m_2) = (m_2-m_1,-m_1)$, and  $\widetilde{R}^3(m_1,m_2) = (m_1,m_2)$. 
Any $\bfm\neq0$ lies on an $\widetilde{R}-$ orbit of length exactly three \cites{FW:12}. We say that $\bfm$ and $\bn$ are in the same equivalence class if $\bfm$ and $\bn$ lie on the same $3-$ cycle. Let $\widetilde{S}$ denote a set consisting of exactly one representative from each equivalence class. Honeycomb lattice potentials have the following Fourier series characterization \cites{FW:12}:
 
\begin{proposition}
\label{honey-cosine}
Let $V(\bx)$ denote a honeycomb lattice potential. 
Then,
\begin{align*}
 V(\bx) &= v_{\bf 0} + \sum_{\bfm\in\widetilde{S}} \ v_\bfm \ 
 \left[ \cos(\bfm\vec\bk\cdot\bx) + \cos((\widetilde{R}\bfm)\vec\bk\cdot\bx) + \cos((\widetilde{R}^2\bfm)\vec\bk\cdot\bx) \right] , 
\end{align*}
where $\bfm\vec\bk = m_1\bk_1 + m_2\bk_2$ and the $v_\bfm$ are real.
\end{proposition}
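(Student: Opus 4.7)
The plan is standard Fourier analysis: expand $V$ in its $\Lambda_h$-Fourier series, read off the algebraic consequences of (V1)--(V3) for the coefficients $v_\bfm$, and then regroup the series into the claimed cosine form.

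First, by (V1) and smoothness,
\begin{equation*}
V(\bx) = \sum_{\bfm\in\Z^2} v_\bfm\, e^{i\bfm\vec\bk\cdot\bx}, \qquad v_\bfm = \frac{1}{|\Omega_h|}\int_{\Omega_h} V(\by)\, e^{-i\bfm\vec\bk\cdot\by}\, d\by.
\end{equation*}
I would combine the real-valuedness of $V$ (which gives $v_{-\bfm}=\overline{v_\bfm}$) with the inversion symmetry (V2) (which gives $v_{-\bfm}=v_\bfm$) to conclude that each $v_\bfm$ is real and even in $\bfm$.

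Next I would translate the rotation symmetry (V3) into a statement about the index set. A direct computation from the explicit coordinate formulas for $\bk_1,\bk_2$ and $R$ yields $R\bk_1=\bk_2$ and $R\bk_2=-(\bk_1+\bk_2)$; by linearity, $R(\bfm\vec\bk)=(\widetilde R\bfm)\vec\bk$ with $\widetilde R$ exactly as defined in the excerpt. Since $R^*=R^{T}$, one has $\bfm\vec\bk\cdot R^*\bx = (\widetilde R\bfm)\vec\bk\cdot\bx$, so
\begin{equation*}
V(R^*\bx) = \sum_\bfm v_\bfm\, e^{i(\widetilde R\bfm)\vec\bk\cdot\bx}.
\end{equation*}
Comparing with $V(\bx)$ and invoking uniqueness of Fourier coefficients forces $v_{\widetilde R\bfm}=v_\bfm$, so $v_\bfm$ is constant on each three-element $\widetilde R$-orbit in $\Z^2\setminus\{0\}$.

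The final step is the regrouping. I would first note that no nontrivial $\widetilde R$-orbit $O$ can satisfy $-O=O$: such invariance would make $\bfm\mapsto -\bfm$ a fixed-point-free involution on the three-element set $O$, which is impossible. Hence the nonzero orbits partition into disjoint pairs $\{O,-O\}$, and on each such pair I can combine $v_{\bfm}=v_{-\bfm}$ with $e^{i\bn\vec\bk\cdot\bx}+e^{-i\bn\vec\bk\cdot\bx}=2\cos(\bn\vec\bk\cdot\bx)$ to collapse six exponentials into three cosines sharing a single real coefficient. Choosing $\widetilde S$ to pick one representative from each such pair (and absorbing the factor of $2$ into the redefined $v_\bfm$) yields the stated expansion, with the constant term $v_{\bf 0}$ coming from $\bfm=0$.

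The only non-mechanical points are the concrete identification of $\widetilde R$ via $R\bk_1=\bk_2$, $R\bk_2=-(\bk_1+\bk_2)$, and the elementary parity argument excluding self-negated orbits; neither should present serious difficulty.
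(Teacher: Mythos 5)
Your proof is correct and is essentially the same Fourier-symmetry argument as the proof in the reference the paper cites for this proposition (it gives no proof itself): expand $V$ in $e^{i\bfm\vec\bk\cdot\bx}$, deduce $v_\bfm=v_{-\bfm}\in\R$ from real-valuedness and (V2), deduce $v_{\widetilde R\bfm}=v_\bfm$ from (V3) via $R\bk_1=\bk_2$, $R\bk_2=-(\bk_1+\bk_2)$, and regroup. The only (cosmetic) deviation is your pairing of an orbit $O$ with $-O$: since $v_\bfm=v_{-\bfm}$ is real, the imaginary parts already cancel and one may sum $v_\bfm\cos(\bfm\vec\bk\cdot\bx)$ over nonzero $\bfm$ and group by $3$-cycles, which yields the stated formula with $\widetilde S$ exactly as the paper defines it (one representative per $\widetilde R$-orbit) and with $v_\bfm$ the Fourier coefficients themselves; your version, indexed by $\pm$-pairs with the factor $2$ absorbed, coincides with it because cosine is even.
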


\section{Dirac Points}\label{sec:dirac-pts}

In this section we summarize results of  \cites{FW:12} on {\it Dirac points}. These are conical singularities in the dispersion surfaces of $H_V=-\Delta+V(\bx)$, where $V$ is a honeycomb lattice potential.  

 
 Let $\bK_\star$ denote any vertex of $\brill_h$, and recall that $L^2_{\bK_\star}$ is the space of $\bK_\star-$ pseudo-periodic functions.
 A key property of honeycomb lattice potentials, $V$, is that $H_V$ and $\mathcal{R}$, defined in (V3) of Definition \ref{honeyV}, leave a dense subspace of $L^2_{\bK_\star}$ invariant. Furthermore, restricted to this dense subspace of $L^2_{\bK_\star}$, $H_{V}$ commutes with $\mathcal{R}$: $\left[\mathcal{R},H_{V}\right] = 0$.
  Since $\mathcal{R}$ has eigenvalues $1,\tau$ and $\overline{\tau}$, it is natural to split $L^2_{\bK_\star}$ into the direct sum:
  \begin{equation*}
   L^2_{\bK_\star}\ =\ L^2_{\bK_\star,1}\oplus L^2_{\bK_\star,\tau}\oplus L^2_{\bK_\star,\overline\tau}.\label{L2-directsum}
   \end{equation*}
Here,  $L^2_{\bK_\star,\sigma}$, where $\sigma=1,\tau,\overline{\tau}$ and  $\tau=\exp(2\pi i/3)$, denote the invariant eigenspaces of $\mathcal{R}$:
   \begin{equation*}
   L^2_{\bK_\star,\sigma}\ =\ \Big\{g\in  L^2_{\bK_\star}: \mathcal{R}g=\sigma g\Big\}\ .
\label{L2Ksigma}   \end{equation*}
  
    We next give  a precise definition of a Dirac point. 
\begin{definition}\label{dirac-pt-defn}
Let $V(\bx)$ be a smooth,  real-valued, even (inversion symmetric) and  periodic potential on $\R^2$.
Denote by $\brill_h$, the Brillouin zone. Let $\bK\in\brill_h$.
The energy / quasi-momentum pair $(\bK,E_\star)\in\brill_h\times\R$ is called a {\it Dirac point} if there exists $b_\star\ge1$  such that:
\begin{enumerate}
\item $E_\star$ is a  $L^2_\bK-$ eigenvalue of  $H_V$  of multiplicity two.
\item $\textrm{Nullspace}\Big(H_V-\eig_\star I\Big)\ =\ 
{\rm span}\Big\{ \Phi_1(\bx) , \Phi_2(\bx)\Big\}$, 
where $\Phi_1\in L^2_{\bK,\tau}\ (\mathcal{R}\Phi_1=\tau\Phi_1)$ and 
$\Phi_2(\bx) = \left(\mathcal{C}\circ\mathcal{I}\right)[\Phi_1](\bx) = \overline{\Phi_1(-\bx)}\in L^2_{\bK,\bar\tau}\ (\mathcal{R}\Phi_2=\overline{\tau}\Phi_2)$,
and $\inner{\Phi_a, \Phi_b}_{L^2_\bK(\Omega)} = \delta_{ab}$, $a,b=1,2$.
\item There exist $\lambda_{\sharp}\ne0$, $\zeta_0>0$, and Floquet-Bloch eigenpairs 
\[ \bk\mapsto (\Phi_{b_\star+1}(\bx;\bk),E_{b_\star+1}(\bk))\ \ {\rm and}\ \   \bk\mapsto (\Phi_{b_\star}(\bx;\bk),E_{b_\star}(\bk)),\]
  and Lipschitz functions $e_j(\bk),\ j=b_\star, b_\star+1$, where $e_j(\bK)=0$, defined for $|\bk-\bK|<\zeta_0$ such   that
\begin{align}
E_{b_\star+1}(\bk)-E_\star\ &=\ + |\lambda_\sharp|\ 
\left| \bk-\bK \right|\ 
\left( 1\ +\ e_{b_\star+1}(\bk) \right),\nn\\
E_{b_\star}(\bk)-E_\star\ &=\ - |\lambda_\sharp|\ 
\left| \bk-\bK \right|\ 
\left( 1\ +\ e_{b_\star}(\bk) \right),\label{cones}
\end{align}
where $|e_j(\bk)|\le C|\bk-\bK|,\ j=b_\star, b_\star+1$, for some $C>0$.
\end{enumerate}
\end{definition}


In \cites{FW:12}, the authors prove the following
\begin{proposition}\label{lambda-is=|lambdasharp|}
Suppose conditions $1$ and $2$ of Definition \ref{dirac-pt-defn} hold and let 
$\{c(\bfm)\}_{\bfm\in\mathcal S}$ denote the sequence of $L^2_{\bK,\tau}-$ Fourier-coefficients of $\Phi_1(\bx)$ normalized as in \cites{FW:12}.
Define the sum
\begin{equation}
\lambda_\sharp\ \equiv\   \sum_{\bfm\in\mathcal{S}} c(\bfm)^2\ \left(\begin{array}{c}1\\ i\end{array}\right)\cdot \left(\bK+\bfm\vec\bk\right)\ .
\label{lambda-sharp}
\end{equation}
Here,  $\mathcal{S}\subset\Z^2$ is defined in \cites{FW:12}.
If $\lambda_\sharp\ne0$, then condition 
$3$ of Definition \ref{dirac-pt-defn} holds (see \eqref{cones}).
\end{proposition}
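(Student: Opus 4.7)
The plan is to perform a Lyapunov--Schmidt (or Schur-complement / Feshbach) reduction of the periodic eigenvalue problem $H(\bk)p = E(\bk)p$ on $L^2(\R^2/\Lambda_h)$ near the degenerate eigenvalue $E_\star$ at $\bk = \bK$, and then to identify the resulting $2\times 2$ effective matrix with the off-diagonal entry $\lambda_\sharp\,\overline{\mathfrak{z}}$ arising from the sum \eqref{lambda-sharp}. Write $\bk = \bK + \boldsymbol{\kappa}$ and
\[
H(\bK + \boldsymbol{\kappa}) \;=\; H(\bK) \;+\; 2\boldsymbol{\kappa}\cdot\bigl(-i\nabla + \bK\bigr) \;+\; |\boldsymbol{\kappa}|^2 .
\]
Decompose $L^2(\R^2/\Lambda_h) = X_\parallel \oplus X_\perp$, where $X_\parallel = \mathrm{span}\{\Phi_1,\Phi_2\}$ (using condition (2) of Definition \ref{dirac-pt-defn}) and $X_\perp$ is the orthogonal complement, on which $H(\bK) - E_\star$ is boundedly invertible. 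Standard Schur-complement arguments yield an analytic $2\times 2$ matrix-valued function $M(\boldsymbol{\kappa},E)$, depending on $(\boldsymbol{\kappa},E)$ in a neighborhood of $(0,E_\star)$, such that the Floquet-Bloch eigenvalues $E(\bk)$ in that neighborhood are characterized by $\det M(\boldsymbol{\kappa}, E) = 0$.

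The key step is to compute the leading behavior of $M$. A direct computation gives
\[
M(\boldsymbol{\kappa},E) \;=\; (E_\star - E)\,I \;+\; \bigl\langle \Phi_a,\,2\boldsymbol{\kappa}\cdot(-i\nabla + \bK)\,\Phi_b\bigr\rangle_{a,b=1,2} \;+\; \mathcal{O}\bigl(|\boldsymbol{\kappa}|^2 + |E-E_\star|\,|\boldsymbol{\kappa}|\bigr).
\]
Next I would use the $\mathcal{R}$--symmetry hypothesis: since $\mathcal{R}\Phi_1 = \tau\Phi_1$ and $\mathcal{R}\Phi_2 = \bar{\tau}\Phi_2$, with $[\mathcal{R}, H(\bK)] = 0$ on the appropriate dense subspace, the operator $-i\nabla + \bK$ transforms covariantly under the rotation $R$. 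A short character-theoretic argument (identical in spirit to the one in \cites{FW:12}) shows that the diagonal entries $\langle \Phi_a, \boldsymbol{\kappa}\cdot(-i\nabla+\bK)\Phi_a\rangle$ must vanish (the representation of $\mathcal{R}$ on $\Phi_a\otimes\bar{\Phi_a}$ is trivial while that on $-i\nabla$ is nontrivial), and that the off-diagonal entries have the form $\lambda_\sharp \,\overline{\mathfrak{z}}$ and $\overline{\lambda_\sharp}\,\mathfrak{z}$, where $\mathfrak{z} = \kappa^{(1)} + i\kappa^{(2)}$, for some complex constant $\lambda_\sharp$.

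To see that this $\lambda_\sharp$ is exactly the sum in \eqref{lambda-sharp}, expand $\Phi_1$ in its Fourier series $\Phi_1(\bx) = \sum_{\bfm\in\mathcal{S}} c(\bfm)\,e^{i(\bK + \bfm\vec{\bk})\cdot\bx}$ (restricted to the $\tau$-isotypic indices $\mathcal{S}$ as in \cites{FW:12}), and apply $-i\nabla + \bK$ term-by-term. Pairing with $\Phi_2 = \overline{\mathcal{C}\mathcal{I}\,\Phi_1}$ and using the relation between the Fourier coefficients of $\Phi_1$ and $\Phi_2$ forced by $\Phi_2 = \overline{\Phi_1(-\,\cdot\,)}$ gives precisely
\[
\bigl\langle \Phi_2,\, 2\boldsymbol{\kappa}\cdot(-i\nabla+\bK)\Phi_1 \bigr\rangle \;=\; 2\,\overline{\mathfrak{z}}\,\sum_{\bfm\in\mathcal{S}} c(\bfm)^2 \begin{pmatrix}1\\ i\end{pmatrix}\cdot(\bK + \bfm\vec{\bk}) \;=\; 2\lambda_\sharp\,\overline{\mathfrak{z}}.
\]
Finally, one solves $\det M(\boldsymbol{\kappa},E) = 0$. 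Under the assumption $\lambda_\sharp \ne 0$, the implicit function theorem (applied to the reduced scalar equation $(E-E_\star)^2 = |\lambda_\sharp|^2|\boldsymbol{\kappa}|^2 (1 + O(|\boldsymbol{\kappa}|))$) produces two Lipschitz branches $E_{b_\star}(\bk)$ and $E_{b_\star+1}(\bk)$ satisfying \eqref{cones} with Lipschitz remainders $e_j(\bk) = \mathcal{O}(|\bk - \bK|)$, giving condition (3) of Definition \ref{dirac-pt-defn}.

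The main obstacle is the symmetry bookkeeping in the second step: one must carefully verify that the $\mathcal{R}$-isotypic decomposition, combined with the $\mathcal{C}\mathcal{I}$-relation between $\Phi_1$ and $\Phi_2$, simultaneously forces the diagonal entries of the $2\times 2$ matrix to vanish and produces the single complex constant $\lambda_\sharp$ multiplying $\overline{\mathfrak{z}}$ in the off-diagonal slot. Once this algebraic structure is in hand, the rest is routine Schur-complement analysis.
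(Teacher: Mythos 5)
Your Lyapunov--Schmidt/Schur-complement reduction --- diagonal entries killed by the $\mathcal{R}$-symmetry, off-diagonal entries identified with $\lambda_\sharp\,\overline{\mathfrak{z}}$ through the Fourier expansion of $\Phi_1$ and the relation $\Phi_2=\overline{\Phi_1(-\,\cdot\,)}$, then two Lipschitz branches from $\det M=0$ when $\lambda_\sharp\ne0$ --- is essentially the argument of \cites{FW:12}, which this paper cites for Proposition \ref{lambda-is=|lambdasharp|} rather than reproving, and it mirrors the paper's own proof of the slice version, Proposition \ref{directional-bloch}, where the same $2\times2$ matrix $\mathcal{M}(E^{(1)},\lambda)$ and the inner-product identities of Proposition \ref{inner-prods-sharp} appear. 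The only point to tidy is that the perturbation $2\boldsymbol{\kappa}\cdot(-i\nabla+\bK)$ should be paired with the periodic parts $p_j=e^{-i\bK\cdot\bx}\Phi_j$ (equivalently, pair $\Phi_j$ with $-i\nabla$), so that $\bK$ is not counted twice, and the precise constant in front of $\lambda_\sharp\overline{\mathfrak{z}}$ is fixed by the normalization of the $c(\bfm)$ from \cites{FW:12}.
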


\nit Therefore Dirac points are found by verifying conditions $1$ and $2$ of Definition \ref{dirac-pt-defn} and the additional (non-degeneracy) condition:  $\lambda_\sharp\ne0$.

 Furthermore, Theorem 4.1 of \cites{FW:12} and Theorem 3.2 of \cites{FW:14} imply the following local behavior of Floquet-Bloch modes near the Dirac point:
\footnote{The factor $\frac{\overline{\lambda_\sharp}}{|\lambda_\sharp|}$ in \eqref{Phib_star+1}-\eqref{Phib_star} corrects 
 a typographical error in equation (3.13) of \cites{FW:14}.}

\begin{corollary}\label{Phibstar-bstar+1}

\begin{align}
\Phi_{b_\star+1}(\bx;\bk)\ &=\ \frac{1}{\sqrt2}\ \Big[\ \frac{\overline{\lambda_\sharp}}{|\lambda_\sharp|}\
 \frac{(\bk-\bK)^{(1)}+i(\bk-\bK)^{(2)}}{|\bk-\bK|}\ \Phi_1(\bx)\ +\ \Phi_2(\bx)\ \Big] + \Phi_{b_\star+1}^{(1)}(\bx;\bk), \label{Phib_star+1}\\
\Phi_{b_\star}(\bx;\bk)\ &=\   \frac{1}{\sqrt2}\ \Big[\ \frac{\overline{\lambda_\sharp}}{|\lambda_\sharp|}\
 \frac{(\bk-\bK)^{(1)}+i(\bk-\bK)^{(2)}}{|\bk-\bK|}\ \Phi_1(\bx)\ -\ \Phi_2(\bx)\ \Big] + \Phi_{b_\star}^{(1)}(\bx;\bk) ,
\label{Phib_star}\end{align}
where $ \Phi_{j}^{(1)}(\cdot;\bk) = \mathcal{O}(|\bk-\bK|)$ in $H^2(\Omega_h)$ as $|\bk-\bK|\to0$.
\end{corollary}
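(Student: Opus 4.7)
The plan is to deduce Corollary \ref{Phibstar-bstar+1} from the Lyapunov--Schmidt reduction of Theorem~4.1 of \cites{FW:12} (and its sharpening in Theorem~3.2 of \cites{FW:14}), specialized so as to track the explicit form of the leading-order eigenvectors. Passing to the equivalent periodic formulation $H(\bk)p(\cdot;\bk)=E(\bk)p(\cdot;\bk)$ on $L^2(\R^2/\Lambda_h)$ with $p=e^{-i\bk\cdot\bx}\Phi$ and $H(\bk)=-(\nabla+i\bk)^2+V$, the two-fold degenerate kernel of $H(\bK)-E_\star$ is spanned by $\tilde\Phi_j := e^{-i\bK\cdot\bx}\Phi_j$, $j=1,2$. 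I would decompose $p(\bx;\bk) = \alpha_1(\bk)\tilde\Phi_1 + \alpha_2(\bk)\tilde\Phi_2 + p^\perp(\bx;\bk)$, with $p^\perp$ in the $L^2$-orthogonal complement, and use the uniform invertibility of $H(\bK)-E_\star$ on this complement (for $|\bk-\bK|$ small) to solve $p^\perp = \mathcal{O}(|\bk-\bK|)$ in $H^2(\Omega_h)$ as a function of $\alpha=(\alpha_1,\alpha_2)^T$. This reduces the eigenvalue equation to a $2\times 2$ matrix problem $M(\bk)\alpha = (E(\bk)-E_\star)\alpha$.

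Using the expansion $H(\bk)-H(\bK) = -2i(\bk-\bK)\cdot(\nabla+i\bK) + |\bk-\bK|^2 I$, the leading-order matrix is
\begin{equation*}
M_{ij}(\bk) \;=\; -2i(\bk-\bK)\cdot\langle\tilde\Phi_i,(\nabla+i\bK)\tilde\Phi_j\rangle \;+\; \mathcal{O}(|\bk-\bK|^2).
\end{equation*}
I would show the diagonal entries vanish by a symmetry argument: $\tilde\Phi_1$ and $\tilde\Phi_2$ carry distinct characters under the induced $\mathcal{R}$-action on $L^2(\R^2/\Lambda_h)$, while the operators $\partial_{x^{(1)}}\pm i\partial_{x^{(2)}}$ shift the $\mathcal{R}$-character, so $\langle\tilde\Phi_j,(\partial_{x^{(1)}}\pm i\partial_{x^{(2)}})\tilde\Phi_j\rangle$ lies in a non-trivial $\mathcal{R}$-representation and integrates to zero over $\Omega_h$. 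For the off-diagonal entry, the Fourier-series computation of \cites{FW:12} identifies $-2i(\bk-\bK)\cdot\langle\tilde\Phi_2,(\nabla+i\bK)\tilde\Phi_1\rangle$ with $\overline{\lambda_\sharp}\,z$, where $z := (\bk-\bK)^{(1)}+i(\bk-\bK)^{(2)}$ and $\lambda_\sharp$ is given by \eqref{lambda-sharp}; this is essentially the very sum appearing in the definition \eqref{lambda-sharp}.

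The matrix $M(\bk) = \begin{pmatrix} 0 & \lambda_\sharp\bar z \\ \overline{\lambda_\sharp}z & 0 \end{pmatrix} + \mathcal{O}(|\bk-\bK|^2)$ then has eigenvalues $\pm|\lambda_\sharp||\bk-\bK|(1+o(1))$, reproducing the cones \eqref{cones}, and unit eigenvectors
\begin{equation*}
\alpha_\pm \;=\; \tfrac{1}{\sqrt2}\bigl(\tfrac{\overline{\lambda_\sharp}}{|\lambda_\sharp|}\tfrac{z}{|z|},\;\pm 1\bigr)^T + \mathcal{O}(|\bk-\bK|).
\end{equation*}
Multiplying back by $e^{i\bk\cdot\bx}$ and using $e^{i\bk\cdot\bx}\tilde\Phi_j = \Phi_j + \mathcal{O}(|\bk-\bK|)$ in $H^2(\Omega_h)$, one obtains the formulas \eqref{Phib_star+1}--\eqref{Phib_star} by associating $\Phi_{b_\star+1}$ with the $+$ branch and $\Phi_{b_\star}$ with the $-$ branch, with the remainder $\Phi^{(1)}_j$ absorbing $p^\perp$, the $\mathcal{O}(|\bk-\bK|)$ amplitude correction, and the discrepancy between $e^{i\bk\cdot\bx}\tilde\Phi_j$ and $\Phi_j$.

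The main obstacle is bookkeeping rather than analytic: the precise placement of $\lambda_\sharp$ versus $\overline{\lambda_\sharp}$, and of $z$ versus $\bar z$, depends on the choice of $\mathcal{R}$-eigenvector $\Phi_1\in L^2_{\bK,\tau}$ and the normalization of the Fourier coefficients $c(\bfm)$ that define $\lambda_\sharp$ in \eqref{lambda-sharp}. Once these conventions are aligned with those of \cites{FW:12,FW:14}, every step reduces to standard two-level perturbation theory plus the already-established $H^2$-remainder bound from the resolvent of $H(\bK)-E_\star$ on the spectral complement.
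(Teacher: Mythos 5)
Your proposal is correct and follows essentially the same route as the paper: the Corollary is exactly the content of the Lyapunov--Schmidt / degenerate-perturbation reduction of Theorem 4.1 of \cites{FW:12} and Theorem 3.2 of \cites{FW:14}, and the paper itself repeats this computation (restricted to the $\ktilde_2-$ slice) in the proof of Proposition \ref{directional-bloch}, where the same $2\times2$ reduced matrix and the same leading eigenvectors appear. One bookkeeping slip: with the inner-product identities used in the paper, $\left\langle\Phi_1,\ktilde_2\cdot\nabla\Phi_2\right\rangle=\frac{i}{2}\overline{\lambda_\sharp}\mathfrak{z}_2$ and $\left\langle\Phi_2,\ktilde_2\cdot\nabla\Phi_1\right\rangle=\frac{i}{2}\lambda_\sharp\overline{\mathfrak{z}_2}$ (see \eqref{ip12_1}--\eqref{ip21_2}), the reduced matrix has $M_{12}\propto\overline{\lambda_\sharp}\,z$ and $M_{21}\propto\lambda_\sharp\,\overline{z}$, i.e.\ the transpose-conjugate of the matrix you displayed; your written matrix has eigenvectors $\tfrac{1}{\sqrt2}\bigl(\tfrac{\lambda_\sharp}{|\lambda_\sharp|}\tfrac{\overline z}{|z|},\pm1\bigr)^T$ rather than the ones you state. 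With the entries placed as in the paper, your claimed eigenvectors and hence \eqref{Phib_star+1}--\eqref{Phib_star} follow exactly, so this is only the convention alignment you already flagged, not a gap in the argument.
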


In the next section we discuss the result of \cites{FW:12}, that $-\Delta +\eps V$ has Dirac points for generic $\eps$.

\subsection{Dirac points of $-\Delta+\eps V(\bx)$, $\eps$ generic} \label{dpts-generic-eps}

The strategy used in \cites{FW:12} to produce Dirac points is based on a bifurcation theory for the operator $-\Delta + \eps V(\bx)$ acting in $L^2_\bK$, from the $\eps=0$ limit. We describe the setup here, since we shall make detailed use of it. 

Consider $-\Delta$ acting on $L^2_\bK$. We note that $E^0_\star\equiv |\bK|^2$ is an eigenvalue with multiplicity three, since the three vertices of the regular hexagon, $\mathcal{B}_h$: $\bK, R\bK$ and $R^2\bK$ are equidistant from the origin. 
The corresponding three-dimensional eigenspace has an orthonormal basis consisting of the functions: $\Phi_\sigma(\bx) = e^{i\bK\cdot\bx}p_\sigma(\bx) \in L^2_{\bK,\sigma},\ \sigma=1,\tau,\overline{\tau}$, defined by
 \begin{align}
 \Phi_\sigma (\bx) &= e^{i\bK\cdot\bx} p_\sigma(\bx) ,  \qquad ( \sigma=1,\tau,\overline{\tau} )\nn\\
 &= \frac{1}{\sqrt{3|\Omega|}}\ \Big[\ e^{i\bK\cdot\bx} + \overline{\sigma} e^{iR\bK\cdot\bx}+  \sigma e^{iR^2\bK\cdot\bx}\ \Big] \nn \\
 &= \frac{1}{\sqrt{3|\Omega|}}\ e^{i\bK\cdot\bx} \Big[\  1 + \overline{\sigma} e^{i\bk_2\cdot\bx} + \sigma e^{-i\bk_1\cdot\bx}\ \Big]\ .
  \label{p_sigma}
  \end{align}
  We note that 
  \begin{equation}
  \inner{\Phi_\sigma,\Phi_\tsigma}_{L^2_\bK}=
  \inner{p_\sigma, p_\tsigma}_{L^2(\R^2/\Lambda_h)}= \delta_{\sigma,\tsigma} \ .
  \label{orthon}\end{equation}

  In Theorem 5.1 of \cites{FW:12}, the authors proved that for real, small and non-zero $\eps$
  and under the assumption that
 $V$ satisfies the non-degeneracy condition:
\begin{equation}
V_{1,1}\ \equiv\ 
\frac{1}{|\Omega_h|} \int_{\Omega_h} e^{-i(\bk_1+\bk_2)\cdot\by}\ V(\by)\ d\by\ne0,
\label{V11eq0}
\end{equation}
that  the multiplicity three eigenvalue, $E^0_\star=|\bK|^2$, splits into\\
  (A) a multiplicity two eigenvalue, $E^\eps_\star$, with two-dimensional $L^2_{\bK,\tau}\oplus L^2_{\bK,\overline{\tau}}-$ eigenspace structure,  and\\
  (B) a simple eigenvalue, $\widetilde{E}^\eps$, with one-dimensional eigenspace, a subspace of $L^2_{\bK,1}$. 
  
 For all  $\eps$ sufficiently  small, the quasi-momentum pairs $(\bK,E^\eps_\star)$ are Dirac points in the sense of Definition \ref{dirac-pt-defn}. 
 Furthermore,
  a continuation argument is then used to extend this result from the regime of sufficiently small $\eps$ to the regime of arbitrary $\eps$ outside of a possible discrete set; see \cites{FW:12}
   and the refinement concerning the possible exceptional set of $\eps$ values in Appendix D of \cites{FLW-MAMS:15}. 
   We first state the result for arbitrarily large and generic $\eps$, and then the more refined picture for $|\eps|>0$ and sufficiently small.

\begin{theorem}\label{diracpt-thm}[Generic $\eps$]
Let $V(\bx)$ be a honeycomb lattice potential and consider the 
parameter family of Schr\"odinger operators:
\begin{equation*}
H^{(\eps)}\ \equiv\ -\Delta + \eps\ V(\bx),
\label{Heps-def}
\end{equation*}
where $V$ satisfies the non-degeneracy condition \eqref{V11eq0}. 
Then, there exists $\eps_0>0$, such that for all real and nonzero $\eps$, outside of a possible discrete subset of $\R\setminus(-\eps_0,\eps_0)$,  $H^{(\eps)}$ has Dirac points $(\bK,E^\eps_\star)$ in the sense of Definition \ref{dirac-pt-defn}.

Specifically, for all such $\eps$, 
 there exists $b_\star\ge1$ such that $\eig_\star\equiv\eig^\eps_{b_\star}(\bK)=\eig^\eps_{b_\star+1}(\bK)$ is a $\bK-$ pseudo-periodic eigenvalue of multiplicity two
  where
\begin{enumerate}
\item

(a) $\eig^\eps_\star$  is an $L^2_{\bK,\tau}-$ eigenvalue of $H^{(\eps)}$ of multiplicity one, with corresponding eigenfunction, 
$\Phi^\eps_1(\bx)$.\\
(b) $\eig^\eps_\star$ is an $L^2_{\bK,\bar\tau}-$ eigenvalue of $H^{(\eps)}$ of multiplicity one, with corresponding eigenfunction, $\Phi^\eps_2(\bx)=\overline{\Phi^\eps_1(-\bx)}$.\\
(c) $\eig^\eps_\star$ is \underline{not} an $L^2_{\bK,1}-$ eigenvalue of $H^{(\eps)}$.
\item  There exist $\delta_\eps>0,\ C_\eps>0$
  and Floquet-Bloch eigenpairs: $(\Phi_j^\eps(\bx;\bk), \eig_j^\eps(\bk))$  
  and  Lipschitz continuous functions, $e_j(\bk)$,  $j=b_\star, b_\star+1$,
    defined for  $|\bk-\bK|<\delta_\eps$,  such that 
\begin{align}
\eig^\eps_{b_\star+1}(\bk)-\eig^\eps(\bK)\ &=\ +\ |\lambda^\eps_\sharp|\ 
\left| \bk-\bK \right|\ 
\left( 1\ +\ e^\eps_{b_\star+1}(\bk) \right)\ \ {\rm and}\nn\\
\eig^\eps_{b_\star}\bk)-\eig^\eps(\bK)\ &=\ -\ |\lambda^\eps_\sharp|\ 
\left| \bk-\bK \right|\ 
\left( 1\ +\ e^\eps_{b_\star}\bk) \right),\label{conical}
\end{align}
 and where
\begin{equation}
\lambda_\sharp^\eps\ \equiv\   \sum_{\bfm\in\mathcal{S}} c(\bfm,\eig_\star^\eps,\eps)^2\ \left(\begin{array}{c}1\\ i\end{array}\right)\cdot \left(\bK+\vec\bfm\vec\bk\right) \ \ne\ 0
\label{lambda-sharp2}
\end{equation}
 is given in terms of $\{c(\bfm,\eig_\star^\eps,\eps)\}_{\bfm\in\mathcal{S}}$, the $L^2_{\bK,\tau}-$ Fourier coefficients of $\Phi_1^\eps(\bx;\bK)$.
 Furthermore, $|e_j^\eps(\bk)| \le C_\eps |\bk-\bK|$, $j=b_\star, b_\star+1$.  
Thus, in a neighborhood of the point $(\bk,\eig)=(\bK,\eig_\star^\eps)\in \R^3 $, the dispersion surface is closely approximated by a  circular {\it cone}.
\end{enumerate}
\end{theorem}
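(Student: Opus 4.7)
The strategy is a Lyapunov--Schmidt reduction at $\eps=0$ followed by a real-analytic continuation to generic $\eps$. At $\eps=0$, the operator $-\Delta$ on $L^2_\bK$ has the eigenvalue $E^0_\star=|\bK|^2$ of multiplicity exactly three: the three lattice points $\bK,R\bK,R^2\bK$ of $-\bK+\Lambda_h^*$ are equidistant from the origin, and strictly closer than any others. An orthonormal basis for the kernel is $\{\Phi_1,\Phi_\tau,\Phi_{\bar\tau}\}$ as in \eqref{p_sigma}, with $\Phi_\sigma\in L^2_{\bK,\sigma}$. Because $V$ commutes with $\mathcal{R}$ (property (V3)) and the three isotypic subspaces of $\mathcal{R}$ are mutually orthogonal, the $3\times 3$ perturbation matrix $M_{\sigma,\sigma'}=\inner{\Phi_\sigma,V\Phi_{\sigma'}}_{L^2_\bK}$ is diagonal.

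To compute the diagonal entries, I expand $|1+\bar\sigma e^{i\bk_2\cdot\bx}+\sigma e^{-i\bk_1\cdot\bx}|^2$ and use that the $\widetilde R$-orbit of $(1,1)$ is $\{(1,1),(-1,0),(0,-1)\}$ together with the reality and evenness of $V$, so that $V_{(\pm 1,0)}=V_{(0,\pm 1)}=V_{(\pm 1,\pm 1)}=V_{1,1}$. A direct calculation then yields
\[
M_1 = V_{0,0}+2V_{1,1},\qquad M_\tau=M_{\bar\tau}=V_{0,0}-V_{1,1},
\]
so that $M_1-M_\tau=3V_{1,1}$. Under the nondegeneracy hypothesis $V_{1,1}\ne 0$, the triple eigenvalue splits for $0<|\eps|<\eps_0$ into a simple $L^2_{\bK,1}$-eigenvalue $\widetilde E^\eps = E^0_\star+\eps M_1+O(\eps^2)$ and a double eigenvalue $E^\eps_\star=E^0_\star+\eps M_\tau+O(\eps^2)$ with a one-dimensional eigenspace in each of $L^2_{\bK,\tau}$ and $L^2_{\bK,\bar\tau}$. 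Standard Rellich--Kato analytic perturbation theory on each invariant subspace $L^2_{\bK,\sigma}$ produces analytic branches of eigenvectors $\Phi^\eps_j$, giving (1)(a)-(c) for small $\eps$. The Fourier coefficients $c(\bfm,E^\eps_\star,\eps)$ in \eqref{lambda-sharp2} depend analytically on $\eps$; at $\eps=0$ the sum reduces (up to a nonzero normalization) to $\bK\cdot(1,i)^T\ne 0$, so $\lambda^\eps_\sharp\ne 0$ for all small $\eps$, and Proposition \ref{lambda-is=|lambdasharp|} gives the conical behavior \eqref{conical}. The index $b_\star$ is identified by ordering the $L^2_\bK$-spectrum of $H^{(\eps)}$ and using continuity in $\eps$.

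For the global statement, restrict $H^{(\eps)}$ to each invariant subspace $L^2_{\bK,\sigma}$; since $\eps\mapsto \eps V$ is an entire self-adjoint family of type (A), the eigenvalues and spectral projections are real-analytic in $\eps$ on each subspace (Kato). Let $\widetilde E^\eps$ and $E^\eps_\star$ denote the global analytic continuations of the branches identified in Step 2. The real-analytic functions
\[
\eps\ \longmapsto\ E^\eps_\star-\widetilde E^\eps\ \ \text{and}\ \ \eps\ \longmapsto\ \lambda^\eps_\sharp
\]
are each nontrivial in a neighborhood of $\eps=0$, so each vanishes only on a discrete subset of $\R$. Away from this combined discrete set, $E^\eps_\star$ remains a simple $L^2_{\bK,\tau}$-eigenvalue, distinct from every $L^2_{\bK,1}$-eigenvalue, and $\lambda^\eps_\sharp\ne 0$ — i.e.\ all conditions of Definition \ref{dirac-pt-defn} hold.

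The main obstacle is the global continuation step. One must (i) verify that the branches constructed near $\eps=0$ extend analytically for all real $\eps$ without running into singularities of the resolvent (handled by Kato's theorem on analytic families of type (A), since $\eps V$ is bounded relative to $-\Delta$), and (ii) rule out accidental crossings with branches of $L^2_{\bK,1}$-eigenvalues that \emph{descend from different levels} of the $\eps=0$ spectrum. Step (ii) is the delicate point: a priori, a $L^2_{\bK,1}$-branch from some higher (or lower) unperturbed eigenvalue could collide with $E^\eps_\star$ at some $\eps$. The remedy is that both branches are real-analytic and their difference is not identically zero (it is nonzero to leading order in $\eps$), so the set of collisions is necessarily discrete; this is what produces the "possible discrete set" in the statement.
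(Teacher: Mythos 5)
Your strategy is the same as the one the paper relies on (the proof is delegated to \cites{FW:12} and Appendix~D of \cites{FLW-MAMS:15}): Lyapunov--Schmidt/degenerate perturbation theory at $\eps=0$ using the $\mathcal{R}$-isotypic decomposition, the splitting $M_1-M_\tau=3V_{1,1}$, nonvanishing of $\lambda_\sharp$ at $\eps=0$, and real-analytic continuation in $\eps$. The small-$\eps$ computation is correct (minor slip: the three points $\bK, R\bK, R^2\bK$ lie in $\bK+\Lambda_h^*$, not $-\bK+\Lambda_h^*$). One point you state but do not justify: the exact (all orders in $\eps$) coincidence of the $L^2_{\bK,\tau}$- and $L^2_{\bK,\bar\tau}$-eigenvalues does not follow from the first-order equality $M_\tau=M_{\bar\tau}$; it requires the antiunitary symmetry $\mathcal{C}\circ\mathcal{I}$ (complex conjugation composed with inversion), which commutes with $H^{(\eps)}$ because $V$ is real and even and maps $L^2_{\bK,\tau}$-eigenfunctions to $L^2_{\bK,\bar\tau}$-eigenfunctions with the same eigenvalue --- this is exactly the map $\Phi\mapsto\overline{\Phi(-\cdot)}$ you invoke for $\Phi_2^\eps$, so the fix is one line, but it must be said.

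The genuine soft spot is the discreteness of the exceptional set. Your argument correctly shows that the difference of $E^\eps_\star$ with any \emph{single} fixed analytic branch (an $L^2_{\bK,1}$-branch, or another $L^2_{\bK,\tau}$-branch descending from a different unperturbed level --- the latter must also be excluded to keep the multiplicity in $L^2_{\bK,\tau}$ equal to one) vanishes on a discrete set. But there are countably many such branches, and a countable union of discrete sets need not be discrete. To conclude you need a local-finiteness argument: on any compact interval $|\eps|\le R$, the branch $E^\eps_\star$ stays in a bounded energy window, and since $\|\eps V\|_\infty\le R\|V\|_\infty$ only finitely many of the other analytic branches can enter that window for $|\eps|\le R$; each contributes finitely many collisions, so the exceptional set is locally finite, hence discrete. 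This is precisely the refinement the paper attributes to Appendix~D of \cites{FLW-MAMS:15}, and without it your argument only yields a countable exceptional set.
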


\subsection{Dirac points of $-\Delta+\eps V(\bx)$, $\eps$ small} \label{dpts-small-eps}

In this section we collect explicit information on Dirac points for the weak potential regime.

\begin{theorem}\label{diracpt-small-thm}[Small $\eps$]
There exists $\eps_0>0$, such that for all $\eps\in I_{\eps_0}\equiv(-\eps_0,\eps_0)\setminus\{0\}$ the following holds:
\begin{enumerate}
\item For  $\eps\in I_{\eps_0}$,  $-\Delta +\eps V(\bx)$ has
\begin{enumerate}[(a)]
\item a multiplicity two $L^2_{\bK}$- eigenvalue $E^\eps_\star$, where $\textrm{ker}(-\Delta + \eps V)\subset L^2_{\bK,\tau}\oplus
L^2_{\bK,\overline{\tau}}$, and
\item a multiplicity one $L^2_{\bK}$- eigenvalue $\widetilde{E}_\star^\eps$, where $\textrm{ker}(-\Delta + \eps V)\subset L^2_{\bK,1}$.
\end{enumerate}
\item The maps $\eps\mapsto E^\eps_\star$ and $\eps\mapsto \widetilde{E}^\eps_\star$ are well defined for all $\eps$ in the deleted neighborhood of zero, $I_{\eps_0}$. They are constructed via perturbation theory of a simple eigenvalue in 
$L^2_{\bK,\tau}$ and in $L^2_{\bK,1}$, respectively. Therefore,  $E^\eps_\star$ and $\widetilde{E}^\eps_\star$ are real-analytic functions of $\eps\in I_{\eps_0}$. Moreover, they have the expansions:
\begin{align}
E^\eps_\star\ &= |\bK|^2 + \eps(V_{0,0}-V_{1,1})+\mathcal{O}(\eps^2), \label{E*expand}\\
\widetilde{E}^\eps_\star &=|\bK|^2 + \eps(V_{0,0}+2V_{1,1})+\mathcal{O}(\eps^2). \label{tildeE*expand}
\end{align}
\item If $\eps V_{1,1}>0$, then conical intersections occur between the $1^{st}$ and $2^{nd}$ dispersion surfaces at the vertices of $\mathcal{B}_h$. Specifically, \eqref{conical} holds with $b_\star=1$.
\item If  $\eps V_{1,1}<0$, 
then conical intersections occur between the $2^{nd}$ and $3^{rd}$ dispersion surfaces at the vertices of $\mathcal{B}_h$. Specifically, \eqref{conical} holds with $b_\star=2$.

For $\eps\in I_{\eps_0}$, 
\begin{equation}
| \lambda_\sharp^\eps| = 4\pi |\Omega_h| + \mathcal{O}(\eps) = 4\pi|\bv_1\wedge\bv_2|+ \mathcal{O}(\eps).
\label{lambda-sharp-expand}
\end{equation}
\end{enumerate}
\end{theorem}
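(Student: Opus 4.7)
\textbf{Proof proposal for Theorem \ref{diracpt-small-thm}.} The plan is to exploit the symmetry-decomposition $L^2_{\bK}=L^2_{\bK,1}\oplus L^2_{\bK,\tau}\oplus L^2_{\bK,\bar\tau}$ of the $\mathcal{R}$-invariant operator $H^{(\eps)}=-\Delta+\eps V$ in order to reduce the threefold-degenerate perturbation problem to three decoupled simple-eigenvalue problems, to which ordinary analytic perturbation theory applies.

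First, observe that at $\eps=0$, the Laplacian $-\Delta$ restricted to $L^2_\bK$ has eigenvalue $|\bK|^2=E^0_\star$ of multiplicity exactly three: the plane waves $\exp(i\bK\cdot\bx),\exp(iR\bK\cdot\bx),\exp(iR^2\bK\cdot\bx)$ are a basis, since $|\bK|=|R\bK|=|R^2\bK|$ and these are the only $\bfm\vec\bk+\bK$ lying on that circle by an arithmetic calculation (this is standard in \cites{FW:12} and follows from the geometry of the triangular lattice). Recombining into $\mathcal{R}$-eigenvectors gives the orthonormal basis $\{\Phi_1,\Phi_\tau,\Phi_{\bar\tau}\}$ as in \eqref{p_sigma}, one vector in each symmetry sector.

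Next, since $V$ is $\mathcal{R}$-invariant and $\bK$-pseudo-periodic boundary conditions are preserved by $\mathcal{R}$, one has $[H^{(\eps)},\mathcal{R}]=0$ on a dense subspace of $L^2_\bK$, so each sector $L^2_{\bK,\sigma}$ is invariant under $H^{(\eps)}$. Within each sector, the $\eps=0$ eigenvalue $|\bK|^2$ is \emph{simple}. Standard Rellich--Kato analytic perturbation theory (as in Reed--Simon, Vol.~IV) then produces, for each $\sigma\in\{1,\tau,\bar\tau\}$, a real-analytic branch $\eps\mapsto E^\eps_\sigma$ defined on a small neighborhood of $0$, with the Rayleigh--Schr\"odinger expansion
\begin{equation*}
E^\eps_\sigma\ =\ |\bK|^2+\eps\,\inner{\Phi_\sigma,V\Phi_\sigma}_{L^2_\bK}+\mathcal{O}(\eps^2).
\end{equation*}
A short calculation using Proposition \ref{honey-cosine}, the Fourier form of $\Phi_\sigma$ in \eqref{p_sigma}, and the identities $V_{1,1}=V_{-1,0}=V_{0,-1}=V_{1,0}=V_{0,1}=V_{-1,-1}$ (since $V$ is real, even, and $\mathcal{R}$-invariant) yields $\inner{\Phi_1,V\Phi_1}=V_{0,0}+2V_{1,1}$ and $\inner{\Phi_\tau,V\Phi_\tau}=\inner{\Phi_{\bar\tau},V\Phi_{\bar\tau}}=V_{0,0}-V_{1,1}$, using $\tau+\bar\tau=-1$. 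The equality of the second and third expectations also follows abstractly from the conjugation symmetry $\Phi_{\bar\tau}(\bx)=\overline{\Phi_\tau(-\bx)}$ together with reality and evenness of $V$. This gives $E^\eps_\star\equiv E^\eps_\tau=E^\eps_{\bar\tau}$ and $\widetilde{E}^\eps_\star\equiv E^\eps_1$, with the expansions \eqref{E*expand}, \eqref{tildeE*expand}.

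Because the discrepancy $\widetilde{E}^\eps_\star-E^\eps_\star=3\eps V_{1,1}+\mathcal{O}(\eps^2)$ is nonzero on a punctured neighborhood of $\eps=0$ (by the non-degeneracy condition $V_{1,1}\ne0$), the eigenvalue $E^\eps_\star$ remains a \emph{doubly degenerate} $L^2_\bK$-eigenvalue whose eigenspace is $\mathrm{span}\{\Phi^\eps_1,\Phi^\eps_2\}\subset L^2_{\bK,\tau}\oplus L^2_{\bK,\bar\tau}$, establishing hypotheses (1)--(2) of Definition \ref{dirac-pt-defn}. The sign of $\widetilde{E}^\eps_\star-E^\eps_\star$ gives the band-index statements: when $\eps V_{1,1}>0$, the doublet $E^\eps_\star$ sits below $\widetilde{E}^\eps_\star$, so the crossing is between bands $1$ and $2$ ($b_\star=1$); when $\eps V_{1,1}<0$ the order is reversed and $b_\star=2$. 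To finish (3)--(4), we must verify the cone condition \eqref{conical} and compute $|\lambda_\sharp^\eps|$. For this, we apply Proposition \ref{lambda-is=|lambdasharp|} via the formula \eqref{lambda-sharp2}: the Fourier coefficients $c(\bfm,E^\eps_\star,\eps)$ of the perturbed eigenfunction $\Phi^\eps_1$ depend analytically on $\eps$, so $\lambda^\eps_\sharp=\lambda^0_\sharp+\mathcal{O}(\eps)$, and the $\eps=0$ value is computed from the three explicit Fourier coefficients of $\Phi_\tau$ in \eqref{p_sigma}. Using $R\bK=\bK+\bk_2,\ R^2\bK=\bK-\bk_1$, and $1+\tau\bar\tau+\bar\tau\tau=3$, this direct computation yields $|\lambda_\sharp^0|=4\pi|\bv_1\wedge\bv_2|=4\pi|\Omega_h|$, hence \eqref{lambda-sharp-expand}, and in particular $\lambda^\eps_\sharp\ne0$ for small $\eps$, which activates Proposition \ref{lambda-is=|lambdasharp|} and establishes \eqref{conical}.

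The only genuinely delicate point in the proof is verifying that the $\eps=0$ eigenvalue $|\bK|^2$ of $-\Delta$ in $L^2_\bK$ has multiplicity exactly three (no other $\bfm\vec\bk+\bK$ has the same length); this is a number-theoretic statement about the triangular lattice but is well-known. Everything else is either standard analytic perturbation theory of a simple eigenvalue in each symmetry sector, or explicit Fourier-coefficient bookkeeping; the arithmetic leading to the clean constant $4\pi|\Omega_h|$ in \eqref{lambda-sharp-expand} requires care but is mechanical.
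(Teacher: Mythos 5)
Your argument is correct and follows essentially the same route as the paper, which defers the proof to \cites{FW:12} but sketches exactly this strategy in Section \ref{dpts-generic-eps}: decompose $L^2_\bK$ into the $\mathcal{R}$-eigenspaces $L^2_{\bK,1}\oplus L^2_{\bK,\tau}\oplus L^2_{\bK,\overline\tau}$, apply simple-eigenvalue analytic perturbation theory in each invariant sector starting from the threefold-degenerate level $|\bK|^2$ of $-\Delta$, pair the $\tau$ and $\overline\tau$ sectors via $\mathcal{C}\circ\mathcal{I}$, and verify $\lambda_\sharp^\eps\ne0$ by computing its $\eps=0$ limit from the explicit plane-wave eigenfunctions. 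Your first-order expectations $\inner{\Phi_1,V\Phi_1}=V_{0,0}+2V_{1,1}$ and $\inner{\Phi_\tau,V\Phi_\tau}=V_{0,0}-V_{1,1}$ check out and reproduce \eqref{E*expand}--\eqref{tildeE*expand}.
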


\nit The expansions \eqref{E*expand}, \eqref{tildeE*expand} and \eqref{lambda-sharp-expand} are displayed in equations (6.22), (6.25) and (6.30) of \cites{FW:12}.

\nit The intersections of the first two dispersion surfaces for  $\eps V_{1,1}>0$,  and of the second and third dispersion surfaces for $\eps V_{1,1}<0$, are illustrated in the first two panels of Figure \ref{fig:eps_V11_neg} along a dispersion slice corresponding to the zigzag edge.

\section{Edges and  dual slices}\label{ds-slices}

Edge states are solutions of an eigenvalue equation on $\R^2$, which are spatially localized transverse to a line-defect or ``edge'' and propagating (plane-wave like or pseudo-periodic) parallel to the edge. Recall that $\Lambda_h=\Z\bv_1\oplus\Z\bv_2$ and $\Lambda_h^*=\Z\bk_1\oplus\Z\bk_2$. We consider edges which are lines
 of the form $\R (a_1\bv_1+a_2\bv_2)$, where $(a_1,b_1)=1$, {\it i.e.} $a_1$ and $b_1$ are relatively prime. 
 
 We fix an edge by choosing $\vtilde_1=a_1\bv_1+b_1\bv_2$, where  $(a_1, b_1)=1$. Since 
   $a_1, b_1$ are relatively prime,  there exists a relatively prime pair of integers: $a_2,b_2$  such that $a_1b_2-a_2b_1=1$. 
     Set $\vtilde_2 = a_2 \bv_1 + b_2 \bv_2$. 
     %
%
%
It follows that $\Z\vtilde_1\oplus\Z\vtilde_2=\Z\bv_1\oplus\Z\bv_2=\Lambda_h$.
Since $a_1b_2-a_2b_1=1$, we have dual lattice vectors $\ktilde_1, \ktilde_2\in\Lambda_h^*$, given by
\[ \ktilde_1=b_2\bk_1-a_2\bk_2,\ \  \ktilde_2=-b_1\bk_1+a_1\bk_2, \]
  which satisfy
\begin{equation*} \ktilde_\ell \cdot \vtilde_{\ell'} = 2\pi \delta_{\ell, \ell'},\ \ 1\leq \ell, \ell' \leq 2.\label{ktilde-orthog}
\end{equation*}
Note that 
 $\Z\ktilde_1\oplus\Z\ktilde_2=\Z\bk_1\oplus\Z\bk_2=\Lambda^*_h$.

Fix an edge, $\R\vtilde_1$. In our construction of edge states, an  important role is played by the ``quasi-momentum slice'' of the band structure through the Dirac point and ``dual'' to the given edge.


\begin{definition}\label{dual-slice}
 For the edge $\R\vtilde_1$,  {\it the band structure slice at quasi-momentum $\bK$, dual to the edge $\R\vtilde_1$}, 
   is defined to be the locus given by the union of curves:
  \begin{equation*}
\lambda\mapsto E_b(\bK+\lambda\ktilde_2),\ \ |\lambda|\le 1/2,\ \ b\ge1.
\label{tv-slice}
\end{equation*}
\end{definition}

We give two examples:
\begin{enumerate}
\item Zigzag: $\vtilde_1=\bv_1$, $\vtilde_2=\bv_2$ and $\ktilde_1=\bk_1$ and $\ktilde_2=\bk_2$.
\\  In this case, we shall refer to the {\sl zigzag slice}.
\item Armchair:  $\vtilde_1=\bv_1+\bv_2$, $\vtilde_2=\bv_2$ and $\ktilde_1=\bk_1$ and $\ktilde_2=\bk_2-\bk_1$. \\ In this case, we shall refer to the {\sl armchair slice}.
\end{enumerate}

Figure \ref{fig:eps_V11_neg} (top row) displays three cases, for $-\Delta+\eps V$, where $V$ is a honeycomb lattice potential. Shown are the curves
 $\lambda\mapsto E_b(\bK+\lambda\ktilde_2)$, $b=1,2,3$ for (i) $\eps V_{1,1}>0$ and the zigzag slice (left panel),  (ii) $\eps V_{1,1}<0$ and the zigzag slice (middle panel), and (iii) the armchair slice (right panel). As discussed in the introduction, of these three examples, case (i) is the one for which the spectral \nofold condition of  Definition \ref{SGC} holds.


\subsection{Completeness of Floquet-Bloch modes on $L^2(\Sigma)$}\label{Fourier-edge}

For $\vtilde_1\in\Lambda_h$, introduce the cylinder $\Sigma= \R^2/\Z\vtilde_1$. Consider the family of  states $\Phi_b(\bx;\bK+\lambda\ktilde_2),\ b\ge1$ for $\lambda\in[0,1]$ (or equivalently $|\lambda|\le1/2$) corresponding to quasi-momenta along a line segment within $\B_h$ connecting $\bK$ to $\bK+\ktilde_2$.
Since $\ktilde_2\cdot\vtilde_1=0$, all along this segment we have $\bK\cdot\vtilde_1-$ pseudo-periodicity:
\begin{equation}
\Phi_b(\bx+\vtilde_1;\bK+\lambda\ktilde_2)= e^{i(\bK+\lambda\ktilde_2)\cdot\vtilde_1} \Phi_b(\bx;\bK+\lambda\ktilde_2)= e^{i\bK\cdot\vtilde_1} \Phi_b(\bx;\bK+\lambda\ktilde_2) \ .
\nn\end{equation}
The main result of this subsection is that any $f\in L_\kparv^2(\Sigma)$ is a superposition of these modes.

\begin{theorem}\label{fourier-edge}
Let $f\in L_\kparv^2(\Sigma)= L_\kparv^2(\R^2/\Z\vtilde_1)$. Then,
\begin{enumerate}
\item  $f$ can be represented as a superposition of Floquet-Bloch modes of $-\Delta+V$ with quasimomenta in $\mathcal{B}$ located on the segment
$\Big\{\bk=\bK+\lambda\ktilde_2: |\lambda|\le\frac{1}{2}\Big\}:$
\begin{align}
f(\bx)
&=\sum_{b\ge1}\int_{-\frac12}^{\frac12} \widetilde{f}_b(\lambda) \Phi_b(\bx;\bK+\lambda \ktilde_2) d\lambda  \nn \\ 
&= e^{i\bK\cdot\bx} \sum_{b\ge1}\int_{-\frac12}^{\frac12} e^{i\lambda\ktilde_2\cdot\bx}\widetilde{f}_b(\lambda) p_b(\bx;\bK+\lambda \ktilde_2) d\lambda,\qquad {\rm where} \label{fb-Sigma} \\ 
\widetilde{f}_b(\lambda) \ &=\ \left\langle \Phi_b(\cdot,\bK+\lambda\ktilde_2),f(\cdot)\right\rangle_{L_\kparv^2(\Sigma)} . \nn
\end{align}
Here, the sum representing $e^{-i\bK\cdot\bx}f(\bx)$, in \eqref{fb-Sigma} converges in the $L^2(\Sigma)$ norm. 
\item In the special case where $V\equiv0$:
 \begin{align*}
 f(\bx) = \sum_{{\bfm}\in\Z^2}e^{ i(\bK+\bfm\vec\ktilde)\cdot\bx } \int_{-\frac12}^{\frac12}   
 \widehat{f}_\bfm(\lambda)e^{i\lambda\ktilde_2\cdot\bx} d\lambda\ .
 \end{align*}
 \end{enumerate}
\end{theorem}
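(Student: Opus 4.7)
My plan is to view \eqref{fb-Sigma} as a \emph{partial Floquet--Bloch transform}: quotienting $\R^2$ by $\Z\vtilde_1$ to pass to the cylinder $\Sigma$, the translation group $\Z\vtilde_2$ still acts on $L^2(\Sigma)$ and commutes with $H^{(0)}=-\Delta+V$, so the Bloch decomposition of $L^2(\Sigma)$ into $\vtilde_2$-fibers diagonalises $H^{(0)}$; the coefficients on each fiber are then provided by the ordinary Floquet--Bloch theory of Section \ref{flo-bl-theory}.

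First I would reduce to an unweighted cylinder by writing $f(\bx)=e^{i\bK\cdot\bx}g(\bx)$. The factor $e^{-i\bK\cdot\vtilde_1}$ cancels the pseudo-periodicity phase of $f$ in the $\vtilde_1$-direction, and $e^{\pm i\bK\cdot\bx}$ is bounded, so $g\in L^2(\Sigma)$. Parametrising $\bx=\alpha\vtilde_1+\beta\vtilde_2$ identifies $g$ with a function on $[0,1]_\alpha\times\R_\beta$ that is jointly $L^2$ and $1$-periodic in $\alpha$. For $\lambda\in[-\tfrac12,\tfrac12]$ define
\[ g_\lambda(\bx)\;=\;\sum_{n\in\Z}g(\bx+n\vtilde_2)\,e^{-2\pi i\lambda n}, \]
first on the dense subspace of $g$'s rapidly decaying in $\beta$, and then extending by the Plancherel identity in $\beta$. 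Then $g_\lambda$ is $\vtilde_1$-periodic and satisfies $g_\lambda(\bx+\vtilde_2)=e^{2\pi i\lambda}g_\lambda(\bx)$, so $g_\lambda\in L^2_{\lambda\ktilde_2}(\R^2/\Lambda_h)$, with inversion $g(\bx)=\int_{-1/2}^{1/2}g_\lambda(\bx)\,d\lambda$ and isometry $\|g\|_{L^2(\Sigma)}^2=\int_{-1/2}^{1/2}\|g_\lambda\|_{L^2(\Omega_h)}^2\,d\lambda$. Setting $f_\lambda(\bx)=e^{i\bK\cdot\bx}g_\lambda(\bx)$, which lies in $L^2_{\bK+\lambda\ktilde_2}(\R^2/\Lambda_h)$, yields the $L^2(\Sigma)$-convergent direct-integral representation $f=\int_{-1/2}^{1/2}f_\lambda\,d\lambda$.

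Next I would expand each fiber in Floquet--Bloch modes: the orthonormal basis $\{\Phi_b(\cdot;\bK+\lambda\ktilde_2)\}_{b\ge1}$ of $L^2_{\bK+\lambda\ktilde_2}(\R^2/\Lambda_h)$ gives $f_\lambda=\sum_{b\ge1}\widetilde f_b(\lambda)\,\Phi_b(\cdot;\bK+\lambda\ktilde_2)$ with $\widetilde f_b(\lambda)=\langle\Phi_b(\cdot;\bK+\lambda\ktilde_2),f_\lambda\rangle_{L^2_{\bK+\lambda\ktilde_2}}$, and substitution into the preceding direct-integral produces \eqref{fb-Sigma}, convergent in $L^2(\Sigma)$. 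The identification $\widetilde f_b(\lambda)=\langle\Phi_b(\cdot;\bK+\lambda\ktilde_2),f\rangle_{L^2_\kparv(\Sigma)}$ follows by inserting the $\vtilde_2$-fiber decomposition of $f$ into the right-hand side and invoking the orthogonality of Bloch modes at distinct fibers---the Plancherel identity for the $\vtilde_2$-Bloch transform constructed above. Part (2) is the specialisation $V\equiv 0$, where the basis $\{\Phi_b(\bx;\bk)\}_b$ becomes the plane-wave family $\{|\Omega_h|^{-1/2}e^{i(\bk+\bfm\vec\ktilde)\cdot\bx}\}_{\bfm\in\Z^2}$; setting $\bk=\bK+\lambda\ktilde_2$, relabelling the $b$-sum as a $\bfm$-sum, and factoring $e^{i(\bK+\bfm\vec\ktilde)\cdot\bx}$ out of the $\lambda$-integral yields the stated formula.

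The main obstacle is making precise sense of the coefficient formula pointwise in $\lambda$. Since $\Phi_b(\cdot;\bK+\lambda\ktilde_2)$ is only bounded---not $L^2$---on the fundamental strip $\Omega_\Sigma$, the pairing $\langle\Phi_b(\cdot;\bK+\lambda\ktilde_2),f\rangle_{L^2_\kparv(\Sigma)}$ is not an absolutely convergent integral; it must be interpreted in the Plancherel/Zak sense, as the $L^2$ function of $\lambda$ arising from the transform of the first step. Likewise the ``orthogonality'' $\langle\Phi_b(\cdot;\bK+\lambda\ktilde_2),\Phi_{b'}(\cdot;\bK+\lambda'\ktilde_2)\rangle_{L^2_\kparv(\Sigma)}=\delta_{bb'}\delta(\lambda-\lambda')$ used to isolate $\widetilde f_b(\lambda)$ holds only distributionally---equivalent to the unitarity of the $\vtilde_2$-direction Bloch transform on $L^2(\Sigma)$, which must be justified by a density/approximation argument on rapidly decaying test functions before extending to all of $L^2_\kparv(\Sigma)$.
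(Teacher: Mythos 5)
Your proposal is correct, and it reaches the theorem by a mildly different technical route than the paper. You build the transverse Bloch (Zak) transform directly in physical space, by the phase-weighted periodization $g_\lambda(\bx)=\sum_{n\in\Z}g(\bx+n\vtilde_2)e^{-2\pi i\lambda n}$, from which pseudo-periodicity, the inversion $g=\int_{-1/2}^{1/2}g_\lambda\,d\lambda$ and the isometry are immediate on a dense class and extend by Plancherel; the paper instead defines the Gelfand--Bloch transform on the Fourier side (Fourier series in $\vtilde_1\cdot\bx$, Fourier transform in $\vtilde_2\cdot\bx$, then the shifted-frequency resummation \eqref{g-FB-tilde}) and must invoke the Poisson summation formula, in an $L^2_{\rm loc}$ form, to verify the coefficient identity \eqref{g-FB-claim}. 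The two transforms coincide (your $g_\lambda$ equals $e^{i\lambda\ktilde_2\cdot\bx}\widetilde g(\bx;\lambda)$, precisely by Poisson summation), so what your route buys is that the inversion and unitarity come for free from the periodization, while the paper's route keeps everything in the Fourier variables that are used later in the bifurcation analysis. One caution: the step where you isolate $\widetilde f_b(\lambda)$ should not be left resting on the distributional ``orthogonality at distinct fibers'' $\delta_{bb'}\delta(\lambda-\lambda')$; the clean way, consistent with your own caveat, is to unfold $\int_{\Omega_\Sigma}\overline{\Phi_b(\by;\bK+\lambda\ktilde_2)}f(\by)\,d\by$ into a sum over $\Z\vtilde_2$-translates of $\Omega$ and use the pseudo-periodicity phases to recognize exactly $\langle\Phi_b(\cdot;\bK+\lambda\ktilde_2),f_\lambda\rangle_{L^2(\Omega)}$ (for rapidly decaying $g$, then pass to the limit) -- this direct computation is the analogue of the paper's Poisson-summation verification of \eqref{g-FB-claim}, and with it your argument is complete, including part (2) as the plane-wave specialization.
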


\begin{proof}[Proof of Theorem \ref{fourier-edge}]
We introduce the parameterizations of the fundamental period cell $\Omega$ of $V(\bx)$:
\begin{align}
\label{omega-parameterization}
\bx\in\Omega:\quad &\bx = \tau_1\vtilde_1+\tau_2\vtilde_2,\quad 0\le\tau_1, \tau_2\le1 , \quad
\bk_i\cdot\bx = 2\pi\tau_i ,  \nn \\ 
& dx_1\ dx_2\ = \left|\vtilde_1\wedge\vtilde_2\right|\ d\tau_1\ d\tau_2\ =\ |\Omega|\  d\tau_1\ d\tau_2;
\end{align}
and of the cylinder $\Sigma=\R^2/\Z\vtilde_1$:
\begin{align}
\label{sigma-parameterization}
\bx\in\Sigma:\quad &\bx = \tau_1\vtilde_1+\tau_2\vtilde_2,\ \ 0\le\tau_1\le1,\ s\in\R , \quad
\ktilde_1\cdot\bx = 2\pi\tau_1 , \ \ktilde_2\cdot\bx = 2\pi\tau_2 , \nn \\
&dx_1\ dx_2\ = \left|\vtilde_1\wedge\vtilde_2\right|\ d\tau_1\ d\tau_2\ =\ |\Omega|\  d\tau_1\ d\tau_2.
\end{align}

Let $f\in L^2_\kparv(\Sigma)$ be such that $g(\bx)= e^{-i\bK\cdot\bx}f(\bx)$  is defined and smooth on $\Sigma$,  and rapidly decreasing.  It suffices to prove the result for such $f$, and then  pass to all  $L^2_\kparv(\Sigma)$ by standard arguments.
The function $g(\bx)$ has the Fourier representation
\begin{align}
 \label{g-fourier}
 g(\bx) &= 2\pi\ \sum_{n\in\Z}\int_\R \widehat{g}_n(2\pi\xi) e^{i\xi\ktilde_2\cdot\bx} d\xi e^{in\ktilde_1\cdot\bx}, \\
 2\pi\ \widehat{g}_n(2\pi\xi) &= \frac{1}{\left|\vtilde_1\wedge\vtilde_2\right| } \int_\Sigma e^{-i\xi\ktilde_2\cdot\by} e^{-in\ktilde_1\cdot\by} g(\by) d\by \ . \nn
\end{align}
The relation \eqref{g-fourier} is obtained by noting that $G(\tau_1,\tau_2)=g(\tau_1\vtilde_1+\tau_2\vtilde_2)$ is $1-$ periodic in $\tau_1$ and in $L^2(\R; d\tau_2)$, and applying the standard Fourier representations.

Introduce the Gelfand-Bloch transform
\begin{equation}
 \label{g-FB-tilde}
 \widetilde{g}(\bx;\lambda) = 2\pi \sum_{(m_1,m_2)\in\Z^2}\widehat{g}_{m_1}\left(2\pi(m_2+\lambda)\right)e^{i(m_1\ktilde_1+m_2\ktilde_2)\cdot\bx},\quad |\lambda|\leq 1/2 \ .
\end{equation}
Note that $\bx\mapsto \widetilde{g}(\bx;\lambda)$ is $\Lambda_h-$ periodic and $\lambda\mapsto \widetilde{g}(\bx;\lambda)$ is $1-$ periodic.
Using \eqref{g-FB-tilde} and \eqref{g-fourier}, it is straightforward to check that
\begin{equation}
 \label{g-FB}
 g(\bx) = \int_{-\frac12}^{\frac12} e^{i\lambda\ktilde_2\cdot\bx}\ \widetilde{g}(\bx;\lambda)\ d\lambda \ .
\end{equation}

\begin{remark}\label{pb-Omega}
For any fixed  $|\lambda|\le1/2$, the mapping $\bx\mapsto \widetilde{g}(\bx;\lambda)$ is $\Lambda_h-$ periodic. 
We wish to expand $\bx\mapsto \widetilde{g}(\bx;\lambda)$ in terms of a basis for $L^2(\Omega)$,
 where $\Omega$ denotes our choice of period cell (parallelogram) for $\R^2/\Lambda$ with $\Lambda=\Z\vtilde_1\oplus\Z\vtilde_2$;
  see \eqref{Omega-def}. Now the eigenvalue problem $H(\bk)p^\Omega=E^\Omega p^\Omega$  
on $\Omega$  with periodic boundary conditions has a discrete sequence of eigenvalues, $E_j^\Omega(\bk),\ j\ge1$ with corresponding eigenfunctions $p^\Omega_j(\bx;\bk),\ j\ge 1$, which can be taken to be a complete orthonormal sequence. Recall $p^{\Omega_h}_b(\bx;\bk),\ b\ge1$, with corresponding eigenvalues, $E_b(\bk)$,  the complete set of eigenfunctions of $H(\bk)$  with periodic boundary conditions on  $\Omega_h$, the elementary period parallelogram spanned by $\{\bv_1,\bv_2\}$; see Section \ref{flo-bl-theory}. By periodicity, $p^{\Omega_h}_b(\bx;\bk),\ b\ge1,$ (initially defined on $\Omega_h$) and $p_j^\Omega(\bx;\bk),\ j\ge1$, (initially defined on $\Omega$)
can be extended to all $\R^2$ as periodic functions. We continue to denote these extensions  by:   $p_j^\Omega(\bx;\bk)$ and $p^{\Omega_h}_b(\bx;\bk)$, respectively. Since $\Lambda=\Z\vtilde_1\oplus\Z\vtilde_2=\Z\bv_1\oplus\Z\bv_2=\Lambda_h$,  both sequences of eigenfunctions are $\Lambda_h-$ periodic. Thus, in a natural way, we can take $p^\Omega_b(\bx;\bk)=A_b\ p^{\Omega_h}_b(\bx;\bk),\ b\ge1$, where $A_b$ is a normalization constant. Abusing notation, we henceforth drop the explicit dependence on $\Omega$, and simply write $p_b(\bx;\bk)$ for $p_b^{\Omega}(\bx;\bk)$. 
\end{remark}
 
In view of Remark \ref{pb-Omega} we expand $\widetilde{g}(\bx;\lambda)$ in terms of the states $\{p_b(\cdot;\bK+\lambda\ktilde_2)\},\ b\ge1$:
\begin{equation}
 \label{g-FB-p-expansion}
 \widetilde{g}(\bx;\lambda) = \sum_{b\geq1}\inner{p_b(\cdot;\bK+\lambda\ktilde_2),\widetilde{g}(\cdot,\lambda)}_{L^2(\Omega)}\ p_b(\bx;\bK+\lambda\ktilde_2) .
\end{equation}
Recall that $f(\bx)= e^{i\bK\cdot\bx}g(\bx)$. We claim (and prove below) that 
\begin{equation}
 \label{g-FB-claim}
 \inner{p_b(\cdot;\bK+\lambda\ktilde_2),\widetilde{g}(\cdot,\lambda)}_{L^2(\Omega)}
 = \inner{\Phi_b(\cdot;\bK+\lambda\ktilde_2),f(\cdot)}_{L_\kparv^2(\Sigma)} \ .
\end{equation}
The assertions of Theorem \ref{fourier-edge} then follow from  \eqref{g-FB}, 
 \eqref{g-FB-p-expansion} and the claim \eqref{g-FB-claim}:
\begin{align*}
 f(\bx) &= e^{i\bK\cdot\bx} g(\bx)
 = e^{i\bK\cdot\bx} \int_{-\frac12}^{\frac12} e^{i\lambda\ktilde_2\cdot\bx} \widetilde{g}(\bx;\lambda) d\lambda \\
  & = e^{i\bK\cdot\bx} \sum_{b\geq1} \int_{-\frac12}^{\frac12} e^{i\lambda\ktilde_2\cdot\bx} \inner{p_b(\cdot;\bK+\lambda\ktilde_2),\widetilde{g}(\cdot,\lambda)}_{L^2(\Omega)} p_b(\bx;\bK+\lambda\ktilde_2) d\lambda \\
  &= \sum_{b\geq1} \int_{-\frac12}^{\frac12} \inner{\Phi_b(\cdot;\bK+\lambda\ktilde_2),f(\cdot)}_{L_\kparv^2(\Sigma)} \Phi_b(\bx;\bK+\lambda\ktilde_2) d\lambda \ ,
\end{align*}
where, in the final line we have used that $\Phi_b(\by;\lambda)=p_b(\by;\lambda)e^{i(\bK+\lambda\ktilde_2)\cdot\bx}$. Therefore, it remains to prove claim \eqref{g-FB-claim}.  We shall employ the one-dimensional Poisson summation formula:
\begin{equation}
\label{poisson-sum-formula}
 2\pi\sum_{n\in\Z} \widehat{f}\left(2\pi(n+\lambda)\right) e^{2\pi iny} = \sum_{n\in\Z}f(y+n)e^{-2\pi i \lambda(y+n)} \ .
\end{equation}
In the following calculation we use the abbreviated notation:
\[p_b(\by;\lambda)\equiv p_b(\by;\bK+\lambda\ktilde_2)\quad \text{and} \quad \ \Phi_b(\by;\lambda)\equiv \Phi_b(\by;\bK+\lambda\ktilde_2).\]
Substituting \eqref{g-fourier}-\eqref{g-FB-tilde} into the left hand side of \eqref{g-FB-claim} and applying  \eqref{poisson-sum-formula} gives
{\footnotesize
\begin{align*}
 &\inner{p_b(\cdot;\lambda),\widetilde{g}(\cdot,\lambda)}_{L^2(\Omega)}
 = \int_{\Omega} \overline{p_b(\by;\lambda)}  \widetilde{g}(\by,\lambda)  d\by \nn \\
 &\ = |\Omega| \int_0^1 \int_0^1 \overline{p_b(\tau_1\vtilde_1+\tau_2\vtilde_2;\lambda)}  2\pi \sum_{\bfm\in\Z^2} \widehat{g}_{m_1} \left(2\pi(m_2+\lambda)\right) e^{2\pi i(m_1\tau_1+m_2\tau_2)}  d\tau_1 d\tau_2 \ \ ( \eqref{omega-parameterization}) \nn \\
 &\ = |\Omega| \int_0^1 \int_0^1 \overline{p_b(\tau_1\vtilde_1+\tau_2\vtilde_2;\lambda)}  \sum_{m_1\in\Z} \left [2\pi \sum_{m_2\in\Z} \widehat{g}_{m_1}\left(2\pi(m_2+\lambda)\right) e^{2\pi im_2\tau_2} \right] e^{2\pi im_1\tau_1}  d\tau_1 d\tau_2 \nn \\
 &\ = |\Omega| \int_0^1 \int_0^1 \overline{p_b(\tau_1\vtilde_1+\tau_2\vtilde_2;\lambda)}  \sum_{m_1\in\Z} \left [\sum_{m_2\in\Z} g_{m_1}(\tau_2+m_2) e^{-2\pi i \lambda(\tau_2+m_2)} \right] e^{2\pi im_1\tau_1}  d\tau_1 d\tau_2 \ \  (\eqref{poisson-sum-formula}) \nn \\
 &\ = |\Omega| \int_0^1 \int_\R \overline{p_b(\tau_1\vtilde_1+s\vtilde_2;\lambda)} e^{-2\pi i \lambda s}  \sum_{m_1\in\Z} g_{m_1}(s) e^{2\pi im_1\tau_1}  d\tau_1 ds \ (\tau_2+m_2=s) \nn \\
 &\ = |\Omega| \int_0^1 \int_\R \overline{p_b(\tau_1\vtilde_1+s\vtilde_2;\lambda)} e^{-2\pi i \lambda s}  G(\tau_1,s)  d\tau_1 ds \nn 
 \ \ ( G(\tau_1,\tau_2)=g(\tau_1\vtilde_1+\tau_2\vtilde_2) \text{ and } \eqref{g-fourier}) \\
 &\ = \int_\Sigma \overline{p_b(\by;\lambda)}  e^{-i \lambda \ktilde_2\cdot\bx}  g(\by)   d\by \ \ (\text{by } \eqref{sigma-parameterization}) \nn  .
\end{align*}
}
Finally, recalling that $g= e^{-i\bK\cdot\bx}f$ and $\overline{p_b(\by;\lambda)}e^{-i(\bK+\lambda\ktilde_2)\cdot\bx}=\overline{\Phi_b(\by;\lambda)}$, we obtain that
\begin{align*}
 \inner{p_b(\cdot;\lambda),\widetilde{g}(\cdot,\lambda)}_{L^2(\Omega)}
 &= \int_\Sigma \overline{p_b(\by;\lambda)} \ e^{-i \lambda \ktilde_2\cdot\bx} e^{-i\bK\cdot\bx} \ f(\by)  \ d\by \nn \\
 &= \int_{\Sigma} \overline{\Phi_b(\by;\lambda)} f(\by) \ d\by \nn
 =\inner{\Phi_b(\cdot;\lambda),f(\cdot)}_{L_\kparv^2(\Sigma)}. \nn
\end{align*}

This completes the proof of claim \eqref{g-FB-claim} and part 1 for the case where $f$ is smooth and rapidly decreasing. 
 Passing to arbitrary $f\in L^2_\kpar$ is standard. 
In the case where $V\equiv0$, the Schr\"odinger operator reduces to the Laplacian $-\Delta$. In this case the Floquet-Bloch coefficients of $f\in L^2_\kparv$ are simply its Fourier coefficients: $\widetilde{f}(\lambda)=\widehat{f}(\lambda)$. Part 2 therefore follows from part 1, completing the proof of Theorem \ref{fourier-edge}.
\end{proof}


Sobolev regularity can be measured in terms of the Floquet-Bloch coefficients. Indeed, as in
 Lemma 2.1 in \cites{FLW-MAMS:15}, by  the $2D-$ Weyl law  $E_b(\bK+\lambda\ktilde_2)\approx b$ for all $\lambda\in[-1/2,1/2],\ b\gg1$, we have

\begin{corollary}\label{fourier-edge-norms}
$L_\kparv^2(\Sigma)$ and $H_\kparv^s(\Sigma),\ s\in\N$, norms can be expressed in terms of the Floquet-Bloch coefficients $\widetilde{f}_b(\lambda)$, $b\ge1$. For $f\in L_\kparv^2(\Sigma)= L_\kparv^2(\R^2/\Z\vtilde_1)$:
\begin{align*}
\|f\|_{L_\kparv^2(\Sigma)}^2\ &\sim\ \sum_{b\ge1}\int_{-\frac12}^{\frac12}\ |\widetilde{f}_b(\lambda) |^2 d\lambda\ ,\\ 
\|f\|_{H_\kparv^{^s}(\Sigma)}^2\ &\sim\ \sum_{b\ge1} (1+b)^s\ \int_{-\frac12}^{\frac12} |\widetilde{f}_b(\lambda) |^2 d\lambda\ . 
\end{align*}
\end{corollary}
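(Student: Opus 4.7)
The corollary has two equivalences to establish, and both rest on the Floquet-Bloch decomposition supplied by Theorem \ref{fourier-edge}. My plan is to first prove the $L^2$ identity by a Parseval calculation, then bootstrap it to the $H^s$ equivalence via the self-adjointness of $H = -\Delta + V$ and the two-dimensional Weyl law.

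\textbf{Step 1 (Parseval on $\Sigma$).} For each fixed $\lambda \in [-1/2,1/2]$, the family $\{p_b(\cdot;\bK + \lambda\ktilde_2)\}_{b\ge 1}$ is an orthonormal basis of $L^2(\Omega)$, so $\{\Phi_b(\cdot; \bK + \lambda\ktilde_2)\}_{b\ge 1}$ is an orthonormal basis of $L^2_{\bK+\lambda\ktilde_2}(\Omega)$. Writing $f = e^{i\bK\cdot\bx} g$ with $g$ smooth and rapidly decreasing on $\Sigma$, parameterize by $(\tau_1,\tau_2)$ as in \eqref{sigma-parameterization}. Applying Parseval in $\tau_1$ to the Fourier series \eqref{g-fourier}, and Plancherel in $\tau_2$ to the integral transform that the Gelfand-Bloch map $\widehat{g}_n(2\pi\xi) \mapsto \widetilde{g}(\bx;\lambda)$ in \eqref{g-FB-tilde} performs (it is the standard unitary identification $L^2(\R) \to L^2([-1/2,1/2];\,\ell^2(\Z))$ realizing the $\tau_2$-direction as a direct integral over $\lambda$), one obtains
\[
 \|f\|_{L^2_\kparv(\Sigma)}^2 = \|g\|_{L^2(\Sigma)}^2 \sim \int_{-1/2}^{1/2} \|\widetilde{g}(\cdot;\lambda)\|_{L^2(\Omega)}^2\, d\lambda.
\]
Inserting the orthonormal expansion \eqref{g-FB-p-expansion} of $\widetilde{g}(\cdot;\lambda)$ and using the claim \eqref{g-FB-claim} immediately yields the first equivalence. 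Passing from smooth rapidly decreasing $g$ to arbitrary $L^2_\kparv(\Sigma)$ is by density.

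\textbf{Step 2 (Reduction of the $H^s$ norm to $H$).} Because $-\Delta + V$ is an elliptic, self-adjoint operator with smooth periodic coefficients, the Sobolev norm $\|\cdot\|_{H^s_\kparv(\Sigma)}$ for integer $s\ge 0$ is equivalent to the graph norm $\|(I + H)^{s/2}\cdot\|_{L^2_\kparv(\Sigma)}$; this is a standard elliptic regularity statement and is uniform in the shift by a pseudo-periodicity phase. Since $H\Phi_b(\cdot;\bk) = E_b(\bk)\Phi_b(\cdot;\bk)$, the Floquet-Bloch coefficients of $(I + H)^{s/2} f$ are exactly $(1 + E_b(\bK + \lambda\ktilde_2))^{s/2}\,\widetilde f_b(\lambda)$. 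Applying Step 1 to $(I+H)^{s/2}f$ then gives
\[
 \|f\|_{H^s_\kparv(\Sigma)}^2 \sim \sum_{b \ge 1} \int_{-1/2}^{1/2} \bigl(1 + E_b(\bK + \lambda\ktilde_2)\bigr)^s |\widetilde f_b(\lambda)|^2\, d\lambda.
\]

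\textbf{Step 3 (Weyl asymptotics).} To conclude, I need $(1 + E_b(\bK+\lambda\ktilde_2))^s \approx (1+b)^s$ uniformly in $\lambda \in [-1/2,1/2]$. For each $\bk$, the 2D Weyl law applied to the operator $H(\bk)$ on $\Omega$ yields $E_b(\bk) \approx b$ as $b \to \infty$. Uniformity in $\lambda$ on the compact segment $\{\bK + \lambda\ktilde_2 : |\lambda| \le 1/2\}$ follows from minimax characterization together with Dirichlet-Neumann bracketing, or simply from continuity of $E_b(\bk)$ in $\bk$ on this compact set combined with the fact that the principal (Weyl) term in the counting function $N(\Lambda;\bk) = \#\{b : E_b(\bk) \le \Lambda\} \sim \frac{|\Omega|}{4\pi}\Lambda$ is independent of $\bk$. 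Combining, we obtain $(1+E_b(\bK+\lambda\ktilde_2))^s \approx (1+b)^s$ for all $b\ge 1$, uniformly in $\lambda$, which yields the second equivalence.

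\textbf{Main obstacle.} Strictly speaking, neither step is deep; the only nontrivial point is the uniform-in-$\lambda$ version of the Weyl law, and this is the part that the text explicitly singles out. A clean way to avoid re-proving it is to cite the analogous Lemma 2.1 of \cites{FLW-MAMS:15}, whose argument transfers verbatim once Step 1 (Parseval on $\Sigma$) has been established.
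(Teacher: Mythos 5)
Your proposal is correct and follows essentially the same route as the paper, which proves the corollary only by invoking the completeness/Parseval structure of Theorem \ref{fourier-edge} together with the 2D Weyl law $E_b(\bK+\lambda\ktilde_2)\approx b$ uniformly in $|\lambda|\le 1/2$, citing Lemma 2.1 of \cites{FLW-MAMS:15} for the details. Your Steps 1--3 (Parseval via the Gelfand--Bloch transform, reduction of the $H^s$ norm to the graph norm of a power of $H$, and the uniform-in-$\lambda$ eigenvalue asymptotics via min-max comparison with the free operator) are precisely the ingredients that argument supplies.
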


\subsection{Expansion of  $\bk\mapsto E_b(\bk)$ along a quasi-momentum slice}\label{bloch-near-dirac}

Let $(\bK,E_\star)$ denote a Dirac point as in Definition  \ref{dirac-pt-defn}. In a neighborhood of 
the Dirac point, the eigenvalues $E_{b_\star}(\bk)$ and $E_{b_\star+1}(\bk)$ are Lipschitz continuous functions and the corresponding normalized eigenmodes, $\Phi_{b_\star}(\bx;\bk)$
 and $\Phi_{b_\star+1}(\bx;\bk)$ are discontinuous functions of $\bk$; see \cites{FW:14}. Note however that what is relevant to our construction of $\vtilde_1-$ edge states are Floquet-Bloch modes along the quasi-momentum line $\bK+\lambda\ktilde_2$, $|\lambda|\leq1/2$. The following proposition gives a smooth parametrization  of these modes along this quasi-momentum line.

\begin{proposition}\label{directional-bloch}
Let $(\bK,E_\star)$ denote a Dirac point in the sense of Definition \ref{dirac-pt-defn}. Let $\{ \Phi_1(\bx), \Phi_2(\bx) \}$ denote the basis of the 
$L^2_{\bK}=L^2_{\bK,\tau}\oplus L^2_{\bK,\overline\tau}-$ nullspace of $H_V - E_\star I$ in Definition \ref{dirac-pt-defn}. Introduce the $\Lambda_h-$ periodic functions
\begin{equation}
 P_1(\bx)=e^{-i\bK\cdot \bx}\Phi_1(\bx),\ \ P_2(\bx)=e^{-i\bK\cdot \bx}\Phi_2(\bx).
 \label{PhipmKlam}
\end{equation}
For each $|\lambda|\le1/2$, there exist $L^2_{\bK+\lambda\ktilde_2}-$ eigenpairs $(\Phi_\pm(\bx;\lambda),E_\pm(\lambda))$,  real analytic in $\lambda$,  such that 
$\left\langle \Phi_a(\cdot;\lambda),\Phi_b(\cdot;\lambda)\right\rangle=\delta_{ab}$ and 
\[ \textrm{span}\ \{\Phi_{-}(\bx;\lambda), \Phi_{+}(\bx;\lambda)\}
 =\ \textrm{span}\ \{\Phi_{b_\star}(\bx;\bK+\lambda\ktilde_2), \Phi_{b_\star+1}(\bx;\bK+\lambda\ktilde_2)\}\ .
 \]
  
Introduce $\Lambda_h-$ periodic functions $p_\pm(\bx;\lambda)$ by 
 \begin{equation}
 \Phi_\pm(\bx;\lambda)\ =\ e^{i(\bK+\lambda\ktilde_2)\cdot\bx}\ p_\pm(\bx;\lambda),\ \ \left\langle p_a(\cdot;\lambda),p_b(\cdot;\lambda)\right\rangle=\delta_{ab} ,\ \ 
 a,b\in\{+,-\}.
 \label{p_pm-def}
 \end{equation}
 There is a constant $\zeta_0>0$ such that for $|\lambda|<\zeta_0$ the following holds:
 \begin{enumerate}
 \item The mapping $\lambda\mapsto E_\pm(\lambda)$ is real analytic in $\lambda$ with expansion 
 \begin{equation}
 E_\pm(\lambda) = E_\star \pm |\lambda_\sharp|\ |\ktilde_2|\ \lambda + E_{2,\pm}(\lambda)\lambda^2 \ ,
 \label{EpmKlam}
 \end{equation}
 where $\lambda_\sharp\in\C$ is given by \eqref{lambda-sharp}, $|E_{2,\pm}(\lambda)|\leq C$ with $C$ a positive constant independent of $\lambda$. 
  \item Let
 $ \mathfrak{z}_2=\smallktilde_2^{(1)}+i\smallktilde_2^{(2)},\ |\mathfrak{z}_2|=|\ktilde_2|$. The $\Lambda_h-$ periodic functions, $p_\pm(\bx;\lambda)$, can be chosen to depend real analytically on $\lambda$ and so that
  \footnote{The factor of $\frac{\overline{\lambda_\sharp}}{|\lambda_\sharp|}$ in \eqref{ppmKlam} corrects 
 a typographical error in equation (3.13) of \cites{FW:14}.}
 \begin{align}
p_\pm(\bx;\lambda) &= 
P_\pm(\bx)\ +\ \varphi_{\pm}(\bx,\lambda) \ \in L^2_\bK(\R^2/\Lambda_h),
 \label{ppmKlam}
 \end{align} 
 where $p_\pm(\bx;0)=P_\pm(\bx)$ is given by
 \begin{equation*}
  P_\pm(\bx)\ \equiv\ \frac{1}{\sqrt{2}}\ \Big[\ \frac{\overline{\lambda_\sharp}}{|\lambda_\sharp|} \frac{\mathfrak{z}_2}{|\mathfrak{z}_2|}\ P_1(\bx)
 \pm P_2(\bx)\ \Big]\ ,
 \label{Ppm-def}\end{equation*}
 and  $\Phi_\pm(\bx;0)=\Phi_\pm(\bx)$ is given by
 \begin{equation}
  \Phi_\pm(\bx)\ \equiv\ \frac{1}{\sqrt{2}}\ \Big[\ \frac{\overline{\lambda_\sharp}}{|\lambda_\sharp|} \frac{\mathfrak{z}_2}{|\mathfrak{z}_2|}\ \Phi_1(\bx)
 \pm \Phi_2(\bx)\ \Big]\ .
 \label{Phi_pm-def}\end{equation}
Finally, $\lambda \mapsto\varphi_{\pm}(\bx;\lambda)$ are real analytic satisfying the bound 
 $|\D_\bx^\aleph\varphi_\pm(\bx;\lambda)|\leq C' \lambda$ for all $\bx\in\Lambda_h$, where $\aleph=(\aleph_1,\aleph_2),\ |\aleph|\le2$.
\end{enumerate}
\end{proposition}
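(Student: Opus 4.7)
Plan.

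The natural approach is analytic perturbation theory for the one-parameter self-adjoint family
\begin{equation*}
H(\bK+\lambda\ktilde_2) \;=\; H(\bK) \;-\; 2i\lambda\,\ktilde_2\cdot(\nabla+i\bK) \;+\; \lambda^2|\ktilde_2|^2
\end{equation*}
on $L^2(\R^2/\Lambda_h)$ with common domain $H^2(\R^2/\Lambda_h)$, combined with explicit first-order degenerate perturbation theory at $\lambda=0$ to identify the linear-in-$\lambda$ coefficient and the eigenfunctions $P_\pm = p_\pm(\cdot;0)$. The key point is that, while the dispersion surfaces $\bk\mapsto E_{b_\star}(\bk), E_{b_\star+1}(\bk)$ are only Lipschitz in $\bk\in\R^2$ near the conical crossing, their restriction to the one-dimensional slice $\bk = \bK+\lambda\ktilde_2$ organizes into two \emph{real-analytic} branches in the scalar parameter $\lambda$, since the family above is polynomial in $\lambda$ and hence of Kato type (A).

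Step 1: Rellich. Because $E_\star$ is an isolated $L^2_\bK$-eigenvalue of $H(\bK)$ of multiplicity two with eigenspace $V_0 := \mathrm{span}\{P_1,P_2\}$ (Dirac point hypothesis), Rellich's theorem for analytic self-adjoint families (Reed--Simon~IV, Thm.~XII.13; Kato, Ch.~VII) yields some $\zeta_0>0$ together with two real-analytic eigenvalue branches $\lambda\mapsto E_\pm(\lambda)\in\R$ and real-analytic orthonormal eigenfunction branches $\lambda\mapsto p_\pm(\cdot;\lambda)\in H^2(\R^2/\Lambda_h)$, defined on $|\lambda|<\zeta_0$, with $E_\pm(0)=E_\star$ and $p_\pm(\cdot;0)\in V_0$. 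Step 2: first-order degenerate perturbation. The leading coefficients are obtained by diagonalizing the restriction of $\D_\lambda H\big|_{\lambda=0} = -2i\,\ktilde_2\cdot(\nabla+i\bK)$ to $V_0$, whose matrix in the basis $\{P_1,P_2\}$ is
\begin{equation*}
M_{ab} \;=\; -2i\,\ktilde_2\cdot\langle\Phi_a,\nabla\Phi_b\rangle_{L^2_\bK},\qquad a,b\in\{1,2\}.
\end{equation*}
The $\mathcal{R}$-isotypic decomposition $\Phi_1\in L^2_{\bK,\tau}$, $\Phi_2\in L^2_{\bK,\overline\tau}$ forces the diagonal of $M$ to vanish (since $\nabla$ shifts $\mathcal{R}$-character by one), while the same Fourier-series identity used to derive \eqref{lambda-sharp2} in \cites{FW:12} gives the off-diagonal entry proportional to $\overline{\lambda_\sharp}\,\mathfrak{z}_2$. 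Hence the eigenvalues of $M$ are $\pm|\lambda_\sharp||\ktilde_2|$, and the normalized eigenvectors, read off in the $\{P_1,P_2\}$ basis, are precisely the $P_\pm$ (equivalently $\Phi_\pm = e^{i\bK\cdot\bx}P_\pm$) stated in \eqref{Phi_pm-def}. Applying standard first-order degenerate perturbation formulas (Kato, Thm.~II.5.4) then produces the expansions \eqref{EpmKlam} and \eqref{ppmKlam}, with $E_{2,\pm}(\lambda)$ and $\varphi_\pm(\cdot;\lambda)$ given by convergent power series in $\lambda$ and therefore bounded on a closed sub-disk of $\{|\lambda|<\zeta_0\}$.

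Step 3: pointwise bounds and identification with $\Phi_{b_\star},\Phi_{b_\star+1}$. The estimate $|\D^\aleph_\bx\varphi_\pm(\bx;\lambda)|\le C'|\lambda|$ for $|\aleph|\le 2$ follows by combining $L^2$-analyticity with elliptic regularity: the resolvent $(H(\bK)-z)^{-1}:L^2\to H^2$ is bounded, so the Riesz projection integral that represents $p_\pm(\cdot;\lambda)-P_\pm$ converges in $H^2(\R^2/\Lambda_h)$; iterating using smoothness of $V$ upgrades this to $H^k$ for every $k$, and Sobolev embedding on the compact torus delivers uniform $C^2$ bounds that vanish linearly as $\lambda\to 0$. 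Finally, the spanning statement $\mathrm{span}\{\Phi_-(\cdot;\lambda),\Phi_+(\cdot;\lambda)\}=\mathrm{span}\{\Phi_{b_\star}(\cdot;\bK+\lambda\ktilde_2),\Phi_{b_\star+1}(\cdot;\bK+\lambda\ktilde_2)\}$ is immediate for $0<|\lambda|<\zeta_0$: the two branches $E_\pm(\lambda)$ coincide with the two eigenvalues of $H(\bK+\lambda\ktilde_2)$ nearest $E_\star$, which by the cone estimate \eqref{cones} are precisely $E_{b_\star}$ and $E_{b_\star+1}$; comparing $p_\pm(\cdot;0)=P_\pm$ with the $\bq = \lambda\ktilde_2\downarrow 0$ limits in Corollary \ref{Phibstar-bstar+1} also confirms the pairing (with labels $\Phi_\pm$ swapping correspondence with $\Phi_{b_\star},\Phi_{b_\star\!+\!1}$, up to a sign, as $\lambda$ crosses $0$ — the signature of an analytic branch traversing a conical degeneracy). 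The only nontrivial technical point is the computation of $M$ in Step 2, and this is the same Fourier-coefficient identity already carried out in \cites{FW:12,FW:14}; everything else is routine Kato--Rellich theory and elliptic regularity.
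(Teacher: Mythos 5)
Your proposal is correct and takes essentially the same route as the paper: the paper's proof is likewise degenerate analytic perturbation theory in the scalar parameter $\lambda$ at the double eigenvalue $E_\star$, carried out via a Lyapunov--Schmidt reduction to a $2\times2$ analytic matrix $\mathcal{M}(E^{(1)},\lambda)$ whose entries are exactly the inner products of Proposition \ref{inner-prods-sharp} (from \cites{FW:12,FW:14}), giving the slopes $\pm|\lambda_\sharp|\,|\ktilde_2|$ and the eigenvectors $P_\pm$, with real analyticity of $E_\pm(\lambda)$ and $p_\pm(\cdot;\lambda)$ deduced from analyticity of that finite-dimensional reduction (Friedrichs/Kato) rather than by invoking Rellich on the operator family directly. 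The difference is only in packaging (Rellich plus first-order degenerate perturbation formulas versus Lyapunov--Schmidt reduction); the substantive steps --- vanishing of the diagonal by $\mathcal{R}$-symmetry, off-diagonal entries proportional to $\overline{\lambda_\sharp}\,\mathfrak{z}_2$, and identification of the branches with $E_{b_\star}$, $E_{b_\star+1}$ near the cone --- coincide.
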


\nit{N.B.} We wish to point out that the subscripts $\pm$ have a different meaning here than in \cites{FW:12,FW:14}. In \cites{FW:12,FW:14},
$E_\pm(\bk)$ denote ordered eigenvalues, $E_-(\bk)\le E_+(\bk)$ (Lipschitz continuous) with corresponding eigenstates $\Phi_\pm(\bx;\bk)$ (discontinuous at $\bk=\bK$); 
 see Definition \ref{dirac-pt-defn} and Corollary \ref{Phibstar-bstar+1}. 
In Proposition \ref{directional-bloch} and throughout this paper $E_\pm(\lambda)$ and $\Phi_\pm(\bx;\lambda)$ refer to smooth parametrizations in $\lambda$ of Floquet-Bloch eigenvalues and eigenfunctions of the spectral bands, which intersect at energy $E_\star$. 

\begin{proof}[Proof of Proposition \ref{directional-bloch}]
We present a proof along the lines of Theorem 3.2 in \cites{FW:14}; see also \cites{Friedrichs:65,kato1995perturbation}.
 The $\bk-$ pseudo-periodic Floquet-Bloch modes can be expressed in the form 
$\Phi(\bx;\bk)=e^{i\bk\cdot\bx}p(\bx;\bk)$,
 where  $p(\bx;\bk)$ is $\Lambda_h-$ periodic. For $\bk=\bK+\lambda\ktilde_2$, 
 consider the family of eigenvalue problems, parametrized by $|\lambda|\le1/2$:
\begin{align}
& H_V(\bK+\lambda\ktilde_2)\ p(\bx;\lambda)\ =\ E(\lambda)\ p(\bx;\lambda)\ ,\label{Hk-evp}\\
&p(\bx+\bv;\lambda)=p(\bx;\lambda),\ \ \textrm{for all}\ \bv\in \Lambda_h\ ,
\label{psi-per}\end{align}
where  $H_V(\bk)\ \equiv\  -\left(\nabla_\bx + i\bk\right)^2\ +\ V(\bx)$.
Degenerate perturbation theory of the double eigenvalue $E_\star$ of $H_V(\bK)$, yields eigenvalues: $E_\pm(\lambda) =\ E_\star\ +\ E_\pm^{(1)}(\lambda)$, where
\begin{align}
\ E_\pm^{(1)}(\lambda)\ \equiv\  \pm\ |\lambda_\sharp|\ |\ktilde_2|\ \lambda +\ \mathcal{O}(\lambda^2) ;\ \ \text{see\ \cites{FW:12}.}
 \label{mu-exp}
  \end{align}
 Denote by $Q_\perp$ the  projection onto the orthogonal complement of ${\rm span}\{P_1,P_2\}$.
 Then,
 \[ R_\bK(E_\star)\ \equiv\ \left( H_V(\bK)\ - E_\star\ I \right)^{-1}:\ Q_\perp L^2(\R^2/\Lambda_h)\to Q_\perp L^2(\R^2/\Lambda_h)\]
 is bounded.
 Furthermore, via Lyapunov-Schmidt reduction analysis of the periodic eigenvalue problem 
\eqref{Hk-evp}-\eqref{psi-per} we obtain, corresponding to the eigenvalues \eqref{mu-exp},  the $\Lambda_h-$ periodic  eigenstates:
\begin{align*}
 p_\pm(\bx;\lambda)\ 
   &=\  
\left( I + 
R_\bK(E_\star) Q_\perp 
\left(2i\lambda\ \ktilde_2\cdot\left(\nabla+i\bK\right)\right)  \right) \times \\
&\quad \left( \alpha(\lambda)\ P_1(\bx)\ +\ \beta(\lambda)\ P_2(\bx) \right) \ 
+\  \mathcal{O}_{H^2(\R^2/\Lambda_h)}\left(\lambda(|\alpha|^2+|\beta|^2)^{\frac{1}{2}}\right).
\end{align*}
Here,  the pair  $\alpha(\lambda), \beta(\lambda)$ satisfies the homogeneous system:
\begin{align*}
 \mathcal{M}(E^{(1)},\lambda)\  \left(\begin{array}{c} \alpha \\ { }\\ \beta\end{array}\right)\ &=\ 0\ , \text{ \ \ where } \\
 \mathcal{M}(E^{(1)},\lambda) &\equiv\  
\left(\begin{array}{cc} 
E^{(1)} + \mathcal{O}\left( \lambda^2\right)&  
-\overline{\lambda_\sharp}\ \lambda\ \mathfrak{z}_2\ + \mathcal{O}\left(\lambda^2\right)  \\
&\nn\\
-\lambda_\sharp\ \lambda\ \overline{\mathfrak{z}_2} +
 \mathcal{O}\left(\lambda^2\right) 
 &
 E^{(1)}  + \mathcal{O}\left(\lambda^2\right)
 \end{array}\right) ;
 \end{align*}
 see \cites{FW:12}.
For $E^{(1)}=E^{(1)}_j(\lambda),\ j=\pm$, normalized solutions, $p_j(\bk;\lambda),\ j=\pm$, are obtained by choosing:
{\footnotesize{
 \begin{align*}
 \left(\begin{array}{c} \alpha_+(\lambda) \\  \\ \beta_+(\lambda)\end{array}\right) &= 
 \left(\begin{array}{c} \frac{1}{\sqrt{2}}\ \frac{\overline{\lambda_\sharp}}{|\lambda_\sharp|}\ 
 \frac{ \mathfrak{z}_2}{| \mathfrak{z}_2|} + 
 \mathcal{O}\left(\lambda\right)\\ \\
 + \frac{1}{\sqrt{2}}\ +\ \mathcal{O}\left(\lambda\right)
 \end{array}\right) ,\quad 
 \left(\begin{array}{c} \alpha_-(\lambda) \\  \\ \beta_-(\lambda)\end{array}\right)\ &=\ 
 \left(\begin{array}{c} \frac{1}{\sqrt{2}}\ \frac{\overline{\lambda_\sharp}}{|\lambda_\sharp|}\ 
 \frac{ \mathfrak{z}_2}{| \mathfrak{z}_2|} + 
 \mathcal{O}\left(\lambda\right)\\ \\
 - \frac{1}{\sqrt{2}} + \mathcal{O}\left(\lambda\right)
 \end{array}\right) .
 \end{align*}
 }}

Finally, we note that $\mathcal{M}(E^{(1)},\lambda)$ is analytic in the parameter $\lambda$. Therefore the eigenvalues $E^{(1)}_\pm(\lambda)$ and eigenvectors $( \alpha_\pm(\lambda), \beta_\pm(\lambda) )^T$ are analytic functions of $\lambda$; see, for example,  \cites{Friedrichs:65,kato1995perturbation}. It follows that
 $E_\pm(\lambda)$ and $p_\pm(\bx;\lambda)$ are bounded, real analytic functions of $\lambda\in\R$. This completes the proof of Proposition \ref{directional-bloch}. 
\end{proof}

\section{Model of a honeycomb structure with an edge}\label{zigzag-edges}

Let  $V(\bx)$ denote a honeycomb potential in the sense of Definition \ref{honeyV}. In this section we introduce a model of an edge in a honeycomb structure. A one-dimensional variant of this model was introduced and studied in 
\cites{FLW-PNAS:14,FLW-MAMS:15,Thorp-etal:15}.

Let $W\in C^\infty(\R^2)$ be real-valued  and satisfy the following properties:
\begin{enumerate}
\item[(W1)] ${W}$ is $\Lambda_h-$ periodic, {\it i.e.}  ${W}(\bx+\bv)={W}(\bx)$ for all $\bx\in\R^2$ and $\bv\in\Lambda_h$.  
\item[(W2)]  ${W}$ is odd, {\it i.e.} ${W}(-\bx)=-{W}(\bx)$.
\item[(W3)] $\vartheta_\sharp\equiv \left\langle \Phi_1,W\Phi_1\right\rangle_{L^2(\Omega_h)}\ne0$, with $\Phi_1$ as in Definition \ref{dirac-pt-defn}.
\end{enumerate} 
The non-degeneracy condition (W3) arises in the multiple scale perturbation theory of Section \ref{formal-multiscale}.

\nit Our model of a honeycomb structure with an edge is a smooth and slow interpolation between the Schr\"odinger Hamiltonians
 $H^{(\delta)}_{-\infty} = -\Delta_\bx + V(\bx) - \delta\kappa_\infty W(\bx)$
and
 $H^{(\delta)}_{+\infty} = -\Delta_\bx +  V(\bx) + \delta\kappa_\infty W(\bx)$, 
 which is transverse to a lattice direction, say $\vtilde_1$.  Here, $\kappa_\infty$ is a positive constant.
 This interpolation is effected by a {\it domain wall function}.

\begin{definition} \label{domain-wall-defn}
We call $\kappa(\zeta)\in C^{\infty}(\R)$ a  domain wall function if $\kappa(\zeta)$ tends to $\pm\kappa_\infty$ as $\zeta\to\pm\infty$, and 
$
\Upsilon_1(\zeta)=\kappa^2(\zeta)-\kappa_\infty^2$ and 
$ \Upsilon_2(\zeta)=\kappa'(\zeta)
$ 
satisfy:
{\small
\begin{equation}
\int_\R (1+|\zeta|)^a |\Upsilon_\ell(\zeta)| d\zeta<\infty\ \ \textrm{for some $a>5/2$ \ and }\ \int_\R |\D_\zeta\Upsilon_\ell(\zeta)|
d\zeta<\infty, \ \ \ell=1,2.
\label{kappa-hypotheses}
\end{equation}
}
Without loss of generality, we assume $\kappa_\infty>0$.
\end{definition}

\begin{remark}
 \label{kappa-description}
The technical hypotheses in \eqref{kappa-hypotheses} are required for the boundedness of {\it wave operators}  used in the proof of Proposition \ref{solve4beta}. See also Section 6.6 of \cites{FLW-MAMS:15} and, in particular, the application of Theorem 6.15. 
\end{remark}

  Our model of a honeycomb structure with an edge is the domain-wall modulated Hamiltonian:
\begin{equation*}
 \label{perturbed-ham}
 H^{(\delta)} = -\Delta_\bx +  V(\bx) + \delta\kappa\left(\delta\ktilde_2\cdot\bx\right) W(\bx) \ ,
\end{equation*}
where $\kappa(\zeta)$ is a domain wall function. 
 Suppose $\kappa(\zeta)$ has a single zero at $\zeta=0$. The ``edge'' is then given by 
 $\R\vtilde_1=\{\bx:\ktilde_2\cdot\bx=0\}$. 

We shall seek solutions of the eigenvalue problem
\begin{align}
&H^{(\delta)} \Psi = E\Psi, \label{schro-evp}\\
&\Psi(\bx+\vtilde_1)=e^{i\bK\cdot\vtilde_1}\Psi(\bx)\qquad \textrm{(propagation parallel to the edge,  $\R\vtilde_1$)},\label{pseudo-per}\\
&\Psi(\bx) \to 0\ \ {\rm as}\ \ |\bx\cdot\ktilde_2|\to\infty \qquad  \textrm{(localization tranverse to the edge,  $\R\vtilde_1$)}\label{localized} .
\end{align}

\nit In the next section we present  a formal asymptotic expansion of $\vtilde_1-$ edge states
 and in Section \ref{thm-edge-state} we formulate a rigorous theory.

\section{Multiple scales and effective Dirac equations}\label{formal-multiscale}

We re-express the eigenvalue problem \eqref{schro-evp}-\eqref{localized} in terms of an unknown function  $\Psi=\Psi(\bx,\zeta)$, depending on  fast ($\bx$) and slow ($\zeta=\delta\ktilde_2\cdot\bx$) 
spatial scales:
\begin{align}
&\Big[\ -\left(\nabla_\bx+\delta\ktilde_2\ \partial_\zeta\right)^2+V(\bx)\Big]\Psi
\ +\ \delta\kappa(\zeta)W(\bx)\Psi = E\Psi, \label{multi-schroedinger} \\
&\Psi(\bx+\vtilde_1,\zeta)=e^{i\bK\cdot\vtilde_1}\Psi(\bx,\zeta), \ \ {\rm and} \ \
\Psi(\bx,\zeta) \to 0\ \ {\rm as}\ \ \zeta\to\pm\infty. \label{multi-schroedinger-bc}
\end{align}
We seek a solution to \eqref{multi-schroedinger}-\eqref{multi-schroedinger-bc} in the form:
\begin{align}
E^{\delta}&=E^{(0)}+\delta E^{(1)}+\delta^2E^{(2)}+\ldots, \label{formal-E}\\
\Psi^{\delta}&= \psi^{(0)}(\bx,\zeta)+\delta\psi^{(1)}(\bx,\zeta)+\delta^2\psi^{(2)}(\bx,\zeta)+\ldots\ \label{formal-psi}.
\end{align}
The conditions \eqref{pseudo-per}, \eqref{localized} are encoded by requiring, for $i\ge0$, that 
\begin{align*} 
&\psi^{(i)}(\bx+\vtilde,\cdot)=e^{i\bK\cdot\vtilde}\psi^{(i)}(\bx,\cdot) \ \ \ \forall\ \vtilde\in\Lambda_h ,  \\
&\zeta\to\psi^{(i)}(\bx,\zeta)\in L^2(\R_\zeta).\end{align*}

Substituting the expansions \eqref{formal-E}-\eqref{formal-psi} in \eqref{multi-schroedinger} yields
\begin{align*}
 &\left[-\left(\Delta_{\bx}+2\delta\ \ktilde_2\cdot \nabla_{\bx}\ \partial_\zeta+\delta^2\ |\ktilde_2|^2\ \partial_\zeta^2+\ldots\right)
 +\left(V(\bx)+\delta\kappa(\zeta)W(\bx)\right) \right. \\
 &\quad \left.-\left(E^{(0)}+\delta E^{(1)}+\delta^2E^{(2)}+\ldots\right)\right]
\left(\psi^{(0)}+\delta\psi^{(1)}+\delta^2\psi^{(2)}+\ldots\right)=0.
\end{align*}
Equating terms of equal order in $\delta^j,\ j\ge0$, yields a hierarchy of equations governing $\psi^{(j)}(\bx,\zeta)$.

At order $\delta^0$ we have that $(E^{(0)},\psi^{(0)})$ satisfy
\begin{equation}
 \label{perturbed_schro_delta0}
 \begin{split}
 &\left(-\Delta_{\bx} +V(\bx)-E^{(0)}\right)\psi^{(0)} = 0, \\
 &\psi^{(0)}(\bx+\vtilde,\cdot)=e^{i\bK\cdot\vtilde}\psi^{(0)}(\bx,\cdot) \ \ \ \forall\ \vtilde\in\Lambda_h .
 \end{split}
\end{equation}
Equation \eqref{perturbed_schro_delta0} may be solved in terms of the orthonormal basis of the $L^2_\bK(\Omega)-$ nullspace of $H_V-E_{\star}$ in Definition \ref{dirac-pt-defn}, namely $\{\Phi_1,\Phi_2\}$.
Expansion \eqref{ppmKlam} in Proposition \ref{directional-bloch} suggests that a particularly natural orthonormal basis of the $L^2_{\bK}(\Omega)-$ nullspace of $H_V-E_{\star}$ is given by $\{\Phi_+,\Phi_-\}$, where
\begin{equation}
 \label{correct-basis}
\Phi_\pm(\bx) \equiv \frac{1}{\sqrt{2}}\left[\ \frac{\overline{\lambda_\sharp}}{|\lambda_\sharp|} \frac{ \mathfrak{z}_2}{| \mathfrak{z}_2|}\ \Phi_1(\bx)
 \pm \Phi_2(\bx)\ \right].
 \end{equation}
 Here $\lambda_\sharp$ is given in \eqref{lambda-sharp}, $ \mathfrak{z}_2=\smallktilde_2^{(1)}+i\smallktilde_2^{(2)}$ and $| \mathfrak{z}_2|=|\ktilde_2|$.
We therefore solve \eqref{perturbed_schro_delta0} with
\begin{equation}
E^{(0)}=E_\star,\qquad \psi^{(0)}(\bx,\zeta)=\alpha_+(\zeta)\Phi_+(\bx)
+\alpha_-(\zeta)\Phi_-(\bx).
\label{psi0-soln}
\end{equation}

Proceeding to order $\delta^1$ we find that $(E^{(1)},\psi^{(1)})$ satisfies
\begin{equation}
 \label{psi1-eqn}
 \begin{split}
 &\left(-\Delta_{\bx}+V(\bx)-E_{\star}\right)\psi^{(1)}(\bx,\zeta)=G^{(1)}(\bx,\zeta;\psi^{(0)}) + 
 E^{(1)} \psi^{(0)}, \\
 &\psi^{(1)}(\bx+\vtilde,\cdot)=e^{i\bK\cdot\vtilde}\psi^{(1)}(\bx,\cdot) \ \ \ \forall\ \vtilde\in\Lambda_h ,
 \end{split}
\end{equation}
 where
 \begin{align*}
 &G^{(1)}(\bx,\zeta;\psi^{(0)}) = G^{(1)}(\bx,\zeta;\alpha_+,\alpha_-)\ \nn \\ 
 &\quad \equiv\ 
 2\partial_\zeta\alpha_+\ \ktilde_2\cdot\nabla_\bx\Phi_++ 2\partial_\zeta\alpha_-\ \ktilde_2\cdot\nabla_\bx\Phi_- 
 -\kappa(\zeta)W(\bx) \left(\alpha_+\Phi_++ \alpha_-\Phi_- \right) .
\end{align*}
Viewed as an equation in $\bx$,  \eqref{psi1-eqn} is solvable if and only if its right hand side is   $L^2_\bK(\Omega;d\bx)-$ orthogonal to the nullspace of $H_V-E_{\star}$. This is expressible in terms of the two orthogonality conditions:
\begin{align}
-E^{(1)}\alpha_j&=2\left\langle\Phi_j,\ \ktilde_2\cdot\nabla_\bx\Phi_+\right\rangle \partial_\zeta\alpha_+\  +
 2\left\langle\Phi_j,\ktilde_2\cdot\nabla_\bx\Phi_-\right\rangle\ \partial_\zeta\alpha_-\nn\\
  &\qquad -\kappa(\zeta)\ \left[ \left\langle\Phi_j,W\Phi_+\right\rangle\alpha_++\left\langle\Phi_j,W\Phi_-\right\rangle\alpha_-\right],
  \qquad j=\pm .
 \label{alpha-j} \end{align}
We evaluate the inner products in \eqref{alpha-j}  using the following two propositions.
 
 \begin{proposition}\label{inner-prods-sharp}
 \begin{align}
 \left\langle\Phi_+, \ktilde_2\cdot\nabla_\bx\Phi_-\right\rangle_{L^2_\bK(\Omega)}
 &= 0 \label{ip12} \ , \\
 \left\langle\Phi_-, \ktilde_2\cdot\nabla_\bx\Phi_+\right\rangle_{L^2_\bK(\Omega)}
 &= 0 \label{ip21} \ , \\
 2\left\langle\Phi_+, \ktilde_2\cdot\nabla_\bx\Phi_+\right\rangle_{L^2_\bK(\Omega)}&=\ 
 + i|\lambda_\sharp|\ |\ktilde_2|\ ,
  \label{ip-aa+}\\
 2 \left\langle\Phi_-, \ktilde_2\cdot\nabla_\bx\Phi_-\right\rangle_{L^2_\bK(\Omega)}&=\ - i|\lambda_\sharp|\  |\ktilde_2|\ ,
 \label{ip-aa-}\end{align}
  The constant, $\lambda_\sharp\in\C$, is generically non-zero; see Theorem \ref{diracpt-thm}. 
 \end{proposition}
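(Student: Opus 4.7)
The plan is to derive all four identities as a consequence of Proposition \ref{directional-bloch} via Hellmann--Feynman-type calculations along the quasi-momentum segment $\bk = \bK + \lambda\ktilde_2$. Working with the $\Lambda_h$-periodic parts $p_\pm(\cdot;\lambda) = e^{-i(\bK+\lambda\ktilde_2)\cdot\bx}\Phi_\pm(\cdot;\lambda)$, Proposition \ref{directional-bloch} furnishes a real-analytic family of eigenpairs satisfying
\[
 H(\bK+\lambda\ktilde_2)\, p_\pm(\cdot;\lambda) = E_\pm(\lambda)\, p_\pm(\cdot;\lambda),\qquad \langle p_a(\cdot;\lambda), p_b(\cdot;\lambda)\rangle_{L^2(\R^2/\Lambda_h)} = \delta_{ab},
\]
with $p_\pm(\cdot;0) = P_\pm$ (so that $e^{i\bK\cdot\bx}p_\pm(\cdot;0) = \Phi_\pm$). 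Differentiating in $\lambda$ introduces $\partial_\lambda H(\bK+\lambda\ktilde_2)|_{\lambda=0} = -2i\ktilde_2\cdot(\nabla + i\bK)$, and the elementary identity
\[
 \langle P_a,\, -2i\ktilde_2\cdot(\nabla+i\bK)\,P_b\rangle_{L^2(\Omega_h)} = -2i\,\langle \Phi_a,\ktilde_2\cdot\nabla\Phi_b\rangle_{L^2_\bK(\Omega_h)}
\]
(obtained by cancellation of $e^{\pm i\bK\cdot\bx}$ factors) translates all our target inner products into matrix elements of $\partial_\lambda H(\bK)$ in the basis $\{P_+, P_-\}$.

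For the diagonal identities \eqref{ip-aa+}--\eqref{ip-aa-}, I would invoke the standard Hellmann--Feynman formula $E'_\pm(0) = \langle p_\pm(\cdot;0),\, \partial_\lambda H(\bK)\, p_\pm(\cdot;0)\rangle$, valid here because the parametrization is real-analytic. Comparing with the expansion $E_\pm(\lambda) = E_\star \pm |\lambda_\sharp||\ktilde_2|\,\lambda + O(\lambda^2)$ from \eqref{EpmKlam} immediately gives
\[
 \pm |\lambda_\sharp||\ktilde_2| \;=\; -2i\,\langle \Phi_\pm, \ktilde_2\cdot\nabla\Phi_\pm\rangle,
\]
which, multiplied by $i$, yields \eqref{ip-aa+} and \eqref{ip-aa-}. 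For the off-diagonal identities \eqref{ip12}--\eqref{ip21}, I differentiate the eigenvalue equation at $\lambda = 0$ to obtain $(H(\bK) - E_\star)\,\partial_\lambda p_\pm(\cdot;0) = E'_\pm(0)\, p_\pm(\cdot;0) - \partial_\lambda H(\bK)\, p_\pm(\cdot;0)$, then pair with $p_\mp(\cdot;0)$. Self-adjointness of $H(\bK) - E_\star$ together with $(H(\bK) - E_\star)p_\mp(\cdot;0) = 0$ kills the left-hand side, while the orthogonality $\langle p_\mp(\cdot;0), p_\pm(\cdot;0)\rangle = 0$ eliminates the $E'_\pm(0) p_\pm$ term, leaving $\langle p_\mp(\cdot;0),\partial_\lambda H(\bK) p_\pm(\cdot;0)\rangle = 0$, which is exactly $\langle \Phi_\mp, \ktilde_2\cdot\nabla\Phi_\pm\rangle = 0$.

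The essential obstacle is the degeneracy of $H(\bK)$ at the Dirac point: since $E_+(0) = E_-(0) = E_\star$, standard perturbation theory for a \emph{simple} eigenvalue does not apply, and a careless Hellmann--Feynman argument using an arbitrary basis of the two-dimensional $E_\star$-eigenspace (e.g.\ the $\tau$/$\bar\tau$-basis $\{\Phi_1,\Phi_2\}$) would produce a nondiagonal matrix for $\partial_\lambda H(\bK)$. The reason Proposition \ref{directional-bloch} suffices is that the very choice of $\Phi_\pm$, via the prefactor $\frac{\overline{\lambda_\sharp}}{|\lambda_\sharp|}\frac{\mathfrak{z}_2}{|\mathfrak{z}_2|}$ in \eqref{Phi_pm-def}, is engineered precisely to diagonalize the $2\times 2$ matrix representing $\ktilde_2\cdot(\nabla+i\bK)$ on the nullspace of $H_V - E_\star$, so that the analytic branches $p_\pm(\cdot;\lambda)$ indeed extend smoothly through $\lambda = 0$ with limits $P_\pm$. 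Once this analytic parametrization is in hand, the four identities are extracted by matching zeroth- and first-order $\lambda$-coefficients in the eigenvalue equation and orthonormality constraints, as described above.
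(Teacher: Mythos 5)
Your argument is correct, but it is not the paper's route. The paper's proof is a two-line affair: it quotes the matrix elements of $\ktilde_2\cdot\nabla_\bx$ in the $\{\Phi_1,\Phi_2\}$ basis from equations (7.28)--(7.29) of the cited work of Fefferman--Weinstein, namely $\left\langle\Phi_1,\ktilde_2\cdot\nabla_\bx\Phi_2\right\rangle=\tfrac{i}{2}\overline{\lambda}_\sharp\mathfrak{z}_2$, $\left\langle\Phi_2,\ktilde_2\cdot\nabla_\bx\Phi_1\right\rangle=\tfrac{i}{2}\lambda_\sharp\overline{\mathfrak{z}_2}$ and vanishing diagonal entries, and then obtains all four identities by the explicit unitary change of basis \eqref{correct-basis}. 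You instead extract the identities from the \emph{conclusions} of Proposition \ref{directional-bloch}: the real-analytic branches $(p_\pm(\cdot;\lambda),E_\pm(\lambda))$ with $p_\pm(\cdot;0)=P_\pm$ and $E_\pm'(0)=\pm|\lambda_\sharp||\ktilde_2|$, matched against $\partial_\lambda H(\bK+\lambda\ktilde_2)|_{\lambda=0}=-2i\,\ktilde_2\cdot(\nabla+i\bK)$ via Hellmann--Feynman for the diagonal entries and via differentiating the eigenvalue equation and pairing with the opposite branch for the off-diagonal ones. Your handling of the degeneracy is the right one: the formula is legitimate only along the analytic branches, whose $\lambda=0$ limits are exactly $P_\pm$ because that basis diagonalizes the perturbation on the nullspace, and the needed orthonormality and differentiability are supplied by \eqref{p_pm-def} and the analyticity statement; there is also no circularity, since Proposition \ref{directional-bloch} precedes this proposition and its proof does not use it. The trade-off: the paper's computation is shorter and carries more information (the off-diagonal matrix elements with the phase of $\lambda_\sharp$, not just their vanishing in the rotated basis), while your argument is self-contained within the paper modulo Proposition \ref{directional-bloch} --- though it effectively re-extracts data that is already baked into that proposition's proof (the same FW degenerate perturbation matrix), so the logical chain is longer even if sound. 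One small caveat: you only get $E_\pm'(0)$ from \eqref{EpmKlam} because the branches are real analytic with $|E_{2,\pm}(\lambda)|\le C$; it is worth saying explicitly that this pins the first derivative at $\lambda=0$, which you implicitly use.
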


\begin{proof}
 Let $\mathfrak{z}_2=\smallktilde_2^{(1)}+i\smallktilde_2^{(2)}$.
 By (7.28)-(7.29) of \cites{FW:14} (see also \cites{FW:12}) we have:
 \begin{align}
 \left\langle\Phi_1, \ktilde_2\cdot\nabla_\bx\Phi_2\right\rangle_{L^2_\bK(\Omega)}
 &= \frac{i}{2}\ \overline{\lambda}_\sharp\ \mathfrak{z}_2\ ,\ \label{ip12_1} \\
 \left\langle\Phi_2, \ktilde_2\cdot\nabla_\bx\Phi_1\right\rangle_{L^2_\bK(\Omega)}
 &= \frac{i}{2}\  \lambda_\sharp\ \overline{\mathfrak{z}_2}\ \label{ip21_2} \ , \\
 \left\langle\Phi_b,\  \ktilde_2\cdot\nabla_\bx\Phi_b\right\rangle_{L^2_\bK(\Omega)}&=0,\ \ b=1,2.
 \label{ip-aa_1}\end{align}
Relations \eqref{ip12}-\eqref{ip-aa-} follow from the expressions for $\Phi_\pm$ in \eqref{correct-basis} and relations \eqref{ip12_1}-\eqref{ip-aa_1}.
  \end{proof}


 \begin{proposition}\label{inner-prods-W}
 Assume that $W(\bx)$ is real-valued, odd and $\Lambda_h-$ periodic.
 Let $\vartheta_\sharp\equiv \left\langle \Phi_1,W\Phi_1\right\rangle_{L^2_{\bK}(\Omega)}$. Then, 
 $  \vartheta_\sharp\in\R$ and 
 \begin{align}
\vartheta_\sharp\ =\  \left\langle \Phi_+, W\Phi_-\right\rangle_{L^2_\bK(\Omega)}\ &=\  \left\langle \Phi_-, W\Phi_+\right\rangle_{L^2_\bK(\Omega)} \label{ip-1W1} , \\
  \left\langle \Phi_+, W\Phi_+\right\rangle_{L^2_\bK(\Omega)}\ &=\  \left\langle \Phi_-, W\Phi_-\right\rangle_{L^2_\bK(\Omega)}=0\label{ip-1W2}  \ . 
  \end{align}
  Note that since $\Phi_+(\bx)=e^{i\bK_\star\cdot\bx}P_+(\bx)$ and $\Phi_-(\bx)=e^{i\bK_\star\cdot\bx}P_-(\bx)$,
  relations \eqref{ip-1W1} and \eqref{ip-1W2} hold with $\Phi_+$ and $\Phi_-$ replaced, respectively, by $P_+(\bx)$ 
   and $P_-(\bx)$.
 \end{proposition}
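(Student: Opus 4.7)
The plan is to reduce everything to two basic identities involving $\Phi_1$ and $\Phi_2$, and then use the definition of $\Phi_\pm$ in \eqref{correct-basis} to read off the four assertions.

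First I would dispatch reality of $\vartheta_\sharp$: since $W$ is real-valued, $\vartheta_\sharp=\int_\Omega|\Phi_1(\bx)|^2 W(\bx)\,d\bx\in\R$. Next, the key step is to exploit the relation $\Phi_2(\bx)=\overline{\Phi_1(-\bx)}$ (from Definition~\ref{dirac-pt-defn}) together with property (W2), $W(-\bx)=-W(\bx)$. Applied to the cross terms, the change of variables $\bx\mapsto-\bx$ gives
\[
\langle\Phi_1,W\Phi_2\rangle_{L^2_\bK(\Omega)}
=\int_\Omega \overline{\Phi_1(\bx)}\,W(\bx)\,\overline{\Phi_1(-\bx)}\,d\bx
=\int_\Omega \Phi_2(-\bx)\,W(\bx)\,\overline{\Phi_1(-\bx)}\,d\bx,
\]
and after sending $\bx\mapsto-\bx$, using oddness of $W$ and reality of $W$, this equals $-\langle\Phi_1,W\Phi_2\rangle_{L^2_\bK(\Omega)}$, hence $\langle\Phi_1,W\Phi_2\rangle=0$. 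The same argument yields $\langle\Phi_2,W\Phi_1\rangle=0$. By a similar manipulation,
\[
\langle\Phi_2,W\Phi_2\rangle_{L^2_\bK(\Omega)}=\int_\Omega|\Phi_1(-\bx)|^2 W(\bx)\,d\bx=-\int_\Omega|\Phi_1(\bx)|^2W(\bx)\,d\bx=-\vartheta_\sharp.
\]

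With these three identities in hand, the four claims follow by direct algebra. Writing $c=\frac{\overline{\lambda_\sharp}}{|\lambda_\sharp|}\frac{\mathfrak{z}_2}{|\mathfrak{z}_2|}$ (so $|c|=1$) and $\Phi_\pm=\tfrac{1}{\sqrt{2}}(c\Phi_1\pm\Phi_2)$, expanding $\langle\Phi_\pm,W\Phi_\pm\rangle$ and $\langle\Phi_\pm,W\Phi_\mp\rangle$ bilinearly, the cross terms $\langle\Phi_1,W\Phi_2\rangle$ and $\langle\Phi_2,W\Phi_1\rangle$ vanish, and one is left with $\tfrac{1}{2}(\langle\Phi_1,W\Phi_1\rangle\pm\langle\Phi_2,W\Phi_2\rangle)=\tfrac{1}{2}(\vartheta_\sharp\pm(-\vartheta_\sharp))$. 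The $+$ case gives $0$ (yielding \eqref{ip-1W2}) and the $-$ case gives $\vartheta_\sharp$ (yielding \eqref{ip-1W1}). The final remark on $P_\pm$ is immediate: since $\Phi_\pm(\bx)=e^{i\bK\cdot\bx}P_\pm(\bx)$ and $W(\bx)$ carries no $\bK$-dependent phase, the two plane-wave factors cancel in $\overline{\Phi_a}\,W\,\Phi_b$, so the inner products are unchanged.

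The only potential pitfall is keeping the symmetry $\Phi_2=\mathcal{C}\mathcal{I}\Phi_1$ aligned with oddness of $W$ — in particular checking that the $\Lambda_h$-periodic integrand $\overline{\Phi_1(-\bx)}\,\Phi_1(-\bx)W(\bx)$ and its variants remain integrable over the fundamental cell $\Omega_h$ after the change of variables $\bx\mapsto-\bx$, which is immediate since $\Omega_h$ is $\Lambda_h$-periodic in measure. No rotational ($\mathcal{R}$) symmetry of $W$ is invoked; only $\Lambda_h$-periodicity and oddness.
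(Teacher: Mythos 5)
Your proof is correct and follows essentially the same route as the paper's: establish $\langle\Phi_1,W\Phi_2\rangle=\langle\Phi_2,W\Phi_1\rangle=0$ and $\langle\Phi_2,W\Phi_2\rangle=-\vartheta_\sharp$ from $\Phi_2=\overline{\Phi_1(-\cdot)}$ together with the oddness and realness of $W$, then expand $\Phi_\pm$ bilinearly. The only cosmetic difference is that you show the cross term equals its own negative directly, whereas the paper routes through a complex conjugation; both are immediate.
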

 
\begin{proof}
Equations \eqref{ip-1W1}-\eqref{ip-1W2} follow from the relations
 \begin{align}
\vartheta_\sharp \equiv \left\langle \Phi_1, W\Phi_1\right\rangle_{L^2_\bK(\Omega)}\ &=\  -\left\langle \Phi_2, W\Phi_2\right\rangle_{L^2_\bK(\Omega)} \label{ip-1W1_1} , \\
  \left\langle \Phi_1, W\Phi_2\right\rangle_{L^2_\bK(\Omega)}\ &=\  \left\langle \Phi_2, W\Phi_1\right\rangle_{L^2_\bK(\Omega)}=0\label{ip-1W2_1}  \ .
  \end{align}
 To prove \eqref{ip-1W1_1} and \eqref{ip-1W2_1}, we begin by recalling that $\Phi_2(\bx)=\overline{\Phi_1(-\bx)}$, $W$ is real-valued and $W(-\bx)=-W(\bx)$. 
  Since $W$ is real-valued, it is clear that $\vartheta_\sharp\in\R$. 
  Furthermore,
  \begin{align*}
   \left\langle\Phi_2,W\Phi_2\right\rangle_{L^2_\bK(\Omega)}&=
   \int_\Omega\overline{\Phi_2(\bx)}W(\bx)\Phi_2(\bx)d\bx
   =  \int_\Omega \Phi_1(-\bx)W(\bx)\overline{\Phi_1(-\bx)}d\bx\\
   &=  \int_\Omega \Phi_1(\bx)W(-\bx)\overline{\Phi_1(\bx)}d\bx = -\vartheta_\sharp \ .
   \end{align*}
This proves \eqref{ip-1W1}. To prove  \eqref{ip-1W2}, observe that
 \begin{align*}
 \left\langle\Phi_2,W\Phi_1\right\rangle_{L^2_\bK(\Omega)}&=\int_\Omega\overline{\Phi_2(\bx)}W(\bx)\Phi_1(\bx)d\bx
 =\int_\Omega\Phi_1(-\bx)W(\bx)\Phi_1(\bx)d\bx\\
 &=\int_\Omega\Phi_1(\bx)W(-\bx)\Phi_1(-\bx)d\bx
 = -\int_\Omega\Phi_1(\bx)W(\bx)\Phi_1(-\bx)d\bx\\
 &= -\overline{\left\langle\Phi_1,W\Phi_2\right\rangle_{L^2_\bK(\Omega)}} =- \left\langle\Phi_2,W\Phi_1\right\rangle_{L^2_\bK(\Omega)} \ .
 \end{align*}
 This completes the proof of Proposition \ref{inner-prods-W}.
 \end{proof}
 
Propositions \ref{inner-prods-sharp} and \ref{inner-prods-W} imply that the orthogonality conditions \eqref{alpha-j} reduce to the following eigenvalue problem  for $\alpha(\zeta)=(\alpha_+(\zeta), \alpha_-(\zeta))^T$:
 \begin{align}
 \left( \mathcal{D} - E^{(1)} \right) \alpha = 0 , \quad \alpha\in L^2(\R) \ . \label{m-dirac-eqn}
 \end{align}
Here, $\mathcal{D}$  denotes the 1D Dirac operator: 
\begin{equation}
 \label{multi-dirac-op}
 \mathcal{D} = -i|\lambda_\sharp| |\ktilde_2| \sigma_3 \partial_\zeta + \vartheta_\sharp \kappa(\zeta) \sigma_1,
 \quad \text{and} \quad \lamsharp \times \thetasharp \neq 0 \ .
\end{equation}

In Section \ref{dirac-solns} we prove that the eigenvalue problem \eqref{m-dirac-eqn} has an
  exponentially localized eigenfunction $\alpha_{\star}(\zeta)$ with corresponding (mid-gap) zero-energy eigenvalue $E^{(1)}=0$.
 Moreover, this eigenvalue has multiplicity one. We impose the normalization: $\|\alpha_\star\|_{L^2(\R)}=1$. 

Fix $(E^{(1)},\alpha)=(0,\alpha_\star)$. Then $\alpha_\star\in L^2(\R)$, $\psi^{(0)}(\bx, \zeta)$ is completely determined (up to normalization) and the solvability conditions \eqref{alpha-j} are satisfied. Therefore, the right hand side of \eqref{psi1-eqn} lies in the range of $H_V-E_\star: H^2_\bK\to L^2_\bK$, and we may invert $(H_V-E_{\star})$ obtaining
\begin{align}
 \psi^{(1)}(\bx,\zeta) &= \left(R(E_{\star})G^{(1)}\right)(\bx,\zeta) + \psi^{(1)}_h(\bx,\zeta)
 \equiv \psi^{(1)}_p(\bx,\zeta) + \psi^{(1)}_h(\bx,\zeta) ,\label{psi1p-def}
\end{align} 
where 
\begin{equation*}
R(E_\star) = \left(H_V-E_\star\right)^{-1}:P_\perp L^2_\bK\to P_\perp H^2_\bK\ 
\end{equation*}
and $P_\perp$ is the $L^2_\bK(\Omega)-$ projection on to the orthogonal complement of the kernel of $H_V-E_\star$, equal to ${\rm span}\{\Phi_+,\Phi_-\}$.
Here, $\psi^{(1)}_p$ denotes  a particular solution, and
\begin{equation*}
 \psi^{(1)}_h(\bx,\zeta) = \alpha^{(1)}_+(\zeta)\Phi_+(\bx)+\alpha^{(1)}_-(\zeta)\Phi_-(\bx)
\end{equation*} is a homogeneous solution.

Note that by exploiting the degrees of freedom coming from the $L^2_\bK-$ kernel of $H_V-E_\star$, we can continue the formal expansion to any order in $\delta$. Indeed, at $\mathcal{O}(\delta^\ell)$ for $\ell\geq2$, we have
\begin{align}
 \label{psi2-eqn}
& \left(-\Delta_\bx +V(\bx)-E_\star\right)\psi^{(\ell)}(\bx,\zeta)\\
&\quad = \left(\ 2 (\ktilde_2\cdot \nabla_{\bx})\  \partial_\zeta
-\kappa(\zeta)W(\bx)\right)\psi_h^{(\ell-1)}(\bx,\zeta) +E^{(\ell)}\psi^{(0)}(\bx,\zeta) \nn\\
&\qquad + G^{(\ell)}\left(\bx,\zeta;\psi^{(0)},\ldots,\psi^{(\ell-2)},\psi_p^{(\ell-1)},E^{(1)},\ldots,E^{(\ell-1)}\right) , \nn\\
&\psi^{(\ell)}(\bx+\vtilde,\cdot)=e^{i\bK\cdot\vtilde}\psi^{(\ell)}(\bx,\cdot) \ \ \ \forall\ \vtilde\in\Lambda_h , \nn
\end{align}
where, for the case $\ell=2$, 
\begin{equation}
 G^{(2)}(\bx,\zeta;\psi^{(0)},\psi_p^{(1)}) \\
 =  \left(\ 2 (\ktilde_2\cdot \nabla_{\bx})\ \partial_\zeta
-\kappa(\zeta)W(\bx)\ \right)\psi_p^{(1)}(\bx,\zeta) + |\ktilde_2|^2\ \partial^2_\zeta \psi^{(0)}(\bx,\zeta)\ . \label{G2def}
\end{equation}

As before, \eqref{psi2-eqn} has a solution if and only if the right hand side is $L^2_\bK(\Omega;d\bx)$-orthogonal to the functions $\Phi_j(\bx)$, $j=\pm$.
This solvability condition reduces to the inhomogeneous system:
\begin{equation}
\mathcal{D} \alpha^{(\ell-1)}(\zeta) =
\mathcal{G}^{(\ell)}\left(\zeta\right)+E^{(\ell)}\alpha_{\star}(\zeta), \quad \alpha^{(\ell-1)}\in L^2(\R) , \ \ {\rm where} 
\label{solvability_cond}
\end{equation}
\begin{equation}
\mathcal{G}^{(\ell)}(\zeta) = 
 \left( \begin{array}{c}
 \left\langle \Phi_+(\cdot),G^{(\ell)}(\cdot,\zeta;\psi^{(0)},\ldots,\psi^{(\ell-2)},\psi_p^{(\ell-1)},E^{(1)},\ldots,E^{(\ell-1)})) \right\rangle_{L^2_\bK(\Omega)} \\ 
  \left\langle \Phi_-(\cdot),G^{(\ell)}(\cdot,\zeta;\psi^{(0)},\ldots,\psi^{(\ell-2)},\psi_p^{(\ell-1)},E^{(1)},\ldots,E^{(\ell-1)})) \right\rangle_{L^2_\bK(\Omega)}
 \end{array} \right) . \label{ipG}
\end{equation} 
Solvability of the non-homogeneous Dirac system \eqref{solvability_cond}  in $L^2(\R)$, is ensured by imposing $L^2(\R)-$ orthogonality
of the right hand side of \eqref{solvability_cond} to $\alpha_\star(\zeta)$. This yields:
\begin{equation}
 \label{solvability_cond_E2}
E^{(\ell)} = -\inner{\alpha_{\star},\mathcal{G}^{(\ell)}}_{L^2(\R)}.
\end{equation}

Thus we obtain, at $\mathcal{O}(\delta^\ell)$, that $\psi^{(\ell)}=\psi_p^{(\ell)}+\psi_h^{(\ell)}$, where $\psi_p^{(\ell)}$ is a particular solution
of \eqref{psi2-eqn} and 
$\psi^{(\ell)}_h(\bx,\zeta) = \alpha^{(\ell)}_+(\zeta)\Phi_+(\bx)+\alpha^{(\ell)}_-(\zeta)\Phi_-(\bx)$
is a homogeneous solution. 

\nit {\bf Summary:} Given a zero-energy $L^2(\R)-$ eigenstate of the Dirac operator, $\mathcal{D}$ (see Section \ref{dirac-solns}), 
we can, to any polynomial order in $\delta$, construct a formal solution of the eigenvalue problem $H^{(\delta)}\psi=E\psi,\ \psi\in L^2_{\kpar=\bK\cdot\vtilde_1}$.

\subsection{Zero-energy eigenstate of the Dirac operator, $\mathcal{D}$\label{dirac-solns}}

\begin{proposition}\label{zero-mode}
Let $\kappa(\zeta)$ be a domain wall function (Definition \ref{domain-wall-defn}) and assume, without loss of generality, that $\vartheta_\sharp>0$. Then, 
\begin{enumerate}
\item The Dirac operator, $\mathcal{D}$, has a zero-energy eigenvalue, $E^{(1)}=0$,  with   exponentially localized solution given by:
\begin{align}
\alpha_\star(\zeta)\ &= \begin{pmatrix}\alpha_{\star,+}(\zeta) \\ \alpha_{\star,-}(\zeta) \end{pmatrix} = 
\ \gamma \begin{pmatrix} 1 \\ -i \end{pmatrix} 
e^{-\frac{\thetasharp}{|\lambda_\sharp| | \mathfrak{z}_2|}\ \int_0^\zeta \kappa(s) ds}\label{Fstar} \ .
\end{align}
Here,  $\gamma\in\C$ is any constant for which $\|\alpha_\star\|_{L^2}=1$. 
\item The solution \eqref{Fstar}, $\alpha_\star$, generates a leading order approximate (two-scale) edge state:
\begin{align}
&\Psi^{(0)}(\bx,\delta\ktilde_2\cdot\bx)\nn\\
&\qquad =  \alpha_{\star,+}(\delta\ktilde_2\cdot\bx)\Phi_+(\bx) + 
\alpha_{\star,-}(\delta\ktilde_2\cdot\bx)\Phi_-(\bx) \label{Psi0-a}\\
&\qquad = e^{i\bK\cdot\bx}\ \gamma \ (-1-i) \left[\ i \frac{\overline{\lambda_\sharp}}{|\lambda_\sharp|} \frac{ \mathfrak{z}_2}{| \mathfrak{z}_2|} P_1(\bx) -
 P_2(\bx)\ \right]\ 
 e^{-\frac{\vartheta_\sharp}{|\lambda_\sharp| | \mathfrak{z}_2|}\int_0^{\delta\ktilde_2\cdot\bx}\kappa(s)ds} \ .
\label{Psi0-d}\end{align}
$\Psi^{(0)}(\bx,\delta\bx)$ is propagating in the $\vtilde_1$ direction with parallel quasimomentum $\kparv$, and is exponentially decaying, $\ktilde_2\cdot\bx\to\pm\infty$,  in the transverse direction.
\end{enumerate}
\end{proposition}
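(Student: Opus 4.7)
My plan is to split the proposition into its two parts and handle each by direct calculation. For part (1), I will write the eigenvalue equation $\mathcal{D}\alpha_\star = 0$ componentwise using the explicit forms of $\sigma_1$ and $\sigma_3$: setting $\alpha_\star=(\alpha_{\star,+},\alpha_{\star,-})^T$, this becomes the $2\times 2$ first-order ODE system
\begin{align*}
-i|\lambda_\sharp||\ktilde_2|\,\partial_\zeta\alpha_{\star,+} + \vartheta_\sharp\kappa(\zeta)\,\alpha_{\star,-} &= 0, \\
+i|\lambda_\sharp||\ktilde_2|\,\partial_\zeta\alpha_{\star,-} + \vartheta_\sharp\kappa(\zeta)\,\alpha_{\star,+} &= 0.
\end{align*}
The key observation, motivated by the form $-i\lamsharp^{\#}\sigma_3\D_\zeta + \thetasharp\kappa\sigma_1$ and the fact that $\sigma_1$ and $i\sigma_3$ share the eigenvector $(1,-i)^T$ (with eigenvalues $1$ and $1$ respectively), is to try the \emph{polarization ansatz} $\alpha_\star(\zeta) = \gamma(1,-i)^T f(\zeta)$. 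Under this ansatz both equations collapse to the single scalar ODE $|\lambda_\sharp||\ktilde_2|f'(\zeta) + \vartheta_\sharp\kappa(\zeta)f(\zeta)=0$.

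Second, I will solve this scalar ODE by separation of variables to obtain $f(\zeta) = \exp\bigl(-\frac{\vartheta_\sharp}{|\lambda_\sharp||\ktilde_2|}\int_0^\zeta\kappa(s)\,ds\bigr)$, which gives the formula \eqref{Fstar}. Exponential localization follows from the domain-wall asymptotics: since $\kappa(\zeta)\to\pm\kappa_\infty$ as $\zeta\to\pm\infty$ and $\vartheta_\sharp>0$, the difference $\kappa(\zeta) - \kappa_\infty\cdot\sgn(\zeta)$ is integrable (by Definition \ref{domain-wall-defn}), so
\[
\int_0^\zeta\kappa(s)\,ds = \kappa_\infty|\zeta| + \mathcal{O}(1) \ \ \text{as } |\zeta|\to\infty,
\]
yielding the bound $|f(\zeta)|\lesssim e^{-c|\zeta|}$ with $c=\vartheta_\sharp\kappa_\infty/(|\lambda_\sharp||\ktilde_2|) > 0$. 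In particular $\alpha_\star\in L^2(\R)$, so we may normalize $\|\alpha_\star\|_{L^2}=1$ by choosing $\gamma\in\C$ appropriately.

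For part (2), the work is pure algebra: plug $\alpha_{\star,\pm}(\delta\ktilde_2\cdot\bx)$ into \eqref{Psi0-a}, substitute the explicit form \eqref{correct-basis} of $\Phi_\pm(\bx) = e^{i\bK\cdot\bx}P_\pm(\bx)$, and collect the coefficients of $P_1(\bx)$ and $P_2(\bx)$. The coefficient of $\frac{\overline{\lambda_\sharp}}{|\lambda_\sharp|}\frac{\mathfrak{z}_2}{|\mathfrak{z}_2|}P_1$ is $\tfrac{\gamma}{\sqrt2}(1-i)$ and the coefficient of $P_2$ is $\tfrac{\gamma}{\sqrt2}(1+i)$; factoring out the common prefactor $(-1-i)$ (after absorbing $1/\sqrt2$ into $\gamma$) and using the identity $(-1-i)\cdot i = 1-i$ produces exactly \eqref{Psi0-d}. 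The pseudo-periodicity $\Psi^{(0)}(\bx+\vtilde_1) = e^{i\bK\cdot\vtilde_1}\Psi^{(0)}(\bx)$ is then inherited from $\Phi_\pm \in L^2_\bK$, because the slow envelope $\alpha_\star(\delta\ktilde_2\cdot\bx)$ is invariant under $\bx\mapsto\bx+\vtilde_1$ thanks to $\ktilde_2\cdot\vtilde_1=0$; transverse decay as $|\ktilde_2\cdot\bx|\to\infty$ comes directly from the exponential decay of $f$ established above.

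I do not anticipate any substantive obstacle; the only minor point to be careful about is verifying that the polarization ansatz $(1,-i)^T$ is forced (up to scalar multiples) by the decaying boundary conditions — the alternative polarization $(1,+i)^T$ leads to $f'/f = +\vartheta_\sharp\kappa/(|\lambda_\sharp||\ktilde_2|)$, whose solution grows exponentially at both ends and is therefore not in $L^2(\R)$. This also proves simplicity of the zero eigenvalue in part (1), which is needed later in the multiple-scale hierarchy for the Fredholm alternative at each order.
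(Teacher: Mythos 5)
Your proposal is correct and takes essentially the same route as the paper: solve the $2\times2$ first-order system explicitly, retain the decaying polarization $(1,-i)^T$ (the paper simply exhibits both fundamental solutions $\beta_1,\beta_2$ and discards the growing one), normalize, and substitute into $\Phi_\pm$ for part (2). One cosmetic caveat: $(1,-i)^T$ is an eigenvector of $\sigma_2$ (equivalently of $\sigma_3\sigma_1$), not of $\sigma_1$, so your motivating remark is off, but the substitution you actually carry out does collapse both equations to the single scalar ODE, and that is what the proof rests on.
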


\begin{proof}[Proof of Proposition \ref{zero-mode}]
The system \eqref{m-dirac-eqn} with energy $E^{(1)}=0$ may be written as:
\begin{align*}
\D_\zeta \alpha\ &=\ \frac{-i\thetasharp}{|\lambda_\sharp|| \mathfrak{z}_2|}\ \kappa(\zeta)\ \begin{pmatrix} 0 & 1\\ -1 & 0 \end{pmatrix}\ \alpha,\quad \
 \alpha = \begin{pmatrix} \alpha_+\\ \alpha_-\end{pmatrix},
\end{align*}
and has solutions:
\begin{align*}
\beta_1(\zeta)&=
\begin{pmatrix} 1\\ i\end{pmatrix} \
e^{\frac{\thetasharp}{|\lambda_\sharp| | \mathfrak{z}_2|}\ \int_0^\zeta \kappa(s) ds} \ , \quad \text{and} \quad
\beta_2(\zeta)=
 \begin{pmatrix} 1\\ -i\end{pmatrix} \
e^{\frac{-\thetasharp}{|\lambda_\sharp| | \mathfrak{z}_2|}\ \int_0^\zeta \kappa(s) ds} \ . 
\end{align*}
Since $\thetasharp>0$ and $\kappa(\zeta)\to\pm \kappa_\infty$ as $\zeta\to\pm\infty$, with $\kappa_\infty>0$, the solution $\beta_2(\zeta)$ 
decays as $\zeta\to\pm\infty$. Thus we set $\alpha_\star(\zeta)=\gamma \beta_2(\zeta)$, with constant $\gamma\in\C$ chosen so that $\norm{\alpha_\star}_{L^2(\R)}=1$.  This yields the expression for $\Psi^{(0)}(\bx,\delta\bx)$ in \eqref{Psi0-a}-\eqref{Psi0-d}, and completes the proof of Proposition \ref{zero-mode}.
\end{proof}

\begin{remark}[Topological Stability] 
The  zero-energy eigenpair, \eqref{Fstar},  is ``topologically stable'' or
``topologically protected''  in the sense that 
it (and hence the bifurcation of edge states, which it seeds) persists for any localized 
perturbation of  $\kappa(\zeta)$. Such perturbations may be large but  do not change  the  asymptotic behavior of $\kappa(\zeta)$ as 
 $\zeta\to\pm\infty$.
\end{remark}

\section{Existence of edge states localized along an edge}\label{thm-edge-state}
 
In this section we prove the existence of edge states for the eigenvalue problem:
\begin{align}
 \label{EVP_2}
 &H^{(\delta)} \Psi\ =\ E\ \Psi \ ,\ \ \Psi\in H_{\kparv}^2(\Sigma) \ , \quad \text{where} \\
 &H^{(\delta)} \equiv-\Delta+V(\bx)+\delta\kappa(\delta \ktilde_2\cdot\bx)W(\bx) \ .  \nn
\end{align}

We make the following assumptions:
\begin{enumerate}
\item[(A1)] $V$ is a  honeycomb potential in the sense of Definition \ref{honeyV} and $-\Delta+V$ has a Dirac point at $(\bK,E_\star)$; see Definition \ref{dirac-pt-defn} and
 the conclusions of  Theorem \ref{diracpt-thm}. In particular, 
the degenerate subspace of $H^{(0)}-E_\star$ has orthonormal basis of Floquet-Bloch modes $\{\Phi_1(\bx)\ ,\ \Phi_2(\bx)\}$ and 
 \begin{equation*}
\lambda_\sharp\ \equiv\   \sum_{\bfm\in\mathcal{S}} c(\bfm)^2\ \left(\begin{array}{c}1\\ i\end{array}\right)\cdot \left(\bK+\bfm\vec\bk\right) \ \ne\ 0 ; \ \textrm{see \eqref{lambda-sharp2}.}
\end{equation*} 
\item[(A2)] $W$ is real-valued and  $\Lambda_h-$ periodic,  odd  and non-degenerate; {\it i.e.} (W1), (W2) and (W3) of Section \ref{zigzag-edges} hold. In particular,
\begin{equation*}
\thetasharp\equiv\inner{\Phi_1,W\Phi_1}_{L_\bK^2}\ =\ \inner{\Phi_+,W\Phi_-}_{L_\bK^2}  \ \ne\ 0\ .
\end{equation*}
\item[(A3)] $\kappa(\delta \ktilde_2\cdot\bx)$ is a domain wall function in the sense of Definition \ref{domain-wall-defn}.
\end{enumerate}

\nit The following {\it spectral no-fold condition} plays a central role.

\begin{definition}\label{SGC}[Spectral \nofold condition]
Let $H_V=-\Delta+V(\bx)$, where $V$ is a honeycomb potential in the sense of Definition \ref{honeyV}. Further, let $(\bK,E_\star)$ be a Dirac point for $H_V$ in the sense of  Definition \ref{dirac-pt-defn}, in which we use the convention of labeling the dispersion maps by:
 $\bk\mapsto E_b(\bk)$, where 
 $b\in\{b_\star,b_{\star}+1\}\cup\{b\ge1: b\ne b_\star,\ b_{\star}+1\}$
 $\equiv \{-,+\}\cup\{b\ge1: b\ne -, +\}$.
  
    To the $\vtilde_1-$ edge, $\R\vtilde_1$, we associate the  ``$\ktilde_2 -$ slice at quasi-momentum $\bK$'', given by  the union over all $b\in \{-,+\}\cup\{b\ge1: b\ne -, +\}$ of the curves 
$ \{(\bK+\lambda\ktilde_2\ ,\ E_b(\bK+\lambda\ktilde_2) : \ |\lambda|\le\frac12\}$.

We say the band structure of $H_V$ satisfies the spectral \nofold condition for  the $\vtilde_1-$ edge or, equivalently at the Dirac point and along the $\ktilde_2 -$ slice, with constants $c_1, c_2, \mathfrak{a}_0$, and $\exponent\in (0,1)$ if the following holds:

 There is a  ``modulus'', $\omega(\mathfrak{a})$, which is continuous,  non-negative and increasing on  $0\le\mathfrak{a}<\mathfrak{a}_0$,  satisfying $\omega(0)=0$ and  \[ \omega(\mathfrak{a^\exponent})/\mathfrak{a}\to\infty\ \ {\rm as}\ \ \mathfrak{a}\to0,\] 
 %
 such that for all  $0\le\mathfrak{a}< \mathfrak{a}_0$:
\begin{align}
\mathfrak{a}^\exponent \le |\lambda|\le\frac12\quad &\implies\quad \Big|\ E_\pm(\bK+\lambda\ktilde_2) - E_\star\ \Big|\ \ge\ c_1\ \omega(\mathfrak{a}^\exponent) \ ,
\label{no-fold-over} \\
b\ne\pm, \ |\lambda|\le1/2 \quad &\implies \quad \Big| E_b(\bK+\lambda\ktilde_2)-E_\star \Big|\ \ge\  c_2\ (1+|b|) \ .
\label{no-fold-over-b}
\end{align}
\end{definition}
Our final assumption is 

\begin{enumerate}
\item[(A4)]
$-\Delta+V$ satisfies the spectral \nofold condition  at quasimomentum $\bK$ along the $\ktilde_2 -$ slice; see Definition \ref{SGC}.
\end{enumerate}

\begin{remark}\label{control-in-1d}
\begin{enumerate}
\item  Conditions \eqref{no-fold-over}-\eqref{no-fold-over-b} ensure that, restricted to the quasi-momentum slice $\lambda\mapsto \bK+\lambda\ktilde_2\in\mathcal{B}_h$, the dispersion curves which touch at the Dirac point $(\bK,E_\star)$
 do not ``fold over'' and attain energies within $c_1\cdot\omega(\mathfrak{a}^\exponent)$ of $E_\star$ for quasimomenta bounded away from $\bK$.
\item Dispersion curves of periodic Schr\"odinger operators on $\R^1$ (Hill's operators,\ $H=-\D_x^2+Q(x)$, where $Q(x+1)=Q(x)$) with ``Dirac points'' (see \cites{FLW-PNAS:14,FLW-MAMS:15}) always satisfy the natural 1D analogue of the spectral \nofold condition with $\omega(\mathfrak{a})=\mathfrak{a}$.  Dirac points  occur at quasi-momentum $k=\pm\pi$ and ODE arguments ensure that dispersion curves are monotone functions of $k$ away from $k=0,\pm\pi$. 
\item In Section \ref{zz-gap}
we prove that  $H_{\eps V}=-\Delta+\eps V$, where $V$ is a honeycomb potential, satisfies the \nofold condition along the zigzag slice ($\vtilde_1=\bv_1$) with modulus $\omega(\mathfrak{a})=\mathfrak{a}^2$, under the assumption that  $\eps V_{1,1}>0$ and $\eps$ is sufficiently small. 
 \end{enumerate}
\end{remark}

We now state a key result of this paper, giving sufficient conditions for the existence
of $\vtilde_1-$ edge states of $H^{(\delta)}$, for $\vtilde_1\in\Lambda_h$.

\begin{theorem}\label{thm-edgestate}
 Consider the $\vtilde_1-$ edge state eigenvalue problem,
\eqref{EVP_2},  
where $V(\bx)$, $W(\bx)$ and $\kappa(\zeta)$ satisfy assumptions (A1)-(A4).
  Then, there exist positive constants $\delta_0, c_0$ and a branch of solutions of \eqref{EVP_2},
   \[|\delta|\in(0,\delta_0)\longmapsto (E^\delta,\Psi^\delta)\in (E_\star-c_0\ \delta_0\ ,\ E_\star+c_0\ \delta_0)\times H^2_{\kparv}(\Sigma) , \] 
   such that the following holds:
\begin{enumerate}
\item $\Psi^\delta$ is well-approximated by a slow modulation of a linear combination of 
degenerate Floquet-Bloch modes $\Phi_+$ and $\Phi_-$ (\eqref{correct-basis}),
which is decaying transverse to the edge, $\Z\vtilde_1$:
\begin{align}
&\left\|\ \Psi^\delta(\cdot)\ -\
\left[\alpha_{\star,+}(\delta\ktilde_2 \cdot)\Phi_+(\cdot)+\alpha_{\star,-}(\delta\ktilde_2 \cdot )\Phi_-(\cdot)\right]\
\right\|_{H^2_{\kparv}}\
\lesssim\ \delta^{\frac12}\ , \label{TM_Psi-error}\\
& E^\delta = E_\star\ +\ E^{(2)}\delta^2\ +\  o(\delta^2), \label{TM_E-error}
\end{align}
where $E^{(2)}$ is obtained directly from \eqref{solvability_cond_E2}, \eqref{ipG} and \eqref{G2def}.  The implied constant in \eqref{TM_Psi-error} depends on $V$, $W$ and $\kappa$, but is independent of $\delta$. 
\item The amplitude vector,  $\alpha_\star(\zeta)=\left(\alpha_{\star,+}(\zeta),\alpha_{\star,-}(\zeta)\right)$, is an  $L^2(\R_\zeta)- $
normalized, topologically protected zero-energy eigenstate of the Dirac system \eqref{multi-dirac-op}: $\mathcal{D}\alpha_\star=0$ (see  Proposition \ref{zero-mode}).
\end{enumerate}
\end{theorem}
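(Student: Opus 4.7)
The plan is to formulate the eigenvalue problem \eqref{EVP_2} as an infinite system for Floquet-Bloch amplitudes and apply a Lyapunov-Schmidt reduction, as outlined in the paper's introduction and carried out in a 1D setting in \cites{FLW-MAMS:15}. By Theorem \ref{fourier-edge}, any $\Psi\in L^2_{\bK\cdot\vtilde_1}(\Sigma)$ can be expanded as $\Psi(\bx)=\sum_{b\ge1}\int_{-1/2}^{1/2}\widetilde{\Psi}_b(\lambda)\,\Phi_b(\bx;\bK+\lambda\ktilde_2)\,d\lambda$. Substituting into $H^{(\delta)}\Psi=E\Psi$ and taking Floquet-Bloch inner products produces an infinite coupled system of integral equations for the amplitudes $\widetilde{\Psi}_b(\lambda)$, in which inter-band and inter-momentum coupling is entirely generated by $\delta\kappa(\delta\ktilde_2\cdot\bx)W(\bx)$. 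In a neighborhood of $\bK$ I would replace $\Phi_{b_\star},\Phi_{b_\star+1}$ by the smooth parametrization $\Phi_\pm(\bx;\lambda)$ furnished by Proposition \ref{directional-bloch}, so that the diagonal part of the unperturbed operator is $E_\pm(\lambda)=E_\star\pm|\lambda_\sharp||\ktilde_2|\lambda+O(\lambda^2)$.

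Next I would split the system at a scale $\mathfrak{a}=\mathfrak{a}(\delta)$ chosen appropriately: the \emph{near-energy} block consists of the amplitudes in $\{\pm\}\times\{|\lambda|<\mathfrak{a}^{\exponent}\}$, and the \emph{far-energy} block collects everything else. Assumption (A4) ensures that on the far block the unperturbed diagonal symbol is bounded away from $E_\star$ by $c_1\omega(\mathfrak{a}^{\exponent})$ (the same two bands with $|\lambda|\ge\mathfrak{a}^{\exponent}$, by \eqref{no-fold-over}) or by $c_2(1+|b|)$ (all remaining bands, by \eqref{no-fold-over-b}). Consequently, the far-energy component of $H^{(\delta)}-E$ is boundedly invertible with norm $O(\omega(\mathfrak{a}^{\exponent})^{-1})$ for $E$ close to $E_\star$ and $\delta/\omega(\mathfrak{a}^{\exponent})$ small, which by the defining property $\omega(\mathfrak{a}^{\exponent})/\mathfrak{a}\to\infty$ is automatic once $\mathfrak{a}$ is a suitable power of $\delta$. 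This lets me solve explicitly for the far-energy amplitudes as a bounded (nonlinear in $\delta$, linear in the near amplitudes) function of the near amplitudes, and substitute back to obtain a closed equation $\mathcal{F}(\alpha,E;\delta)=0$ for $\alpha=(\widetilde{\Psi}_+,\widetilde{\Psi}_-)|_{|\lambda|<\mathfrak{a}^{\exponent}}$.

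I would then rescale $\lambda=\delta\xi$ and identify the near amplitudes with envelope functions $\widehat{\alpha}_\pm(\xi)$ on $\R$ (extended by zero; the cutoff $|\xi|\le\mathfrak{a}^{\exponent}/\delta$ exhausts $\R$ in the limit $\delta\to 0$). Writing $E=E_\star+\delta E^{(1)}+\delta^2 E^{(2)}+\cdots$ and using the inner-product identities of Propositions \ref{inner-prods-sharp} and \ref{inner-prods-W} to compute the leading symbol, the reduced equation $\mathcal{F}(\widehat{\alpha},E_\star+\delta E^{(1)};\delta)=0$ becomes, in the limit $\delta\to 0$, exactly the 1D Dirac eigenvalue problem $(\mathcal{D}-E^{(1)})\widehat{\alpha}=0$ on $L^2(\R)$; the $O(\lambda^2)$ dispersion terms, the far-energy feedback, and the off-diagonal pieces of $W$ all contribute $o(1)$ corrections in the natural weighted norms. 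By Proposition \ref{zero-mode}, $\mathcal{D}$ has a simple zero-energy eigenstate $\alpha_\star$, and its simplicity together with a second Lyapunov-Schmidt reduction (around $\alpha_\star$ in the envelope space) and the implicit function theorem yields a unique smooth branch $\delta\mapsto(\widehat{\alpha}^\delta,E^{(1)}(\delta))$ with $\widehat{\alpha}^0=\alpha_\star$ and $E^{(1)}(0)=0$. The vanishing of $E^{(1)}(\delta)$ to first order and the value of $E^{(2)}$ come from the next-order solvability conditions \eqref{solvability_cond}--\eqref{solvability_cond_E2} of the formal expansion of Section \ref{formal-multiscale}. Undoing the rescaling, recombining with the far-energy piece, and translating back through Theorem \ref{fourier-edge} produces $\Psi^\delta\in H^2_{\bK\cdot\vtilde_1}(\Sigma)$ and the estimates \eqref{TM_Psi-error}--\eqref{TM_E-error}; the $\delta^{1/2}$ rate arises from the $L^\infty\hookrightarrow L^2$ cost of squeezing the envelope through the rescaling $\lambda=\delta\xi$.

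The main obstacle is the quantitative bookkeeping of the reduction uniformly in the envelope scale: the far-energy resolvent must be inverted with a norm better than $\delta^{-1}$ so that the contraction closes, and this is exactly what assumption (A4) buys, provided $\mathfrak{a}$ is tuned to balance $\omega(\mathfrak{a}^{\exponent})$ against $\delta$. A secondary but nontrivial technical point, inherited from \cites{FLW-MAMS:15}, is the $L^{p,s}$-boundedness of the wave-operator-type transforms for the 1D Dirac operator $\mathcal{D}$ alluded to in Remark \ref{kappa-description}; these are what transfer localization from the envelope $\alpha_\star$ back to the full two-dimensional edge state and allow the error estimate to be upgraded from $L^2$ to $H^2_{\bK\cdot\vtilde_1}(\Sigma)$. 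Everything else---the formal computation of the envelope equation, the identification of the Dirac operator, and the existence of a protected zero mode---is already in place from the preceding sections.
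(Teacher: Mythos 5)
Your proposal follows essentially the same route as the paper: Floquet--Bloch reformulation via Theorem \ref{fourier-edge}, a near/far energy splitting whose far-energy inversion is powered by the spectral no-fold condition (A4), rescaling $\lambda=\delta\xi$ and Poisson-summation-type expansions to reduce to a (band-limited) Dirac system seeded by the protected zero mode of Proposition \ref{zero-mode}, with wave-operator bounds handling the Dirac resolvent. The only cosmetic differences are that the paper first subtracts the explicit two-term multiple-scale ansatz $\psi^{(0)}+\delta\psi^{(1)}_p$ and solves a corrector equation (which is what makes the $\delta^{1/2}$ bound and the $E^{(2)}\delta^2$ energy expansion fall out cleanly), and that the final scalar equation $\mathcal{J}[\mu,\delta]=0$ is only continuous at $\delta=0$, so the paper closes with a continuity/Lipschitz zero-finding argument rather than the implicit function theorem you invoke.
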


Perturbation theory for  $\kpar$ near $\bK\cdot\vtilde_1$  can be used to show the persistence of edge states
for parallel quasi-momenta near $\bK\cdot\vtilde_1$.

\begin{corollary}
 \label{vary_k_parallel}
 Fix $V$, $W$, $\kappa$ and $\delta$ as in Theorem \ref{thm-edgestate}. Then there exists $\eta_0\ll\delta$ such that for all $\kpar$ satisfying $|\kpar - \bK\cdot\vtilde_1| <\eta_0$, there exists an $H^2_\kpar(\Sigma)-$ eigenfunction with eigenvalue $\mu(\delta,\kpar)=\ E^\delta\ + \  \mu^{\delta}\ (\kpar-\bK\cdot\vtilde_1)\ +\mathcal{O}(|\kpar-\bK\cdot\vtilde_1|^2)$, where $E^{\delta}$ is given in \eqref{TM_E-error}, and $\mu^\delta$ is a constant, which is independent of $\kpar$.
\end{corollary}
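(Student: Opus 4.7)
The plan is to apply analytic perturbation theory of an isolated simple eigenvalue to a $\kpar$-dependent family of self-adjoint operators acting on a $\kpar$-independent Hilbert space. First, I would remove the $\kpar$-dependent boundary condition by the gauge transformation $U_\kpar: L^2_\kpar(\Sigma)\to L^2(\Sigma)$, $(U_\kpar f)(\bx) = e^{-i(\kpar/2\pi)\ktilde_1\cdot\bx} f(\bx)$, which is well-defined on $\Sigma=\R^2/\Z\vtilde_1$ since $\ktilde_1\cdot\vtilde_1 = 2\pi$. Setting $\tilde H^{(\delta)}(\kpar)\equiv U_\kpar\, H^{(\delta)}\, U_\kpar^{-1}$ yields
\begin{equation*}
\tilde H^{(\delta)}(\kpar) \;=\; -\left(\nabla + i\frac{\kpar}{2\pi}\ktilde_1\right)^{\!2} + V(\bx) + \delta\kappa(\delta\ktilde_2\cdot\bx)W(\bx),
\end{equation*}
a Kato analytic family of type (A) on the fixed domain $H^2(\Sigma)\subset L^2(\Sigma)$, with $\tilde H^{(\delta)}(\kpar)$ self-adjoint and entire (in fact polynomial of degree two) in $\kpar$.

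Next, I would invoke Theorem \ref{thm-edgestate}: at $\kpar=\bK\cdot\vtilde_1$ the operator $\tilde H^{(\delta)}(\bK\cdot\vtilde_1)$ has the eigenvalue $E^\delta$ with eigenfunction $\Phi^\delta\equiv U_{\bK\cdot\vtilde_1}\Psi^\delta\in H^2(\Sigma)$. Moreover, the Lyapunov-Schmidt reduction used to prove Theorem \ref{thm-edgestate} together with the spectral \nofold hypothesis (A4) produces an $L^2_{\bK\cdot\vtilde_1}(\Sigma)$-spectral gap about $E_\star$ of width $\gtrsim \delta$ containing $E^\delta$, inside which the construction yields a \emph{unique} nontrivial eigenpair (up to scalar multiples). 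Hence $E^\delta$ is a simple, isolated eigenvalue of $\tilde H^{(\delta)}(\bK\cdot\vtilde_1)$, with distance to the remainder of the spectrum at least $c_0\delta$ for some $c_0>0$ independent of $\delta$.

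Now I would apply the Rellich-Kato theorem for isolated simple eigenvalues of a real-analytic self-adjoint family: there exist real-analytic maps
\begin{equation*}
\kpar\ \longmapsto\ (\mu(\delta,\kpar),\Phi(\kpar))\in\R\times H^2(\Sigma),
\end{equation*}
defined on an interval $|\kpar-\bK\cdot\vtilde_1|<\eta_0$, such that $\tilde H^{(\delta)}(\kpar)\Phi(\kpar)=\mu(\delta,\kpar)\Phi(\kpar)$ and $(\mu,\Phi)|_{\kpar=\bK\cdot\vtilde_1}=(E^\delta,\Phi^\delta)$. The radius $\eta_0$ is controlled by the gap size, which forces the constraint $\eta_0\ll\delta$. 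Pulling back through $U_\kpar^{-1}$ returns an $H^2_\kpar(\Sigma)$ edge state $\Psi(\kpar)=U_\kpar^{-1}\Phi(\kpar)$ of $H^{(\delta)}$ with eigenvalue $\mu(\delta,\kpar)$. The Feynman-Hellmann formula gives the first-order coefficient
\begin{equation*}
\mu^\delta \;=\; \Big\langle \Phi^\delta,\ \D_\kpar\tilde H^{(\delta)}(\kpar)\big|_{\kpar=\bK\cdot\vtilde_1}\Phi^\delta \Big\rangle_{L^2(\Sigma)}, \quad \D_\kpar\tilde H^{(\delta)}(\kpar)=-\frac{i}{\pi}\,\ktilde_1\cdot\Big(\nabla+i\tfrac{\kpar}{2\pi}\ktilde_1\Big),
\end{equation*}
which is a scalar independent of $\kpar$, and Taylor expansion in $\kpar$ about $\bK\cdot\vtilde_1$ produces the asserted form of $\mu(\delta,\kpar)$.

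The main technical obstacle is the quantitative gap estimate in the second step: one must show that the only eigenvalue of $\tilde H^{(\delta)}(\bK\cdot\vtilde_1)$ within an $\mathcal O(\delta)$ neighborhood of $E_\star$ is $E^\delta$, and that it is simple. Qualitative simplicity follows from tracking the (2-dimensional) near-energy subspace in the Lyapunov-Schmidt reduction; the quantitative gap of size $\sim\delta$ uses the spectral \nofold condition (A4) to estimate the Schur complement and the far-energy resolvent uniformly. Everything else is a routine application of standard analytic perturbation theory.
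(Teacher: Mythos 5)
Your proposal is correct and follows essentially the same route the paper intends: the paper gives no separate argument for Corollary \ref{vary_k_parallel} beyond invoking perturbation theory in $\kpar$ near $\bK\cdot\vtilde_1$, and your gauge transformation to a Kato type-(A) self-adjoint analytic family on the fixed space $L^2(\Sigma)$, followed by Rellich--Kato and the Feynman--Hellmann formula, is the standard way to carry this out. The only point to state more carefully is that the $\mathcal{O}(\delta)$ isolation of $E^\delta$ from the rest of the $L^2_{\kparv}(\Sigma)$-spectrum comes from the band structure of the asymptotic operators $H^{(\delta)}_{\pm}$ (the local gap opened at $(\bK,E_\star)$ by breaking inversion symmetry, combined with hypothesis (A4) along the $\ktilde_2$-slice), rather than from the Lyapunov--Schmidt reduction itself; and simplicity of $E^\delta$ is not strictly required, since Rellich's theorem for self-adjoint analytic families applies to any isolated eigenvalue of finite multiplicity and yields an analytic branch with the asserted expansion.
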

%

\nit Zigzag edge states for  $\kpar$ in a neighborhood of $\kpar=\bK\cdot\vtilde_1=2\pi/3\ (\vtilde_1=\bv_1)$ and, by symmetry, in a neighborhood of  $\kpar=4\pi/3$ are indicated in Figure \ref{fig:k_parallel3}.

\subsection{Corrector equation}\label{corrector-equation}

We seek a solution of the eigenvalue problem  \eqref{EVP_2}, $\Psi^\delta$, in the form 
\begin{align}
\Psi^\delta &\equiv \psi^{(0)}(\bx,\delta\ktilde_2\cdot\bx)+\delta\psi^{(1)}(\bx,\delta\ktilde_2\cdot\bx)+\delta\eta^\delta(\bx) , \label{eta-def}\\
E^\delta &\equiv E_\star + \delta^2\mu^\delta \label{mu-def} .
\end{align}
Here, 
$\psi^{(0)}$ and $\psi_p^{(1)}$ are given by their respective multiple scale expressions \eqref{psi0-soln} and \eqref{psi1p-def}:
\begin{align*}
 \psi^{(0)}(\bx,\delta\ktilde_2\cdot\bx) &= \alpha_{\star,+}(\delta\ktilde_2\cdot\bx)\Phi_+(\bx) + \alpha_{\star,-}(\delta\ktilde_2\cdot\bx)\Phi_-(\bx)  , \nn \\
 \psi_p^{(1)}(\bx,\delta\bx) &= \left(R(E_\star)G^{(1)}\right)(\bx, \delta\ktilde_2\cdot\bx)  ,
\end{align*}
and $(\mu^\delta,\eta^\delta(\bx))$ is the corrector, to be constructed. We may assume throughout that $\delta\ge0$.

\begin{remark}
 \label{regularity}
 We shall make frequent use of the regularity of  $\Phi_+(\bx)$, $\Phi_-(\bx)$
and $\alpha_{\star}(\zeta)\equiv(\alpha_{\star,+}(\zeta),\alpha_{\star,-}(\zeta))^T$. 
 In particular,  $V \in C^{\infty}(\R^2/\Lambda)$ and elliptic
regularity theory imply that  $e^{-i\bK\cdot\bx}\Phi_\pm$ is $ C^{\infty}(\R^2/\Lambda)$,  and
 by Proposition  \ref{zero-mode},   $\alpha_\star(\zeta)$ and its derivatives with respect to $\zeta$ are all exponentially decaying as $|\zeta|\to\infty$.
\end{remark}

The following proposition lists useful bounds on $\psi^{(0)}$ and $\psi_p^{(1)}$.

\begin{proposition}[$H_\kparv^s(\Sigma_\bx)$ bounds on $\psi^{(0)}(\bx,\delta\ktilde_2\cdot\bx)$ and $\psi_p^{(1)}(\bx,\delta\ktilde_2\cdot\bx)$] 
 \label{lemma:psi_bounds} For all $s=1,2,\dots$, there exists $\delta_0>0$, such that if $0<|\delta|<\delta_0$,
then the leading order expansion terms $\psi^{(0)}(\bx,\delta\ktilde_2\cdot\bx)$ and $\psi^{(1)}_p(\bx,\delta\ktilde_2\cdot\bx)$ displayed in \eqref{psi0-soln} and
\eqref{psi1p-def} satisfy the bounds:
\begin{align*}
 \norm{\psi^{(0)}(\bx,\delta\ktilde_2\cdot\bx)}_{H_\kparv^s} + \norm{\D_\zeta^2\psi^{(0)}(\bx,\zeta)\Big|_{\zeta=\delta\ktilde_2\cdot\bx}}_{L^2_\kparv}\ &\approx |\delta|^{-1/2},\\ 
 \norm{\psi^{(1)}_p(\bx,\delta\ktilde_2\cdot\bx)}_{H_\kparv^s} &\lesssim |\delta|^{-1/2},  \\	
 \norm{\D_\zeta^2\psi^{(1)}_p(\bx,\zeta)\Big|_{\zeta=\delta\ktilde_2\cdot\bx}}_{L_\kparv^2}\ +\ 
 \norm{\D_\bx\D_\zeta\psi^{(1)}_p(\bx,\zeta)\Big|_{\zeta=\delta\ktilde_2\cdot\bx}}_{L_\kparv^2} &\lesssim |\delta|^{-1/2}.
\end{align*}
It follows that 
$\|\psi^{(0)}\|_{H^2_\kparv} \approx \delta^{-1/2} \ \gg\ 
 \|\delta \psi^{(1)}_p(\cdot,\delta\cdot) \|_{H^2_\kparv} =\mathcal{O}( \delta^{1/2})$.
\end{proposition}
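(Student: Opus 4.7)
The plan is to reduce every estimate to a single scaling identity, after observing that both $\psi^{(0)}$ and $\psi_p^{(1)}$ have the tensor-product form
\begin{equation*}
\Psi(\bx) \;=\; \sum_\ell s_\ell(\delta\ktilde_2\cdot\bx)\, f_\ell(\bx),
\end{equation*}
where each slow factor $s_\ell\in C^\infty(\R)$ is exponentially decaying together with all derivatives, and each fast factor $f_\ell$ lies in $H^s_\bK(\R^2/\Lambda_h)$. For $\psi^{(0)}$ this is immediate from \eqref{psi0-soln} with $(s_\ell,f_\ell)=(\alpha_{\star,\pm},\Phi_\pm)$. For $\psi_p^{(1)}=R(E_\star)G^{(1)}$ it follows by unpacking the definition of $G^{(1)}$: the slow factors are $\D_\zeta\alpha_{\star,\pm}(\zeta)$ and $\kappa(\zeta)\alpha_{\star,\pm}(\zeta)$ (both exponentially decaying by Proposition \ref{zero-mode} and Definition \ref{domain-wall-defn}), while the fast factors are $\ktilde_2\cdot\nabla_\bx\Phi_\pm$ and $W\Phi_\pm$. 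Since $\mathcal D\alpha_\star=0$ forces $G^{(1)}(\cdot,\zeta)\in P_\perp L^2_\bK$ for every $\zeta$, the $\zeta$-parameter commutes with $R(E_\star)$, giving $\psi_p^{(1)}(\bx,\zeta)=\sum_\ell s_\ell(\zeta)\tilde f_\ell(\bx)$ with $\tilde f_\ell\equiv R(E_\star)P_\perp f_\ell\in H^2_\bK$ by elliptic regularity for the resolvent.

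The scaling identity I then establish is: for $s$ smooth and rapidly decaying and $f\in L^2_\bK$,
\begin{equation*}
\big\|s(\delta\ktilde_2\cdot\bx)\,f(\bx)\big\|_{L^2_\kparv(\Sigma)}^2 \;=\; \frac{1}{2\pi\delta}\,\|s\|_{L^2(\R)}^2\,\|f\|_{L^2_\bK(\Omega_h)}^2 \;+\; O_N(\delta^N),\qquad N\geq 1.
\end{equation*}
To prove it I parametrize $\bx=\tau_1\vtilde_1+\tau_2\vtilde_2$ (so $d\bx=|\Omega_h|\,d\tau_1\,d\tau_2$ and $\ktilde_2\cdot\bx=2\pi\tau_2$), tile $\Sigma=\bigsqcup_{n\in\Z}(\Omega_h+n\vtilde_2)$, and use that $|f|^2$ is $\Lambda_h$-periodic (from $\bK$-pseudoperiodicity) to rewrite the norm as
\begin{equation*}
|\Omega_h|\int_0^1\!\!\int_0^1 |f(\tau_1\vtilde_1+\tau_2\vtilde_2)|^2\,\Big[\sum_{n\in\Z}|s(2\pi\delta(\tau_2+n))|^2\Big]\,d\tau_2\,d\tau_1.
\end{equation*}
The bracketed sum is $1$-periodic in $\tau_2$; Poisson summation applied to $|s|^2$ identifies its mean with $(2\pi\delta)^{-1}\|s\|^2_{L^2(\R)}$ and bounds every nonzero Fourier mode by $O_N(\delta^N)$, uniformly in $\tau_2$, since the Fourier transform of $|s|^2$ decays faster than any polynomial.

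Given this identity, the proof is almost by inspection. For $\psi^{(0)}$, orthonormality of $\Phi_\pm$ in $L^2_\bK(\Omega_h)$ and $\|\alpha_\star\|_{L^2(\R)}=1$ give $\|\psi^{(0)}\|^2_{L^2_\kparv}=(2\pi\delta)^{-1}+O_N(\delta^N)$, hence the two-sided bound $\approx\delta^{-1/2}$. For $H^s_\kparv$ norms, the Leibniz identity $\D_\bx[s(\delta\ktilde_2\cdot\bx)f(\bx)]=s(\delta\ktilde_2\cdot\bx)\D_\bx f(\bx)+\delta\,s'(\delta\ktilde_2\cdot\bx)\ktilde_2\, f(\bx)$ shows that each $\bx$-derivative preserves the tensor form: the first term is $O(\delta^{-1/2})$ by the identity applied to $(s,\D_\bx f)$, and the second carries an extra factor $\delta$ and is negligible. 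The $\D_\zeta^2$ estimate on $\psi^{(0)}$ is identical, since $\D^2_\zeta\psi^{(0)}(\bx,\zeta)\big|_{\zeta=\delta\ktilde_2\cdot\bx}=\sum_\pm\alpha''_{\star,\pm}(\delta\ktilde_2\cdot\bx)\Phi_\pm(\bx)$ retains the tensor structure with slow factor $\alpha''_\star\in L^2(\R)$. Finally, termwise application of the scaling identity to $\psi_p^{(1)}=\sum_\ell s_\ell(\delta\ktilde_2\cdot\bx)\tilde f_\ell(\bx)$ and to its $\D_\zeta^2$ and mixed $\D_\bx\D_\zeta$ derivatives (with $\D_\bx\tilde f_\ell\in H^1_\bK$ by elliptic regularity) yields the $\lesssim\delta^{-1/2}$ upper bounds, with implicit constants depending only on the fixed data $V$, $W$, $\kappa$. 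The only step I would scrutinize carefully is the uniformity of the Poisson-summation remainder across all slow factors produced by the differentiations; this is automatic because $\alpha_\star(\zeta)=\gamma(1,-i)^T\exp\!\big(-\tfrac{\thetasharp}{|\lamsharp||\mathfrak z_2|}\int_0^\zeta\kappa(s)\,ds\big)$ is smooth with derivatives decaying at least as fast as $\alpha_\star$ itself.
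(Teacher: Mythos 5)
Your proof is correct, but it takes a genuinely different route from the paper's. The paper omits the detailed argument, deferring to the one-dimensional analogue (Lemma 6.1, Appendix G of the companion paper) and supplying two preparatory facts --- the growth bound $\|p_b(\cdot;\bk)\|_{H^s}\lesssim (1+b)^s$ obtained from elliptic bootstrapping plus Weyl asymptotics, and the rapid decay $|\langle \Phi_b(\cdot;\bK),f\rangle|\lesssim_M (1+b)^{-M}$ obtained by moving powers of $H^{(0)}$ onto $f$ --- which only make sense in a momentum-space argument: expand the two-scale functions in the Floquet--Bloch basis $\Phi_b(\cdot;\bK+\lambda\ktilde_2)$, estimate the coefficients (concentrated on $|\lambda|\lesssim\delta$, whence the $\delta^{-1/2}$), and convert to $H^s_\kparv$ norms via Corollary \ref{fourier-edge-norms}. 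You instead work entirely in physical space: you isolate the tensor structure (slow factor) $\times$ (fast factor), tile $\Sigma$ by translates of the period cell, and apply Poisson summation to $|s|^2$ to get the exact leading constant $(2\pi\delta)^{-1}\|s\|_{L^2(\R)}^2\|f\|_{L^2(\Omega_h)}^2$ with a rapidly vanishing remainder. The one step your route needs that the paper's does not is the tensor decomposition of $\psi_p^{(1)}$, and your justification is sound: since $P_\parallel G^{(1)}(\cdot,\zeta)=0$ pointwise in $\zeta$ (this is exactly $\mathcal{D}\alpha_\star=0$), one may write $R(E_\star)G^{(1)}(\cdot,\zeta)=\sum_\ell s_\ell(\zeta)\,R(E_\star)P_\perp f_\ell$ by linearity, and elliptic regularity gives $R(E_\star)P_\perp f_\ell\in H^s_\bK$ for every $s$ because the $f_\ell$ are smooth. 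What your approach buys is self-containedness and a transparent two-sided bound for $\psi^{(0)}$ (orthonormality of $\Phi_\pm$ kills the cross term at leading order); what it does not produce are the $b$-resolved coefficient estimates that the paper's momentum-space setup reuses for the $\widetilde{F}_b$ terms in the Lyapunov--Schmidt reduction of Sections \ref{rough-strategy}--\ref{final-reduction}, so the paper's heavier machinery is not gratuitous. The only point to make explicit is that every slow factor generated by the Leibniz rule ($\alpha_\star'$, $\alpha_\star''$, $\kappa\alpha_\star$, $\kappa'\alpha_\star$, etc.) remains smooth and rapidly decaying with the derivatives controlled by Definition \ref{domain-wall-defn}, so that the Poisson-summation remainders are uniform over the finitely many terms involved --- which you correctly flag and which does hold here.
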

%

The proof of Proposition \ref{lemma:psi_bounds} follows the approach taken in the proof of 
 Lemma  6.1  in Appendix G of \cites{FLW-MAMS:15}. We omit the details but  make two key technical remarks, that facilitate this proof.
 
\nit {\it Bound on $\|\Phi_b\|_{H^s}$:} Recall $\Phi_b(\bx;\bk)= e^{i\bk\cdot\bx}p_b(\bx;\bk)$, where
  $\|\Phi_b\|_{L^2(\Omega)}=\|p_b\|_{L^2(\Omega)}=1$. Now $p_b(\bx;\bk)$ satisfies 
$-\Delta p_b= 2i\bk\cdot\nabla p_b - |\bk|^2 p_b - V p_b+E_b(\bk) p_b$, where $V$ is bounded and smooth.
%
%
By 2D Weyl asymptotics  $|E_b(\bk)|\approx(1+|b|), \ b\gg1$ and therefore we have
 $\| \Delta p_b\|_{H^{s-1}}\le C_s (1+b)\ \|p_b\|_{H^s}$. Hence,  by elliptic theory
  $\| p_b\|_{H^{s+1}}\le C_s (1+b)\ \|p_b\|_{H^s}$ and  induction on $s\ge0$ yields
   $\| p_b\|_{H^{s}}\le C_s (1+b)^s\ $.
   
\nit  {\it Rapid decay of  $\inner{\Phi_b(\cdot;\bK),f(\cdot)}_{L^2(\Omega)}$:}
Using  $H^{(0)}\Phi_b(\bx;\bK) = E_b(\bK) \Phi_b(\bx;\bK)$, for sufficiently smooth $f$ we have:
$ \inner{\Phi_b(\cdot;\bK),f(\cdot)}_{L^2(\Omega)} =  (E_b(\bK))^{-M} \inner{\Phi_b(\cdot;\bK),[H^{(0)}]^M f(\cdot)}_{L^2(\Omega)}.$  Hence, for any $M\ge0$, $ | \inner{\Phi_b(\cdot;\bK),f(\cdot)}_{L^2(\Omega)} |\ \le\ C_M (1+b)^{-M}$.
\smallskip

   It remains to construct and bound the corrector $(\mu,\eta(\bx))$. Substitution of the expansion \eqref{eta-def} into the eigenvalue problem \eqref{EVP_2}, yields an equation for  $\eta(\bx) \in H^2_\kparv(\Sigma) $, which depends on $\mu$ and the small parameter $\delta$:
{\small
 \begin{align}
& \left(-\Delta_\bx + V(\bx) - E_\star\right)\eta(\bx) + \delta\kappa(\delta\ktilde_2\cdot\bx) W(\bx)\eta(\bx) - \delta^2\mu\ \eta(\bx) \nn \\
&\quad = \delta \Big(2\ktilde_2\cdot \nabla_\bx\ \D_\zeta-\kappa(\delta\ktilde_2\cdot\bx)W(\bx)\Big)\psi_p^{(1)}(\bx,\zeta)\Big|_{\zeta=\delta\ktilde_2\cdot\bx}\  +\ \delta\mu\psi^{(0)}(\bx,\delta\ktilde_2\cdot\bx) \nn\\
&\quad\qquad 
+ \delta |\ktilde_2|^2 \D_\zeta^2\psi^{(0)}(\bx,\zeta)\Big|_{\zeta=\delta\ktilde_2\cdot\bx}
+\delta^2\mu\psi_p^{(1)}(\bx,\delta\ktilde_2\cdot\bx) + \delta^2|\ktilde_2|^2\ \D_\zeta^2\ \psi_p^{(1)}(\bx,\zeta)\Big|_{\zeta=\delta\ktilde_2\cdot\bx}  .
\label{corrector-eqn1} 
\end{align}
}
To prove Theorem \ref{thm-edgestate}, we shall prove that \eqref{corrector-eqn1} has a solution $(\mu(\delta),\eta^\delta)$, with $\eta^\delta \in H^2_\kparv$ satisfying the bound
\begin{equation*}
 \|\delta \eta^\delta \|_{H^2_\kparv} \leq C\delta^{1/2} \ .
 \end{equation*}

\subsection{Decomposition of corrector, $\eta$, into near and far energy components}\label{rough-strategy}

Introduce the abbreviated notation, for  $|\lambda|\le1/2$:
%
%
\begin{align}\label{Eb_of_lambda}
E_b(\lambda)\ =\ 
\begin{cases}
E_b(\bK+\lambda\ktilde_2) & b_\star\notin\{b_\star, b_\star+1\} , \\
E_-(\lambda) & b=b_\star , \\
E_+(\lambda) & b=b_\star+1 ,
\end{cases}
\end{align}
and
\begin{align}\label{Phib_of_lambda}
\Phi_b(\bx;\lambda)\ =\ 
\begin{cases}
\Phi_b(\bx;\bK+\lambda\ktilde_2) & b_\star\notin\{b_\star, b_\star+1\} , \\
\Phi_-(\bx;\lambda) & b=b_\star , \\
\Phi_+(\bx;\lambda) & b=b_\star+1 .
\end{cases}
\end{align}

Define $\widetilde{f}_b(\lambda)=\inner{\Phi_b(\cdot, \lambda), f(\cdot)}_{L^2_{\kparv}}$. 
By Theorem \ref{fourier-edge}, any $\eta\in H^2_\kparv(\Sigma) $ has the representation
\begin{equation}
\label{eta-expansion}
 \eta(\bx) = \sum_{b\ge1}\ \int_{|\lambda|\le1/2}\ \Phi_b(\bx;\lambda)\ \widetilde{\eta}_b(\lambda)\ d\lambda \ . 
\end{equation}
Our strategy is to next derive a system of equations governing $\{\widetilde{\eta}_b(\lambda)\}_{b\geq1}$, which is formally equivalent to system \eqref{corrector-eqn1}. We then prove this system has a solution, which is  used to construct $\eta(\bx)$.  

Take the inner product of \eqref{corrector-eqn1} with $\Phi_b(\bx;\lambda)$, for $ b\ge1$, to obtain
\begin{align}
b\ge1:\quad &\left(\ E_b(\lambda)\  -\ E_\star\ \right) \widetilde{\eta}_b(\lambda) \nn \\ 
&\qquad\qquad + 
\delta \left\langle \Phi_b(\cdot;\lambda) ,
\kappa(\delta\ktilde_2\cdot)  W(\cdot) \eta(\cdot)  \right\rangle_{L^2_{\kparv}}\label{eta-b-system}\\
 & \qquad = \delta\widetilde{F}_b[\mu,\delta](\lambda)
 + \delta^2\ \mu\ \widetilde{\eta}_b(\lambda) \ ,\ |\lambda|\le1/2. \nn
 \end{align}
Here, $\widetilde{F}_b[\mu,\delta](\lambda),\ b\ge1$, is given by:
\begin{align}
& \widetilde{F}_b[\mu,\delta](\lambda)\equiv
\widetilde{F}^{1,\delta}_b(\lambda)\ + \mu \widetilde{F}^{2,\delta}_b(\lambda)\ +\ \delta\mu \widetilde{F}^{3,\delta}_b(\lambda)\ +\ \widetilde{F}^{4,\delta}_b(\lambda)\ +\ \delta \widetilde{F}^{5,\delta}_b(\lambda), 
\label{Fb-def}
\end{align}
where
{\small
\begin{align}
 \widetilde{F}^{1,\delta}_b(\lambda) & \equiv
\inner{\Phi_b(\bx,\lambda), (2\ktilde_2\cdot \nabla_\bx\ \D_\zeta -\kappa(\delta\ktilde_2\cdot\bx)W(\bx))
\psi^{(1)}_p(\bx,\zeta)\Big|_{\zeta=\delta\ktilde_2 \bx}}_{L_\kparv^2} ,\nn\\
\widetilde{F}^{2,\delta}_b(\lambda)& \equiv \inner{\Phi_b(\bx,\lambda),\psi^{(0)}(\bx,\delta\ktilde_2\cdot\bx)}_{L_\kparv^2} , \nn\\
\widetilde{F}^{3,\delta}_b(\lambda) & \equiv \inner{\Phi_b(\bx,\lambda),\psi^{(1)}_p(\bx,\delta\ktilde_2\cdot\bx)}_{L_\kparv^2} ,  \label{Fdef}\\
\widetilde{F}^{4,\delta}_b(\lambda) & \equiv \inner{\Phi_b(\bx,\lambda), \left. |\ktilde_2|^2\ \D_\zeta^2  \psi^{(0)}(\bx,\zeta)\right|_{\zeta=\delta \ktilde_2\cdot\bx}}_{L_\kparv^2} , \nn\\
\widetilde{F}^{5,\delta}_b(\lambda) & \equiv \inner{\Phi_b(\bx,\lambda), \left. |\ktilde_2|^2\ \D_\zeta^2 \psi^{(1)}_p(\bx,\zeta)\right|_{\zeta=\delta\ktilde_2\cdot\bx}}_{L_\kparv^2} . \nn
\end{align}
}

Recall the spectral \nofold condition ensuring that  $\delta/\omega(\delta^\exponent) \to 0$ as $\delta\to0$, where $\exponent>0$.   We next decompose $\eta(\bx)$ into its components with energies ``near'' and ``far'' from  the Dirac point:
\begin{align*}
\eta(\bx)\ &=\  \eta_{\rm near}(\bx)\ +\ \eta_{\rm far}(\bx), \ \ {\rm where}		
\end{align*}
\begin{align}
 \eta_{\rm near}(\bx)\   &\equiv\ 
 \sum_{b=\pm}\ \int_{|\lambda|\le1/2}\ \Phi_b(\bx;\lambda)\ \widetilde{\eta}_{b,{\rm near}}(\lambda)\ d\lambda \label{eta-near} , \\
 \eta_{\rm far}(\bx)\ &\equiv\ \sum_{b\geq1}\ \int_{|\lambda|\le1/2}\ \Phi_b(\bx;\lambda)\  \widetilde{\eta}_{b,{\rm far}}(\lambda)\ d\lambda,\qquad  {\rm and} \label{eta-far}
\end{align}
\begin{align*}
 \widetilde{\eta}_{\pm,{\rm near}}(\lambda)\ &\equiv\ \chi\left(|\lambda|\le\delta^\exponent\right)\ 
   \widetilde{\eta}_{\pm}(\lambda) , \\		
   \widetilde{\eta}_{b,{\rm far}}(\lambda)\ &\equiv\ \chi\left((\delta_{_{b,+}}+\delta_{_{b,-}})\delta^\exponent\le  |\lambda|\le \frac12 \right)\ 
   \widetilde{\eta}_b(\lambda),\ b\geq1 ; 
   \end{align*} 
   $\delta_{b,+}$ and $\delta_{b,-}$ are Kronecker delta symbols. 

%
   %
%
We rewrite system \eqref{eta-b-system} as two coupled subsystems:\ 
 a pair of equations, which governs the {\bf near energy components}: 
\begin{align}
&\left(E_{+}(\lambda)-E_{\star}\right)\widetilde{\eta}_{+,\rm near}(\lambda) \nn \\
&\qquad+ \delta\chi\Big(\abs{\lambda}\leq\delta^{\exponent}\Big)\inner{\Phi_{+}(\cdot,\lambda),\kappa(\delta\ktilde_2
\cdot)W(\cdot)\left[\eta_{\rm near}(\cdot)+\eta_{\rm far}(\cdot)\right]}_{L_\kparv^2(\Sigma)}
\label{near_cpt_1} \\
&\quad =\delta\chi\Big(\abs{\lambda}\leq\delta^{\exponent}\Big)
\widetilde{F}_{+}[\mu,\delta](\lambda) + \delta^2\mu\ \widetilde{\eta}_{+,{\rm near}}(\lambda), \nn \\
&\left(E_{-}(\lambda)-E_{\star}\right)\widetilde{\eta}_{-,\rm near}(\lambda) \nn \\
&\qquad+\delta\chi\left(\abs{\lambda}\leq\delta^{\exponent}\right)\inner{\Phi_{-}(\cdot,\lambda),\kappa(\delta\ktilde_2
\cdot)W(\cdot)\left[\eta_{\rm near}(\cdot)+  \eta_{\rm far}(\cdot)\right]}_{L_\kparv^2(\Sigma)}
\label{near_cpt_2} \\
&\quad =\delta\chi(\abs{\lambda}\leq\delta^{\exponent})
\widetilde{F}_{-}[\mu,\delta](\lambda) + \delta^2\mu\ \widetilde{\eta}_{-,{\rm near}}(\lambda), \nn
\end{align} 
coupled to an infinite system governing the {\bf far energy components}:
\begin{align}
&\left(E_b(\lambda)-E_{\star}\right)\widetilde{\eta}_{b,\rm far}(\lambda) 
+\delta\chi\Big(1/2\ge\abs{\lambda}\geq(\delta_{b,-}+\delta_{b,+}\Big)\delta^{\exponent}) \times \nn \\
&\qquad 
\left\langle\Phi_{b}(\cdot,\lambda),\kappa(\delta\ktilde_2
\cdot)W(\cdot)\left[\eta_{\rm near}(\cdot)+\eta_{\rm far}
(\cdot)\right] \right\rangle_{L^2_{\kparv}(\Sigma)} \label{far_cpts} \\
&\quad =\delta\chi\Big(1/2\ge\abs{\lambda}\geq(\delta_{b,-}+\delta_{b,+})\delta^{\exponent}\Big)
\widetilde{F}_{b}[\mu,\delta](\lambda) + \delta^2\mu\ \widetilde{\eta}_{b,\rm far}(\lambda),\ \ b\geq1. \nn
\end{align} 

We now systematically manipulate  \eqref{near_cpt_1}-\eqref{far_cpts} into  the form of a  band-limited Dirac system; see Proposition \ref{near_freq_compact}. This latter equation is then solved in Proposition \ref{solve4beta}. Since all steps  are reversible, this yields a solution $(\mu^\delta , \{\widetilde{\eta}^\delta_b(\lambda)\}_{b\geq1})$ of  \eqref{near_cpt_1}-\eqref{far_cpts}. Finally, $\eta^\delta\in H^2_\kparv(\Sigma)$, the  solution of corrector equation \eqref{corrector-eqn1},  is reconstructed from the amplitudes $\{\widetilde{\eta}^\delta_b(\lambda)\}_{b\geq1}$ using  \eqref{eta-expansion}.

\subsection{Construction of $\eta_{\rm far}=\eta_{\text{far}}[\eta_{\text{near}},\mu,\delta]$ and derivation of a closed system for $\eta_{\rm near}$}

We solve \eqref{far_cpts} for $\eta_{\rm far}$ as a functional of $\eta_{\rm near}$, and the parameters $\mu$ and $\delta$.
We then study the {\it closed} equation for $\eta_{\rm near}$ obtained by substitution of $\eta_{\rm far}$ into \eqref{near_cpt_1} and \eqref{near_cpt_2}.

\nit {\it It is in the construction of this map that we use assumption (A4), the spectral \nofold condition along the $\ktilde_2 -$ slice; Definition \ref{SGC}.
   We apply it in the form: There exists a modulus, $\omega(\mathfrak{a})$, and positive constants $\exponent$, $c_1$ and $c_2$, depending on $V$,  such that for all $\delta\ne0$ and sufficiently small:}
\begin{align}
\delta^\exponent \le |\lambda|\le\frac12\quad &\implies\quad \Big|\ E_\pm(\lambda) - E_\star\ \Big|\ \ge\ c_1\ \omega(\delta^{\exponent}),
\label{no-fold-over-A} \\
b\ne\pm:\ \  \ |\lambda|\le1/2 \quad &\implies \quad \Big| E_b(\lambda)-E_\star \Big|\ \ge\  c_2\ (1+|b|) \ .
\label{no-fold-over-b-A}
\end{align}

\nit The far energy system \eqref{far_cpts} may be written as
a fixed point system for $\widetilde\eta_{\rm far}=\{\widetilde{\eta}_{b,\rm far}(\lambda)\}_{b\ge1}$:
\begin{equation}
\widetilde{\mathcal{E}}_b[\widetilde{\eta}_{\rm far};\eta_{\rm near},\mu,\delta]\ =\
\widetilde{\eta}_{b,\rm far}\ ,\qquad  b\ge1, 
\label{fixed-pt1}
\end{equation}
where the mapping $\widetilde{\mathcal{E}}_b$ is given by
\begin{align*}
\widetilde{\mathcal{E}}_b[\phi;\psi,\mu,\delta](\lambda)&\equiv 
\delta^2\mu\ \frac{\widetilde{\phi}_{b,\textrm{far}}(\lambda)}{{E_b(\lambda)-E_{\star}}}
+\frac{\delta\ \chi\Big(1/2\ge\abs{\lambda}\geq(\delta_{b,-}+\delta_{b,+})\delta^{\exponent}\Big)}{E_b(\lambda)-E_{\star}} \times \\
&  \quad \left(-\inner{\Phi_{b}(\cdot,\lambda),\kappa(\delta \ktilde_2 \cdot)W(\cdot)\left[\psi(\cdot)+ \phi
(\cdot)\right]}_{L_\kparv^2} + \widetilde{F}_{b}[\mu,\delta](\lambda) \right) ,
\end{align*}
and
\begin{align*}
\phi(\bx)&= \sum_{b\ge1} \int_{|\lambda| \leq 1/2} \chi\Big(\abs{\lambda}\geq(\delta_{b,-}+\delta_{b,+})\delta^{\exponent}\Big)\
\widetilde{\phi}_b(\lambda) \Phi_b(\bx;\lambda)\ d\lambda \\
& =\
\sum_{b\ge1} \int_{|\lambda| \leq 1/2} \widetilde{\phi}_{b,{\rm far}}(\lambda) \Phi_b(\bx;\lambda)\ d\lambda\ .
\end{align*}
Equivalently,  
\begin{equation}
\mathcal{E}[\eta_{\rm far};\eta_{\rm near},\mu,\delta]\ =\ \eta_{\rm far}\ .
\label{fixed-pt-notilde}
\end{equation}
For fixed $\mu$, $\delta$ and band-limited $\eta_{\rm near}$:
\begin{equation}
\widetilde{\eta}_{\pm,\rm near}(\lambda)\ =\ \chi\left(\abs{\lambda}\leq \delta^\exponent\right)
\widetilde{\eta}_{\pm,\rm near}(\lambda)
 , \label{near-def}\end{equation}
we seek a solution $\{\widetilde{\eta}_{b,\rm far}(\lambda)\}_{b\ge1}$, supported at energies
bounded away from $E_\star$:
\begin{equation}
\widetilde{\eta}_{b,\rm far}(\lambda)\\ =\
\chi\left(\abs{\lambda}\ge(\delta_{b,-}+\delta_{b,+})\delta^\exponent\right) \widetilde{\eta}_{b,\rm far}(\lambda)
 , ~~~b\ge1 .\label{far-def}\end{equation}

 \nit Introduce the Banach spaces of functions  limited to ``far'' and ``near'' energy regimes:
 \begin{align*}
  L^2_{{\rm near}, \delta^\exponent}(\Sigma) &\equiv\
  \left\{ f\in L_\kparv^2(\Sigma) : \widetilde{f}_b(\lambda)\ \textrm{satisfies \eqref{near-def}}\right\} , \\	
  L^2_{{\rm far}, \delta^\exponent}(\Sigma) &\equiv\
  \left\{ f\in L_\kparv^2(\Sigma) : \widetilde{f}_b(\lambda)\ \textrm{satisfies \eqref{far-def}}\right\} , 
\end{align*}
Near- and far- energy Sobolev spaces $H^s_{\rm far}(\Sigma) $ and 
$H^s_{\rm near}(\Sigma) $ are analogously defined.
The corresponding open balls of radius $\rho$ are given by: 
 \begin{align*}
  B_{{\rm near},\delta^\exponent}(\rho) &\equiv\
  \left\{ f\in  L^2_{{\rm near}, \delta^\exponent} : \|f\|_{L_\kparv^2}<\rho \right\} ,  \\	
 B_{{\rm far},\delta^\exponent}(\rho) &\equiv\
  \left\{ f\in  L^2_{{\rm far}, \delta^\exponent} : \|f\|_{L_\kparv^2}<\rho \right\} . 
\end{align*}

\nit Using (A4) that $H^{(0)}=-\Delta+V$ satisfies the \nofold condition for the $\vtilde_1 -$ edge, we deduce:
 
\begin{proposition}\label{fixed-pt} 
\begin{enumerate}
\item For any fixed $M>0, R>0$, there exists a positive number, $\delta_0\le1$,  such that for all $0<\delta<\delta_0$, 
equation \eqref{fixed-pt-notilde}, or equivalently, the system \eqref{fixed-pt1}, has a unique solution
\begin{align*}
&(\eta_{\text{near}},\mu,\delta)\in B_{{\rm near},\delta^\exponent}(R)\times\{|\mu|<M\}\times\{0<\delta<\delta_0\}
\nn\\
&\qquad\qquad \mapsto\ \eta_{\rm far}(\cdot;\eta_{\rm near},\mu,\delta)=
\mathcal{T}^{-1}\widetilde{\eta}_{\text{far}}\in B_{{\rm far},\delta^\exponent}(\rho_\delta) ,\quad 
\rho_\delta=\mathcal{O}\left(\frac{\delta^{\frac12}}{\omega(\delta^\exponent)}\right).
\end{align*}
\item The mapping $(\eta_{\text{near}},\mu,\delta)\mapsto \eta_{\rm far}(\cdot;\eta_{\rm near},\mu,\delta)\in H_\kparv^2$ is 
Lipschitz in $(\eta_{\rm near},\mu)$ with:
\begin{align}
&\left\|\ \eta_{\text{far}}[\psi_1,\mu_1,\delta] -  \eta_{\text{far}}[\psi_2,\mu_2,\delta]\ \right\|_{H_\kparv^2(\Sigma)} \nn \\
&\qquad \leq \ C' \ \frac{\delta}{\omega(\delta^\exponent)} \ \Big(\norm{\psi_1-\psi_2}_{H_\kparv^2} + \abs{\mu_1-\mu_2} \Big), \nn \\
&\norm{\ \eta_{\text{far}}[\eta_{\text{near}};\mu,\delta]\ }_{H_\kparv^2} 
 \le\ C''\left(\ 
\frac{\delta}{\omega(\delta^\exponent)}\norm{\eta_{\text{near}}}_{H_\kparv^2}+\frac{\delta^{\frac12}}{\omega(\delta^\exponent)} \right)\ . \label{eta-far-bound}
\end{align}
The constants $C'$ and $C''$ depend only on $M, R$ and $\exponent$.
\item 
The mapping $(\eta_{\text{near}},\mu,\delta)\mapsto\eta_{\text{far}}[\eta_{\text{near}},\mu,\delta]$ 
satisfies:
\begin{equation}
\label{eta_far_affine}
\eta_{\text{far}}[\eta_{\text{near}},\mu,\delta](\bx) = [A\eta_{\text{near}}](\bx;\mu,\delta) + \mu
B(\bx;\delta) + C(\bx;\delta).
\end{equation} 
For $\eta_{\text{near}}\in B_{{\rm near}}(R)$ we have:
\begin{align*}
&\left\| [A\eta_{\text{near}}](\cdot,\mu_1,\delta) - [A\eta_{\text{near}}](\cdot,\mu_2,\delta)\right\|_{H_\kparv^2}
\le \ C'_{M,R}\ \frac{\delta}{\omega(\delta^\exponent)}\ |\mu_1-\mu_2|, \\		
 &\norm{[A\eta_{\text{near}}](\cdot;\mu,\delta)}_{H_\kparv^2} 
 \leq \frac{\delta}{\omega(\delta^\exponent)}
\norm{\eta_{\text{near}}}_{H_\kparv^2},   \\		
 &\norm{B(\cdot;\delta)}_{H_\kparv^2} \leq\frac{\delta^\frac12}{\omega(\delta^\exponent)}, \ \ \text{and} \ \ 
 \norm{C(\cdot;\delta)}_{H_\kparv^2} \leq \frac{\delta^\frac12}{\omega(\delta^\exponent)}.  
\end{align*}
\item We may extend $\eta_{\rm far}[\cdot;\eta_{\rm near},\mu,\delta]$ to be defined on the half-open interval
$\delta\in[0,\delta_0)$
by defining
$\eta_{\rm far}[\eta_{\rm near},\mu,\delta=0]=0$. Then,  by \eqref{eta-far-bound}
$\eta_{\rm far}[\eta_{\rm near},\mu,\delta]$ is continuous at $\delta=0$.
\end{enumerate}
\end{proposition}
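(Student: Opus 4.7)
The plan is to prove existence and uniqueness by applying the Banach contraction mapping theorem to the map $\mathcal{E}[\,\cdot\,;\eta_{\rm near},\mu,\delta]$ acting on $B_{{\rm far},\delta^\nu}(\rho_\delta)$. The central tool is the spectral no-fold condition (A4), which via \eqref{no-fold-over-A}--\eqref{no-fold-over-b-A} guarantees a lower bound on the resolvent denominators appearing in $\widetilde{\mathcal{E}}_b$: for the two Dirac bands and $|\lambda|\ge\delta^\nu$ one has $|E_{\pm}(\lambda)-E_\star|^{-1}\le (c_1\omega(\delta^\nu))^{-1}$, while for $b\neq \pm$ and all $|\lambda|\le 1/2$ one has $|E_b(\lambda)-E_\star|^{-1}\lesssim (1+b)^{-1}$. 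I work throughout in the Sobolev space $H^2_\kparv(\Sigma)$, whose norm is equivalent to $\sum_b(1+b)^2\!\int|\widetilde{f}_b(\lambda)|^2 d\lambda$ by Corollary \ref{fourier-edge-norms}.

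First I estimate the inhomogeneous term $\delta\chi(\cdots)\widetilde{F}_b[\mu,\delta](\lambda)/(E_b(\lambda)-E_\star)$. Using Proposition \ref{lemma:psi_bounds} (which controls $\psi^{(0)}, \psi^{(1)}_p$ by $\delta^{-1/2}$) together with a standard argument that exploits the identity $H^{(0)}\Phi_b=E_b(\bk)\Phi_b$ to integrate by parts, one obtains rapid decay in $b$ of the Floquet-Bloch coefficients $\langle\Phi_b(\cdot,\lambda),f(\cdot)\rangle$ for smooth $f$, namely bounds of the form $(1+b)^{-N}$ for any $N$. Combining these with the resolvent bounds shows that the forcing contributes $\mathcal{O}(\delta^{1/2}/\omega(\delta^\nu))$ in $H^2_\kparv$. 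This, together with the bound $\delta^2|\mu|/|E_b-E_\star|$ on the first term of $\widetilde{\mathcal{E}}_b$, yields that $\mathcal{E}$ maps $B_{{\rm far},\delta^\nu}(\rho_\delta)$ into itself for suitable $\rho_\delta\sim\delta^{1/2}/\omega(\delta^\nu)$.

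Next I estimate Lipschitz dependence of $\mathcal{E}$ on $\phi$ and on $(\psi,\mu)$. Since the coupling term is the linear functional $\phi\mapsto\langle\Phi_b(\cdot,\lambda),\kappa(\delta\ktilde_2\cdot)W(\cdot)\phi(\cdot)\rangle_{L^2_\kparv}$, bounded in $H^2_\kparv$-norm by $\|W\|_\infty\|\phi\|_{L^2_\kparv}$ (with the decay in $b$ again coming from the eigenvalue equation for $\Phi_b$), we obtain a Lipschitz constant $C\delta/\omega(\delta^\nu)$ on the Dirac-band contribution and strictly smaller constants on the others. Because $\delta/\omega(\delta^\nu)\to 0$ as $\delta\to 0$ (this is exactly the content of the no-fold modulus hypothesis $\omega(\mathfrak{a}^\nu)/\mathfrak{a}\to\infty$), choosing $\delta_0$ small enough makes the constant $<1$ and the Banach fixed-point theorem produces the unique $\eta_{\rm far}[\eta_{\rm near},\mu,\delta]$. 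The same Lipschitz bound in $(\psi,\mu)$, transferred to the solution via the usual parameter-Lipschitz argument, gives part (2) and the $H^2_\kparv$ bound \eqref{eta-far-bound}. For part (3), observe that $\mathcal{E}[\phi;\psi,\mu,\delta]$ depends affinely on the triple $(\phi,\psi,\mu)$; by uniqueness, the fixed point is also affine in $(\eta_{\rm near},\mu)$, yielding the decomposition \eqref{eta_far_affine} with the claimed individual bounds (obtained by setting $\eta_{\rm near}=0,\mu=0$ in various combinations). Part (4), continuity at $\delta=0$, is immediate from the bound $\|\eta_{\rm far}\|_{H^2_\kparv}\lesssim \delta^{1/2}/\omega(\delta^\nu)\to 0$.

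The main obstacle is to obtain the $H^2_\kparv$ control (as opposed to merely $L^2_\kparv$) uniformly in $b$. The weight $(1+b)^2$ in the $H^2$ norm is only partly absorbed by the resolvent factor $(1+b)^{-1}$ from \eqref{no-fold-over-b-A}; the remaining factor must come from rapid decay of $\langle\Phi_b(\cdot,\lambda),\kappa(\delta\ktilde_2\cdot)W\eta\rangle$ in $b$. This decay requires commuting powers of $H^{(0)}$ onto the test functions and using smoothness of $V,W,\kappa$ — analogous to the estimates carried out for the corresponding Proposition in \cites{FLW-MAMS:15}, to which we can refer for the technical details of the Weyl-asymptotic-plus-elliptic-regularity argument.
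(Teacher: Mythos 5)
Your proposal follows essentially the same route as the paper: the paper also solves \eqref{fixed-pt1} by the contraction mapping principle on the ball $B_{{\rm far},\delta^\exponent}(\rho_\delta)$, with the no-fold condition \eqref{no-fold-over-A}--\eqref{no-fold-over-b-A} supplying the key operator-norm bound $\mathfrak{e}(\delta)\lesssim \delta/\omega(\delta^\exponent)+\delta$ (see \eqref{frak-e-def}--\eqref{frak-e-bound}), the forcing estimated via Proposition \ref{lemma:psi_bounds} together with the $[H^{(0)}]^M$ decay-in-$b$ device, and the remaining details deferred to the corresponding result (Corollary 6.4) of \cites{FLW-MAMS:15}. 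Parts (3) and (4) are handled in the same way (affine dependence of the fixed point, and continuity at $\delta=0$ from \eqref{eta-far-bound}).

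One correction to your closing paragraph, where you misplace the technical obstacle: no rapid decay in $b$ of $\langle\Phi_b(\cdot,\lambda),\kappa(\delta\ktilde_2\cdot)W\eta\rangle$ is needed for the coupling term, and none is available by your proposed method, since $\eta$ is only in $H^2_\kparv$ so you cannot commute arbitrarily many powers of $H^{(0)}$ onto it. Fortunately the bookkeeping closes without it: by Corollary \ref{fourier-edge-norms} the $H^2_\kparv$ norm-squared carries the Weyl weight $(1+b)^{2}$, which is \emph{exactly} cancelled by $|E_b(\lambda)-E_\star|^{-2}\le c_2^{-2}(1+b)^{-2}$ from \eqref{no-fold-over-b-A} (not ``only partly absorbed''), after which Bessel's inequality, $\sum_{b}\int|\langle\Phi_b(\cdot,\lambda),\kappa W\eta\rangle|^2 d\lambda\lesssim \|\kappa\|_\infty^2\|W\|_\infty^2\|\eta\|_{L^2_\kparv}^2$, gives the $H^2_\kparv$ bound on $\mathcal{Q}_\delta$ directly; this is precisely why the factor $(1+|b|)$ appears in the definition of $\mathfrak{e}(\delta)$ in \eqref{frak-e-def}. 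The $[H^{(0)}]^M$ smoothing argument is needed only for the inhomogeneous terms $\widetilde{F}_b[\mu,\delta]$, which involve the smooth functions $\psi^{(0)},\psi^{(1)}_p$ — exactly as you already use it earlier in your argument.
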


\begin{remark}\label{frak-e-remark}[Remarks on the proof of Proposition \ref{fixed-pt}]
The proof follows that of Corollary 6.4 in \cites{FLW-MAMS:15}, with changes that we now discuss. 
\begin{enumerate}
\item[(a)] The fixed point equation \eqref{fixed-pt1} for $\psi_{\rm far}$ is of the form: 
\begin{align}
\eta_{\rm far}\ &=\ \mathcal{Q}_\delta \eta_{\rm far}\ + \ \frac{\delta\ \chi(\abs{\lambda}\geq(\delta_{b,-}+\delta_{b,+})\delta^{\exponent})}{E_b(\lambda)-E_{\star}} \  \times \nn \\
&\qquad\qquad \left( - \inner{\Phi_{b}(\cdot,\lambda),\kappa(\delta \ktilde_2 \cdot)W(\cdot)\ \eta_{\rm near}(\cdot)}_{L_\kparv^2}
+\widetilde{F}_{b}[\mu,\delta](\lambda) \right) ,\label{fixed-pt2}
\end{align}
where $\mathcal{Q}_\delta$ is bounded and linear on   $H_\kparv^2$ and defined by:
 \begin{align}
\widetilde{\left[\ \mathcal{Q}_\delta \phi\ \right]}_b(\lambda)\ &\equiv\ -\delta\
 \frac{ \chi(\abs{\lambda}\geq(\delta_{b,b_{\star}}+\delta_{b,b_{\star}+1})\delta^{\exponent})}{E_b(\lambda)-E_{\star}}
\inner{\Phi_{b}(\cdot,\lambda),\kappa(\delta \ktilde_2 \cdot)W(\cdot)\ \phi
(\cdot)}_{L_\kparv^2}\nn\\
&\qquad + \delta^2\mu\ \frac{\widetilde{\phi}_b(\lambda)}{{E_b(\lambda)-E_{\star}}}\ .
\label{tQ-def} \end{align}
 To construct the mapping $(\eta_{\rm near},\mu,\delta)\mapsto\eta_{\rm far}[\eta_{\rm near},\mu,\delta]$ and obtain the conclusions of Proposition \ref{fixed-pt}  it is convenient to solve \eqref{fixed-pt2} via the contraction mapping principle. 
Thus we need to bound the operator norm of $\mathcal{Q}_\delta$ and we find from \eqref{tQ-def}
 that $\mathcal{Q}_\delta$  maps $L^2_{{\rm near},\delta^\exponent}$ to $H^2_{{\rm near},\delta^\exponent}$
  with norm bounded by $ {\rm constant}\times\mathfrak{e}(\delta)$, where 
 \begin{align}
\mathfrak{e}(\delta)\equiv \sup_{b=\pm}\ \ \sup_{|\delta|^\exponent\le|\lambda|\le\frac12}\ 
 \frac{|\delta|}{|E_b(\lambda)-E_{\star}|}\ +\ \sup_{ b\ge1,\ b\ne\pm}\ (\ 1+|b|\ ) \sup_{0\le|\lambda|\le\frac12}
 \frac{|\delta|}{|E_b(\lambda)-E_{\star}|} .
\label{frak-e-def} \end{align}
 The spectral \nofold condition hypothesis \eqref{no-fold-over-A}-\eqref{no-fold-over-b-A}
   implies that 
 \begin{equation}
 \mathfrak{e}(\delta) \lesssim\ \frac{|\delta|}{\omega(\mathfrak{\delta^{\exponent}})\ c_1(V)}\ +\ \frac{|\delta|}{ c_2(V)} \ , 
 \label{frak-e-bound}
\end{equation}
which tends to zero as $\delta$ tends to zero. 
 Hence, the contraction mapping principle can be applied on the ball
  $B_{{\rm far},\delta^\exponent}(\rho_\delta),\ \rho_\delta=\mathcal{O}({\delta^\frac12}/{\omega(\delta^\exponent)})$. 
 %
 %
\item[(b)] We note that although $\Sigma$ is a two-dimensional region, since $\Sigma$ is  unbounded in only one direction, estimates on $H^2_\kpar(\Sigma)$ have the same scaling behavior in the parameter $\delta$ as in the 1D study \cites{FLW-MAMS:15}.
\end{enumerate}
\end{remark}

\subsection{Analysis of the closed system for $\eta_{\rm near}$\label{subsec:near_freq}}

Substitution of  $\eta_{\text{far}}[\eta_{\text{near}},\mu,\delta]$ into the system \eqref{near_cpt_1}-\eqref{near_cpt_2} yields a closed system for $(\eta_{\text{near}},\mu)$, which depends on the parameter $\delta\in[0,\delta_0)$. 
In this section we show, by careful rescaling and expansion of terms, that the equation for $\eta_{\rm near}$ may be rewritten as a Dirac-type system. We then solve  this system in Section \ref{analysis-blDirac}.
 Recall the abbreviated notation: $E_b(\lambda)$ and $\Phi_b(\bx;\lambda)$,  introduced in \eqref{Eb_of_lambda}-\eqref{Phib_of_lambda}.

%
Since both the spectral support of $\eta_{\text{near}}$ (parametrized by $\bK+\lambda\ktilde_2$, with $|\lambda|\le\delta^\exponent$), and size of the domain wall perturbation, $\mathcal{O}(\delta)$,
tend to zero as $\delta\to0$, it is natural to scale in such a way as to obtain an order one limit. 
 We begin by introducing $\xi$,  a scaling of the quasi-momentum parameter, $\lambda$, 
 and  $\widehat{\eta}_{\pm,\rm near}\left(\xi\right)$, an expression for
  $\widetilde{\eta}_{\pm,\rm near}(\lambda)$ as a standard Fourier transform on $\R$:
\begin{equation}
\label{amplitude-rescaled}
\widehat{\eta}_{\pm,\rm near}\left(\xi\right)\ \equiv\  \widetilde{\eta}_{\pm,\rm near}(\lambda),\quad {\rm where}\quad  \xi\equiv \frac{\lambda}{\delta}.
\end{equation}
\nit By Proposition \ref{directional-bloch}:
$E_{\pm}(\lambda)-E_{\star} = 
\pm\abs{\lambda_\sharp}\ \abs{\ktilde_2}\ \delta\xi + E_{2,\pm}(\delta\xi)\ (\delta\xi)^2$,
where  $\abs{E_{2,\pm}(\delta\xi)} \lesssim 1$, for all $\xi$; see \eqref{EpmKlam}. 
Substitution of this expansion and the rescaling \eqref{amplitude-rescaled} into  \eqref{near_cpt_1}-\eqref{near_cpt_2}, and then canceling a factor of $\delta$ yields:
{\small
\begin{align}
&+|\lamsharp|\abs{ \ktilde_2}\ \xi\ \widehat{\eta}_{+,\rm near}(\xi)
+\chi(\abs{\xi}\leq\delta^{\exponent-1})\inner{\Phi_{+}(\cdot,\delta\xi),\kappa(\delta \ktilde_2
\cdot)W(\cdot)\eta_{\rm near}(\cdot)}_{L_\kparv^2} \nonumber\\
&\qquad=\chi(\abs{\xi}\leq\delta^{\exponent-1}) \widetilde{F}_{+}[\mu,\delta](\lambda) + \delta\mu\ \widehat{\eta}_{+,{\rm
near}}(\xi) - \delta E_{2,+}(\delta\xi)\xi^2\widehat{\eta}_{+,\rm near}(\xi) \label{near5} \\
&\qquad\qquad - \chi(\abs{\xi}\leq\delta^{\exponent-1})\inner{\Phi_{+}(\cdot,\delta\xi),\kappa(\delta \ktilde_2
\cdot)W(\cdot)\eta_{\rm far}[\eta_{\rm near},\mu,\delta](\cdot)}_{L_\kparv^2} ,\nn\\
&\nn\\
&-|\lamsharp|\abs{\ktilde_2}\ \xi\ \widehat{\eta}_{-,\rm near}(\xi)
+\chi(\abs{\xi}\leq\delta^{\exponent-1})\inner{\Phi_{-}(\cdot,\delta\xi),\kappa(\delta \ktilde_2
\cdot)W(\cdot)\eta_{\rm near}(\cdot)}_{L_\kparv^2} \nonumber\\
&\qquad=\chi(\abs{\xi}\leq\delta^{\exponent-1}) \widetilde{F}_{-}[\mu,\delta](\lambda) + \delta\mu\ \widehat{\eta}_{-,{\rm
near}}(\xi) -\delta E_{2,-}(\delta\xi)\xi^2\widehat{\eta}_{-,\rm near }(\xi) \label{near6} \\
&\qquad\qquad - \chi(\abs{\xi}\leq\delta^{\exponent-1})\inner{\Phi_{-}(\cdot,\delta\xi),\kappa(\delta \ktilde_2
\cdot)W(\cdot)\eta_{\rm far}[\eta_{\rm near},\mu,\delta](\cdot)}_{L_\kparv^2} \nn . 
\end{align}}
We next extract the dominant behavior, for $\delta$ small, of the inner products involving $\eta_{\rm near}$ by first  expanding $\eta_{\rm near}$ in terms of its spectral components near energy $E_\star=E_\pm(\lambda=0)$ plus a correction. To this end we apply Proposition \ref{directional-bloch}  to expand $p_\pm(\bx,\lambda)$ for $\lambda=\delta\xi$ small:
\begin{equation*}
p_{\pm}(\bx,\lambda) = P_{\pm}(\bx)\ +\ \varphi_\pm(\bx,\delta\xi),\ \ 
P_{\pm}(\bx) \equiv  \frac{1}{\sqrt{2}} \Big[\frac{\overline{\lambda_\sharp}}{|\lambda_\sharp|} \frac{\mathfrak{z}_2}{|\mathfrak{z}_2|} P_1(\bx) \pm P_2(\bx) \Big]\ \ {\rm where}
\end{equation*}
\begin{equation}
 \label{deltapbdd}
  \abs{\varphi_\pm(\bx,\delta\xi)} \leq \underset{\bx\in\Sigma, ~ \abs{\omega}\leq\delta^{\exponent}}{\sup}
\abs{\varphi_\pm(\bx,\omega)} \leq \delta^{\exponent},\ \ \abs{\xi}\leq\delta^{\exponent-1}\ .
\end{equation}
Thus, using  \eqref{eta-near} and that  $\Phi_\pm(\bx;\lambda) = e^{i(\bK+\lambda \ktilde_2)\cdot\bx}\ p_\pm(\bx;\lambda)$ (see \eqref{p_pm-def}), we obtain

{\small
\begin{align}
 \eta_{\text{near}}(\bx) &=  \int_{\abs{\lambda}\leq\delta^{\exponent}}
\Phi_{+}(\bx,\lambda)\widetilde{\eta}_{+,\text{near}}(\lambda) d\lambda
 +  \int_{\abs{\lambda}\leq\delta^{\exponent}}
 \Phi_{-}(\bx,\lambda)\widetilde{\eta}_{-,\text{near}}(\lambda)d\lambda \nn\\
 &=  \int_{\abs{\lambda}\leq\delta^{\exponent}}e^{i\bK\cdot\bx}e^{i\lambda\ktilde_2\cdot\bx}
 p_{+}(\bx,\lambda)\widehat{\eta}_{+,\text{near}}\left(\frac{\lambda}{\delta}\right)d\lambda \nn \\
&\quad+ \int_{\abs{\lambda}\leq\delta^{\exponent}}e^{i\bK\cdot\bx}e^{i\lambda\ktilde_2\cdot\bx}
p_{-}(\bx,\lambda)\widehat{\eta}_{-,\text{near}}\left(\frac{\lambda}{\delta}\right)d\lambda \nn\\
 &= \delta e^{i\bK\bx} P_+(\bx)  \int_{\abs{\xi}\leq\delta^{\exponent-1}}
 e^{i\delta\xi\ktilde_2\cdot\bx}\widehat{\eta}_{+,\text{near}}(\xi)d\xi 
 + \delta e^{i\bK\cdot\bx} \rho_{+}(\bx,\delta\ktilde_2\cdot \bx)\nonumber\\
&\quad+ \delta e^{i\bK\bx} P_-(\bx) \int_{\abs{\xi}\leq\delta^{\exponent-1}}
 e^{i\delta\xi\ktilde_2\cdot\bx}\widehat{\eta}_{-,\text{near}}(\xi)d\xi 
 + \delta e^{i\bK\cdot\bx} \rho_{-}(\bx,\delta \ktilde_2\cdot\bx)\nonumber\\
&=\delta e^{i\bK\cdot\bx}\left[ P_+(\bx)\ \eta_{+,\text{near}}(\delta \ktilde_2\cdot \bx) 
+ P_-(\bx)\ \eta_{-,\text{near}}(\delta \ktilde_2\cdot \bx) + \sum_{b=\pm}\rho_{b}(\bx,\delta\ktilde_2\cdot \bx)\right] ,\label{nearinxi}\\
 \label{rhodefn}
&{\rm where}\qquad\  \rho_{\pm}(\bx,\zeta) = \int_{\abs{\xi}\leq\delta^{\exponent-1}}e^{i\xi \zeta} \varphi_\pm(\bx,\delta\xi)
\widehat{\eta}_{\pm,\text{near}}(\xi)d\xi.
\end{align}
}

\nit We now expand the inner product in \eqref{near5}; the corresponding term in  \eqref{near6} is treated similarly. Substituting
\eqref{nearinxi} into the inner product in \eqref{near5} yields (using $\Phi_\pm(\bx;\bk)=e^{i\bk\cdot\bx}p_\pm(\bx;\bk)$) 
\begin{align}
 &\inner{\Phi_{+}(\cdot,\delta\xi),\kappa(\delta\cdot)W(\cdot)
\eta_{\rm near}(\cdot)}_{L_\kparv^2} \nn \\
&\equiv\  \inner{\Phi_{+}(\cdot,\bK+\delta\xi\ktilde_2),\kappa(\delta\cdot)W(\cdot)
\eta_{\rm near}(\cdot)}_{L_\kparv^2}\label{full_inner} \\
&=\  \delta \inner{e^{i\delta\xi \ktilde_2\cdot}p_+(\cdot,\delta\xi), 
P_+(\cdot)\ 
W(\cdot)\ \kappa(\delta\ktilde_2\cdot)\ {\eta}_{+,\rm near}(\delta \ktilde_2\cdot)}_{L^2(\Sigma)} \label{inner1}\\
&\qquad+ \delta \inner{e^{i\delta\xi \ktilde_2\cdot}p_{+}(\cdot,\delta\xi), 
P_-(\cdot)\ 
W(\cdot)\ \kappa(\delta\ktilde_2\cdot)\ {\eta}_{-,\rm near}(\delta \ktilde_2\cdot)}_{L^2(\Sigma)} \label{inner2}\\
&\qquad+ \delta \sum_{b=\pm}\ \inner{e^{i\delta\xi \ktilde_2\cdot}\ p_{+}(\cdot,\delta\xi),
W(\cdot)\ \kappa(\delta\ktilde_2\cdot)\ \rho_{b}(\cdot,\delta \ktilde_2\cdot)}_{L^2(\Sigma)} \ .\label{inner3}
\end{align} 

The inner product terms in \eqref{inner1}-\eqref{inner3} are each of the form:
\begin{equation}
\label{G_inner_def}
\mathcal{G}(\delta; \xi)
\equiv  \delta \int_\Sigma e^{-i\xi\delta\ktilde_2\cdot\bx} g(\bx,\delta\xi)\Gamma(\bx,\delta\ktilde_2\cdot\bx) d\bx,\ \ {\rm where}
\end{equation}
\begin{enumerate}
 \item [(IP1)] $g(\bx ,y)$ is a smooth function of $(\bx,y)\in \R^2/\Lambda_h\times\R$  and
 \item [(IP2)] $\bx\mapsto\Gamma(\bx,\zeta)$ is $\Lambda_h-$ periodic and $H^2(\Omega)$  with values in $L^2(\R_\zeta)$, {\it i.e.}
\begin{align}
\label{Gamma-conditions1}
&\Gamma(\bx+\bv,\zeta)\ =\ \Gamma(\bx,\zeta), \quad \text{for all } \bv\in\Lambda_h,  \\
&\sum_{j=0}^2\ \sum_{{\bf |c}|=j}\int_\Omega\ \left\|\D_\bx^{\bf c}\Gamma(\bx,\zeta)\right\|_{L^2(\R_\zeta)}^2\ d\bx\ <\ \infty. 
\label{Gamma-conditions2}
\end{align}
We denote this Hilbert space of functions by $\mathbb{H}^2$ with norm-squared, $\|\cdot\|_{\mathbb{H}^2}^2$, given in \eqref{Gamma-conditions2}.
It is easy to check that conditions \eqref{Gamma-conditions1}-\eqref{Gamma-conditions2} are satisfied for the cases $\Gamma=\Gamma(\zeta)=\kappa(\zeta)\eta_{\pm,{\rm near}}(\zeta)$ and $\Gamma=\Gamma(\bx,\zeta)=\kappa(\zeta)\rho_{\pm}(\bx,\zeta)$, where $\rho_\pm$  is defined in \eqref{rhodefn}.
\end{enumerate}

\nit To expand expressions of the form $G(\delta,\xi)$, we use:
\begin{lemma}
 \label{poisson_exp}
 Let $g(\bx ,y)$ and $\Gamma(\bx,\zeta)$ satisfy conditions (IP1) and (IP2), respectively.
 Denote by $\widehat{\Gamma}(\bx,\omega)$ the Fourier transform of $\Gamma(\bx,\zeta)$ with respect to the $\zeta-$ variable,
 given by
\begin{equation}
\widehat{\Gamma}(\bx,\omega)\ \equiv\ \lim_{N\uparrow\infty}\ \frac{1}{2\pi}\int_{|\zeta|\le N}e^{- i\omega \zeta}\Gamma(\bx,\zeta) d\zeta ,
\label{Gammahat-def}
\end{equation}
where the limit is taken in $L^2(\Omega\times\R_\omega;d\bx d\omega)$. Then,
\begin{equation}
 \label{poisson_app}
 \mathcal{G}(\delta; \xi) = \ 
 \sum_{n\in\mathbb{Z}} \int_\Omega e^{in \ktilde_2\cdot\bx} \widehat{\Gamma}\left(\bx,\frac{n}{\delta}+\xi\right) g(\bx,\delta\xi) d\bx,
\end{equation}
with equality holding in $L^2_{\rm loc}([-\xi_{\rm max},\xi_{max}];d\xi)$, for any fixed $\xi_{\rm max}>0$.
\end{lemma}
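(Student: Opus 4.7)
The plan is to reduce the $\bx$-integral over $\Sigma$ to a scalar transverse integral and then apply one-dimensional Poisson summation fiberwise in $\bx_0\in\Omega$. Parametrize the fundamental domain $\Omega_\Sigma$ by $\bx = \tau_1\vtilde_1 + \tau_2\vtilde_2$ with $(\tau_1,\tau_2)\in[0,1)\times\R$, so that $\ktilde_2\cdot\bx = 2\pi\tau_2$ and $d\bx = |\Omega|\,d\tau_1\,d\tau_2$. Then
\[
\mathcal{G}(\delta;\xi) = \delta\,|\Omega|\int_0^1\int_\R e^{-2\pi i\xi\delta\tau_2}\,g(\bx(\tau),\delta\xi)\,\Gamma(\bx(\tau),2\pi\delta\tau_2)\,d\tau_2\,d\tau_1.
\]
Decomposing $\tau_2 = s+m_2$ with $s\in[0,1)$ and $m_2\in\Z$, and invoking the $\vtilde_2$-periodicity of $g$ and $\Gamma$ in $\bx$, I may replace $\bx(\tau)$ by $\bx_0 \equiv \tau_1\vtilde_1 + s\vtilde_2 \in \Omega$ and bring the sum over $m_2$ inside:
\[
\mathcal{G}(\delta;\xi) = \delta\,|\Omega|\int_0^1\int_0^1 g(\bx_0,\delta\xi)\sum_{m_2\in\Z} h_{\bx_0}(s+m_2)\,ds\,d\tau_1,
\]
where $h_{\bx_0}(t) \equiv e^{-2\pi i\xi\delta t}\Gamma(\bx_0,2\pi\delta t)$.

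The fiberwise Poisson formula reads $\sum_{m_2} h_{\bx_0}(s+m_2) = \sum_n c_n(\bx_0)\,e^{2\pi i n s}$ with coefficients
\[
c_n(\bx_0) \;=\; \int_\R e^{-2\pi i(n+\xi\delta)t}\,\Gamma(\bx_0,2\pi\delta t)\,dt \;=\; \frac{1}{\delta}\,\widehat{\Gamma}\!\left(\bx_0,\tfrac{n}{\delta}+\xi\right),
\]
the second equality following from the substitution $u = 2\pi\delta t$ and the Fourier convention \eqref{Gammahat-def}. Using $2\pi s = \ktilde_2\cdot\bx_0$, substitution back into the preceding display cancels the outer factor of $\delta$ and rewrites the $(\tau_1,s)$-integral as $|\Omega|^{-1}\int_\Omega(\cdot)\,d\bx$, producing exactly \eqref{poisson_app}.

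The main subtlety is justifying Poisson summation and the interchange of sum and integral under the modest hypothesis (IP2), which only gives $\Gamma(\bx_0,\cdot)\in L^2(\R_\zeta)$ for a.e.\ $\bx_0$ with no pointwise control on $\widehat{\Gamma}(\bx_0,\cdot)$; this is why the lemma is stated in $L^2_{\mathrm{loc}}(d\xi)$ rather than pointwise. I would argue by density: first prove the identity for $\Gamma$ in the dense subclass of $\mathbb{H}^2$ consisting of finite sums $\sum_j\varphi_j(\bx)\psi_j(\zeta)$ with $\varphi_j\in C^\infty(\R^2/\Lambda_h)$ and $\psi_j\in\mathcal{S}(\R)$, for which Poisson summation holds pointwise and every interchange is absolute. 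Then pass to the limit in $L^2_{\mathrm{loc}}([-\xi_{\max},\xi_{\max}];d\xi)$: the left-hand side depends continuously on $\Gamma\in\mathbb{H}^2$ via Cauchy--Schwarz in $\tau_1$ and Plancherel in $\tau_2$, and the right-hand side is controlled by the same norm after observing that $\xi\mapsto\sum_n|\widehat{\Gamma}(\bx,n/\delta+\xi)|^2$ is the $(1/\delta)$-periodization of $|\widehat{\Gamma}(\bx,\cdot)|^2$, hence integrable on any bounded $\xi$-interval with bound proportional to $\|\Gamma(\bx,\cdot)\|_{L^2(\R_\zeta)}^2$.
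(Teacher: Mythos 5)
Your proof is correct and follows the same overall architecture as the paper's: pass to the $(\tau_1,\tau_2)$ coordinates on the cylinder, unfold the transverse integral into a sum of unit-cell integrals using the $\Lambda_h$-periodicity of $g$ and $\Gamma$ in $\bx$, and convert the resulting periodization of $e^{-2\pi i\xi\delta t}\Gamma(\bx,2\pi\delta t)$ into a Fourier series whose $n$-th coefficient is $\delta^{-1}\widehat{\Gamma}(\bx,n/\delta+\xi)$. Your computation of the coefficients under the convention \eqref{Gammahat-def} is right, and the bookkeeping that cancels the prefactor $\delta$ and reassembles $|\Omega|\int_0^1\int_0^1(\cdot)\,ds\,d\tau_1$ into $\int_\Omega(\cdot)\,d\bx$ matches \eqref{poisson_app} exactly.

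The one place where you genuinely diverge from the paper is in how the $L^2$-version of Poisson summation and the attendant interchanges of sum and integral are justified. The paper delegates this to two auxiliary results: Theorem \ref{psum-L2}, an $L^2\bigl([0,1]^2\times[-y_{\rm max},y_{\rm max}]\bigr)$ Poisson summation formula adapted to functions satisfying (IP2) (whose proof is deferred to the 1D companion paper), and Lemma \ref{interchange}, which licenses passing limits of $L^2$-convergent sums through integrals against an $L^2$ weight. You instead run a density argument: establish the identity pointwise for $\Gamma$ a finite sum of tensor products of smooth periodic functions with Schwartz profiles, then close up in $L^2_{\rm loc}(d\xi)$ using continuity of both sides in the $\mathbb{H}^2$ norm, with the right-hand side controlled via the observation that $\sum_n|\widehat{\Gamma}(\bx,n/\delta+\xi)|^2$ is the $(1/\delta)$-periodization of $|\widehat{\Gamma}(\bx,\cdot)|^2$ and hence locally integrable in $\xi$ with bound $\lesssim\|\Gamma(\bx,\cdot)\|_{L^2(\R_\zeta)}^2$. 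This is a sound and essentially self-contained substitute for the paper's two cited lemmas; what the paper's route buys is reuse of machinery already set up for the 1D problem, while yours makes the $L^2$-justification transparent within the proof itself. Either way the argument is complete.
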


\nit We adapt the proof in  \cites{FLW-MAMS:15} (Lemma 6.5) for the 1D setting. We  require the following variant of the Poisson summation formula  in $L^2_{\rm loc}$.
\begin{theorem}\label{psum-L2}
Let $\Gamma(\bx,\zeta)$ satisfy (IP2). Denote by $\widehat{\Gamma}(\bx,\omega)$ the Fourier transform of $\Gamma(\bx,\zeta)$ with respect to the variable, $\zeta$; see \eqref{Gammahat-def}.
Fix an arbitrary $y_{\rm max}>0$, and introduce the parameterization of the cylinder $\Sigma$: $\bx=\tau_1\vtilde_1+\tau_2\vtilde_2, \ 0\le\tau_1\le1,\ \tau_2\in\R$. Then,
\begin{align*}
&\sum_{n\in\Z} e^{- iy(\tau_2+n)}\Gamma(\tau_1\vtilde_1+\tau_2\vtilde_2,\tau_2+n) 
= 2\pi\ \sum_{n\in\Z} e^{2\pi i n \tau_2}\widehat{\Gamma}\left(\tau_1\vtilde_1+\tau_2\vtilde_2,2\pi n+y \right)
\end{align*}
in $L^2\left([0,1]^2\times[-y_{\rm max},y_{\rm max}];d\tau_1d\tau_2\cdot dy\right)$.
\end{theorem}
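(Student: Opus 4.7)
The plan is to reduce the asserted identity, for each fixed $\bx$, to a classical Poisson summation formula applied to the function $\zeta \mapsto e^{-iy\zeta}\Gamma(\bx,\zeta)$, and then to assemble these $\bx$-pointwise identities into the desired $L^2$ identity using hypothesis (IP2) and Fubini. To begin, fix $\bx = \tau_1\vtilde_1+\tau_2\vtilde_2$ and write $F(\zeta) = \Gamma(\bx,\zeta)$. If $F \in \mathcal{S}(\R)$, then setting $H_y(\zeta) = e^{-iy\zeta}F(\zeta)$ gives $\widehat{H}_y(\omega) = \widehat{F}(\omega+y)$ in the paper's Fourier convention, so the standard Poisson summation identity $\sum_n H_y(\tau_2+n) = 2\pi\sum_n \widehat{H}_y(2\pi n) e^{2\pi i n \tau_2}$ is literally the claim (pointwise in $\tau_2$ and $y$), with both series converging absolutely and uniformly on compacts.

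Next I would extend the identity from Schwartz $F$ to arbitrary $F \in L^2(\R_\zeta)$ by a density and operator-bound argument. Define
\[
\mathcal{L} F(\tau_2,y) := \sum_{n\in\Z} e^{-iy(\tau_2+n)}F(\tau_2+n), \qquad \mathcal{R} F(\tau_2,y) := 2\pi\sum_{n\in\Z} e^{2\pi i n\tau_2}\widehat{F}(2\pi n+y).
\]
The map $\mathcal{R}$ is bounded $L^2(\R_\zeta) \to L^2([0,1]_{\tau_2}\times[-y_{\rm max},y_{\rm max}]_y)$: Parseval in $\tau_2$ gives $\|\mathcal{R}F(\cdot,y)\|_{L^2[0,1]}^2 = 4\pi^2\sum_n |\widehat{F}(2\pi n+y)|^2$, and integrating in $y$ covers $\R$ by at most $\lceil y_{\rm max}/\pi\rceil$-fold overlap of translates of $[-\pi,\pi]$, bounding the whole by a constant (depending on $y_{\rm max}$) times $\|\widehat{F}\|_{L^2(\R)}^2 = (2\pi)^{-1}\|F\|_{L^2(\R)}^2$. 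Since $\mathcal{L} F = \mathcal{R} F$ for Schwartz $F$, the same bound is inherited by $\mathcal{L}$ on the dense subspace $\mathcal{S}(\R)$, so $\mathcal{L}$ extends continuously to all of $L^2(\R_\zeta)$ and $\mathcal{L} F = \mathcal{R} F$ for every $F \in L^2(\R_\zeta)$.

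Finally, hypothesis (IP2) implies $(\tau_1,\tau_2) \mapsto \Gamma(\tau_1\vtilde_1+\tau_2\vtilde_2,\cdot)$ lies in $L^2([0,1]^2;\, L^2(\R_\zeta))$. Applying the $\bx$-pointwise equality $\mathcal{L} F = \mathcal{R} F$ with $F = \Gamma(\bx,\cdot)$ for a.e. $\bx$ and integrating in $(\tau_1,\tau_2)$ via Fubini yields the claimed identity in $L^2([0,1]^2 \times [-y_{\rm max},y_{\rm max}])$. The main technical obstacle is that for general $F \in L^2(\R_\zeta)$ the series defining $\mathcal{L}F$ does not converge pointwise in $\tau_2$; it must be understood as the $L^2$ limit of partial sums (a Zak-transform-type object), and the Schwartz-then-density route described above is precisely what is needed to make this rigorous. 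A short auxiliary check, confirming that both $\mathcal{L}$ and $\mathcal{R}$ pick up the same factor $e^{-2\pi i \tau_2}$ under $y \mapsto y + 2\pi$, ensures the constants arising in the $\mathcal{R}$ bound grow only linearly in $y_{\rm max}$ and that the identity is insensitive to the particular choice of $y_{\rm max}$.
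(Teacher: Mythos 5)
Your fixed-$\bx$ reduction is correct as far as it goes: the Schwartz-level identity $\sum_n H_y(\tau_2+n)=2\pi\sum_n\widehat H_y(2\pi n)e^{2\pi i n\tau_2}$ with $\widehat H_y(\omega)=\widehat F(\omega+y)$ matches the paper's conventions, and your Parseval bound for $\mathcal R$ (and the $2\pi$-quasi-periodicity in $y$) is fine. The gap is in the final assembly step. For fixed $\bx$ you obtain $\mathcal L F_\bx=\mathcal R F_\bx$ as an equality in $L^2$ of an \emph{independent} output variable, call it $\tau_2'$, and $y$; i.e.\ an a.e.-in-$(\tau_2',y)$ statement for each $\bx$. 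But the theorem asserts the equality with $\tau_2'$ \emph{evaluated on the diagonal} $\tau_2'=\tau_2(\bx)$: the second argument of $\Gamma$ on the left side is $\tau_2+n$, where $\tau_2$ is the $\vtilde_2$-coordinate of the same point $\bx=\tau_1\vtilde_1+\tau_2\vtilde_2$. Restricting an a.e.-in-$(\tau_2',y)$ identity to the null set $\{\tau_2'=\tau_2\}$ is not justified by Fubini, and worse, for $\Gamma$ merely in the class (IP2) the left-hand side is not even determined by the family of slices $F_\bx=\Gamma(\bx,\cdot)\in L^2(\R_\zeta)$: it is a trace of $\Gamma$ onto the two-dimensional surface $\{\zeta=\tau_2+n\}\subset\Omega\times\R_\zeta$, which is meaningless for a generic element of $L^2(\Omega;L^2(\R_\zeta))$. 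Your argument never uses the $H^2(\Omega)$-in-$\bx$ regularity in \eqref{Gamma-conditions2} beyond square-integrability in $\bx$, yet that regularity is precisely what makes the statement (and any proof) work.

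The repair requires a density argument in the \emph{joint} variables rather than slice by slice: approximate $\Gamma$ in the $\mathbb{H}^2$ norm of (IP2) by functions smooth in $\bx$ (e.g.\ truncated Fourier series in $\bx$) and Schwartz in $\zeta$, for which your pointwise Poisson summation applies verbatim; then pass to the limit using that \emph{both} maps $\Gamma\mapsto\mathrm{LHS}$ and $\Gamma\mapsto\mathrm{RHS}$ are bounded from $\mathbb{H}^2$ into $L^2([0,1]^2\times[-y_{\rm max},y_{\rm max}])$. For the left-hand side this boundedness follows from
\[
\sum_{n\in\Z}\int_{[0,1]^2}\bigl|\Gamma(\tau_1\vtilde_1+\tau_2\vtilde_2,\tau_2+n)\bigr|^2\,d\tau_1\,d\tau_2
\;\le\;\int_\R\ \sup_{\bx}\bigl|\Gamma(\bx,s)\bigr|^2\,ds\;\lesssim\;\|\Gamma\|_{\mathbb{H}^2}^2,
\]
where the last step is the two-dimensional Sobolev embedding $H^2(\Omega)\hookrightarrow C(\overline\Omega)$ applied for a.e.\ $s$ — this is exactly why (IP2) demands two $\bx$-derivatives, and it is the mechanism used in the one-dimensional analogue (Appendix A of the cited FLW-MAMS reference) to which the paper defers. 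Your fixed-$\bx$ operator bounds for $\mathcal L$ and $\mathcal R$ are correct but address only the decoupled problem, so as written the proposal does not prove the theorem.
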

 The 1D analogue of Theorem \ref{psum-L2} was proved in Appendix A of  \cites{FLW-MAMS:15}.
  Since the proof is very similar, we omit it.
 We also require

\begin{lemma}\label{interchange}
Let $F(\bx,y)$ and $F_N(\bx,y),\ N=1,2,\dots$, belong to $L^2(\Sigma\times[-y_{\rm max},y_{max}];d\bx dy)$. Assume that
\begin{equation*}
\left\| F_N-F \right\|_{L^2(\Sigma\times[-y_{\rm max},y_{max}];d\bx dy)}\ \to\ 0,\ \ {\rm as}\ \ N\to\infty\ .
\end{equation*}
Let $G\in L^2([-y_{\rm max}, y_{max}];d y)$. Then, in the $L^2(\Sigma;d\bx)$ sense, we have:
\begin{align*}
\lim_{N\to\infty}\int^{y_{\rm max}}_{-y_{\rm max}}F_N(\bx,y)G(y)\ dy\ &=\ 
\int^{y_{\rm max}}_{-y_{\rm max}}\lim_{N\to\infty}F_N(\bx,y)G(y)\ dy\ \\
&=\ \int^{y_{\rm max}}_{-y_{\rm max}} F(\bx,y)G(y)\ dy . 
\end{align*}
\end{lemma}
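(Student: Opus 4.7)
The plan is to prove Lemma \ref{interchange} by establishing that the linear map
\[
\mathcal{I}_G : L^2\!\left(\Sigma\times[-y_{\max},y_{\max}];d\bx\, dy\right) \longrightarrow L^2(\Sigma;d\bx), \qquad F \longmapsto \int_{-y_{\max}}^{y_{\max}} F(\bx,y)\,G(y)\,dy
\]
is bounded, with operator norm controlled by $\|G\|_{L^2(-y_{\max},y_{\max})}$. Once this is in hand, the conclusion is immediate from applying $\mathcal{I}_G$ to $F_N - F$.

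First, I would verify boundedness of $\mathcal{I}_G$. For (almost every) fixed $\bx\in\Sigma$, Cauchy--Schwarz in the $y$-variable gives
\[
\left| \int_{-y_{\max}}^{y_{\max}} F(\bx,y)\,G(y)\,dy \right|^{2} \;\le\; \|G\|_{L^2}^{2}\ \int_{-y_{\max}}^{y_{\max}} |F(\bx,y)|^{2}\, dy .
\]
Integrating this pointwise bound over $\bx\in\Sigma$ and invoking Fubini (justified by $F\in L^2(\Sigma\times[-y_{\max},y_{\max}])$) yields
\[
\left\| \mathcal{I}_G F \right\|_{L^2(\Sigma)}^{2} \;\le\; \|G\|_{L^2}^{2}\ \|F\|_{L^2(\Sigma\times[-y_{\max},y_{\max}])}^{2} .
\]
In particular, $\mathcal{I}_G F$ is a well-defined element of $L^2(\Sigma;d\bx)$ (the $y$-integral converges absolutely for a.e.\ $\bx$), which is the content needed to make sense of the expressions appearing in the statement.

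Second, applying this bound to $F_N - F$ in place of $F$ gives
\[
\left\| \mathcal{I}_G F_N - \mathcal{I}_G F \right\|_{L^2(\Sigma)} \;\le\; \|G\|_{L^2}\ \|F_N - F\|_{L^2(\Sigma\times[-y_{\max},y_{\max}])} \;\longrightarrow\; 0 \quad (N\to\infty),
\]
by the hypothesis $F_N \to F$ in $L^2(\Sigma\times[-y_{\max},y_{\max}])$. This is precisely the first equality claimed in the lemma. The middle expression, in which the limit is taken inside the $y$-integral, is then a notational statement: pass to a subsequence $F_{N_k}\to F$ pointwise almost everywhere on $\Sigma\times[-y_{\max},y_{\max}]$ (available from the $L^2$ convergence), so that for a.e.\ $\bx$ the inner limit $\lim_k F_{N_k}(\bx,y) = F(\bx,y)$ holds for a.e.\ $y$, and the second equality in the lemma statement follows.

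No real obstacle arises: the argument is essentially Cauchy--Schwarz plus $L^2$-continuity of a bounded linear operator, and the cylinder geometry of $\Sigma$ plays no role beyond supplying a $\sigma$-finite measure space on which Fubini applies. The only mild care is to note that although $\Sigma$ is unbounded, all integrations in $y$ occur on the fixed bounded interval $[-y_{\max},y_{\max}]$, so there is no loss in applying Cauchy--Schwarz fiberwise in $y$.
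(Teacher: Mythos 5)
Your proposal is correct and is essentially the paper's own argument: the paper's proof reads ``Square the difference, apply Cauchy--Schwarz and then integrate $d\bx$ over $\Sigma$,'' which is exactly your Cauchy--Schwarz-in-$y$ bound applied to $F_N-F$. The extra remarks about boundedness of the map $F\mapsto\int F(\cdot,y)G(y)\,dy$ and the a.e.\ subsequence interpretation of the middle expression are fine elaborations of the same idea.
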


\begin{proof}[Proof of Lemma \ref{interchange}]
Square the difference, apply Cauchy-Schwarz and then integrate $d\bx$ over $\Sigma$.
\end{proof}
%

\begin{proof}[Proof of Lemma \ref{poisson_exp}]
Recall the parameterization of the cylinder, $\Sigma$:
%
\begin{align*}
\bx\in\Sigma:\qquad &\bx = \tau_1\vtilde_1+\tau_2\vtilde_2,\ \ 0\le\tau_1\le1,\ \tau_2\in\R \ , \quad
\ktilde_1\cdot\bx = 2\pi\tau_1 , \ \ktilde_2\cdot\bx = 2\pi\tau_2  , \nn \\
&dx_1\ dx_2\ = \left|\vtilde_1\wedge\vtilde_2\right|\ d\tau_1\ d\tau_2\ \equiv\ |\Omega|\ d\tau_1\ d\tau_2 .
\end{align*}
Using that $g(\bx, \delta\xi) = g(\tau_1\vtilde_1+\tau_2\vtilde_2, \delta \xi)$ and $\Gamma(\bx,\delta\ktilde_2\cdot\bx) = \Gamma(\tau_1\vtilde_1+\tau_2\vtilde_2, 2\pi\delta \tau_2)$ are both appropriately $1$-periodic, we expand $\mathcal{G}(\delta;\xi)$ defined in \eqref{G_inner_def}. By  Lemma \ref{interchange}:
{\footnotesize{
\begin{align}
\mathcal{G}(\delta;\xi)\ &=  
\delta\ |\Omega|\ \int_0^1d\tau_1\ \int_{-\infty}^{\infty} \
 e^{-2\pi i\delta\xi\tau_2}g(\tau_1\vtilde_1+\tau_2\vtilde_2,\delta\xi)\ \Gamma\left(\tau_1\vtilde_1+\tau_2\vtilde_2, 2\pi\delta\tau_2\right)\ d\tau_2\nn\\
&= \delta\ |\Omega|\ \int_0^1d\tau_1\ \lim_{N\to\infty} \sum_{n=-N}^{N}\int_n^{n+1}\ e^{-2\pi i\delta\xi \tau_2}g(\tau_1\vtilde_1+\tau_2\vtilde_2,\delta\xi)\ \Gamma\left(\tau_1\vtilde_1+\tau_2\vtilde_2, 2\pi\delta \tau_2\right)\ d\tau_2\nn\\
&= \delta\ |\Omega|\  \int_0^1d\tau_1\ \lim_{N\to\infty} \sum_{n=-N}^{N} \int_0^1\ e^{-2\pi i\delta\xi (\tau_2+n)}g(\tau_1\vtilde_1+\tau_2\vtilde_2,\delta\xi)\ \Gamma\left(\tau_1\vtilde_1+\tau_2\vtilde_2, 2\pi\delta(\tau_2+n)\right)\ d\tau_2\nn\\
&= \delta\ |\Omega|\ \int_0^1d\tau_1\ \int_0^1\  g(\tau_1\vtilde_1+\tau_2\vtilde_2,\delta\xi)\  
\Big[\ \sum_{n\in\Z}\ e^{-2\pi i\xi\delta(\tau_2+n)}
\ \ \Gamma(\tau_1\vtilde_1+\tau_2\vtilde_2, 2\pi\delta(\tau_2+n))\ \Big]\ d\tau_2 \ . \nn 
\end{align}
}}
\nit  By Theorem \ref{psum-L2} with 
$\Gamma=\Gamma(\tau_1\vtilde_1+\tau\vtilde_2, 2\pi\delta \tau_2)$ and $y=2\pi\delta \xi$, we have
\begin{align*}
 \sum_{n\in\Z}\ e^{-2\pi i\delta\xi(\tau_2+n)}\  \Gamma(\tau_1\vtilde_1+\tau\vtilde_2
 , 2\pi\delta \tau_2) 
&=\ \frac{1}{\delta} \sum_{n\in\Z} e^{2\pi i n \tau_2} \widehat{\Gamma}\left(\tau_1\vtilde_1+\tau \vtilde_2, \frac{n}{\delta}+\xi\right) ,
\end{align*}
with equality holding in $L^2\left( [0,1]^2\times [-\xi_{\rm max}, \xi_{\rm max}]; d\tau_1 d\tau_2\cdot d\xi\right)$.  Again using Lemma \ref{interchange} we may interchange the sum and integral to  obtain\begin{align}
\mathcal{G}(\delta;\xi) & =   \frac{\delta}{\delta}
|\Omega| \int_0^1d\tau_1 \int_0^1  g(\tau_1\vtilde_1+\tau_2\vtilde_2,\delta\xi) 
\sum_{n\in\Z} e^{2\pi i n \tau_2} \widehat{\Gamma}\left(\tau_1\vtilde_1+\tau_2\vtilde_2, \frac{n}{\delta}+\xi\right) d\tau_2 \nn\\
 &= \sum_{n\in\Z} \int_\Omega e^{i n \ktilde_2\cdot\bx} \widehat{\Gamma}\left(\bx, \frac{n}{\delta}+\xi\right) g(\bx,\delta\xi) d\bx \ .
\end{align}
 This completes the proof of Lemma \ref{poisson_exp}.
 \end{proof}

We next apply Lemma \ref{poisson_exp} to each of the inner products \eqref{inner1}-\eqref{inner3}.

\nit \underline{\it Expansion of inner product \eqref{inner1}:}

  \nit  Let 
$g(\bx,\delta\xi)=\overline{p_+(\bx,\delta\xi)}P_+(\bx)W(\bx)$
 and $\Gamma(\bx,\zeta)=\kappa(\zeta) \eta_{+,\rm near}(\zeta)$.\ By Lemma \ref{poisson_exp}, 
\begin{align*}
&\delta\inner{e^{i\xi\delta\ktilde_2 \cdot}p_+(\cdot,\delta\xi), P_+(\cdot)
W(\cdot)\kappa(\delta\ktilde_2\cdot) {\eta}_{+,\rm near}(\delta\ktilde_2\cdot)}_{L^2(\Sigma)} \\
 &=
\sum_{n\in\mathbb{Z}}\int_\Omega e^{in\ktilde_2\cdot\bx}\mathcal{F}_{\zeta}[\kappa\eta_{+,\rm near}]\left(\frac{n}{\delta}+\xi \right) 
\overline{p_{+}(\bx,\delta\xi)}P_+(\bx) W(\bx)d\bx.
\end{align*}

Since $p_\pm(\bx,\lambda=\delta\xi)=P_\pm(\bx)+\varphi_\pm(\bx,\delta\xi)$, where $\varphi_\pm(\bx,\delta\xi)$ satisfies the 
bound  \eqref{deltapbdd}, we have
\begin{align*}
&\delta\inner{e^{i\xi\delta\ktilde_2 \cdot}p_+(\cdot,\delta\xi), P_+(\cdot)
W(\cdot)\kappa(\delta\ktilde_2\cdot) {\eta}_{+,\rm near}(\delta\ktilde_2\cdot)}_{L^2(\Sigma)} \\
&\equiv I_+^1(\xi;\eta_{+,\rm near}) + I_+^2(\xi;\eta_{+,\rm near}),
\end{align*}
where
\begin{align}
I_+^1(\xi;\eta_{+,\rm near}) &=
\sum_{n\in\mathbb{Z}} \mathcal{F}_{\zeta}[\kappa\eta_{+,\rm near}]\left(\frac{n}{\delta}+\xi\right)
\int_\Omega e^{in\ktilde_2\cdot\bx} \abs{P_{+}(\bx)}^2 W(\bx)d\bx \label{I_1}  , \\
I_+^2(\xi;\eta_{+,\rm near}) &= \sum_{n\in\mathbb{Z}}
\mathcal{F}_{\zeta}[\kappa\eta_{+,\rm near}]\left(\frac{n}{\delta}+\xi\right)  \int_\Omega
e^{in\ktilde_2\cdot\bx} \overline{\varphi_+(\bx,\delta\xi)}P_{+}(\bx) W(\bx)d\bx  . \nn 
\end{align}
From Proposition \ref{inner-prods-W}  and Assumption (W3) we have 
\begin{equation}
 \label{zeroth-terms}
 \int_\Omega \abs{P_{+}(\bx)}^2 W(\bx)d\bx = 0 \quad \text{and} \quad
 \int_\Omega \overline{P_{+}(\bx)}P_{-}(\bx) W(\bx)d\bx = \thetasharp \neq 0  . 
\end{equation}
Therefore,  the $n=0$ term in the summation of $I_+^1(\xi;\eta_{+,\rm near})$ in \eqref{I_1} is zero and we may write:
\begin{equation*}
I_+^1(\xi;\eta_{+,\rm near}) =\
\sum_{\abs{n}\geq1} \mathcal{F}_{\zeta}[\kappa\eta_{+,\rm near}]\left(\frac{n}{\delta}+\xi\right) 
\int_\Omega e^{in\ktilde_2\cdot\bx} \abs{P_{+}(\bx)}^2 W(\bx)d\bx . 
\end{equation*} 

\nit \underline{\it Expansion of the inner product \eqref{inner2}:}

\nit Similarly, with $g(\bx,\delta\xi)=\overline{p_+(\bx,\delta\xi)}
P_-(\bx)W(\bx)$ and  $\Gamma(\bx,\zeta) = \kappa(\zeta) \eta_{-,\rm near}(\zeta)$,  we have
\begin{align*}
&\delta\inner{e^{i\xi\delta\ktilde_2 \cdot}p_+(\cdot,\delta\xi), P_-(\cdot)
W(\cdot)\kappa(\delta\ktilde_2\cdot) {\eta}_{-,\rm near}(\delta\ktilde_2\cdot)}_{L^2(\Sigma)} \\
& \equiv \tilde{I_+^3}(\xi;\eta_{-,\rm near}) + I_+^4(\xi;\eta_{-,\rm near}),
\end{align*}
where (noting, by \eqref{zeroth-terms}, that the $n=0$ contribution is nonzero)
\begin{align*}
\tilde{I_+^3}(\xi;\eta_{-,\rm near}) &= \thetasharp \widehat{\kappa\eta}_{+,\rm near}(\xi) \ +\ {I_+^3}(\xi;\eta_{-,\rm near}),\ {\rm where}\nn\\
{I_+^3}(\xi;\eta_{-,\rm near}) &\equiv 
\sum_{\abs{n}\geq1} \mathcal{F}_{\zeta}[\kappa\eta_{-,\rm near}]\left(\frac{n}{\delta}+\xi\right)
\int_\Omega e^{in\ktilde_2\cdot\bx} \overline{P_{+}(\bx)} P_{-}(\bx) W(\bx)d\bx   ,\ \ {\rm and} \\
I_+^4(\xi;\eta_{-,\rm near}) &= \sum_{n\in\mathbb{Z}}
\mathcal{F}_{\zeta}[\kappa\eta_{-,\rm near}]\left(\frac{n}{\delta}+\xi\right)  \int_\Omega
e^{in\ktilde_2\cdot\bx} \overline{\varphi_+(\bx,\delta\xi)}P_{-}(\bx) W(\bx)d\bx  . 
\end{align*} 
%

\nit \underline{\it Expansion of  inner products \eqref{inner3}:}  

\nit Consider the $b=+$ term in \eqref{inner3}. 
Let $g(\bx,\delta\xi)=\overline{p_+(\bx,\delta\xi)}W(\bx)$ and $\Gamma(\bx,\zeta) = \kappa(\zeta) \rho_+(\bx,\zeta)$.
By Lemma \ref{poisson_exp} and the expansion of  $p_+(\bx,\delta\xi)$ about $P_+(\bx)$ in \eqref{deltapbdd} we have:
\begin{align*}
&\delta\inner{e^{i\xi\delta\ktilde_2 \cdot}p_+(\cdot,\delta\xi), 
W(\cdot)\kappa(\delta\ktilde_2\cdot) \rho_+(\cdot,\delta\ktilde_2\cdot)}_{L^2(\Sigma)} \equiv
I_+^5(\xi;\eta_{+,\rm near}) + I_+^6(\xi;\eta_{+,\rm near}),
\end{align*}
where
\begin{align*}
I_+^5(\xi;\eta_{+,\rm near}) &=
\sum_{n\in\Z} \int_\Omega \mathcal{F}_{\zeta}[\kappa\rho_+]\left(\bx,\frac{n}{\delta}+\xi\right)
 e^{in\ktilde_2\cdot\bx} \overline{P_{+}(\bx)} W(\bx)d\bx   , \\
I_+^6(\xi;\eta_{+,\rm near}) &= \sum_{n\in\mathbb{Z}} \int_\Omega
\mathcal{F}_{\zeta}[\kappa\rho_+]\left(\bx,\frac{n}{\delta}+\xi\right)  
e^{in\ktilde_2\cdot\bx} \overline{\varphi_+(\bx,\delta\xi)} W(\bx)d\bx  . 
\end{align*} 
\nit  For the  $b=-$ term in \eqref{inner3} we have
\begin{align*}
&\delta\inner{e^{i\xi\delta\ktilde_2 \cdot}p_+(\cdot,\delta\xi), 
W(\cdot)\kappa(\delta\ktilde_2\cdot) \rho_-(\cdot,\delta\ktilde_2\cdot)}_{L^2(\Sigma)} \equiv
I_+^7(\xi;\eta_{-,\rm near}) + I_+^8(\xi;\eta_{-,\rm near}),
\end{align*}
where
\begin{align*}
I_+^7(\xi;\eta_{+,\rm near}) &=
\sum_{n\in\Z} \int_\Omega \mathcal{F}_{\zeta}[\kappa\rho_-]\left(\bx,\frac{n}{\delta}+\xi\right)
e^{in\ktilde_2\cdot\bx} \overline{P_{+}(\bx)} W(\bx)d\bx   , \\
I_+^8(\xi;\eta_{+,\rm near}) &= \sum_{n\in\mathbb{Z}} \int_\Omega
\mathcal{F}_{\zeta}[\kappa\rho_-]\left(\bx,\frac{n}{\delta}+\xi\right)
e^{in\ktilde_2\cdot\bx} \overline{\varphi_+(\bx,\delta\xi)} W(\bx)d\bx  . 
\end{align*} 
Assembling the above expansions, we find that the full inner product, \eqref{full_inner}, may be expressed as:
\begin{equation}
 \inner{\Phi_{+}(\cdot,\delta\xi),\kappa(\delta\ktilde_2\cdot)W(\cdot)
\eta_{\rm near}(\cdot)}_{L^2(\Sigma)} = \thetasharp\widehat{\kappa\eta}_{{\rm near},-}(\xi) +
\sum_{j=1}^8I_{+}^j(\xi;\eta_{\rm near}) ,\label{full-ipplus}
\end{equation} 
A similar calculation yields:
\begin{equation}
 \inner{\Phi_{-}(\cdot,\delta\xi),\kappa(\delta\ktilde_2\cdot)W(\cdot)
\eta_{\rm near}(\cdot)}_{L^2(\Sigma)} = \thetasharp\widehat{\kappa\eta}_{{\rm near},+}(\xi) +
\sum_{j=1}^8I_{-}^j(\xi;\eta_{\rm near}),\ \label{full-ipminus}
\end{equation} 
where the terms $I_{-}^j(\xi;\eta_{\rm near})$ are defined analogously to $I_{+}^j(\xi;\eta_{\rm near})$.  We now substitute our results \eqref{full-ipplus}-\eqref{full-ipminus}, \eqref{Fb-def}-\eqref{Fdef} and \eqref{eta_far_affine}  into \eqref{near5}-\eqref{near6} to obtain the following:

%
%
\begin{proposition}
\label{near_freq_compact}
Let $\widehat{\beta}(\xi) = (\widehat{\eta}_{+,\rm near}(\xi) , \widehat{\eta}_{-,\rm near}(\xi))^T$. 
Equations  \eqref{near5}-\eqref{near6}, the closed system governing the near energy components, $\eta_{\rm near}$,  of the corrector, $\eta$,  is of the form:
\begin{equation}
 \label{compacterroreqn}
 \left(\widehat{\mathcal{D}}^{\delta}+\widehat{\mathcal{L}}^{\delta}(\mu) -\delta \mu\right)\widehat{\beta}(\xi) =
\mu\widehat{\mathcal{M}}(\xi;\delta) + \widehat{\mathcal{N}}(\xi;\delta) .
\end{equation} 
Here, $\mathcal{D}^\delta$ denotes the band-limited Dirac operator defined by 
\begin{equation}
 \widehat{\mathcal{D}}^{\delta}\widehat{\beta}(\xi)\ \equiv\ |\lamsharp|\ |\ktilde_2|\ \sigma_3\ \xi\ \widehat{\beta}(\xi)\
+\ \thetasharp\ \chi\left(\abs{\xi}\leq\delta^{\exponent-1}\right)\ \sigma_1\ \widehat{\kappa\beta}(\xi).
\label{bl-dirac-op}
\end{equation}
The linear operator, $\widehat{\mathcal{L}}^{\delta}(\mu)$,   acting on $\widehat{\beta}$, and the source terms  $\widehat{\mathcal{M}}(\xi;\delta)$ and $\widehat{\mathcal{N}}(\xi;\delta)$ are  defined   by:
{\small
\begin{equation}
\label{L_op}
\widehat{\mathcal{L}}^{\delta}(\mu)\widehat{\beta}(\xi) \equiv 
\chi\left(\abs{\xi}\leq\delta^{\exponent-1}\right) \sum_{j=1}^3 \widehat{\mathcal{L}}^{\delta}_j(\mu) \widehat{\beta}(\xi) , \quad \text{where}
\end{equation}
\begin{align*}
\widehat{\mathcal{L}}^{\delta}_1(\mu) \widehat{\beta}(\xi) & \equiv \delta \xi^2
\begin{pmatrix}
 E_{2,+}(\delta\xi)\ \widehat{\eta}_{+,\rm near}(\xi) \\E_{2,-}(\delta\xi)\ \widehat{\eta}_{-,\rm near}(\xi)
\end{pmatrix} , \quad
\widehat{\mathcal{L}}^{\delta}_2(\mu) \widehat{\beta}(\xi) \equiv
\sum_{j=1}^8
\begin{pmatrix}
 I_{+}^j(\xi;\widehat{\eta}_{\pm,\rm near}(\xi))\\ I_{-}^j(\xi;\widehat{\eta}_{\pm,\rm near}(\xi))
\end{pmatrix} , \\		
\widehat{\mathcal{L}}^{\delta}_3(\mu) \widehat{\beta}(\xi) & \equiv 
\begin{pmatrix}
\inner{\Phi_{+}(\cdot,\delta\xi),\kappa(\delta\ktilde_2\cdot)W(\cdot)
[A\eta_{\rm near}](\cdot;\mu,\delta)}_{L_\kparv^2} \\
\inner{\Phi_{-}(\cdot,\delta\xi),
\kappa(\delta\ktilde_2\cdot)W(\cdot)[A\eta_{\rm near}](\cdot;\mu,\delta)}_{L_\kparv^2}
\end{pmatrix},	
\end{align*}
\begin{equation}
\label{M_op}
\widehat{\mathcal{M}}(\xi;\delta) \equiv 
\chi\left(\abs{\xi}\leq\delta^{\exponent-1}\right) \sum_{j=1}^3\widehat{\mathcal{M}}_j(\xi;\delta) , \quad \text{where (inner products over ${L_\kparv^2}$)}
\end{equation}
\begin{align*}
\widehat{\mathcal{M}}_1(\xi;\delta) & \equiv
\begin{pmatrix}
 \inner{\Phi_{+}(\cdot,\delta\xi),\psi^{(0)}(\cdot,\delta\cdot)} \\
 \inner{\Phi_{-}(\cdot,\delta\xi),\psi^{(0)}(\cdot,\delta\cdot)}
\end{pmatrix} , \quad 
\widehat{\mathcal{M}}_2(\xi;\delta) \equiv
\delta
\begin{pmatrix}
 \inner{\Phi_{+}(\cdot,\delta\xi),\psi^{(1)}_p(\cdot,\delta\cdot)}\\
 \inner{\Phi_{-}(\cdot,\delta\xi),\psi^{(1)}_p(\cdot,\delta\cdot)}
\end{pmatrix} , \nn \\	
\widehat{\mathcal{M}}_3(\xi;\delta) & \equiv -
\begin{pmatrix}
 \inner{\Phi_{+}(\cdot,\delta\xi),\kappa(\delta\ktilde_2\cdot)W(\cdot)B(\cdot;\delta)}\\
 \inner{\Phi_{-}(\cdot,\delta\xi),\kappa(\delta\ktilde_2\cdot)W(\cdot)B(\cdot;\delta)}
\end{pmatrix},	\nn 
\end{align*}
%
%
\begin{equation}
\label{N_op}
\widehat{\mathcal{N}}(\xi;\delta) \equiv 
\chi\left(\abs{\xi}\leq\delta^{\exponent-1}\right) \sum_{j=1}^4\widehat{\mathcal{N}}_j(\xi;\delta) , \quad \text{where (inner products over ${L_\kparv^2}$)}
\end{equation}
\begin{align*}
\widehat{\mathcal{N}}_1(\xi;\delta) & \equiv
\begin{pmatrix}
\inner{\Phi_{+}(\bx,\delta\xi), \left(2\ktilde_2\cdot\nabla_\bx\  \D_\zeta-\kappa(\delta\ktilde_2\cdot\bx)W(\bx)\right)\psi^{(1)}_p(\bx,\delta\ktilde_2\cdot\bx)} \\
\inner{\Phi_{-}(\bx,\delta\xi), \left(2\ktilde_2\cdot\nabla_\bx\  \D_\zeta-\kappa(\delta\ktilde_2\cdot\bx)W(\bx)\right)\psi^{(1)}_p(\bx,\delta\ktilde_2\cdot\bx)}
\end{pmatrix} , \\ 
\widehat{\mathcal{N}}_2(\xi;\delta) & \equiv
\begin{pmatrix}
\inner{\Phi_{+}(\bx,\delta\xi),|\ktilde_2|^2\ \D_\zeta^2\psi^{(0)}(\bx,\zeta)\Big|_{\zeta=\delta \ktilde_2\cdot\bx}}\\
\inner{\Phi_{-}(\bx,\delta\xi),|\ktilde_2|^2\ \D_\zeta^2\psi^{(0)}(\bx,\zeta)\Big|_{\zeta=\delta \ktilde_2\cdot \bx}}
\end{pmatrix}, \\ 
\widehat{\mathcal{N}}_3(\xi;\delta) & \equiv
\begin{pmatrix}
\inner{\Phi_{+}(\bx,\delta\xi),|\ktilde_2|^2\ \D_\zeta^2\psi_p^{(1)}(\bx,\zeta)\Big|_{\zeta=\delta \ktilde_2\cdot\bx}} \\
\inner{\Phi_{-}(\bx,\delta\xi),|\ktilde_2|^2\ \D_\zeta^2\psi_p^{(1)}(\bx,\zeta)\Big|_{\zeta=\delta \ktilde_2\cdot \bx}}
\end{pmatrix} , \nn \\ 
\widehat{\mathcal{N}}_4(\xi;\delta) & \equiv -
\begin{pmatrix}
\inner{\Phi_{+}(\bx,\delta\xi),\kappa(\delta\ktilde_2\cdot\bx)W(\bx)C(\bx;\delta)} \\
\inner{\Phi_{-}(\bx,\delta\xi),\kappa(\delta\ktilde_2\cdot\bx)W(\bx)C(\bx;\delta)}
\end{pmatrix}. \nn 
\end{align*}
}
\end{proposition}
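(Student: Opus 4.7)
The proof is a bookkeeping exercise: we substitute all of the expansions derived in Sections \ref{corrector-equation}--\ref{subsec:near_freq} into the near-energy subsystem \eqref{near5}--\eqref{near6}, and regroup the resulting terms to produce the structure \eqref{compacterroreqn}. The plan is to first isolate the operator on the left-hand side, then identify the source terms on the right-hand side, making essential use of the selection rules of Proposition \ref{inner-prods-W}.

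First I would reorganize \eqref{near5}--\eqref{near6}. The rescaling $\xi=\lambda/\delta$, together with the real-analytic expansion $E_\pm(\lambda)-E_\star = \pm|\lambda_\sharp||\ktilde_2|\delta\xi + E_{2,\pm}(\delta\xi)(\delta\xi)^2$ from Proposition \ref{directional-bloch}, produces the $\sigma_3$ part of the Dirac operator $\widehat{\mathcal{D}}^\delta$ in \eqref{bl-dirac-op} at leading order, with the quadratic remainder absorbed into $\widehat{\mathcal{L}}^\delta_1(\mu)$. Next, I would split the full inner product $\langle \Phi_\pm(\cdot,\delta\xi),\kappa(\delta\ktilde_2\cdot)W(\cdot)[\eta_{\rm near}+\eta_{\rm far}]\rangle_{L^2_\kparv}$ into three pieces using the decomposition \eqref{nearinxi} of $\eta_{\rm near}$ (via modulation of $P_\pm$ plus the small remainder $\rho_\pm$) and the affine decomposition \eqref{eta_far_affine} of $\eta_{\rm far}$.

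The heart of the proof is the application of Lemma \ref{poisson_exp} to each of the four inner products in \eqref{inner1}--\eqref{inner3}. This converts each inner product into a sum over $n\in\Z$ of terms involving $\widehat{\kappa\eta}_{\pm,\rm near}(n/\delta+\xi)$ (or the $\rho_\pm$ analogue) times a period-cell integral of the form $\int_\Omega e^{in\ktilde_2\cdot\bx}[\cdots]W(\bx)\,d\bx$. The key identification is now \eqref{zeroth-terms}: the $n=0$ term in the diagonal inner product \eqref{inner1} vanishes because $\langle P_\pm,WP_\pm\rangle_{L^2_\bK}=0$, whereas the $n=0$ term in the off-diagonal inner product \eqref{inner2} equals $\vartheta_\sharp\,\widehat{\kappa\eta}_{\mp,\rm near}(\xi)$. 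Written in vector form, these two $n=0$ contributions produce exactly $\vartheta_\sharp\sigma_1\,\widehat{\kappa\beta}(\xi)$, furnishing the $\sigma_1$ coupling in $\widehat{\mathcal{D}}^\delta$. All remaining contributions ($|n|\ge1$ terms, terms involving the remainder $\varphi_\pm(\bx,\delta\xi)$ from \eqref{deltapbdd}, and all terms involving $\rho_\pm$) are packaged as $I_\pm^j(\xi;\eta_{\pm,\rm near})$, $j=1,\dots,8$, and collected into $\widehat{\mathcal{L}}_2^\delta(\mu)$. The contribution of $\eta_{\rm far}$ through the $A$-component of \eqref{eta_far_affine} goes into $\widehat{\mathcal{L}}_3^\delta(\mu)$; the $\mu B(\bx;\delta)$ part yields $\widehat{\mathcal{M}}_3$; and $C(\bx;\delta)$ yields $\widehat{\mathcal{N}}_4$.

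Finally, for the right-hand side, I would unfold the definition $\widetilde{F}_\pm[\mu,\delta](\lambda)$ from \eqref{Fb-def}--\eqref{Fdef}, evaluated at $\lambda=\delta\xi$ and multiplied by $\chi(|\xi|\le\delta^{\exponent-1})$: the $\widetilde{F}^{2,\delta}_\pm$ and $\widetilde{F}^{3,\delta}_\pm$ pieces (coefficients of $\mu$ and $\delta\mu$) assemble into $\widehat{\mathcal{M}}_1$ and $\widehat{\mathcal{M}}_2$, while $\widetilde{F}^{1,\delta}_\pm$, $\widetilde{F}^{4,\delta}_\pm$, and $\delta\widetilde{F}^{5,\delta}_\pm$ assemble into $\widehat{\mathcal{N}}_1$, $\widehat{\mathcal{N}}_2$, and $\widehat{\mathcal{N}}_3$ respectively. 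The $\delta^2\mu\,\widetilde{\eta}_{\pm,\rm near}(\lambda)$ terms on the right of \eqref{near5}--\eqref{near6} become the $-\delta\mu\,\widehat{\beta}$ term on the left after dividing through by the overall factor of $\delta$, producing the displayed form \eqref{compacterroreqn}.

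The main obstacle is strictly bookkeeping and the careful use of Lemma \ref{poisson_exp}: one must verify that $\Gamma(\bx,\zeta)=\kappa(\zeta)\eta_{\pm,\rm near}(\zeta)$ and $\Gamma(\bx,\zeta)=\kappa(\zeta)\rho_\pm(\bx,\zeta)$ satisfy conditions (IP1)--(IP2), which uses the band-limit $|\xi|\le\delta^{\exponent-1}$ in the definition \eqref{rhodefn} of $\rho_\pm$ together with the smoothness of $V$ and the regularity noted in Remark \ref{regularity}. Once that is in hand, the proposition follows by direct comparison of terms; no estimation is required at this stage, since the operator bounds and solvability analysis are deferred to Proposition \ref{solve4beta}.
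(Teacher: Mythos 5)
Your plan is correct and follows essentially the same route as the paper: Proposition \ref{near_freq_compact} is obtained there exactly by the bookkeeping you describe — rescaling $\lambda=\delta\xi$, the expansion of $E_\pm$ from Proposition \ref{directional-bloch}, the Poisson-summation expansion of the inner products via Lemma \ref{poisson_exp} with the $n=0$ selection rules \eqref{zeroth-terms} producing the $\vartheta_\sharp\sigma_1$ coupling, the affine decomposition \eqref{eta_far_affine} feeding $\widehat{\mathcal{L}}_3^\delta$, $\widehat{\mathcal{M}}_3$, $\widehat{\mathcal{N}}_4$, and the unfolding of $\widetilde{F}_\pm[\mu,\delta]$ into the remaining $\widehat{\mathcal{M}}_j$, $\widehat{\mathcal{N}}_j$. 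The only slips are cosmetic (e.g.\ the $\delta^2\mu\,\widetilde\eta$ term has already become $\delta\mu\,\widehat\eta$ in \eqref{near5}--\eqref{near6} after the factor of $\delta$ was cancelled) and do not affect the argument.
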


We conclude this section with the assertion that from an appropriate solution
$\left(\widehat{\beta}^\delta(\xi),\mu(\delta)\right)$
of the band-limited Dirac system \eqref{compacterroreqn} one can construct a bound state $\left(\Psi^\delta,E^\delta\right)$ of the Schr\"odinger eigenvalue
problem \eqref{EVP_2}. \
We say $f\in L^{2,1}(\R)$ if  $\|f\|_{L^{2,1}(\R)}^2\ \equiv\ \int (1+|\xi|^2)^{1/2}f(\xi)d\xi<\infty$.

\begin{proposition}\label{needtoshow}
Suppose, for $0<\delta<\delta_0$, the band-limited Dirac system \eqref{compacterroreqn} has a solution  
$\left(\widehat{\beta}^\delta(\xi),\mu(\delta)\right)$, $\widehat{\beta}^\delta=(\widehat{\beta}^\delta_+,\widehat{\beta}_-^\delta)^T$, where $\supp\widehat{\beta}^\delta\subset \{|\xi|\le\delta^{\exponent-1}\}$, satisfying:
\begin{align*}
 &\norm{\widehat{\beta}^\delta(\cdot;\mu,\delta)}_{L^{2,1}(\mathbb{R})} \lesssim \delta^{-1},\ 0<\delta<\delta_0
 \quad (\textrm{to be verified in Proposition~\ref{solve4beta}}), \\
 &\mu(\delta) \text{ bounded and } \mu(\delta) - \mu_0 \to 0 \text{ as } \delta\to0 \quad
  (\textrm{to be verified in Proposition~\ref{proposition3}}).  
 \end{align*} 
Define 
\begin{equation}
\widehat{\eta}^{\delta}_{\rm near,+}(\xi) = \widehat{\beta}^\delta_+(\xi),\ \ \ \widehat{\eta}^{\delta}_{\rm near,-}(\xi)=\widehat{\beta}^\delta_-(\xi) ,
\label{hat-eta}\end{equation}
and construct  $\eta^\delta\equiv \eta^\delta_{\rm near} + \eta^\delta_{\rm far}$ as follows:
\begin{align}
\eta^\delta_{\rm near}(\bx)&= \sum_{b=\pm}\int_{|\lambda|\le \delta^\exponent} \widehat{\eta}^\delta_{{\rm
near},b}\left(\frac{\lambda}{\delta}\right) \Phi_b(\bx;\lambda) d\lambda , \label{eta_def_beta1} \\ 
\widetilde{\eta}^\delta_{{\rm far},b}(\lambda)&= 
\widetilde{\eta}_{\rm far,b}[\eta_{\rm near},\mu,\delta](\lambda),\ \ b\ge1 ;
\quad \textrm{(see Proposition \ref{fixed-pt})} , \nn \\	
\eta^\delta_{\rm far}(\bx)&= \sum_{b=\pm} \int_{\delta^\exponent \le |\lambda |\le 1/2 } 
\widetilde{\eta}^\delta_{{\rm far},b}\left(\lambda\right) \Phi_b(\bx;\lambda) d\lambda
+ \sum_{b\ne\pm} \int_{|\lambda| \leq 1/2}
\widetilde{\eta}^\delta_{{\rm far},b}\left(\lambda\right) \Phi_b(\bx;\lambda) d\lambda  . \nn \\	
\eta^\delta(\bx)&\equiv \eta^\delta_{\rm near}(\bx) + \eta^\delta_{\rm far}(\bx),\ \ 
E^\delta\equiv E_\star+\delta^2\mu(\delta),\ \
0<\delta<\delta_0. \nn
\end{align}
Then, for all $0<|\delta|<\delta_0$, the following holds:
\begin{enumerate}
\item[(a)] $ \eta^\delta(\bx)\in H_\kparv^{2}(\Sigma)$.
\item[(b)] $\left(\eta^\delta,\mu(\delta)\right)$ solves the corrector equation \eqref{corrector-eqn1}.
\item[(c)] Theorem \ref{thm-edgestate} holds. The pair $(\Psi^\delta,E^\delta)$, defined by (see also
\eqref{eta-def}-\eqref{mu-def})
\begin{equation}
 \label{main_result_ansatz1}
 \begin{split}
 \Psi^\delta(\bx) &= \psi^{(0)}(\bx,\bX)+\delta\psi^{(1)}_p(\bx,\bX)+\delta\eta^\delta(\bx),\ \ \bX=\delta\ktilde_2\cdot \bx,\\
 E^\delta &= E_\star+\delta^2 \mu_0 +o(\delta^2) ,
 \end{split}
\end{equation}
is a solution of the eigenvalue problem \eqref{EVP_2} with corrector estimates asserted in the statement of Theorem \ref{thm-edgestate}.
\end{enumerate}
\end{proposition}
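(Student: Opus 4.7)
My plan is to reverse, in order, the chain of reductions carried out in Sections \ref{corrector-equation}--\ref{subsec:near_freq}, using the assumed solution $(\widehat\beta^\delta,\mu(\delta))$ of the band-limited Dirac system \eqref{compacterroreqn} to reconstruct a genuine solution of the corrector equation, and finally to collect quantitative estimates to obtain the statements of Theorem \ref{thm-edgestate}.

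For part (a), I would apply Corollary \ref{fourier-edge-norms}, which expresses $H^s_\kparv(\Sigma)$ norms as weighted sums of squared Floquet--Bloch amplitudes. For the near part, the change of variable $\lambda=\delta\xi$ gives
\[
\|\eta^\delta_{\rm near}\|_{L^2_\kparv}^2\ \approx\ \sum_{b=\pm}\int_{|\lambda|\le\delta^\exponent}\bigl|\widehat\beta^\delta_b(\lambda/\delta)\bigr|^2\,d\lambda\ =\ \delta\,\|\widehat\beta^\delta\|_{L^2(\R)}^2,
\]
and combined with the hypothesis $\|\widehat\beta^\delta\|_{L^{2,1}(\R)}\lesssim\delta^{-1}$ this gives $\|\eta^\delta_{\rm near}\|_{L^2_\kparv}\lesssim\delta^{-1/2}$. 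Only the two bands $b=\pm$ enter this sum, and $\Phi_\pm(\cdot;\lambda)$ depends analytically on $\lambda\in[-\delta^\exponent,\delta^\exponent]$ in $H^2(\Omega)$ by Proposition \ref{directional-bloch}, so the same bound holds in $H^2_\kparv$ up to a constant. The far part is controlled directly by the estimate \eqref{eta-far-bound} of Proposition \ref{fixed-pt}, giving $\|\eta^\delta_{\rm far}\|_{H^2_\kparv}\lesssim \delta^{1/2}/\omega(\delta^\exponent)$, finite for $\delta<\delta_0$.

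For part (b), I would retrace the equivalences. By Theorem \ref{fourier-edge}, membership in $H^2_\kparv(\Sigma)$ is equivalent to a convergent Floquet--Bloch expansion, so equation \eqref{corrector-eqn1} is equivalent to system \eqref{eta-b-system}. The smooth cutoff at $|\lambda|=\delta^\exponent$ splits \eqref{eta-b-system} into the near subsystem \eqref{near_cpt_1}--\eqref{near_cpt_2} and the far subsystem \eqref{far_cpts}. By Proposition \ref{fixed-pt}, for $(\eta_{\rm near},\mu,\delta)$ in the stated ranges, the far subsystem is uniquely solved by the affine operator $\eta_{\rm far}=\eta_{\rm far}[\eta_{\rm near},\mu,\delta]$; substituting back into the near subsystem yields a closed equation for $\eta_{\rm near}$. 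The rescaling $\xi=\lambda/\delta$ together with the inner-product expansion afforded by Lemma \ref{poisson_exp} then recasts this closed equation as the band-limited Dirac system \eqref{compacterroreqn}, as summarized in Proposition \ref{near_freq_compact}. Because each step is a bijection at the $L^2$ level, from a solution $(\widehat\beta^\delta,\mu(\delta))$ of \eqref{compacterroreqn} the formulas \eqref{hat-eta}--\eqref{eta_def_beta1} reconstruct a solution $(\eta^\delta,\mu(\delta))$ of \eqref{corrector-eqn1}.

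For part (c), I would combine the bounds. From Proposition \ref{lemma:psi_bounds}, $\|\delta\psi^{(1)}_p\|_{H^2_\kparv}\lesssim\delta^{1/2}$. From part (a), $\|\delta\eta^\delta_{\rm near}\|_{H^2_\kparv}\lesssim\delta^{1/2}$ and
\[
\|\delta\eta^\delta_{\rm far}\|_{H^2_\kparv}\ \lesssim\ \delta^{3/2}/\omega(\delta^\exponent)\ =\ \delta^{1/2}\cdot\bigl(\delta/\omega(\delta^\exponent)\bigr)\ =\ o(\delta^{1/2}),
\]
where the decay $\delta/\omega(\delta^\exponent)\to 0$ is exactly the quantitative content of Definition \ref{SGC} ensured by hypothesis (A4). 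Since $\psi^{(0)}(\bx,\delta\ktilde_2\cdot\bx)=\alpha_{\star,+}(\delta\ktilde_2\cdot\bx)\Phi_+(\bx)+\alpha_{\star,-}(\delta\ktilde_2\cdot\bx)\Phi_-(\bx)$, summing the three contributions to $\Psi^\delta-\psi^{(0)}=\delta\psi^{(1)}_p+\delta\eta^\delta$ yields \eqref{TM_Psi-error}. The energy expansion \eqref{TM_E-error} follows from $E^\delta=E_\star+\delta^2\mu(\delta)$ together with the assumed convergence $\mu(\delta)\to\mu_0=E^{(2)}$, with $E^{(2)}$ identified from \eqref{solvability_cond_E2}. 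The main obstacle is not conceptual but bookkeeping: one must propagate the weighted $L^{2,1}$ control on $\widehat\beta^\delta$ through the reconstruction while tracking the competing powers of $\delta$ and $\omega(\delta^\exponent)$, and the spectral no-fold condition is precisely what makes the far-part correction subleading in \eqref{TM_Psi-error}.
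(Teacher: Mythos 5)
Your proposal follows essentially the same route as the paper's proof: your change-of-variables computation of $\|\eta^\delta_{\rm near}\|_{H^2_{\kparv}}\lesssim\delta^{1/2}\|\widehat\beta^\delta\|_{L^2}$ via Corollary \ref{fourier-edge-norms} is exactly the content of Lemma \ref{beta_vs_eta} which the paper invokes, the far-component bound is taken from \eqref{eta-far-bound} of Proposition \ref{fixed-pt}, and the solvability statement (b) rests on the same observation that every step of the reduction to the band-limited Dirac system---including the $L^2_{\rm loc}$ Poisson summation of Lemma \ref{poisson_exp}---is reversible. Your part (c) bookkeeping, using $\delta/\omega(\delta^{\exponent})\to 0$ from the no-fold condition to make the far correction subleading and $\mu(\delta)\to\mu_0=E^{(2)}$ for the energy expansion, matches the paper's intended argument, so the proposal is correct and essentially identical to it.
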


To prove Proposition \ref{needtoshow} we use the following lemma.

\begin{lemma}
 \label{beta_vs_eta}
 There exists a $\delta_0>0$ such that, for all $0<\delta<\delta_0$, the following holds:
 Assume $\beta \in L^2(\R)$ and let $\eta^\delta_{\rm near}(\bx)$ be defined by \eqref{hat-eta}-\eqref{eta_def_beta1}.
 Then, 
 \begin{equation*}
 \norm{\eta_{\rm near}}_{H_\kparv^2} \lesssim \delta^{1/2} \norm{\beta}_{L^2(\R)}.
 \end{equation*}
\end{lemma}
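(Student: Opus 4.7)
The plan is to express $\|\eta^\delta_{\rm near}\|_{H^2_\kparv}^2$ directly via its Floquet-Bloch coefficients using Corollary \ref{fourier-edge-norms}, and then exploit the two key features of $\eta^\delta_{\rm near}$: its spectral support in the band index is confined to $\{b_\star,b_\star+1\}$ (equivalently, to the two modes $\Phi_\pm(\cdot;\lambda)$), and its quasi-momentum support satisfies $|\lambda|\le \delta^{\exponent}$. The first observation eliminates the $(1+b)^s$ band weight in the norm characterization, replacing it by a constant depending only on $V$; the second observation produces the desired factor $\delta^{1/2}$ via the Jacobian of the rescaling $\xi = \lambda/\delta$.

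Concretely, I would first apply Proposition \ref{directional-bloch} to observe that at each fixed $|\lambda|\le \zeta_0$ the set $\{\Phi_+(\cdot;\lambda),\Phi_-(\cdot;\lambda)\}$ is an orthonormal basis for the same two-dimensional $L^2_{\bK+\lambda\ktilde_2}(\Omega)$ subspace as $\{\Phi_{b_\star}(\cdot;\bK+\lambda\ktilde_2),\Phi_{b_\star+1}(\cdot;\bK+\lambda\ktilde_2)\}$. Consequently, Parseval's identity on this finite-dimensional fibre gives
\[
 \sum_{b\in\{b_\star,b_\star+1\}} |\widetilde{\eta}^\delta_{{\rm near},b}(\lambda)|^2 \;=\; \sum_{\sigma=\pm}|\widehat{\eta}^\delta_{{\rm near},\sigma}(\lambda/\delta)|^2 \;=\; |\widehat{\beta}_+(\lambda/\delta)|^2 + |\widehat{\beta}_-(\lambda/\delta)|^2,
\]
using the definition \eqref{hat-eta}.

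Next, since $\eta^\delta_{\rm near}$ is supported on only the two bands $b_\star, b_\star+1$, Corollary \ref{fourier-edge-norms} implies
\[
 \|\eta^\delta_{\rm near}\|_{H^2_\kparv}^2 \;\sim\; (1+b_\star)^2 \sum_{b\in\{b_\star,b_\star+1\}}\int_{|\lambda|\le\delta^{\exponent}}|\widetilde{\eta}^\delta_{{\rm near},b}(\lambda)|^2\,d\lambda \;\lesssim\; \int_{|\lambda|\le\delta^{\exponent}}\bigl(|\widehat\beta_+(\lambda/\delta)|^2 + |\widehat\beta_-(\lambda/\delta)|^2\bigr) d\lambda,
\]
where the implicit constant depends only on the fixed index $b_\star$. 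Changing variables $\xi=\lambda/\delta$ produces a Jacobian factor of $\delta$ and extends the range to $|\xi|\le \delta^{\exponent-1}$, yielding
\[
 \|\eta^\delta_{\rm near}\|_{H^2_\kparv}^2 \;\lesssim\; \delta\int_{|\xi|\le \delta^{\exponent-1}}\bigl(|\widehat\beta_+(\xi)|^2 + |\widehat\beta_-(\xi)|^2\bigr)d\xi \;\le\; \delta\,\|\widehat\beta\|_{L^2(\R)}^2 \;\lesssim\; \delta\,\|\beta\|_{L^2(\R)}^2,
\]
the last step by Plancherel. Taking square roots yields the claimed bound.

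There is no real obstacle: the argument reduces to bookkeeping of norms combined with a linear rescaling. The only point requiring care is ensuring that the Sobolev norm characterization of Corollary \ref{fourier-edge-norms} is insensitive to the choice of orthonormal basis on the two-dimensional spectral subspace near the Dirac point, which is guaranteed by the unitary equivalence in Proposition \ref{directional-bloch}. The mechanism behind the favorable $\delta^{1/2}$ scaling is purely geometric: concentrating a function's Floquet-Bloch support to a $\delta^{\exponent}$-neighborhood in $\lambda$ and rescaling to order-one scale costs one power of $\delta^{1/2}$ in $L^2$.
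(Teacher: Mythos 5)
Your argument is correct and is essentially the standard route the paper has in mind: the paper omits the proof, deferring to Lemma 6.9 of \cites{FLW-MAMS:15}, whose proof is exactly this combination of the Floquet--Bloch norm characterization (Corollary \ref{fourier-edge-norms}), band-limitation to the two fibers $b_\star,b_\star+1$ (with the fiberwise unitary change between $\Phi_\pm(\cdot;\lambda)$ and the ordered band basis handled as you do via Proposition \ref{directional-bloch}), and the rescaling $\lambda=\delta\xi$ producing the Jacobian factor $\delta$, hence $\delta^{1/2}$ after taking square roots.
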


\nit The proof of Lemma \ref{beta_vs_eta} parallels that of Lemma 6.9 in \cites{FLW-MAMS:15}, and is not reproduced here.

\begin{proof}[Proof of Proposition \ref{needtoshow}]
From $\widehat{\beta}$ we construct
$\eta^\delta_{\rm near}$, such that:
$ \norm{\eta_{\rm near}}_{H_\kparv^2} \lesssim \delta^{1/2} \norm{\beta}_{L^2(\R)}$ (Lemma
\ref{beta_vs_eta}).
Next, part 2 of Proposition \ref{fixed-pt}, \eqref{eta-far-bound}, gives a bound on 
 $\eta_{\rm far}$:
 $\norm{\eta_{\rm far}[\eta_{\rm near};\mu,\delta]}_{H_\kparv^2} \le\ C''\left(\ 
\frac{\delta}{\omega(\delta^\exponent)}\norm{\eta_{\rm near}}_{L_\kparv^2}+\frac{\delta^\frac12}{\omega(\delta^\exponent)} \right)$.
These two bounds give the desired $H_\kparv^2(\Sigma)$ bound on $\eta^{\delta}$.
Note that all steps in our derivation of the band-limited Dirac
system \eqref{compacterroreqn} are reversible, in particular our application of the Poisson summation formula in
$L^2_{\rm loc}$. Therefore, $(\Psi^\delta,E^\delta)$, given by \eqref{main_result_ansatz1} is an $H_\kparv^2$ eigenpair of  \eqref{EVP_2} .
\end{proof}


\nit We focus then on constructing and estimating the solution of the band-limited Dirac system \eqref{compacterroreqn}.

\subsection{Analysis of the band-limited Dirac system}\label{analysis-blDirac}

The formal $\delta\downarrow0$ limit of the band-limited operator $\widehat{\mathcal{D}}^\delta$, displayed in \eqref{bl-dirac-op}, is a 1D Dirac operator ${\mathcal{D}}$ defined via:
\begin{equation}
 \label{diraclimit}
 \widehat{\mathcal{D}}\ \widehat{\beta}(\xi)\ \equiv\  |\lamsharp|\ |\ktilde_2|\ \sigma_3\ \xi \widehat{\beta}(\xi)\ +\ \thetasharp\ \sigma_1\ \widehat{\kappa\beta}(\xi).
\end{equation} 

Our goal is to solve the  system \eqref{compacterroreqn}.
We therefore rewrite the linear operator in equation \eqref{compacterroreqn} as a perturbation of $\widehat{\mathcal{D}}$ \eqref{diraclimit}, and seek $\widehat{\beta}$ as a solution to:
\begin{equation}
 \label{erroreqnfactored}
 \widehat{\mathcal{D}}\widehat{\beta}(\xi) + \left(\widehat{\mathcal{D}}^{\delta}-\widehat{\mathcal{D}} +
\widehat{\mathcal{L}}^{\delta}(\mu) -\delta \mu\right)\widehat{\beta}(\xi) =
\mu\widehat{\mathcal{M}}(\xi;\delta)
+ \widehat{\mathcal{N}}(\xi;\delta).
\end{equation} 
We next solve \eqref{erroreqnfactored} using a Lyapunov-Schmidt reduction strategy.
 By Proposition \ref{zero-mode}, the null space of $\widehat{\mathcal{D}}$ is spanned by $\widehat{\alpha}_{\star}(\xi)$, the Fourier transform of the zero energy eigenstate \eqref{Fstar}. Since  $\alpha_{\star}(\zeta)$ is Schwartz class, so too is $\widehat{\alpha}_{\star}(\xi)$ and $\widehat{\alpha}_{\star}(\xi)\in H^s(\mathbb{R})$ for any $s\ge1$. 

For any $f\in L^2{(\mathbb{R})}$, introduce the orthogonal projection operators,
\begin{equation*}
 \widehat{P}_{\parallel}f =
\inner{\widehat{\alpha}_{\star},f}_{L^2(\R)}\widehat{\alpha}_{\star},~~~\text{and}~~~\widehat{P}_{\perp}f =
(I-\widehat{P}_{\parallel})f.
\end{equation*} 
Since $\widehat{P}_{\parallel}\widehat{\mathcal{D}}\widehat{\beta}(\xi)=0$ and
$\widehat{P}_{\perp}\widehat{\mathcal{D}}\widehat{\beta}(\xi)=\widehat{\mathcal{D}}\widehat{\beta}(\xi)$, 
equation \eqref{erroreqnfactored} is equivalent to  the system
\begin{align}
&\widehat{P}_{\parallel}\left\{\left(\widehat{\mathcal{D}}^{\delta}-\widehat{\mathcal{D}} +
\widehat{\mathcal{L}}^{\delta}(\mu) -\delta
\mu\right)\widehat{\beta}(\xi) - \mu\widehat{\mathcal{M}}(\xi;\delta) -
\widehat{\mathcal{N}}(\xi;\delta)\right\} = 0, \label{pplleqn}\\
&\widehat{\mathcal{D}}\widehat{\beta}(\xi) +
\widehat{P}_{\perp}\left\{\left(\widehat{\mathcal{D}}^{\delta}-\widehat{\mathcal{D}}
+ \widehat{\mathcal{L}}^{\delta}(\mu) -\delta \mu\right)\widehat{\beta}(\xi)\right\} =
\widehat{P}_{\perp}\left\{\mu\widehat{\mathcal{M}}(\xi;\delta) +
\widehat{\mathcal{N}}(\xi;\delta)\right\}.
\label{pperpeqn}
\end{align} Our strategy will be to first solve \eqref{pperpeqn}  for $\widehat{\beta}=
\widehat{\beta}[\mu,\delta]$, for $\delta>0$ and sufficiently small. 
   We then  substitute
 $\widehat{\beta}[\mu,\delta]$ into \eqref{pplleqn} to obtain a closed scalar equation. This equation is then solved
   for $\mu=\mu(\delta)$ for $\delta$ small. 
 The first step in this strategy is accomplished in
\begin{proposition}\label{solve4beta}
Fix $M>0$. There exists $\delta_0>0$ and a  mapping 
$(\mu,\delta)\in R_{M,\delta_0}\equiv \{|\mu|<M\}\times (0,\delta_0)\mapsto \widehat{\beta}(\cdot;\mu,\delta)\in
L^{2,1}(\R)$
which is Lipschitz in $\mu$, such that $\widehat{\beta}(\cdot;\mu,\delta)$ solves
\eqref{pperpeqn} for $(\mu,\delta)\in R_{M,\delta_0}$. Furthermore, we have the bound
\begin{equation*}
 \norm{\widehat{\beta}(\cdot;\mu,\delta)}_{L^{2,1}(\mathbb{R})} \lesssim \delta^{-1},\ 0<\delta<\delta_0.
 \end{equation*}
\end{proposition}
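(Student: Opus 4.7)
The plan is to apply $\widehat{\mathcal{D}}^{-1}\widehat{P}_\perp$ to \eqref{pperpeqn} and recast it as a Neumann/fixed-point problem in $L^{2,1}(\R)$, treating $\widehat{\mathcal{D}}^\delta - \widehat{\mathcal{D}}$, $\widehat{\mathcal{L}}^\delta(\mu)$, and $\delta\mu$ as small perturbations of the limiting Dirac operator $\widehat{\mathcal{D}}$. This parallels the strategy of Proposition~6.8 in \cites{FLW-MAMS:15}, the new feature being that $\widehat{\mathcal{L}}^\delta(\mu)$ is now assembled from two-dimensional inner products over the cylinder $\Sigma$.

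First, I would invert $\widehat{\mathcal{D}}$ on the orthogonal complement of its one-dimensional kernel $\mathrm{span}(\widehat{\alpha}_\star)$. Because $\kappa(\zeta)\to\pm\kappa_\infty$ as $\zeta\to\pm\infty$, the essential spectrum of $\mathcal{D}$ is $(-\infty,-|\vartheta_\sharp|\kappa_\infty]\cup[|\vartheta_\sharp|\kappa_\infty,\infty)$, so zero is an isolated eigenvalue and $\widehat{\mathcal{D}}^{-1}:\widehat{P}_\perp L^2(\R)\to\widehat{P}_\perp L^2(\R)$ is bounded. The refinement actually needed is that $\widehat{\mathcal{D}}^{-1}\widehat{P}_\perp$ maps $L^2(\R)$ continuously into $L^{2,1}(\R)$; this improved weighted decay comes from the wave-operator intertwiner between $\mathcal{D}$ and the free Dirac operator $\mathcal{D}_\infty \equiv -i|\lamsharp||\ktilde_2|\sigma_3\partial_\zeta + \vartheta_\sharp\kappa_\infty\,\mathrm{sgn}(\zeta)\sigma_1$, whose boundedness is precisely the content of Theorem~6.15 of \cites{FLW-MAMS:15} and requires the integrability conditions \eqref{kappa-hypotheses} imposed in Definition \ref{domain-wall-defn}.

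Next, I would rewrite \eqref{pperpeqn} as
$$\bigl(I + \widehat{\mathcal{D}}^{-1}\widehat{P}_\perp\widehat{\mathcal{A}}^\delta(\mu)\bigr)\widehat{\beta} = \widehat{\mathcal{D}}^{-1}\widehat{P}_\perp\bigl[\mu\widehat{\mathcal{M}}(\cdot;\delta) + \widehat{\mathcal{N}}(\cdot;\delta)\bigr],$$
with $\widehat{\mathcal{A}}^\delta(\mu)\equiv (\widehat{\mathcal{D}}^\delta-\widehat{\mathcal{D}})+\widehat{\mathcal{L}}^\delta(\mu)-\delta\mu$. Each contribution to $\widehat{\mathcal{A}}^\delta(\mu)$ should be shown to have $o(1)$ operator norm on $L^{2,1}(\R)$, uniformly in $|\mu|\le M$: the band-cut discrepancy equals $-\vartheta_\sharp[1-\chi(|\xi|\le\delta^{\exponent-1})]\sigma_1\widehat{\kappa\beta}$, which is small after first peeling off the sign-type part of $\kappa$ and using rapid decay of the transform of the remainder; $\widehat{\mathcal{L}}^\delta_1$ multiplies by $\delta\xi^2$ on $|\xi|\le\delta^{\exponent-1}$ and is $O(\delta^{\exponent})$ after absorbing one factor of $\xi$ via $\widehat{\mathcal{D}}^{-1}$; $\widehat{\mathcal{L}}^\delta_3$ inherits $O(\delta/\omega(\delta^{\exponent}))$ smallness from Proposition~\ref{fixed-pt}(3); and $\widehat{\mathcal{L}}^\delta_2$ is treated as discussed below. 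For the right-hand side, the rescaling $\xi=\lambda/\delta$ combined with Proposition~\ref{lemma:psi_bounds} produces $\|\widehat{\mathcal{M}}(\cdot;\delta)\|_{L^2(\R)} + \|\widehat{\mathcal{N}}(\cdot;\delta)\|_{L^2(\R)} \lesssim \delta^{-1}$, the factor $\delta^{-1}$ reflecting both the Jacobian of the rescaling and the $\delta^{-1/2}$-scaling of $\psi^{(0)},\psi_p^{(1)}$ in $L^2_\kparv(\Sigma)$. For $\delta$ sufficiently small, a Neumann series in $L^{2,1}(\R)$ then yields a unique $\widehat{\beta}[\mu,\delta]$ with $\|\widehat{\beta}\|_{L^{2,1}}\lesssim\delta^{-1}$, and Lipschitz dependence on $\mu$ follows from linearity in $\widehat{\beta}$ together with Lipschitz dependence of $\widehat{\mathcal{L}}^\delta_3(\mu)$ and $\widehat{\mathcal{M}}$ on $\mu$ coming from Proposition~\ref{fixed-pt}(3).

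The principal technical hurdle is the operator-norm bound on $\widehat{\mathcal{L}}^\delta_2(\mu)$. Each of the eight constituents $I^j_\pm$ is an infinite sum $\sum_{n\in\Z}$ of translates of $\widehat{\kappa\beta}$ (or of mixed Fourier transforms involving $\rho_\pm$) by $n/\delta$, weighted by Fourier coefficients of $|P_\pm|^2 W$, $\overline{P_+}P_- W$, or $\overline{P_\pm}W$, and by the $\lambda$-correction $\varphi_\pm$. Making these sums small uniformly for $\beta$ in a ball of $L^{2,1}$ requires exploiting simultaneously: (i) Paley--Wiener-type decay of $\widehat{\kappa\beta}(n/\delta+\xi)$ as $|n|\to\infty$, which relies on the integrability \eqref{kappa-hypotheses}; (ii) the smoothness of $W$ and of the Bloch modes $P_\pm$, which yields summable Fourier weights; and (iii) the algebraic cancellations \eqref{zeroth-terms} removing the non-decaying $n=0$ contributions of $I^{1,3,5,7}_\pm$, which is where the oddness hypothesis (W2) and the non-degeneracy (W3) enter decisively. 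Verifying that all of these estimates retain the same $\delta$-scaling as in the one-dimensional analysis of \cites{FLW-MAMS:15}---despite the two-dimensional nature of $\Sigma$---is the main additional bookkeeping, made tractable by the fact that $\Sigma$ is bounded in the $\vtilde_1$-direction.
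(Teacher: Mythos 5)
Your proposal takes essentially the same route as the paper's proof: rewrite \eqref{pperpeqn} as $(I+C^\delta(\mu))\widehat{\beta}=\widehat{\mathcal{D}}^{-1}\widehat{P}_\perp\{\mu\widehat{\mathcal{M}}+\widehat{\mathcal{N}}\}$, prove the perturbation has small operator norm on $L^{2,1}(\R)$ and the source is $\mathcal{O}(\delta^{-1})$, and close by a Neumann series, with the key input being $H^1$-bounds for the wave operators of $\mathcal{D}$ (equivalently the $L^2\to L^{2,1}$ mapping property of $\widehat{\mathcal{D}}^{-1}\widehat{P}_\perp$) under the domain-wall hypotheses \eqref{kappa-hypotheses} — precisely the scheme the paper imports from Proposition 6.10 of \cites{FLW-MAMS:15}. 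One minor inaccuracy that does not affect the argument: the nonvanishing $n=0$ cross term $\thetasharp\widehat{\kappa\beta}$ is already absorbed into $\widehat{\mathcal{D}}^\delta$ (the sums defining $I^{1,3}_\pm$ exclude $n=0$ by construction), and the $n=0$ pieces of $I^{5,7}_\pm$ are small because $\varphi_\pm=\mathcal{O}(\delta^{\exponent})$, not because of the cancellations \eqref{zeroth-terms}.
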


The details of the proof of Proposition \ref{solve4beta} are  similar to those in proof of 
Proposition 6.10 in \cites{FLW-MAMS:15}; equation \eqref{pperpeqn} is expressed as $(I+C^\delta(\mu))\widehat{\beta}(\xi;\mu,\delta) = \widehat{\mathcal{D}}^{-1} \widehat{P}_{\perp}\left\{\mu\widehat{\mathcal{M}}(\xi;\delta) +
\widehat{\mathcal{N}}(\xi;\delta)\right\}$ and the operator $C^\delta(\mu)$ is proved to be bounded on $L^{2,1}(\R)$ and of norm less than one for all  $0<\delta<\delta_0$, with $\delta_0$ sufficiently small. In bounding $C^\delta(\mu)$ on $L^{2,1}(\R)$, we require $H^1(\R)$ bounds for wave operators associated with the Dirac operator, $\mathcal{D}$. These can derived from corresponding results for scalar Schr\"odinger operators, under the assumptions implied by $\kappa(\zeta)$ being a domain wall function in the sense of Definition \ref{domain-wall-defn}.

\subsection{Final reduction to an equation for $\mu=\mu(\delta)$ and its solution}\label{final-reduction}

Substituting the solution $\widehat{\beta}(\xi;\mu,\delta)$ (Proposition \ref{solve4beta}) into   \eqref{pplleqn},  yields the equation $\mathcal{J}_+[\mu,\delta]=0$, relating $\mu$ and $\delta$.
%
Here, $\mathcal{J}_{+}[\mu;\delta] $ is given by:
\begin{align*}
\mathcal{J}_{+}[\mu;\delta] &\equiv
\mu\ \delta\inner{\widehat{\alpha}_{\star}(\cdot),\widehat{\mathcal{M}}(\cdot;\delta)}_{L^2(\mathbb{R})}
+ \delta\inner{\widehat{\alpha}_{\star}(\cdot),\widehat{\mathcal{N}}(\cdot;\delta)}_{L^2(\mathbb{R})} \\
&~~~ -\delta\inner{\widehat{\alpha}_{\star}(\cdot),\left(\widehat{\mathcal{D}}^{\delta}-\widehat{\mathcal{D}}\right)
\widehat{\beta}(\cdot;\mu,\delta)}_{L^2(\mathbb{R})}
-\delta\inner{\widehat{\alpha}_{\star}(\cdot),\widehat{\mathcal{L}}^{\delta}(\mu)
\widehat{\beta}(\cdot;\mu,\delta)}_{L^2(\mathbb{R})}\nn\\
&~~~+\delta^2 \mu\inner{\widehat{\alpha}_{\star}(\cdot),\widehat{\beta}(\cdot;\mu,\delta)}_{L^2(\mathbb{R})}. \nn
\end{align*}
The mapping $(\mu,\delta)\in\{|\mu|<M,\ \delta\in(0,\delta_0)\}\mapsto\mathcal{J}_{+}(\mu,\delta)$ is well defined and Lipschitz continuous with respect to $\mu$. In the following proposition, we note that $\mathcal{J}_{+}[\mu,\delta]$ can
be extended to a continuous function on the half-open interval $[0,\delta_0)$.

\begin{proposition}
 \label{proposition3}
Let $\delta_0>0$ be as above. Define
 \begin{equation*}
  \mathcal{J}[\mu,\delta] \equiv \left\{
\begin{array}{cl}
\mathcal{J}_{+}[\mu,\delta]  & ~~~\text{for}~~ 0<\delta<\delta_0,\\
\mu-\mu_0 & ~~~\text{for}~~ \delta=0\ ,
\end{array} \right.\
 \end{equation*} 
where
$ \mu_0 \equiv -\inner{\alpha_{\star},\mathcal{G}^{(2)}}_{L^2(\mathbb{R})} = E^{(2)}$, 
 and $\mathcal{G}^{(2)}$ is given in \eqref{ipG}; see also \eqref{G2def} and \eqref{solvability_cond_E2}.
 Fix $M = \max\{2\abs{\mu_0},1\}$. 
  Then,   $(\mu,\delta)\in\{|\mu|<M,\ 0\le\delta<\delta_0\}\mapsto\mathcal{J}(\mu,\delta)$
  is well-defined and continuous. 
 \end{proposition}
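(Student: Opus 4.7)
The plan is to observe that continuity on the open strip $\{|\mu|<M\}\times(0,\delta_0)$ is already essentially in hand: Proposition~\ref{solve4beta} gives a Lipschitz dependence of $\widehat\beta(\cdot;\mu,\delta)$ on $\mu$ with $L^{2,1}(\R)$ bound $\lesssim\delta^{-1}$, and each of $\widehat{\mathcal{M}}(\cdot;\delta)$, $\widehat{\mathcal{N}}(\cdot;\delta)$, $\widehat{\mathcal{D}}^\delta$, $\widehat{\mathcal{L}}^\delta(\mu)$ depends continuously on $\delta\in(0,\delta_0)$ via continuous dependence of the Floquet--Bloch data (Proposition~\ref{directional-bloch}), of $\psi^{(0)}$ and $\psi^{(1)}_p$ (Proposition~\ref{lemma:psi_bounds}), and of $\eta_{\rm far}=A\eta_{\rm near}+\mu B+C$ (Proposition~\ref{fixed-pt}). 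The entire content of the proposition therefore reduces to showing that, for every $|\mu|<M$,
\[
\mathcal{J}_+[\mu,\delta]\ \longrightarrow\ \mu-\mu_0\qquad (\delta\to 0^+),
\]
uniformly in $\mu$ on compact subsets of $\{|\mu|<M\}$.

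I would split $\mathcal{J}_+[\mu,\delta]$ into two groups: the two ``source'' terms $\mu\delta\langle\widehat\alpha_\star,\widehat{\mathcal{M}}\rangle$ and $\delta\langle\widehat\alpha_\star,\widehat{\mathcal{N}}\rangle$, and the three $\widehat\beta$--dependent terms. For the first source term, I would substitute $\lambda=\delta\xi$ and apply the Poisson-summation identity (Lemma~\ref{poisson_exp}) together with $\langle P_\pm,P_\pm\rangle_{L^2(\Omega)}=1$ and the expansion $p_\pm(\bx;\delta\xi)=P_\pm(\bx)+\varphi_\pm(\bx,\delta\xi)$ with $|\varphi_\pm|\lesssim\delta^\exponent$; the leading contribution is $\mu\cdot\|\alpha_\star\|_{L^2(\R)}^2=\mu$, and all other Poisson tails and $\varphi_\pm$--corrections are $o(1)$ because $\widehat{\alpha}_\star\in\mathcal{S}(\R)$ is Schwartz. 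The same rescaling applied to $\delta\langle\widehat\alpha_\star,\widehat{\mathcal{N}}\rangle$ is the crux of the proof: after unwinding the inner products defining $\widehat{\mathcal N}_1,\widehat{\mathcal N}_2$ and using \eqref{G2def}, the pairing with $\widehat\alpha_\star$ is precisely the Fourier-side realization of $\langle\alpha_\star,\mathcal{G}^{(2)}\rangle_{L^2(\R)}=-\mu_0$ from the formal multiple-scale solvability condition \eqref{solvability_cond_E2}. The remaining pieces of $\widehat{\mathcal N}$ involving $C(\cdot;\delta)$ from $\eta_{\rm far}$ and the $\mu B$ contribution in $\widehat{\mathcal M}_3$ are controlled by Proposition~\ref{fixed-pt}: $\|B\|_{H^2_\kparv},\|C\|_{H^2_\kparv}\lesssim\delta^{1/2}/\omega(\delta^\exponent)=o(1)$.

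For the three $\widehat\beta$--terms, the $L^{2,1}$ bound $\|\widehat\beta\|_{L^{2,1}}\lesssim\delta^{-1}$ together with the fact that $\widehat\alpha_\star$ is Schwartz is used as follows. The term $\delta^2\mu\langle\widehat\alpha_\star,\widehat\beta\rangle$ is $O(\delta^2\cdot\delta^{-1})=O(\delta)$. The term $\delta\langle\widehat\alpha_\star,(\widehat{\mathcal{D}}^\delta-\widehat{\mathcal{D}})\widehat\beta\rangle$ vanishes because $\widehat{\mathcal{D}}^\delta-\widehat{\mathcal{D}}$ is supported on $|\xi|>\delta^{\exponent-1}$ (from the $\chi$--truncation in \eqref{bl-dirac-op}) where $\widehat\alpha_\star$ decays faster than any polynomial in $\delta^{1-\exponent}$. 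For $\delta\langle\widehat\alpha_\star,\widehat{\mathcal{L}}^\delta(\mu)\widehat\beta\rangle$ I would treat the three pieces $\widehat{\mathcal L}^\delta_j$ separately: $\widehat{\mathcal L}^\delta_1$ carries an explicit factor of $\delta$ and has the $\xi^2$ weight absorbed by the band-limitation and Schwartz decay of $\widehat\alpha_\star$; $\widehat{\mathcal L}^\delta_2$ decomposes into the $I_\pm^j$ terms analyzed in Section~\ref{subsec:near_freq}, each of which is either $O(\delta^\exponent)$ via \eqref{deltapbdd} or has the $n=0$ Poisson-summation mode cancelled by \eqref{zeroth-terms} so that the remaining $|n|\geq 1$ tails vanish by regularity of $\kappa\eta_{\pm,{\rm near}}$; $\widehat{\mathcal L}^\delta_3$ involves $A\eta_{\rm near}$, estimated by Proposition~\ref{fixed-pt} as $O(\delta/\omega(\delta^\exponent))$, which with Lemma~\ref{beta_vs_eta} and the $L^{2,1}$ bound gives $o(1)$.

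The principal obstacle is the precise identification of the $\delta\to 0$ limit of $\delta\langle\widehat\alpha_\star,\widehat{\mathcal N}\rangle$ with the multiple-scale coefficient $\mu_0=E^{(2)}=-\langle\alpha_\star,\mathcal{G}^{(2)}\rangle_{L^2(\R)}$; this is the one computation where the asymptotic framework of Section~\ref{formal-multiscale} must be matched term-by-term against the Fourier-space formulation, using Plancherel together with Lemma~\ref{poisson_exp} to interchange the $\sum_n$ from Poisson summation with the $\xi$--integration against $\widehat\alpha_\star$. Everything else is bookkeeping that parallels the corresponding one-dimensional argument in \cites{FLW-MAMS:15} (Proposition~6.11 there), to which I would refer rather than repeat.
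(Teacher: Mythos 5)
Your proposal is correct and follows essentially the same route as the paper: you establish $\lim_{\delta\to0}\delta\langle\widehat{\alpha}_\star,\widehat{\mathcal{M}}\rangle_{L^2(\R)}=1$ and $\lim_{\delta\to0}\delta\langle\widehat{\alpha}_\star,\widehat{\mathcal{N}}\rangle_{L^2(\R)}=-\mu_0$ via the rescaling $\lambda=\delta\xi$, Lemma~\ref{poisson_exp}, the expansion $p_\pm=P_\pm+\varphi_\pm$ and the identification with the multiple-scale solvability data \eqref{G2def}--\eqref{solvability_cond_E2}, and you control the three $\widehat{\beta}$-dependent terms using the $L^{2,1}$ bound of Proposition~\ref{solve4beta}, the Schwartz decay of $\widehat{\alpha}_\star$, and the $A,B,C$ bounds of Proposition~\ref{fixed-pt}, exactly as in \eqref{sketch_limit1}--\eqref{sketch_bound3} (your bound on the $\widehat{\mathcal{D}}^\delta-\widehat{\mathcal{D}}$ term is in fact sharper than the paper's $O(\delta^{1-\exponent})$). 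The only cosmetic discrepancy is the citation of Proposition~6.11 rather than Proposition~6.16 and Appendix~H of \cites{FLW-MAMS:15}, which does not affect the argument.
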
 

{\it Proof:} The proof parallels that of Proposition 6.16 of \cites{FLW-MAMS:15}. 
 The key is to establish the following asymptotic relations, for all $0<\delta<\delta_0$ with $\delta_0$
sufficiently small: 
\begin{align}
 \lim_{\delta\rightarrow0}\delta\inner{\widehat{\alpha}_{\star}(\cdot),
\widehat{\mathcal{M}}(\cdot;\delta)}_{L^2(\mathbb{R})} &=1; \label{sketch_limit1} \\
 \lim_{\delta\rightarrow0}\delta\inner{\widehat{\alpha}_{\star}(\cdot),
\widehat{\mathcal{N}}(\cdot;\delta)}_{L^2(\mathbb{R})} &= -\mu_0; \label{sketch_limit2} 
\end{align}
and the following bounds hold for some constant $C_M$:
\begin{align}
 \abs{\delta\inner{\widehat{\alpha}_{\star}(\cdot), \left(\widehat{\mathcal{D}}^{\delta}-\widehat{\mathcal{D}}\right)
\widehat{\beta}(\cdot;\mu,\delta)}_{L^2(\mathbb{R})}} &\le C_M \delta^{1-\exponent}; \label{sketch_bound1} \\
 \abs{\delta\inner{\widehat{\alpha}_{\star}(\cdot), \widehat{\mathcal{L}}^{\delta}(\mu)
\widehat{\beta}(\cdot;\mu,\delta)}_{L^2(\mathbb{R})}} &\le C_M \delta^{\exponent}; \label{sketch_bound2} \\
 \abs{\delta^2 \mu\inner{\widehat{\alpha}_{\star}(\cdot), \widehat{\beta}(\cdot;\mu,\delta)}_{L^2(\mathbb{R})}} 
&\le C_M \delta. \label{sketch_bound3}
\end{align}
The detailed verification of  \eqref{sketch_limit1}-\eqref{sketch_bound3} follows 
the approach taken in Appendix H of \cites{FLW-MAMS:15}.  We make a few remarks on the calculations. Each of the expressions in \eqref{sketch_limit1}-\eqref{sketch_limit2}
 consists of inner products of the form:
 {\small
 \begin{equation}
\mathfrak{J}(\delta) \equiv \delta \left\langle \widehat{\alpha}_{\star}(\xi) , \chi\left(\abs{\xi}\leq\delta^{\exponent-1}\right)
\begin{pmatrix}
\inner{\Phi_{+}(\bx,\delta\xi), J(\bx,\delta\ktilde_2\cdot\bx)}_{L_\kparv^2} \\
\inner{\Phi_{-}(\bx,\delta\xi ), J(\bx,\delta\ktilde_2\cdot\bx)}_{L_\kparv^2}
\end{pmatrix} \right\rangle_{L^2(\R_\xi)} .\label{sample-ip}
\end{equation}}
Here, $J(\bx,\zeta)=e^{i\bK\cdot\bx}\mathcal{K}(\bx,\zeta)$, where 
$\bx\mapsto\mathcal{K}(\bx,\zeta)$ is $\Lambda_h-$ periodic and $\zeta\mapsto \mathcal{K}(\bx,\zeta)$ is smooth and rapidly decaying on $\R$.  Consider, for example, the expression within:
$\inner{\Phi_{+}(\bx,\delta\xi), J(\bx,\delta\ktilde_2\cdot\bx)}_{L_\kparv^2}$. This may be rewritten and expanded, using Lemma \ref{poisson_exp}:
\begin{align}
&\delta\ \int_\Sigma e^{-i\delta\xi\ktilde_2\cdot\bx}\ \overline{p_+(\bx,\delta\xi)}\  \mathcal{K}(\bx,\delta\ktilde_2\cdot\bx)\ d\bx\ =\   \sum_{n\in\mathbb{Z}} \int_\Omega e^{in \ktilde_2\cdot\bx}\ \overline{p_+(\bx,\delta\xi)}\ \widehat{\mathcal{K}}\left(\bx,\frac{n}{\delta}+\xi\right)\ d\bx
\nn\\
&\ = \int_\Omega  \overline{p_+(\bx,\delta\xi)}\ \widehat{\mathcal{K}}\left(\bx,\xi\right)\ d\bx\ +\  \sum_{|n|\ge1} \int_\Omega e^{in \ktilde_2\cdot\bx}\ \overline{p_+(\bx,\delta\xi )}\ \widehat{\mathcal{K}}\left(\bx,\frac{n}{\delta}+\xi\right)\ d\bx . \nn
\end{align}
Since in \eqref{sample-ip} $\xi$ is localized to the set where $|\xi|\le\delta^{\exponent-1},\ \exponent>0$, for $|n|\ge1$, we have  $n/\delta+\xi\approx n/\delta$ and the decay of $\zeta\mapsto\widehat{\mathcal{K}}(\bx,\zeta)$ can be used to show that, as  $\delta$ tends to zero, the sum over $|n|\ge1$ tends to zero in $L^2(d\xi)$. It can also be shown, using the localization of $\xi$, that the $n=0$ contribution to the sum, tends to 
 $\left\langle P_+(\bx), \widehat{\mathcal{K}}\left(\bx,\xi\right)\ \right\rangle_{L^2(\Omega)}$.
 Therefore, uniformly in $|\xi|\le\delta^{\exponent-1}$, we have
 \begin{equation}
 \lim_{\delta\to0}\ \delta\ \inner{\Phi_{+}(\bx,\delta\xi), J(\bx,\delta\ktilde_2\cdot\bx)}_{L_\kparv^2}\ =\ \left\langle P_+(\bx), \widehat{\mathcal{K}}\left(\bx,\xi\right)\ \right\rangle_{L^2(\Omega)} .
 \nn\end{equation}
 Therefore,
 \begin{align}
\lim_{\delta\to0}\mathfrak{J}(\delta)\ & =
 \int_\R\ d\xi\ \left[\
  \overline{\widehat{\alpha}_{\star,+}(\xi)}\  \left\langle P_+(\bx), \widehat{ \mathcal{K} }\left(\bx,\xi\right)\ \right\rangle_{L^2(\Omega)} \right. \nn \\
 &\qquad\qquad\qquad \left. +   \overline{\widehat{\alpha}_{\star,-}(\xi)}\ 
\left\langle P_-(\bx),  \widehat{ \mathcal{K} }\left(\bx,\xi\right)\ \right\rangle_{L^2(\Omega)}
\ \right] .  \label{help-limit}
 \end{align}
 
 The principle contribution to the limit in \eqref{sketch_limit1} comes from the $\widehat{\mathcal{M}}_1(\xi;\delta)$ term in \eqref{M_op}.
We apply \eqref{help-limit} with the choice
 $J=J_{\mathcal{M}}(\bx,\zeta)= \psi^{(0)}(\bx,\zeta)$ and 
 \[ \mathcal{K}_{\mathcal{M}}(\bx,\zeta)\equiv e^{-i\bK\cdot\bx}J_{\mathcal{M}}(\bx,\zeta)=\alpha_{\star,+}(\zeta)P_+(\bx)+\alpha_{\star,-}(\zeta)P_-(\bx)\ .\]
 
  The principle contribution to the limit in \eqref{sketch_limit2} comes from the $\widehat{\mathcal{N}}_1(\xi;\delta)$ and $\widehat{\mathcal{N}}_2(\xi;\delta)$ terms in
  \eqref{N_op}. We
   apply  \eqref{help-limit} with the choice  
  \[J=J_{\mathcal{N}}(\bx,\zeta)= \left(2\ktilde_2\cdot\nabla_\bx\  \D_\zeta-\kappa(\zeta)W(\bx)\right)\psi^{(1)}_p(\bx,\zeta)+|\ktilde_2|^2\ \D_\zeta^2\psi^{(0)}(\bx,\zeta) \]
   and $\mathcal{K}_{\mathcal{N}}(\bx,\zeta)\equiv 
  e^{-i\bK\cdot\bx}J_{\mathcal{N}}(\bx,\zeta)$. The detailed computations are omitted since they
  are similar to those in  \cites{FLW-MAMS:15} .

By \eqref{sketch_limit1}-\eqref{sketch_bound3}, it follows that 
 $\mathcal{J}_+(\mu;\delta)=\mu-\mu_0+o(1)$ as $\delta\rightarrow0$ uniformly for
$|\mu|\leq M $. 
Therefore,  $\mathcal{J}[\mu,\delta]$ is well-defined on $ \{(\mu,\delta)\ :\ |\mu|<M,\ 0\leq\delta<\delta_0\}$, continuous at $\delta=0$ and Proposition \ref{proposition3} is  proved.

Summarizing, we have that given $\widehat{\beta}(\cdot,\mu,\delta)$, constructed in Proposition \ref{solve4beta}, to complete our construction of a solution to  \eqref{pplleqn}-\eqref{pperpeqn}, it suffices to solve \eqref{pplleqn} for
$\mu=\mu(\delta)$. Furthermore, we have just shown that   \eqref{pplleqn} holds if and only if $\mu=\mu(\delta)$ is a solution of $\mathcal{J}[\mu;\delta]=0$.
From Proposition \ref{proposition3} it follows that  $\mathcal{J}[\mu;\delta]=0$ has a transverse zero, $\mu=\mu(\delta)$, for all $\delta$ and  sufficiently small. The details are presented in Proposition 6.17
 of \cites{FLW-MAMS:15}:
\begin{proposition}\label{solveJeq0}
There exists $\delta_0>0$, and a function $\delta\mapsto\mu(\delta)$, defined for $0\le\delta<\delta_0$ such that:
 $|\mu(\delta)|\le M$, $\lim_{\delta\to0}\mu(\delta)=\mu(0)=\mu_0 \equiv E^{(2)}$ and 
 $\mathcal{J}[\mu(\delta),\delta]=0$ for all $0\le\delta<\delta_0$.
\end{proposition}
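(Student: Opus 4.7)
The plan is to view $\mathcal{J}[\mu,\delta]=0$ as a small perturbation of the trivial scalar equation $\mu-\mu_0=0$, and to solve it either by the contraction mapping principle or by a simple degree/intermediate value argument. Proposition \ref{proposition3} already gives us the key structural fact: $\mathcal{J}[\mu,0]=\mu-\mu_0$ with $|\mu_0|\le M/2$ by the choice $M=\max\{2|\mu_0|,1\}$, and $\mathcal{J}[\mu,\delta]$ is continuous at $\delta=0$ uniformly for $|\mu|\le M$.

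First, I would rewrite the equation $\mathcal{J}_+[\mu,\delta]=0$ in fixed-point form. By the expansion
\[
\mathcal{J}_+[\mu,\delta]\ =\ \mu\cdot\delta\langle\widehat{\alpha}_\star,\widehat{\mathcal{M}}\rangle+\delta\langle\widehat{\alpha}_\star,\widehat{\mathcal{N}}\rangle+\mathfrak{R}[\mu,\delta],
\]
where $\mathfrak{R}[\mu,\delta]$ collects the three remaining terms, and by \eqref{sketch_limit1}--\eqref{sketch_bound3}, one has $\delta\langle\widehat{\alpha}_\star,\widehat{\mathcal{M}}\rangle = 1+o(1)$, $\delta\langle\widehat{\alpha}_\star,\widehat{\mathcal{N}}\rangle = -\mu_0+o(1)$, and $|\mathfrak{R}[\mu,\delta]|\le C_M(\delta^{1-\exponent}+\delta^\exponent+\delta)=o(1)$ uniformly for $|\mu|\le M$ as $\delta\downarrow 0$. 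Thus $\mathcal{J}_+[\mu,\delta]=\mu\,(1+a(\delta))+b(\delta)+\rho(\mu,\delta)$, with $a(\delta),b(\delta)+\mu_0\to 0$ and $\rho(\mu,\delta)\to 0$ as $\delta\to 0$, and $\mu\mapsto\rho(\mu,\delta)$ Lipschitz in $\mu$ with Lipschitz constant $\to 0$ as $\delta\to 0$ (this is where the Lipschitz dependence of $\widehat{\beta}(\cdot;\mu,\delta)$ on $\mu$ from Proposition \ref{solve4beta}, together with the bounds \eqref{sketch_bound1}--\eqref{sketch_bound3}, enters crucially).

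Dividing by $1+a(\delta)$ (invertible for $\delta<\delta_0$ with $\delta_0$ sufficiently small) yields the equivalent fixed-point problem $\mu=\mathcal{T}^\delta(\mu)$, where
\[
\mathcal{T}^\delta(\mu)\ \equiv\ -\frac{b(\delta)+\rho(\mu,\delta)}{1+a(\delta)}.
\]
I would verify two properties, valid for all sufficiently small $\delta>0$: (i) $\mathcal{T}^\delta$ maps the closed ball $\overline{B_M(0)}\subset\R$ into itself, since $\mathcal{T}^\delta(\mu)\to\mu_0$ as $\delta\to 0$ uniformly on $|\mu|\le M$ and $|\mu_0|\le M/2$; (ii) $\mathcal{T}^\delta$ is a strict contraction on $\overline{B_M(0)}$, with Lipschitz constant bounded by ${\rm Lip}_\mu\rho(\cdot,\delta)/(1-|a(\delta)|)\to 0$. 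By the Banach contraction principle, there exists a unique fixed point $\mu(\delta)\in\overline{B_M(0)}$ for each $\delta\in(0,\delta_0)$, and $|\mathcal{T}^\delta(\mu_0)-\mu_0|\to 0$ together with the vanishing Lipschitz constant yields $\mu(\delta)\to\mu_0$ as $\delta\downarrow 0$. Setting $\mu(0)=\mu_0$ gives the desired branch.

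The main obstacle is a bookkeeping one rather than a conceptual one: one must confirm that the Lipschitz constant of $\mu\mapsto\rho(\mu,\delta)$ really does tend to zero as $\delta\to 0$, which in turn reduces to checking that the Lipschitz-in-$\mu$ seminorm of $\widehat{\beta}(\cdot;\mu,\delta)$ in $L^{2,1}(\R)$ (provided by Proposition \ref{solve4beta}) is absorbed by the small prefactors $\delta^{1-\exponent}$, $\delta^\exponent$, $\delta$, and $\delta\cdot\delta$ appearing in the four $\mu$-dependent terms of $\mathcal{J}_+[\mu,\delta]$. Since the bounds \eqref{sketch_bound1}--\eqref{sketch_bound3} were stated uniformly for $|\mu|\le M$ and the underlying estimates are linear in $\widehat{\beta}$, the same estimates apply to differences $\widehat{\beta}(\cdot;\mu_1,\delta)-\widehat{\beta}(\cdot;\mu_2,\delta)$, yielding the required smallness. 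This is carried out in detail in Proposition 6.17 of \cites{FLW-MAMS:15}, and the argument transcribes verbatim to the present setting.
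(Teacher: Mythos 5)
Your overall strategy is sound and rests on exactly the ingredients the paper uses (the uniform expansion $\mathcal{J}_{+}[\mu,\delta]=\mu-\mu_0+o(1)$ for $|\mu|\le M$, i.e.\ \eqref{sketch_limit1}--\eqref{sketch_bound3} and Proposition \ref{proposition3}), but the contraction step as you close it has a genuine gap: the smallness of the Lipschitz-in-$\mu$ constant of your remainder $\rho(\cdot,\delta)$ is asserted, not established. The bounds \eqref{sketch_bound1}--\eqref{sketch_bound3} are sup bounds, uniform in $|\mu|\le M$; to convert them into difference bounds you need two further quantitative inputs. First, a size for the Lipschitz constant of $\mu\mapsto\widehat{\beta}(\cdot;\mu,\delta)$ in $L^{2,1}(\R)$: Proposition \ref{solve4beta} as stated only says the map is Lipschitz in $\mu$ and gives $\|\widehat{\beta}\|_{L^{2,1}}\lesssim\delta^{-1}$; if the Lipschitz constant grew faster than $\delta^{-1}$, the prefactors $\delta^{1-\nu}$, $\delta^{\nu}$, $\delta$ would not absorb it and $\mathcal{T}^{\delta}$ need not contract. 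Second, your claim that ``the underlying estimates are linear in $\widehat{\beta}$'' is not accurate for the $\widehat{\mathcal{L}}^{\delta}(\mu)$ term: by \eqref{L_op} and \eqref{eta_far_affine}, $\widehat{\mathcal{L}}^{\delta}_3(\mu)$ depends on $\mu$ through $[A\eta_{\rm near}](\cdot;\mu,\delta)$, so differencing in $\mu$ produces a second contribution which must be controlled by the Lipschitz-in-$\mu$ estimate of Proposition \ref{fixed-pt}, part 3. Both ingredients are available in the framework (and are what the detailed argument in \cites{FLW-MAMS:15} tracks), but in your write-up they are the load-bearing step and are skipped.

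Note also that the proposition claims only the existence of \emph{some} branch $\mu(\delta)$ with $\mu(\delta)\to\mu_0$, not uniqueness, so the cheaper route you mention in passing already suffices and needs none of the quantitative Lipschitz input; it is the ``transverse zero'' argument the paper draws from Proposition \ref{proposition3}. Indeed, set $r_\delta\equiv 2\sup_{|\mu|\le M}\left|\mathcal{J}_{+}[\mu,\delta]-(\mu-\mu_0)\right|$, which tends to $0$ as $\delta\downarrow0$ by \eqref{sketch_limit1}--\eqref{sketch_bound3}; since $|\mu_0|\le M/2$, for $\delta$ small the points $\mu_0\pm r_\delta$ lie in $\{|\mu|<M\}$ and $\mathcal{J}_{+}[\mu_0-r_\delta,\delta]<0<\mathcal{J}_{+}[\mu_0+r_\delta,\delta]$, so continuity of $\mu\mapsto\mathcal{J}_{+}[\mu,\delta]$ (the qualitative Lipschitz continuity already provided) and the intermediate value theorem give a zero $\mu(\delta)\in(\mu_0-r_\delta,\mu_0+r_\delta)$, whence $\mu(\delta)\to\mu_0$. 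If you want the uniqueness and stability that the Banach fixed point delivers, first supply the quantitative Lipschitz constants described above.
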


\nit We have constructed  a solution pair $\left(\widehat{\beta}^\delta(\xi),\mu(\delta)\right)$, with 
$\widehat{\beta}^\delta\in L^{2,1}(\R;d\xi)$,  of the band-limited Dirac system \eqref{compacterroreqn}. Now apply  Proposition \ref{needtoshow} and the proof of  Theorem \ref{thm-edgestate} is complete.

\section{Edge states for weak potentials and the \nofold condition for the zigzag slice} \label{zz-gap}
 
In Section \ref{thm-edge-state} we fixed an arbitrary edge, $\vtilde_1=a_1\bv_1+a_2\bv_2$ and proved the existence of  topologically protected $\vtilde_1-$ edge states under the spectral \nofold condition. In this section, we consider the special case of the zigzag edge, corresponding to the choice: $\vtilde_1=\bv_1$. We prove that the spectral \nofold condition holds in the weak potential regime, provided $\eps V_{1,1}>0$; this implies the existence of a topologically protected family zigzag edge states. 
%
%
%

We proceed in this section to prove the following:
\begin{enumerate}
\item Theorem \ref{SGC!}: The operator $-\Delta+\eps V$ satisfies the \nofold condition along the zigzag ($\bk_2$) slice at the Dirac point $(\bK,E^\eps_\star)$ ; see Definition \ref{SGC}.
\item  Theorem \ref{delta-gap}: $-\Delta+\eps V(\bx)+\delta W(\bx)$ acting in  $L^2(\Sigma_{\kpar=\bK\cdot\bv_1}$) has a spectral gap
about the energy $E=E_\star^\eps$.
\item Theorem \ref{NO-directional-gap!}: If $\eps V_{1,1}<0$, then the spectral \nofold condition for the zigzag slice does not hold.
\item Theorem \ref{Hepsdelta-edgestates}: For $0<|\delta|\ll\eps^2$ and $\eps$ sufficiently small, the zigzag edge state eigenvalue problem for $H^{(\eps,\delta)} =-\Delta+\eps V(\bx)+\delta\kappa(\delta\bk_2\cdot\bx) W(\bx)$ has topologically protected edge states. 
\end{enumerate}
 
\nit  We begin by stating our detailed
  {assumptions on $V(\bx)$ and $W(\bx)$.}\ There exists $\bx_0\in\R^2$ such that $\tilde{V}(\bx)=V(\bx-\bx_0)$
and $\tilde{W}(\bx)=W(\bx-\bx_0)$ satisfy the following:
\begin{equation}
\text{\rm{\bf Assumptions (V)}}
\label{V-assumptions}\end{equation}
\begin{enumerate}
\item[(V1)]  $\Lambda_h$- periodicity:\ \ $\tilde{V}(\bx+\bv)=  \tilde{V}(\bx)$ for all $\bv\in\Lambda_h$.
\item[(V2)]  Inversion symmetry:\  \ $\tilde{V}(\bx)=\tilde{V}(-\bx)$.  
\item[(V3)]  $2\pi/3$-rotational invariance:\ $\tilde{V}(R^*\bx)=\tilde{V}(\bx)$.
\item[(V4)]  Positivity of Fourier coefficient of $\eps V$, $\eps V_{1,1}$: $\eps \tilde{V}_{1,1}>0$, \\
where 
$\tilde{V}_{1,1}= \frac{1}{|\Omega|}\int_\Omega e^{-i(\bk_1+\bk_2)\cdot\by}\tilde{V}(\by) d\by$;
see \eqref{V11eq0} and \eqref{Omega-fourier}.
\end{enumerate}

\begin{equation}
\text{\rm{\bf Assumptions (W)}}
\label{W-assumptions}
\end{equation}
\begin{enumerate}
\item[(W1)~]   $\Lambda_h$- periodicity:\ \ $\tilde{W}(\bx+\bv)=  \tilde{W}(\bx)$ for all $\bv\in\Lambda_h$.
\item[(W2)~]  Anti-symmetry:\ \ $\tilde{W}(-\bx)= -\tilde{W}(\bx)$.
\item[(W3$^*$)]  Uniform nondegeneracy of $\tilde{W}$:\ \ Let $\Phi^\eps_j(\bx),\ j=1,2$ denote the $L^2_{\bK,\tau}$, respectively, $L^2_{\bK,\overline\tau}$, modes of the degenerate $L^2_\bK-$ eigenspace of $H^{(\eps,0)}=-\Delta+\eps V$. Then, there exists $\theta_0>0$, independent of $\eps$, such that  for all $\eps$ sufficiently small:
\begin{equation}
\left|\ \vartheta^\eps_\sharp\ \right|\ \equiv \left|\inner{\Phi_1^\eps,\tilde{W}\Phi_1^\eps}_{L^2_\bK} \right|\ge\theta_0>0\ .\label{uniform-nondegen}
\end{equation}
\end{enumerate}

\nit {N.B.\ \ Consistent with our earlier convention, in the following discussion, we shall drop the ``tildes'' on both $V$ and $W$. It will be  understood
that we have chosen coordinates with $\bx_0=0$.}

\begin{remark}\label{on-theta-sharp}
We claim that (W3*) (see \eqref{uniform-nondegen}) uniform non-degeneracy of $W$ is equivalent to the assumption:
\begin{equation}
\tag{W3*}
{W}_{0,1} + {W}_{1,0} - {W}_{1,1} \ne0 ,
\label{uniform-nondegen-W}
\end{equation}
where $\{W_\bfm\}_{ \bfm\in\Z^2}$ denote the Fourier coefficients of $W(\bx)$.
 To see this, note by
Proposition 3.1 of \cites{FW:12}, that for sufficiently small $\eps$, 
\[\Phi^\eps_1(\bx) = \frac{1}{\sqrt{3|\Omega|}} e^{i\bK\cdot\bx} \left[1+\overline{\tau}e^{i\bk_2\cdot\bx} + \tau e^{-i\bk_1\cdot\bx}\right] +\mathcal{O}(\eps); \]
see also \eqref{p_sigma}. 
Evaluation of $\vartheta_\sharp$ gives
\begin{align*}
\thetasharp^\eps &= \frac{1}{3} \int_\Omega \abs{1+\overline{\tau}e^{i\bk_2\cdot\bx} +\tau e^{-i\bk_1\cdot\bx}}^2 W(\bx)\ d\bx+\mathcal{O}(\eps) \nn\\
&= \frac{1}{3} \left[ ({W}_{0,1} + {W}_{1,0})\ (\tau - \overline{\tau}) + {W}_{1,1}(\tau^2 - \overline{\tau}^2) \right] + \mathcal{O}(\eps) \nn \\
&=i\frac{\sqrt3}3\ \left[{W}_{0,1} + {W}_{1,0} - {W}_{1,1} \right] + \mathcal{O}(\eps) ,
\end{align*}
which is nonzero if \eqref{uniform-nondegen-W} holds and $\eps$ is sufficiently small.
%
\end{remark}

Let $(\bK,E_\star^\eps)$ denote a Dirac point of $H^{(\eps,0)}=-\Delta + \eps V(\bx)$,
guaranteed to exist by Theorem \ref{diracpt-small-thm} for all $0<|\eps|<\eps_0$ and assume that $V(\bx)$ and $W(\bx)$ satisfy Assumptions (V), (W); see \eqref{V-assumptions}-\eqref{W-assumptions}.

\nit In our next result, we verify the spectral \nofold condition for the zigzag edge.  This is central to applying Theorem \ref{thm-edgestate} to prove our  result (Theorem \ref{Hepsdelta-edgestates}) on the existence of a family of  zigzag edge states. 

 \begin{theorem}\label{SGC!}\ \ 
 There exists a positive constant, $\eps_2\le\eps_0$, such that for any  $0<|\eps|<\eps_2$,  $H^{(\eps,0)}=-\Delta+\eps V$ satisfies the spectral \nofold condition at quasi-momentum $\bK$ along the zigzag slice; see Definition \ref{SGC}.\\  By Assumptions (V), $E_-(\lambda)\le 
E_+(\lambda)\le E_b(\lambda),\ b\ge3$. For any $\mathfrak{a}$ sufficiently small:
\begin{align*}
&b=\pm:\ \ \mathfrak{a} \le |\lambda|\le\frac12\ \implies\ \Big|\ E^{\eps,0}_b(\lambda) - E^\eps_\star\ \Big|\ \ge\ \frac{q^4}2\ |V_{1,1}\ \eps|\ \lambda^2\ge c_1\  |\eps|\ \mathfrak{a}^2 ,
\\
&b\ge3:\  \ |\lambda|\le1/2 \ \implies \ \Big| E^{\eps,0}_b(\lambda)-E^{\eps}_\star \Big|\ \ge\  c_2\ |\eps|\ (1+|b|)\ .
\end{align*}
\end{theorem}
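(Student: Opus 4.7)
\ \ The argument exploits the smallness of $\eps$. At $\eps=0$ the Floquet-Bloch eigenvalues of $-\Delta$ on $L^2_{\bK+\lambda\bk_2}$ are $\{|\bK+\lambda\bk_2+\bfm\vec\bk|^2\}_{\bfm\in\Z^2}$; direct computation shows exactly three of them, indexed by $\bfm_1=(0,0)$, $\bfm_2=(0,1)$, $\bfm_3=(-1,0)$, converge to $|\bK|^2$ as $\lambda\to 0$, taking the values $|\bK|^2+q^2(\lambda^2-\lambda)$, $|\bK|^2+q^2(\lambda^2+\lambda)$, $|\bK|^2+q^2\lambda^2$, while the remaining Fourier modes stay at distance $\ge g_0>0$ from $|\bK|^2$ uniformly for $|\lambda|\le 1/2$. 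Decomposing $L^2_{\bK+\lambda\bk_2}=\mathcal{X}_{\rm near}\oplus\mathcal{X}_{\rm far}$ with $\mathcal{X}_{\rm near}$ three-dimensional and applying the Schur complement reduces the eigenvalue equation $(-\Delta+\eps V)\psi=E\psi$, for $E$ in a fixed neighborhood of $|\bK|^2$, to $\det M_{\rm eff}(\lambda,E;\eps)=0$, where $M_{\rm eff}=M(\lambda)-\eps^2 V_{\rm nf}(H_{\rm ff}(\lambda)-EI)^{-1}V_{\rm fn}$ has a uniformly $O(\eps^2)$ correction. The leading matrix $M(\lambda)$ has the above diagonal entries (augmented by $\eps V_{0,0}$) and all off-diagonals equal to $\eps V_{1,1}$, by the identity $V_{\bfm_i-\bfm_j}=V_{1,1}$ for $i\ne j$ (from the $\widetilde R$-orbit of $(1,1)$ and $V_\bfm=V_{-\bfm}$).

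\textbf{Key explicit identity.}\ \ Write $A=\eps V_{1,1}$. At $\lambda=0$, $M(0)-(|\bK|^2+\eps V_{0,0})I$ has eigenvalues $\{-A,-A,2A\}$, matching $E_\star^\eps=|\bK|^2+\eps V_{0,0}-A+O(\eps^2)$ from Theorem~\ref{diracpt-small-thm}. Expanding the $3\times 3$ determinant at the Dirac energy (first row, then simplifying using $\alpha\beta=q^4\lambda^2(\lambda^2-1)$ and $\alpha+\beta+\gamma=3\lambda^2 q^2$ with $\alpha=q^2(\lambda^2-\lambda)$, $\beta=q^2(\lambda^2+\lambda)$, $\gamma=q^2\lambda^2$) gives the central identity
\begin{equation*}
\det\bigl(M(\lambda)-E_\star^\eps I\bigr)\ =\ -\lambda^2 q^4\bigl[A(1-3\lambda^2)+q^2\lambda^2(1-\lambda^2)\bigr]+O(\eps^2).
\end{equation*}
For $A>0$ and $|\lambda|\le 1/2$, both bracketed terms are strictly positive ($1-3\lambda^2\ge 1/4$, $1-\lambda^2\ge 3/4$), so this determinant is bounded away from zero for $\lambda\ne 0$, qualitatively ruling out folding at $E_\star^\eps$ along the zigzag slice.

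\textbf{Quantitative bounds, far bands, and main obstacle.}\ \ Label the three eigenvalues of $M(\lambda)-E_\star^\eps I$ as $y_-(\lambda)<0<y_+(\lambda)<y_{\rm top}(\lambda)$ by continuity from $\{0,0,3A\}$ at $\lambda=0$, so $E_\pm-E_\star^\eps=y_\pm+O(\eps^2)$. Fix a small $\lambda_0>0$ independent of $\eps$. For $|\lambda|\le\lambda_0$, the conical expansion of Proposition~\ref{directional-bloch} gives $|E_\pm-E_\star^\eps|\ge\tfrac12|\lamsharp^\eps|\,q|\lambda|$, which dominates $(q^4/2)|V_{1,1}\eps|\lambda^2$ once $\eps$ is sufficiently small (since $|\lambda|\le 1/2$). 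For $\lambda_0\le|\lambda|\le 1/2$, the product identity yields $\prod_i|y_i|\ge (3/4)\lambda^4 q^6$, while Gershgorin gives $\max_i|y_i|\le 3A+3q^2/4\le q^2$ for $\eps$ small, hence $\min_i|y_i|\ge (3/4)\lambda^4 q^2\ge (3\lambda_0^2/4)q^2\lambda^2$, which exceeds $(q^4/2)|V_{1,1}\eps|\lambda^2$ for $\eps$ small; applied to each of $|y_-|$ and $y_+$ (noting the Rayleigh quotient against $(1,1,1)/\sqrt 3$ gives $y_{\rm top}\ge \lambda^2 q^2+3A$, keeping $y_{\rm top}$ separated from zero) this yields the stated bound. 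The far-band bound follows because the Fourier modes $\bfm\notin\{\bfm_1,\bfm_2,\bfm_3\}$ satisfy $|\bK+\lambda\bk_2+\bfm\vec\bk|^2\ge |\bK|^2+g_0$ uniformly on $|\lambda|\le 1/2$, and the 2D Weyl law yields $E_b\gtrsim b$ for $b$ large; the $O(\eps)$ perturbation preserves these estimates. The main technical obstacle is the careful matching of the Dirac-cone regime ($|\lambda|\le\lambda_0$) with the three-matrix regime to extract the precise constant $q^4/2$, together with the uniform-in-$\lambda$ boundedness of the Schur correction; the rest is algebraic.
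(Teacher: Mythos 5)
Your overall route is the paper's own: a Lyapunov--Schmidt/Schur reduction onto the three nearly degenerate Fourier modes $\bfm\in\{(0,0),(0,1),(-1,0)\}$ and an explicit $3\times3$ determinant (your matrix is the paper's $M^{approx}$ written in the plane-wave rather than the symmetry-adapted basis, and your identity $-\det=q^4\lambda^2[A(1-3\lambda^2)+q^2\lambda^2(1-\lambda^2)]$ agrees with $\pi(\eps,0,\lambda)$ after discarding higher-weight terms). However, there are two genuine gaps. First, the error control ``$+O(\eps^2)$'' is too crude in exactly the regime that matters. For $|\lambda|\lesssim\sqrt\eps$ the main term is itself of size $\eps |V_{1,1}|q^4\lambda^2+q^6\lambda^4\lesssim\eps^2$, and is $o(\eps^2)$ once $|\lambda|\ll\sqrt\eps$, so an unstructured $O(\eps^2)$ remainder can swamp it; note also that the determinant vanishes identically at $\lambda=0$ (since $E^\eps_\star$ is an eigenvalue of $H^{(\eps,0,0)}$), which a bare $O(\eps^2)$ error does not reflect. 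To extract the stated bound $\ge\frac{q^4}{2}|V_{1,1}\eps|\lambda^2$ one needs the remainder to be $o\big((\lambda^2+\eps)\lambda^2\big)$, and this is precisely what the paper's bookkeeping provides (the observation that $\det M(\eps,0,0,0)=0$ rules out pure-$\eps$ terms, together with the weight analysis of Propositions \ref{ld2}, \ref{neglect}, \ref{Det-approx}); your proposal offers no substitute, and also asserts without proof that the Schur-complement correction is uniformly $O(\eps^2)$ entrywise over the whole slice.

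Second, the matching of regimes at a fixed $\lambda_0$ fails. The claimed bound $|E_\pm-E^\eps_\star|\ge\frac12|\lamsharp^\eps|\,q|\lambda|$ for all $|\lambda|\le\lambda_0$ with $\lambda_0$ independent of $\eps$ is false: the constants in the conical expansion of Proposition \ref{directional-bloch} (the radius $\zeta_0$ and the bound on $E_{2,\pm}$) degenerate as $\eps\to0$, because the third band lies only $O(\eps)$ above $E^\eps_\star$ and repels the middle band through $O(1)$ matrix elements. Indeed your own $3\times3$ model shows that for $\eps\ll|\lambda|\le\lambda_0$ the middle band satisfies $E_+-E^\eps_\star\approx q^2\lambda^2+\eps V_{1,1}$, which is far smaller than $\frac12|\lamsharp^\eps|q|\lambda|$; the linear cone regime extends only over $|\lambda|\lesssim\eps$. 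Thus the window $\eps\lesssim|\lambda|\lesssim\sqrt\eps$ (indeed everything below $\lambda_0$ except $|\lambda|\lesssim\eps$) is not covered by a valid argument. The correct split, as in the paper, is at $|\lambda|\sim\sqrt\eps$: your Gershgorin/product argument does work for $|\lambda|\ge C_\flat\sqrt\eps$ and is a fine alternative to the paper's Step 1, but for $|\lambda|\le C_\flat\sqrt\eps$ you must return to the determinant with the refined $o\big((\lambda^2+\eps)\lambda^2\big)$ error estimate and the Lipschitz-in-$\mu$ comparison of the paper's Step 2, not the cone bound.
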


\nit Theorem \ref{SGC!} controls the zigzag slice of the band structure at $\bK$, globally and outside a neighborhood of $(\bK,E^\eps_\star)$. Since a small perturbation of $V$ which breaks inversion symmetry opens a gap, locally about $\bK$ (see \cites{FW:12}), it can be shown that $H^{(\eps,\delta)}$ has a  $L^2_{\kpar=2\pi/3}-$ spectral gap about $E=E_\star^\eps$:
 \begin{theorem}\label{delta-gap}
 Assume $V(\bx)$ and $W(\bx)$ satisfy Assumptions (V) and  (W).
 Let $\eps_2$ be as in Theorem \ref{SGC!}. Then, there exists $c_\flat>0$ such that for all 
  $0<|\eps|<\eps_2$ and $0<\delta\le c_\flat\ \eps^2$, 
 the operator $-\Delta+\eps V(\bx)+\delta W(\bx)$ has a non-trivial $L^2_{\kpar=\bK\cdot\bv_1=2\pi/3}-$ spectral  gap about the energy $E=E^\eps_\star$. 
 \end{theorem}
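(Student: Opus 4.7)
The plan is to exploit the Floquet--Bloch decomposition of $L^2_{\kpar=2\pi/3}(\Sigma)$ furnished by Theorem \ref{fourier-edge}: a real $E$ lies in $\text{spec}_{L^2_{\kpar=2\pi/3}}(H^{(\eps,\delta)})$ if and only if $E=E_b^{\eps,\delta}(\lambda)$ for some $b\ge1$ and some $\lambda\in[-1/2,1/2]$, where $\lambda\mapsto E_b^{\eps,\delta}(\lambda)$ are the $\bK$-pseudo-periodic Floquet--Bloch eigenvalues of $H^{(\eps,\delta)}(\bK+\lambda\bk_2)$ acting on $L^2(\R^2/\Lambda_h)$. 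The self-adjoint min-max principle immediately yields the universal bound $|E_b^{\eps,\delta}(\lambda)-E_b^{\eps,0}(\lambda)|\le \delta\|W\|_{L^\infty}$, uniformly in $b$ and $\lambda$, so the task reduces to showing that every unperturbed band $E_b^{\eps,0}(\lambda)$ stays away from $E_\star^\eps$ except in a small inner window around $\lambda=0$ for the two Dirac-touching bands $b\in\{b_\star,b_\star+1\}$ (which under (V4) are $b=1,2$), and to verifying, inside that window, that the inversion-breaking perturbation $\delta W$ opens a robust gap.

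Next I would introduce a threshold $\mathfrak{a}=\sqrt{C_0\delta/|\eps|}$ with $C_0$ a large constant (to be fixed in terms of $\|W\|_{L^\infty}/c_1$), and split the bands into three regimes. For higher bands $b\notin\{b_\star,b_\star+1\}$, Theorem \ref{SGC!} gives $|E_b^{\eps,0}(\lambda)-E_\star^\eps|\ge c_2|\eps|(1+|b|)$ uniformly in $\lambda$, which dominates the $O(\delta)$ shift since $\delta\le c_\flat\eps^2\ll|\eps|$. For the two Dirac bands in the outer region $\mathfrak{a}\le|\lambda|\le 1/2$, Theorem \ref{SGC!} gives $|E_\pm^{\eps,0}(\lambda)-E_\star^\eps|\ge c_1|\eps|\mathfrak{a}^2=c_1 C_0\delta$, so that $|E_\pm^{\eps,\delta}(\lambda)-E_\star^\eps|\ge c\delta$ throughout the outer region.

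The heart of the argument is the inner region $|\lambda|\le\mathfrak{a}$. Here I would carry out a Lyapunov--Schmidt reduction of the periodic eigenvalue problem $H^{(\eps,\delta)}(\bK+\lambda\bk_2)p=Ep$ onto the two-dimensional subspace $V=\mathrm{span}\{p_1^\eps,p_2^\eps\}$, $p_j^\eps(\bx)=e^{-i\bK\cdot\bx}\Phi_j^\eps(\bx)$, using that the resolvent $(QH^{(\eps,0)}(\bK)Q-E_\star^\eps)^{-1}$ on $V^\perp$ has operator norm $O(1/|\eps|)$ (the nearest remaining $L^2_{\bK,1}$-eigenvalue is $\widetilde{E}_\star^\eps=E_\star^\eps+3\eps V_{1,1}+O(\eps^2)$ by Theorem \ref{diracpt-small-thm}), while the off-block coupling from the expansion $H^{(\eps,0)}(\bK+\lambda\bk_2)=H^{(\eps,0)}(\bK)-2i\lambda\bk_2\cdot(\nabla+i\bK)+|\bk_2|^2\lambda^2$ and from $\delta W$ is of size $O(|\lambda|)+O(\delta)$. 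By the inner-product identities in Propositions \ref{inner-prods-sharp} and \ref{inner-prods-W} and the uniform non-degeneracy (W3$^*$) $|\vartheta^\eps_\sharp|\ge\theta_0>0$, the leading part of the effective $2\times 2$ matrix takes the form
\[
M_{\rm eff}(E,\lambda,\delta)-E_\star^\eps I=\begin{pmatrix}\delta\vartheta^\eps_\sharp & \overline{\lambda^\eps_\sharp}\mathfrak{z}_2\lambda\\ \lambda^\eps_\sharp\overline{\mathfrak{z}_2}\lambda & -\delta\vartheta^\eps_\sharp\end{pmatrix}+R(E,\lambda,\delta),
\]
whose leading eigenvalues $\pm\sqrt{|\lambda^\eps_\sharp|^2|\bk_2|^2\lambda^2+\delta^2(\vartheta^\eps_\sharp)^2}$ straddle $0$ with gap at least $2\delta\theta_0$ at $\lambda=0$ and only widening with $|\lambda|$.

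The main technical obstacle will be uniformly controlling the Schur-complement remainder $R$, whose entries are formally of size $O((|\lambda|+\delta)^2/|\eps|)$. The dangerous contribution is the quadratic-in-$\lambda$ Schur renormalization of the diagonal; I would separate it into a scalar-trace piece (proportional to the identity, hence shifting both branches together and preserving the symmetric gap about $E_\star^\eps$) and a genuinely gap-affecting piece, the latter of which must be shown to be dominated by $\delta\theta_0$. The scaling $\delta\le c_\flat\eps^2$ is precisely what enforces this domination on the inner window $|\lambda|\le\mathfrak{a}=O(\sqrt{\delta/|\eps|})$, where all error terms can be made a small multiple of $\delta\theta_0$ by choosing $c_\flat$ small; matching the inner and outer estimates then produces a uniform $L^2_{\kpar=2\pi/3}$-spectral gap of width proportional to $\delta$ about $E_\star^\eps$, establishing the theorem.
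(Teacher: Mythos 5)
Your outer-region and higher-band estimates are fine and essentially reproduce the paper's ``Step 1'': the min--max principle gives the uniform $O(\delta\|W\|_{L^\infty})$ shift, and the \nofold bound of Theorem \ref{SGC!} beats it once $|\lambda|\ge\mathfrak{a}=\sqrt{C_0\delta/|\eps|}$ with $C_0$ large. The genuine gap is in your inner-region argument. Your two-band Lyapunov--Schmidt reduction places the $L^2_{\bK,1}$ mode $\widetilde{\Phi}^\eps$ in the ``far'' space, but that mode lies only $3\eps V_{1,1}+O(\eps^2)$ from $E^\eps_\star$ and couples to the Dirac pair at size $O(|\lambda|)$ with an $O(1)$ constant (this is the matrix $J$ of \eqref{J-matrix-computed}, e.g.\ $J_{1,\tau}=\alpha$, $|\alpha|=q^2/\sqrt3$). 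Consequently the Schur remainder $R$ truly contains entries of size $\sim q^4\lambda^2/(3\eps|V_{1,1}|)$, and not only on the diagonal: the second-order contribution has the rank-one form $-\lambda^2 a_i\overline{a_j}/(\widetilde{E}^\eps_\star-E^\eps_\star)$ with $a_1\approx\overline\alpha$, $a_2\approx\alpha$, so the off-diagonal correction is as large as the diagonal one. At the edge of your window, $|\lambda|=\sqrt{C_0\delta/|\eps|}$, these entries are of size $\sim C_0\delta/\eps^2$, which is never a small multiple of $\delta\theta_0$: both quantities scale linearly in $\delta$, so their ratio is $\sim C_0/(\theta_0\eps^2)\to\infty$ no matter how small $c_\flat$ is chosen; the scaling $\delta\le c_\flat\eps^2$ does not enforce the claimed domination. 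Worse, for $|\lambda|\gg\eps$ the remainder even dominates the leading off-diagonal term $|\lamsharp|\,|\bk_2|\,|\lambda|$, so norm (second-order) control of $R$ is only valid for $|\lambda|\lesssim\eps$, leaving the range $\eps\lesssim|\lambda|\lesssim\sqrt{\delta/|\eps|}$ --- precisely where the no-fold subtlety (the near-crossing with the third band, governed by the sign of $\eps V_{1,1}$) lives --- uncontrolled. Your dismissal of the identity part of $R$ is also unjustified: a $\lambda$-dependent scalar shift $s(\lambda)$ does not preserve a gap about $E^\eps_\star$; once $|s(\lambda)|$ exceeds the local half-gap $\sqrt{|\lamsharp|^2|\bk_2|^2\lambda^2+\delta^2(\thetasharp^\eps)^2}$ (which happens for $|\lambda|\gtrsim\eps$, since $s$ contains the $\lambda^2/\eps$ renormalization), one branch could cross $E^\eps_\star$, so this piece must be estimated, not discarded.

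This is exactly why the paper does not reduce to the two Dirac modes: it performs a three-dimensional reduction onto ${\rm span}\{p_1,p_\tau,p_{\overline\tau}\}$, keeping the nearby $L^2_{\bK,1}$ mode inside the reduced space so that no $1/\eps$ appears in the reduced matrix, and then computes $\det M(\eps,\delta,\lambda,\mu=0)$ to the required precision by the weighted-monomial bookkeeping, obtaining $-\det M\approx(q^2\lambda^2+\eps V_{1,1})(q^4\lambda^2+\delta^2|W_{0,1}+W_{1,0}-W_{1,1}|^2)$ on $|\lambda|\le C_\flat\sqrt{\eps}$, $|\delta|\le c_\flat\eps^2$; positivity of both factors under (V4), combined with Lipschitz dependence on $\mu$, gives $|E_j^{\eps,\delta}(\lambda)-E^\eps_\star|\gtrsim\eps(\lambda^2+\delta^2)$ and hence the gap. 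To repair your proof you would either keep $\widetilde{\Phi}^\eps$ in the reduced space (thereby recovering the paper's $3\times3$ determinant analysis) or control the exact, energy-dependent Schur complement non-perturbatively on $\eps\lesssim|\lambda|\lesssim\sqrt{\delta/|\eps|}$, which amounts to the same computation. (Incidentally, your claimed gap width proportional to $\delta$ is plausibly the correct order and sharper than the paper's stated $\gtrsim\eps\,\delta^2$, but it does not follow from the estimates you propose.)
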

 
\nit In the case where (V4) does not hold and the Fourier coefficient of $\eps V$, $\eps V_{1,1}$, is negative: $\eps \tilde{V}_{1,1}<0$, then we prove that the \nofold condition does not hold: 
 
 \begin{theorem}\label{NO-directional-gap!}[Conditions for non-existence of a zigzag spectral gap]
 Assume hypotheses (V1)-(V3) but, instead of hypothesis (V4), assume
 $\eps V_{1,1}<0$.
 Then, for any $\eps$ sufficiently small, the \nofold condition of Definition \ref{SGC} does not hold for the zigzag slice.
 \end{theorem}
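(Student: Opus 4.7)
The strategy is to exhibit an interior quasi-momentum $\lambda_0 \in (0, 1/2)$ on the zigzag slice at which the lower Dirac branch returns exactly to the Dirac energy $E_\star^\eps$, directly violating the first no-fold inequality. By Theorem \ref{diracpt-small-thm}(4), in Case (2) with $|\eps|$ sufficiently small the Dirac point of $-\Delta + \eps V$ sits between the second and third dispersion surfaces, so $b_\star = 2$ and the Lipschitz label used in Definition \ref{SGC} is $E_-(\lambda) = E_2(\bK + \lambda\bk_2)$. Two sign inequalities then suffice: $E_-(\lambda) < E_\star^\eps$ for all sufficiently small $\lambda > 0$ (from the conical structure at the Dirac point), and $E_-(1/2) > E_\star^\eps$ (from an explicit weak-coupling calculation at the midpoint of the zigzag slice). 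Lipschitz continuity of $E_-$ together with the intermediate value theorem then produces $\lambda_0$.

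The first inequality is immediate from condition (3) of Definition \ref{dirac-pt-defn} (equivalently \eqref{conical}): as $\lambda \downarrow 0$ one has $E_-(\lambda) - E_\star^\eps = -|\lambda_\sharp^\eps|\,|\bk_2|\,\lambda\,(1 + o(1))$, which is strictly negative for small $\lambda > 0$ since $\lambda_\sharp^\eps \neq 0$ by Theorem \ref{diracpt-small-thm}(4). For the second inequality, set $\bk^\ast \equiv \bK + \bk_2/2$. A direct plane-wave computation using $|\bk_1|^2 = |\bk_2|^2 = q^2$ and $\bk_1 \cdot \bk_2 = -q^2/2$ shows that the three lowest eigenvalues of the free operator $-(\nabla + i\bk^\ast)^2$ on $L^2(\R^2/\Lambda_h)$ are $q^2/12$ (simple, index $\bfm = (0,0)$) and $7q^2/12$ (doubly degenerate, from indices $\bfm \in \{(-1, 0), (-1, -1)\}$, since the two shifted quasi-momenta $\bk^\ast + \bfm\vec\bk$ differ by the reciprocal lattice vector $\bk_2$ and have equal length); the next free eigenvalue equals $13q^2/12$. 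Standard degenerate perturbation theory on the two-fold subspace, with second-order corrections controlled by the $O(q^2)$ gap to higher bands, yields a $2 \times 2$ matrix with diagonal $\eps V_{0,0}$ and off-diagonal $\eps V_{(0,1)}$. By Proposition \ref{honey-cosine}, the indices $(0,1)$ and $(-1,-1)$ share the same $\widetilde R$-orbit, while the inversion symmetry $V_{-\bfm} = V_\bfm$ gives $V_{(-1,-1)} = V_{(1,1)} = V_{1,1}$; hence $V_{(0,1)} = V_{1,1}\in\R$. The eigenvalues of the $2 \times 2$ matrix are $\eps(V_{0,0} \pm V_{1,1})$, and the ordered second band takes the smaller value, so $E_2(\bk^\ast) = 7q^2/12 + \eps(V_{0,0} + V_{1,1}) + O(\eps^2)$ (the sign of $\eps V_{1,1} < 0$ selects the ``$+$'' branch as the lower). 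Combining this with $E_\star^\eps = |\bK|^2 + \eps(V_{0,0} - V_{1,1}) + O(\eps^2)$ from \eqref{E*expand} and $|\bK|^2 = q^2/3$ yields
\[ E_-(1/2) - E_\star^\eps \;=\; \tfrac{q^2}{4} + 2\eps V_{1,1} + O(\eps^2) \;\geq\; \tfrac{q^2}{8} \]
for all $|\eps|$ sufficiently small.

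The ordered band $\lambda \mapsto E_-(\lambda)$ is Lipschitz continuous on $[0, 1/2]$. Combining the two sign inequalities above with the intermediate value theorem produces $\lambda_0 \in (0, 1/2)$ with $E_-(\lambda_0) = E_\star^\eps$. If the no-fold condition of Definition \ref{SGC} were to hold with any admissible constants $c_1, c_2 > 0$, $\mathfrak{a}_0 > 0$, $\exponent \in (0, 1)$ and modulus $\omega$, then choosing $\mathfrak{a} < \mathfrak{a}_0$ with $\mathfrak{a}^\exponent < \lambda_0$ would force $0 = |E_-(\lambda_0) - E_\star^\eps| \geq c_1\, \omega(\mathfrak{a}^\exponent) > 0$, a contradiction. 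The main technical obstacle is the perturbative computation of $E_2(\bk^\ast)$: correctly identifying the two-fold free degeneracy at the Brillouin-zone-edge midpoint, applying the rotational and inversion symmetries of $V$ to reduce the off-diagonal matrix element to the single quantity $\eps V_{1,1}$, and verifying that the second-order corrections, controlled by the $O(q^2)$ gap to bands at free energy $\geq 13q^2/12$, do not affect the sign at the relevant order in $\eps$.
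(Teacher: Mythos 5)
Your proof is correct, but it takes a genuinely different route from the paper's. The paper deduces Theorem \ref{NO-directional-gap!} from Proposition \ref{Deq0}: it uses the Lyapunov--Schmidt reduction to the $3\times3$ matrix $M(\eps,\delta,\lambda,\mu)$ and the asymptotic factorization $-\det M(\eps,\delta,\lambda,0)=\left(q^2\lambda^2+\eps V_{1,1}\right)\left(q^4\lambda^2+\delta^2|W_{0,1}+W_{1,0}-W_{1,1}|^2\right)(1+o(1))$, and then applies the intermediate value theorem to $\lambda\mapsto\det M(\eps,\delta,\lambda,0)$ on the window $\left(\zeta_0\sqrt\eps,\theta_0\sqrt\eps\right)$; this localizes the fold-over at $\lambda\sim\sqrt{|V_{1,1}|\,\eps}/q$ and works uniformly in $0\le\delta\le c_\flat\eps^2$, so it also yields the stronger statement that $E_\star^\eps$ remains an interior point of the $L^2_{\kpar=2\pi/3}$-spectrum of the perturbed operator $H^{(\eps,\delta)}$. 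You instead apply the intermediate value theorem directly to the ordered band $E_-=E_{b_\star}=E_2$ along the zigzag slice, between the Dirac-cone regime (where \eqref{conical} gives $E_2(\bK+\lambda\bk_2)<E_\star^\eps$ for small $\lambda>0$, using $\lambda_\sharp^\eps\ne0$) and the zone-edge midpoint $\bK+\tfrac12\bk_2$, where nearly-free analysis (simple level $q^2/12$, exact two-fold free degeneracy at $7q^2/12$ from $\bfm\in\{(-1,0),(-1,-1)\}$, next level $13q^2/12$ -- all of which I checked) gives $E_2\left(\bK+\tfrac12\bk_2\right)-E_\star^\eps=\tfrac{q^2}{4}+2\eps V_{1,1}+\mathcal{O}(\eps^2)>0$; the resulting interior zero of $E_2-E_\star^\eps$ contradicts \eqref{no-fold-over} for any admissible $c_1,\omega,\exponent$, since $\omega(\mathfrak{a}^\exponent)>0$ for small $\mathfrak{a}>0$. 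Your approach buys simplicity: it avoids the determinant machinery and, in particular, the delicate intermediate regime $\eps\lesssim\lambda\lesssim\sqrt\eps$ where the three low bands interact, which is exactly what forces the paper's weighted-monomial bookkeeping. What it gives up is quantitative and structural information: no localization of the crossing at $\lambda\sim\sqrt\eps$ and no statement for $\delta\ne0$ -- though none of that is needed for the theorem as stated, since the no-fold condition concerns only $-\Delta+\eps V$. One stylistic remark: the degenerate perturbation theory identifying the splitting $\pm\eps V_{1,1}$ at $\lambda=1/2$ (via $V_{(0,1)}=V_{(-1,-1)}=V_{1,1}$) is correct but more precision than you need; the crude bound $|E_2^{(\eps)}(\bk)-E_2^{(0)}(\bk)|\le|\eps|\,\|V\|_{\infty}$ already gives $E_2\left(\bK+\tfrac12\bk_2\right)-E_\star^\eps\ge\tfrac{q^2}{4}-2|\eps|\,\|V\|_\infty>0$, so the ``main technical obstacle'' you flag can be dispensed with entirely.
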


\nit The proofs of Theorems \ref{SGC!} and \ref{delta-gap} are presented below. Section \ref{LSreduction} discusses a reduction to the lowest three spectral bands. The general strategy, based on an analysis of a $3\times3$ determinant, is given in Section \ref{sec:strategy}. Theorem \ref{NO-directional-gap!} is proved in Section \ref{no-directionalgap} as a consequence of Proposition \ref{Deq0}.

We prove the following theorem on zigzag edge states using results of Section \ref{thm-edge-state}. 
\begin{theorem}\label{Hepsdelta-edgestates}
 Let 
 $ H^{(\eps,\delta)} =-\Delta+\eps V(\bx)+\delta\kappa(\delta\bk_2\cdot\bx) W(\bx)$, 
  where $V(\bx)$ and $W(\bx)$ satisfy Assumptions (V) and (W), and $\kappa(\bX)$ is a domain wall function in the sense of Definition \ref{domain-wall-defn}. 
Let $\eps_2>0$ and $c_\flat>0$ be as in Theorem \ref{SGC!} and assume 
 $0<|\eps|<\eps_2$ and $0<|\delta|\le c_\flat \eps^2$. Then, 
there exist  edge states, $\Psi(\bx;k_\parallel)\in L^2_{k_\parallel}(\Sigma)$, with 
$|k_\parallel-2\pi/3|$ sufficiently small. 
Furthermore, continuous superposition in $\kpar$ yields wave-packets which are concentrated along the zigzag edge.
\end{theorem}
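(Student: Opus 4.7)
The plan is a direct application of Theorem \ref{thm-edgestate} and Corollary \ref{vary_k_parallel} to the zigzag case $\vtilde_1=\bv_1$, $\vtilde_2=\bv_2$, so that $\ktilde_2=\bk_2$ and $\kpar=\bK\cdot\bv_1=2\pi/3$. The proof reduces to verifying hypotheses (A1)--(A4) with constants whose dependence on $\eps$ is compatible with the stated scaling $|\delta|\le c_\flat\eps^2$.

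Verification of (A1)--(A4) is mostly bookkeeping. Hypothesis (A1) is exactly Theorem \ref{diracpt-small-thm}, which gives a Dirac point $(\bK,E_\star^\eps)$ with $|\lambda^\eps_\sharp|=4\pi|\Omega_h|+\mathcal{O}(\eps)\ne 0$ for $\eps\in I_{\eps_0}$. Hypothesis (A2) follows from (W1)--(W2) together with Remark \ref{on-theta-sharp}, which yields the uniform lower bound $|\vartheta^\eps_\sharp|\ge\theta_0>0$ on the basis of (W3$^*$). (A3) holds by the assumption on $\kappa$. Hypothesis (A4), the spectral \nofold condition at $(\bK,E_\star^\eps)$ along the $\bk_2-$slice, is the content of Theorem \ref{SGC!}, which provides modulus $\omega(\mathfrak{a})=\mathfrak{a}^2$ with constants $c_1\approx|\eps|$ and $c_2\approx|\eps|$.

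The main subtlety is that these \nofold constants vanish linearly with $\eps$, so the $\delta_0$--threshold produced by Theorem \ref{thm-edgestate} shrinks with $\eps$, and I would re-examine that proof to extract the scaling $|\delta|\lesssim\eps^2$. The decisive estimate is \eqref{frak-e-bound}, which in the present setting becomes $\mathfrak{e}(\delta)\lesssim |\delta|/(|\eps|\,\delta^{2\exponent})+|\delta|/|\eps|$ for any admissible $\exponent\in(0,1/2)$. Choosing $\exponent\in(0,1/4)$ and imposing $|\delta|\le c_\flat\eps^2$ yields $\mathfrak{e}(\delta)\lesssim c_\flat^{1-2\exponent}\,|\eps|^{1-4\exponent}+c_\flat\,|\eps|$, which can be made arbitrarily small by taking $c_\flat$ and $\eps_2$ sufficiently small. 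With $\mathfrak{e}(\delta)$ small, the contraction in Proposition \ref{fixed-pt} and the subsequent Lyapunov--Schmidt reduction (Propositions \ref{near_freq_compact}, \ref{solve4beta}, \ref{solveJeq0}) proceed without modification, delivering an edge state at $\kpar=2\pi/3$. Corollary \ref{vary_k_parallel} then extends the construction to a branch $\Psi(\bx;\kpar)\in H^2_\kpar(\Sigma)$ for $|\kpar-2\pi/3|<\eta_0$, and the wave-packet claim follows by continuous superposition $\int g(\kpar)\,\Psi(\bx;\kpar)\,d\kpar$ over a small $\kpar-$interval, which remains $L^2$ transverse to $\R\bv_1$ while propagating parallel to it. The only conceptual obstacle is uniformly tracking the $\eps-$dependent constants through the many auxiliary inner-product estimates involving $\Phi^\eps_\pm$, but this is harmless because $\Phi^\eps_\pm$ is $H^2-$close to its $\eps=0$ counterpart built from \eqref{p_sigma}.
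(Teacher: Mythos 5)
Your proposal is correct and follows essentially the same route as the paper: reduce to Theorem \ref{thm-edgestate} and Corollary \ref{vary_k_parallel} for the zigzag choice, with the only genuine issue being the $\eps$-dependence of the no-fold constants from Theorem \ref{SGC!}, which you resolve exactly as the paper does via the bound \eqref{frak-e-bound}, giving $\mathfrak{e}(\delta)\lesssim|\eps|^{1-4\exponent}+|\eps|$ under $|\delta|\le c_\flat\eps^2$ with $\exponent\in(0,1/4)$. Your explicit verification of (A1)--(A3) and of the smallness of the contraction constant is precisely the content of the paper's proof, so no further comparison is needed.
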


\begin{proof}[Proof of Theorem \ref{Hepsdelta-edgestates}]
We claim that the theorem is an immediate consequence of  the spectral \nofold condition for $-\Delta +\eps V$ for the zigzag edge, stated in Theorem \ref{SGC!}. This follows by an application of Theorem \ref{thm-edgestate} (and Corollary \ref{vary_k_parallel}).  
Since the details of the proof of Theorem \ref{thm-edgestate} are carried out for the case of $H^{(\eps,\delta)}$ with $\eps=1$, 
 we wish to point out how the proof applies with $\eps$ and $\delta$ varying as in the statement of Theorem \ref{Hepsdelta-edgestates}.

The proof of  Theorem \ref{thm-edgestate} uses a Lyapunov-Schmidt reduction strategy where the eigenvalue problem is reduced to an equivalent  eigenvalue problem (nonlinear in the eigenvalue parameter, $E$)  
for the Floquet-Bloch spectral components of the bound states in a neighborhood of the Dirac point $(\bK,E^\eps_\star)$. Stated for the relevant case of the zigzag edge, this reduction step requires the invertibility of an operator $(I-\mathcal{Q}_\delta)$ acting on $H_{\kpar=\bK\cdot\bv_1}^2(\Sigma)$,  where $\mathcal{Q}_\delta$ is defined in terms of Floquet-Bloch components in \eqref{tQ-def} for $\ktilde_2=\bk_2$.

It suffices to show that the $H_{\kpar=\bK\cdot\bv_1}^2(\Sigma)$ norm of $\mathcal{Q}_\delta$ is $o(1)$ as $\delta\downarrow0$. From \eqref{frak-e-bound} in Remark \ref{frak-e-remark}, the operator norm of $\mathcal{Q}_\delta$ is bounded by $\mathfrak{e}(\delta)$, given by:
\begin{equation*}
 \mathfrak{e}(\delta) \lesssim\ \frac{|\delta|}{\omega(\mathfrak{\delta^{\exponent}})c_1(V)}\ +\ \frac{|\delta|}{(1+|b|)c_2(V)} \ 
 \leq \frac{|\delta|^{1}}{\omega(\mathfrak{\delta^{\exponent}}) c_1(V)}\ +\ \frac{|\delta|}{c_2(V)}.
\end{equation*}
By Theorem \ref{SGC!}, the \nofold condition holds with modulus $\omega(\mathfrak{a})=\mathfrak{a}^2$ and constants 
\[ c_1(V)=\tilde{c}_1\frac{q^4}{2} |V_{1,1}\eps|\ ,\ \ \ \ c_2(V)=\tilde{c_2}|\eps|.\]
Therefore, with $\mathfrak{a}=\delta^\exponent$,  
 \begin{equation*}
  \mathfrak{e}(\delta) \lesssim\ \frac{|\delta|^{1-2\exponent}}{c_1(V)}\ +\ \frac{|\delta|}{c_2(V)} \
 \lesssim\ \frac{2}{q^4\tilde{c}_1 |V_{1,1}|}\cdot \frac{|\delta|^{1-2\exponent}}{|\eps|}\ +\ \frac{1}{\tilde{c_2}}\cdot \frac{|\delta|}{|\eps|}.
\end{equation*}
By hypothesis, $|\delta|\le c_\flat\eps^2$. Hence, $0\le \mathfrak{e}(\delta)
 \lesssim |\eps|^{1-4\exponent}+|\eps|\to0$ as $\eps\to0$ if we choose $\exponent\in(0,1/4)$.
 This completes the proof of Theorem \ref{Hepsdelta-edgestates}.
 \end{proof}

To prove Theorems \ref{SGC!} - \ref{NO-directional-gap!} we introduce a tool to localize the analysis about the lowest three bands. Throughout the analysis, without loss of generality, we take $\eps>0$ and satisfy the cases $\eps V_{1,1}>0$ and $\eps V_{1,1}<0$ by varying the sign of $V_{1,1}.$

\subsection{Reduction to the lowest three bands}\label{LSreduction}

In this subsection we show, via a Lyapunov-Schmidt reduction argument, that the proofs of Theorems \ref{SGC!} and \ref{delta-gap} can be reduced to the study of the lowest three spectral bands. To achieve this, we consider several parameter space regimes separately.

We start by considering $\eps=\delta=\lambda=0$. In this case, $H^{(0,0,0)}(\bK)=\left(\nabla+i\bK\right)^2$ has a triple eigenvalue, $E^0_\star=|\bK|^2$, with corresponding $3-$ dimensional $L^2(\R^2/\Lambda_h)-$ eigenspace spanned by the eigenfunctions $p_\sigma$, for $\sigma=1,\tau,\overline{\tau}$; see  Section \ref{dpts-generic-eps}.

Next, we turn on $\eps$, keeping $\delta=\lambda=0$.
From Theorem \ref{diracpt-small-thm}, there exists $\eps_0>0$ such that for $\eps\in I_{\eps_0}\equiv (-\eps_0,\eps_0)\setminus\{0\}$, the operator $H^{(\eps,0,0)}(\bK)=-\left(\nabla+i\bK\right)^2+\eps V(\bx)$ has a double $L^2(\R^2/\Lambda_h)$ eigenvalue, $E^\eps_\star$.
Let $E^0_\star=|\bK|^2$.
The maps $\eps\mapsto E^\eps_\star$ and $\eps\mapsto \widetilde{E}^\eps_\star$ are real analytic for $ \eps\in I_{\eps_0}$
with expansions \eqref{E*expand}, \eqref{tildeE*expand}:
\begin{align*}
E^\eps_\star\ &=  E_\star^0 + \eps(V_{0,0}-V_{1,1})+\mathcal{O}(\eps^2),\\
\widetilde{E}^\eps_\star &= E_\star^0 + \eps(V_{0,0}+2V_{1,1})+\mathcal{O}(\eps^2). 
\end{align*}


More generally, we may study the eigenvalue problem 
\begin{equation}
 (H^{(\eps,\delta,\lambda)}-E)p=0,\ \ p\in L^2(\R^2/\Lambda_h) , \text{\ \ with\ \ }
 E\ =\ E_\star^\eps + \mu ,
 \label{evp-eps-delta-lam}
\end{equation}
and seek, via Lyapunov-Schmidt reduction, to localize \eqref{evp-eps-delta-lam} about the three lowest zigzag slices.
Written out, the eigenvalue problem has the form:
\begin{equation}
\Big[ -\left(\nabla+i[\bK+\lambda\bk_2] \right)^2-E_\star^\eps - \mu\ +\ \eps V\ +\delta W\ \Big]p(\bx) = 0,\  \ p\in L^2(\R^2/\Lambda_h) .
\label{evp-p}
\end{equation}

Since $\eps$ and $\delta$ will be chosen to be small,  we shall expand $p(\bx)$ relative to the natural basis of the $L^2(\R^2/\Lambda_h)-$ eigenspace of the free operator, $H^{(0,0,0)}=-(\nabla+i\bK)^2$, displayed explicitly in \eqref{p_sigma}. Let $P^\parallel$ denote the projection onto 
${\rm span}\left\{p_\sigma:\sigma=1,\tau,\overline{\tau}\right\}$ and $P^\perp=I-P^\parallel$.

We seek a solution of \eqref{evp-p} in the form  $p(\bx) = p_\parallel(\bx) + p_\perp(\bx)$, 
where  
\begin{align*}
p_\parallel\in{\rm span}\left\{p_\sigma:\sigma=1,\tau,\overline{\tau}\right\},\ \ 
p_\perp\in \Big[\ {\rm span}\left\{p_\sigma:\sigma=1,\tau,\overline{\tau}\right\}\ \Big]^\perp .
\end{align*}
Then, we have that \eqref{evp-p} is equivalent to the coupled system of equations:
\begin{align}
&\Big[ -\left(\nabla+i[\bK+\lambda\bk_2] \right)^2-E_\star^\eps - \mu \Big]p_\parallel \label{p-parallel} \\
&\quad +
\Big[ \eps P^\parallel V + \delta P^\parallel W \Big]p_\parallel + 
\Big[ \eps P^\parallel V + \delta P^\parallel W \Big]p_\perp = 0, \nn \\
&\Big[ -\left(\nabla+i[\bK+\lambda\bk_2] \right)^2-E_\star^\eps - \mu \Big]p_\perp \label{p-perp} \\
&\quad + \Big[ \eps P^\perp V + \delta P^\perp W \Big]p_\perp + 
\Big[ \eps P^\perp V + \delta P^\perp W \Big]p_\parallel = 0 . \nn
\end{align}

\begin{proposition}\label{invert-on-Pperp}
There exists a constant $c>0$ such that if $|\eps|+|\mu|<c$ then
\begin{equation*}
H^{(\eps,0,\lambda)}-\mu\ \equiv\ \Big[ -\left(\nabla+i[\bK+\lambda\bk_2] \right)^2-E_\star^\eps - \mu \Big]\ \  \textrm{is invertible on}\ P^\perp L^2(\R^2/\Lambda_h) .
\end{equation*}
\end{proposition}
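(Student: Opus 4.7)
The plan is to Fourier-decompose $L^2(\R^2/\Lambda_h)$, reduce the question to a uniform geometric lower bound along the zigzag slice, and then treat $\eps V$ and $\mu$ as a bounded perturbation.

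First, the orthonormal basis $\{e^{i\bfm\vec\bk\cdot\bx}\}_{\bfm\in\Z^2}$ diagonalizes the free operator $-(\nabla + i(\bK + \lambda\bk_2))^2$: it acts as multiplication by $|\bK + \lambda\bk_2 + \bfm\vec\bk|^2$ on the $\bfm$-th mode. From the explicit formula \eqref{p_sigma}, the three Fourier modes indexed by $S_0 := \{(0,0), (0,1), (-1,0)\}$ span $P^\parallel L^2(\R^2/\Lambda_h)$, so $P^\perp L^2(\R^2/\Lambda_h)$ is the closed span of the modes with $\bfm \notin S_0$.

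The core step is a uniform geometric gap
\[
\gamma_0\ :=\ \inf_{\bfm \in \Z^2\setminus S_0,\ |\lambda| \le 1/2} \Big(|\bK + \lambda\bk_2 + \bfm\vec\bk|^2 - |\bK|^2\Big)\ >\ 0.
\]
Using $|\bK|^2 = q^2/3$, $|\bk_j|^2 = q^2$ and $\bk_1 \cdot \bk_2 = -q^2/2$, one has $|\bK + \lambda\bk_2 + \bfm\vec\bk|^2 = q^2\left[(1/3 + m_1)^2 + (-1/3 + \lambda + m_2)^2 - (1/3 + m_1)(-1/3 + \lambda + m_2)\right]$. For $|\bfm|$ large this quantity is of order $|\bfm|^2$; for the finitely many remaining $\bfm \notin S_0$ one verifies by direct calculation that the difference $|\bK + \lambda\bk_2 + \bfm\vec\bk|^2 - |\bK|^2$ remains strictly positive throughout the compact interval $|\lambda|\le 1/2$. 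The geometric content is that $\{-\bfm\vec\bk : \bfm \in S_0\}$ are precisely the three lattice points in $\Lambda_h^*$ equidistant and closest to the $\bK$-vertex, and no other lattice vector comes closer as one moves along the zigzag segment.

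With this gap in place, the restriction of $-(\nabla + i(\bK + \lambda\bk_2))^2 - |\bK|^2$ to $P^\perp L^2(\R^2/\Lambda_h)$ is bounded below by $\gamma_0$, uniformly in $|\lambda|\le 1/2$. Since Theorem \ref{diracpt-small-thm} gives $E_\star^\eps = |\bK|^2 + \mathcal{O}(\eps)$ and the bounded multiplication operator $\eps V$ has norm $\mathcal{O}(\eps)$, the perturbed operator $P^\perp\bigl[-(\nabla + i(\bK + \lambda\bk_2))^2 + \eps V - E_\star^\eps - \mu\bigr]P^\perp$ differs from the leading term by a bounded operator of norm $\mathcal{O}(\eps) + |\mu|$. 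Choosing $c > 0$ so that $|\eps| + |\mu| < c$ forces this norm to be at most $\gamma_0/2$, the restricted operator is bounded below by $\gamma_0/2 > 0$ and therefore invertible on $P^\perp L^2(\R^2/\Lambda_h)$, with inverse norm at most $2/\gamma_0$. The main obstacle is verifying the geometric lower bound uniformly along the entire zigzag slice; once that is in hand, the perturbation argument is routine.
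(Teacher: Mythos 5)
Your proof is correct and takes essentially the same approach as the paper, whose one-line proof rests precisely on the uniform gap $g_0 \equiv \inf_{|\lambda|\le 1/2}\ \inf_{\bfm\notin\{(0,0),(0,1),(-1,0)\}}\big|\,|\bK+\bfm\vec\bk+\lambda\bk_2|^2-|\bK|^2\,\big|>0$ that you establish, combined with $E_\star^\eps=|\bK|^2+\mathcal{O}(\eps)$ and the smallness of $|\mu|$. The only cosmetic discrepancy is that you include the bounded term $\eps V$ in the operator, whereas the operator $H^{(\eps,0,\lambda)}-\mu$ in the statement contains no potential (its $\eps$-dependence enters only through $E_\star^\eps$); dropping that term only improves your bound, so the argument applies verbatim.
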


\begin{proof}[Proof of Proposition  \ref{invert-on-Pperp}]
This follows from the lower bound:
\begin{equation*}
0<g_0 \equiv \inf_{|\lambda |\le1/2}\ \inf_{ \bfm\notin \{ (0,0),(0,1),(-1,0) \} }
\Big|\ |\bK+\bfm\vec{\bk}+\lambda\bk_2|^2-|\bK|^2\ \Big| ,
\end{equation*}
where $\bfm\vec{\bk}=m_1\bk_1+m_2\bk_2$.
\end{proof}

By Proposition \ref{invert-on-Pperp}, for all $\eps$ and $\mu$ sufficiently small  equation \eqref{p-perp} is equivalent to:
\begin{align*}
&\Big[\ I\ +\ \left[H^{(\eps,0,\lambda)}-\mu \right]^{-1}\
 \left[ \eps P^\perp V + \delta P^\perp W \right]\ \Big] p_\perp \\
&\qquad =\ -\left[H^{(\eps,0,\lambda)}-\mu\right]^{-1}\ \left[ \eps P^\perp V + \delta P^\perp W \right]p_\parallel  .
\end{align*}
Suppose now $|\eps|+|\delta|+|\mu|<d_1$, where $0<d_1\le c$ is chosen sufficiently small. Then we have 
\begin{align}
p_\perp\ 
&= -\Big[\ I\ +\ \left[H^{(\eps,0,\lambda)}-\mu \right]^{-1}\
 \left[ \eps P^\perp V + \delta P^\perp W \right]\ \Big]^{-1} \times \nn \\
&\qquad\qquad \left[H^{(\eps,0,\lambda)}-\mu\right]^{-1}\ \left[ \eps P^\perp V + \delta P^\perp W \right]p_\parallel 
 =\ \mathcal{P}(\eps,\delta,\lambda,\mu)\ p_\parallel  . \label{p-perp2}
 \end{align}
 Equation \eqref{p-perp2} defines a bounded linear mapping on $P^\parallel L^2(\R^2/\Lambda_h)$ with operator norm $\lesssim 1$:
 \[ p_\parallel\mapsto p_\perp[\eps,\delta,\lambda;p_\parallel]=\mathcal{P}(\eps,\delta,\lambda,\mu)\ p_\parallel  :\  Ran\left(P^\parallel\right)\to Ran\left(P^\perp\right) , \]
which is analytic in $\eps, \delta, \lambda$ and $\mu$ for $|\eps|+|\delta|+|\mu|<d_1$ and $|\lambda|<1/2$.
Consequently, equation \eqref{p-parallel} becomes a closed equation for $p_\parallel$
 which we write as:
\begin{equation*}
\mathcal{M}(\eps,\delta,\lambda,\mu) p =\ 0 ,
\end{equation*}
where
\begin{align}
\mathcal{M}(\eps,\delta,\lambda,\mu) &\equiv 
 P^\parallel\Big[ -\left(\nabla+i[\bK+\lambda\bk_2] \right)^2-E_\star^\eps - \mu 
+  \eps V + \delta W  \nn\\
&\quad
  + ( \eps  V + \delta W ) \mathcal{P}(\eps,\delta,\lambda,\mu)  
  \Big]  P^\parallel \label{calMdef1}\\
&= 
P^\parallel \Big[ -\left(\nabla+i[\bK+\lambda\bk_2] \right)^2-E_\star^\eps - \mu + \eps V + \delta W\nn\\
& \quad
  + ( \eps  V + \delta W  )  P^\perp \left[ H^{(\eps,0,\lambda)}-\mu+ \eps  V + \delta W \right]^{-1}P^\perp  ( \eps  V + \delta W ) \Big] P^\parallel \ . \label{calMdef2}
\end{align}
 The operator $\mathcal{M}(\eps,\delta,\lambda,\mu)$ acts on the three-dimensional 
  space $P^\parallel L^2(\R^2/\Lambda_h)={\rm span}\{p_\sigma:\sigma=1,\tau,\overline{\tau}\}$.
Moreover, from the expression \eqref{calMdef2} it is clear that for $\eps,\delta,\lambda,\mu$ complex, $\mathcal{M}(\eps,\delta,\lambda,\mu)$ is self-adjoint. 
We shall study it via its matrix representation relative to the basis $\{p_\sigma:\sigma=1,\tau,\overline{\tau}\}$:
\begin{equation}
M_{\sigma,\tsigma}(\eps,\delta,\lambda,\mu) \equiv \left\langle p_\sigma,\mathcal{M}(\eps,\delta,\lambda,\mu) p_\tsigma\right\rangle_{L^2(\R^2/\Lambda_h)},\ \ \sigma=1,\tau,\overline{\tau}\ .
\label{Msts}\end{equation}
Clearly, $M(\eps,\delta,\lambda,\mu)$ is Hermitian for  $\eps,\delta,\lambda,\mu$ real:
$ M_{\sigma,\tsigma}(\eps,\delta,\lambda,\mu) = \overline{M_{\tsigma,\sigma}(\eps,\delta,\lambda,\mu)}$.

Summarizing the above discussion we have the following:

\begin{proposition}\label{Msigtsig}
\begin{enumerate}
\item The entries of the $3\times3$ 
matrix $M(\eps,\delta,\lambda,\mu)$ are analytic functions of complex
$\eps, \delta, \lambda$ and $\mu$ for $|\eps|+|\delta|+|\mu|<d_1$ and $|\lambda|<1/2$.
\item For $|\eps|+|\delta|+|\mu|<d_1$ and $|\lambda|<1/2$, we have that
$E=E^\eps_\star+\mu$ is an $L^2(\R^2/\Lambda_h)-$ eigenvalue of $H^{(\eps,\delta,\lambda)}$
if and only if
$\det M(\eps,\delta,\lambda,\mu)=0$.
\end{enumerate}
\end{proposition}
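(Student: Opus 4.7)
The plan is that this proposition essentially repackages the Lyapunov--Schmidt reduction already carried out in \eqref{p-parallel}--\eqref{Msts}; the task is to verify that the reduction is valid for all parameters in the stated region and that analyticity passes to the reduced $3\times 3$ matrix.

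For part (2), I would observe that the projection decomposition $p = p_\parallel + p_\perp$ converts \eqref{evp-p} into the equivalent coupled pair \eqref{p-parallel}--\eqref{p-perp}. Proposition \ref{invert-on-Pperp} gives boundedness of $[H^{(\eps,0,\lambda)}-\mu]^{-1}$ on $P^\perp L^2(\R^2/\Lambda_h)$ for $|\eps|+|\mu|<c$; including the perturbation $\delta P^\perp W P^\perp$ as a bounded operator of norm $O(|\delta|)$ preserves invertibility by a Neumann series provided $d_1$ is chosen small enough. This is the formal content behind \eqref{p-perp2}. Substituting the resulting expression for $p_\perp$ into \eqref{p-parallel} produces a closed equation on the three-dimensional space $\mathrm{Ran}(P^\parallel)$; writing $p_\parallel = c_1 p_1 + c_\tau p_\tau + c_{\overline\tau} p_{\overline\tau}$ and testing against each $p_\sigma$ converts it into the linear system $M(\eps,\delta,\lambda,\mu)\mathbf{c} = 0$, which has a nontrivial solution if and only if $\det M = 0$. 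Reversing the steps recovers a genuine $L^2(\R^2/\Lambda_h)$ eigenfunction of $H^{(\eps,\delta,\lambda)}$ at energy $E = E^\eps_\star + \mu$.

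For part (1), I would track analyticity of each ingredient in \eqref{calMdef2}. The entries of $M$ depend polynomially on $(\eps,\delta,\mu)$ apart from one piece, namely the matrix elements of the resolvent $\mathcal{R}(\eps,\delta,\lambda,\mu)$ appearing in the correction term. To show $\mathcal{R}$ is jointly holomorphic on a complex polydisk, I would first verify that $-(\nabla+i(\bK+\lambda\bk_2))^2-E_\star^\eps$ is invertible on $P^\perp L^2$ with uniformly bounded inverse in a complex tube $|\mathrm{Im}\,\lambda|<\eta$, $|\mathrm{Re}\,\lambda|\leq 1/2$, using that for $\bfm \notin \{(0,0),(0,1),(-1,0)\}$ one has $\bigl| |\bK+\bfm\vec\bk+\lambda\bk_2|^2-|\bK|^2 \bigr|\geq g_0/2$ provided $\eta$ is sufficiently small. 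The bounded perturbation $-\mu I + \eps P^\perp V P^\perp + \delta P^\perp W P^\perp$ then admits a convergent Neumann series for $|\eps|+|\delta|+|\mu|<d_1$, after possibly shrinking $d_1$, giving joint analyticity of $\mathcal{R}$ in the operator-norm topology; this immediately passes to each finite matrix element $M_{\sigma,\tsigma} = \langle p_\sigma,\mathcal{M}\,p_{\tsigma}\rangle$.

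The only modestly delicate point will be the uniform spectral lower bound in the complex-$\lambda$ tube: the quantity $|\bK+\bfm\vec\bk+\lambda\bk_2|^2-|\bK|^2$ is an entire quadratic in $\lambda$ for each $\bfm$, so a large-$|\bfm|$ coercivity estimate combined with a continuity argument for the finitely many small-$|\bfm|$ modes delivers the required uniform gap. Everything else reduces to standard bookkeeping: Lyapunov--Schmidt reduction, Neumann series, and linear algebra on the $3$-dimensional space $\mathrm{Ran}(P^\parallel)$.
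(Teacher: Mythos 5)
Your proposal is correct and follows essentially the same route as the paper, which proves this proposition simply by summarizing the preceding Lyapunov--Schmidt reduction: the decomposition $p=p_\parallel+p_\perp$, the invertibility of $H^{(\eps,0,\lambda)}-\mu$ on $P^\perp L^2(\R^2/\Lambda_h)$ from Proposition \ref{invert-on-Pperp}, the Neumann-series construction of $\mathcal{P}(\eps,\delta,\lambda,\mu)$ in \eqref{p-perp2}, and the resulting equivalence of the eigenvalue problem with $\det M=0$ on the three-dimensional space $\mathrm{Ran}(P^\parallel)$. Your extra care about joint analyticity in complex $\lambda$ (working in a thin tube about the real segment and checking the uniform spectral gap there) is a detail the paper leaves implicit, and it is handled correctly.
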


We now study the roots of $\det M(\eps,\delta,\lambda,\mu)=0$ (eigenvalues of 
$H^{(\eps,\delta,\lambda)})$ for $\eps$ and $\delta$ small.
First let  $\eps=\delta=0$. By the formulae \eqref{psig-nablaK-tpsig} and \eqref{J-matrix-computed}, derived and also applied in Section \ref{Mexpansion}, we have:
\begin{align*}
M(0,0,\lambda,\mu) &= \Big(\ \left\langle p_\sigma\ ,\ \mathcal{M}(0,0,\lambda,\mu)  p_\tsigma\right\rangle_{L^2(\R^2/\Lambda_h)}\ \Big)_{ \sigma,\tsigma=1,\tau,\overline\tau}\ ,
\nn \\
&= \Big(\ \left\langle p_\sigma\ ,\  \left(-(\nabla+i(\bK+\lambda\bk_2))^2\right), p_\tsigma\right\rangle_{L^2(\R^2/\Lambda_h)}\ -\ (E_\star^0+\mu)\delta_{\sigma,\tsigma}\ \Big)_{ \sigma,\tsigma=1,\tau,\overline\tau}\ ,
\nn \\
 &= \begin{pmatrix}
  \lambda^2q^2-\mu & \alpha\lambda & \overline{\alpha}\lambda\\
  \overline{\alpha}\lambda &\lambda^2q^2-\mu & \alpha\lambda
   \\ \alpha\lambda & \overline{\alpha}\lambda & \lambda^2q^2-\mu
 \end{pmatrix} , \ \ \alpha= \frac{q^2}{\sqrt3}\ i\tau.
 \end{align*} 
 %
 %
 Thus,  $M(0,0,\lambda,\mu)$ is singular if $\mu$ is a root of the polynomial:
 \begin{align*}
 &\det M(0,0,\lambda,\mu) \\
 &\quad = -\mu^3 + \mu^2 (3q^2\lambda^2) + \mu(3\lambda^2|\alpha|^2-3\lambda^4q^4)
 +\lambda^6q^6 -3\lambda^4q^2|\alpha|^2+2\lambda^3\Re(\alpha^3) \nn \\
 &\quad = -\mu^3 + \mu^2(3\lambda^2q^2) + \mu(\lambda^2q^4-3\lambda^4q^4) + \lambda^6q^6 + \lambda^4q^6,  
 \end{align*}
 where we have used that $|\alpha|^2=\frac{q^4}{3}$ and $\Re(\alpha^3)=0$.
The roots, $\mu^{(0)}_j(\lambda),\ j=1,2,3$, defined by the ordering
 $\mu^{(0)}_1(\lambda)\le\mu^{(0)}_2(\lambda)\le\mu^{(0)}_3(\lambda)$,  
 listed with multiplicity, are given by:
  \begin{align}
{\bf 0\le\lambda\le\frac12:}\ 
 &\mu^{(0)}_1(\lambda)=q^2\lambda (\lambda-1),\ \mu^{(0)}_2(\lambda)=q^2\lambda^2,\ \mu^{(0)}_3(\lambda)=q^2\lambda (\lambda+1)  , \label{mulam>0}\\
{\bf -\frac12\le\lambda\le0:}\ 
 &\mu^{(0)}_1(\lambda)=q^2\lambda (\lambda+1),\ \mu^{(0)}_2(\lambda)=q^2\lambda^2,\ \mu^{(0)}_3(\lambda)=q^2\lambda (\lambda-1)  . \label{mulam<0}
 \end{align}

The roots $\mu^{(0)}_j(\lambda), j=1,2,3$, are eigenvalues of $-\left(\nabla+i[\bK+\lambda\bk_2] \right)^2-E_\star^0$. They are uniformly bounded away from all other $L^2(\R^2/\Lambda_h)-$ spectrum. This distance is given by
\begin{equation*}
g_0 \equiv \inf_{|\lambda |\le1/2}\ \inf_{ \bfm\notin \{ (0,0),(0,1),(-1,0) \} }
\Big|\ |\bK+\bfm \vec{\bk}+\lambda\bk_2|^2-|\bK|^2\ \Big|\ > 0  ,\ \ (E^0_\star=|\bK|^2), 
\end{equation*}
where $\bfm \vec{\bk}=m_1\bk_1+m_2\bk_2$.

Note that $\det M(0,0,0,\mu)=-\mu^3$ has a root of multiplicity three: $\mu=0$. By Proposition \ref{Msigtsig} and analyticity of $\det M(\eps,\delta,\lambda,\mu)$, we have that   for $\eps, \delta$ and $\lambda$ in a small  neighborhood of the origin in $\C^3$, there are three solutions of 
$\det M(\eps,\delta,\lambda,\mu)=0$. We label them $\mu_j(\eps,\delta,\lambda),\ j=1,2,3$. By self-adjointness of $M(\eps,\delta,\lambda,\mu)$, these roots are real for real values of $\eps$ and $\delta$. Correspondingly, 
for small and real $\eps, \delta$ and $\lambda$ there are three real $L^2_{k_\parallel=2\pi/3}-$ eigenvalues of $H^{(\eps,\delta,\lambda)}$,  denoted
\[ E_j^{(\eps,\delta)}(\lambda)\equiv E_j^{(\eps,\delta)}(\bK+\lambda\bk_2)\equiv E^\eps_\star+\mu_j(\eps,\delta,\lambda),\  \ \ j=1,2,3.\]
The ordering of the $E_j$ implies 
\[ \mu_1(\eps,\delta,\lambda)\le \mu_2(\eps,\delta,\lambda)\le \mu_3(\eps,\delta,\lambda) .\]
A mild extension of Proposition \ref{directional-bloch} yields
\begin{proposition}\label{roots-delta0}
For each $|\lambda|\le1/2$, there exist orthonormal $L^2_{\bK+\lambda\bk_2}-$ eigenpairs $(\Phi_\pm(\bx;\lambda),E_\pm(\lambda))$ and  $(\widetilde{\Phi}(\bx;\lambda),\widetilde{E}(\lambda))$,  real analytic in $\lambda$,  such that 
\[ \textrm{span}\ \{\Phi_{-}(\bx;\lambda), \Phi_{+}(\bx;\lambda), \widetilde{\Phi}(\bx;\lambda)\}
 =\ \textrm{span}\ \{\Phi_j(\bx;\bK+\lambda\bk_2) : j=1,2,3 \}\ .
 \]
   For fixed $\eps$ small and $\lambda$ tending to $0$ we have:
\begin{enumerate}
\item 
\begin{align}
E_\pm^{\eps,\delta=0}(\lambda)\ &=\ E_\star^\eps\ \pm\ |\lambda^\eps_\sharp|\ |z_2|\ \lambda\ +\ \lambda^2\ e_\pm(\lambda,\eps) , \label{Epm-expand}\\
\widetilde{E}^{\eps,\delta=0}(\lambda)\ &=\ \widetilde{E}_\star^\eps\   +\ \lambda^2\ \widetilde{e}(\lambda,\eps) , \label{Etilde-expand}
\end{align}
where $e_\pm(\lambda,\eps),\  \widetilde{e}(\lambda,\eps)\ =\ \mathcal{O}(1)$  as $\lambda, \eps\to0$, and $z_2=k_2^{(1)}+ik_2^{(2)}$. These functions can be represented in a convergent power series in $\eps$ and $\lambda$ in a fixed $\C^2$ neighborhood of the origin. Furthermore, $e_\pm(\lambda,\eps),\  \widetilde{e}(\lambda,\eps)$ are 
real-valued for real $\lambda$ and $\eps$.
\item Therefore, for all small $\eps$ and $\lambda$,  the three roots of $\det M(\eps,\delta=0,\lambda,\mu)$ may be labeled:
\begin{align}
\mu_+(\lambda,\eps)\ &=\ |\lambda^\eps_\sharp|\ |z_2|\ \lambda\ +\ \lambda^2\ e_+(\lambda,\eps)\label{mu+expand}\\
\mu_-(\lambda,\eps)\ &= -|\lambda^\eps_\sharp|\ |z_2|\ \lambda\ +\ \lambda^2\ e_-(\lambda,\eps)\label{mu-expand}\\
\tilde{\mu}(\lambda,\eps)\ &=\ \widetilde{E}_\star^\eps-E_\star^\eps\ +\  \lambda^2\ \widetilde{e}(\lambda,\eps)\ ,\ \ {\rm where}\ \widetilde{E}_\star^\eps-E_\star^\eps= 3\eps V_{1,1}\ +\ \mathcal{O}(\eps^2) . \label{mutilde-expand}
\end{align}
\end{enumerate}
\end{proposition}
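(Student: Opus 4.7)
The plan is to combine the Lyapunov--Schmidt reduction of Section \ref{LSreduction} with an $\eps$--parametrized refinement of Proposition \ref{directional-bloch}. By Proposition \ref{Msigtsig}, the $L^2(\R^2/\Lambda_h)$ eigenvalues of $H^{(\eps,0,\lambda)}$ in the relevant energy window are in one-to-one correspondence (via $E = E^\eps_\star + \mu$) with the roots of the cubic $\det M(\eps, 0, \lambda, \mu) = 0$, whose coefficients are jointly real-analytic in $(\eps, \lambda, \mu)$ on a fixed polydisc about the origin. At $\lambda = 0$, the matrix $M(\eps, 0, 0, \mu)$ inherits its block structure from the $\mathcal{R}$-eigenspace decomposition of $L^2_\bK$: it has a double root $\mu = 0$ (from the two-dimensional $L^2_{\bK,\tau} \oplus L^2_{\bK,\overline{\tau}}$ eigenspace of $H^{(\eps,0,0)}$ at $E^\eps_\star$) and a simple root $\mu = \widetilde{E}^\eps_\star - E^\eps_\star$ (from the $L^2_{\bK,1}$ direction); by Theorem \ref{diracpt-small-thm}, the latter equals $3\eps V_{1,1} + \mathcal{O}(\eps^2)$.

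For the simple root, the analytic implicit function theorem produces a real-analytic function $\widetilde{\mu}(\lambda, \eps)$ on a $\C^2$--polydisc $\mathcal{U}$ around $(0,0)$, real on $\mathcal{U}\cap\R^2$, with $\widetilde{\mu}(0, \eps) = \widetilde{E}^\eps_\star - E^\eps_\star$. To verify that the linear-in-$\lambda$ term vanishes, write $\widetilde{p}^\eps = e^{-i\bK\cdot\bx}\widetilde{\Phi}^\eps$ and apply first-order perturbation theory of a simple eigenvalue to obtain
\[
\partial_\lambda \widetilde{\mu}(0, \eps) = -2i\,\bk_2 \cdot \langle \widetilde{p}^\eps, (\nabla + i\bK)\widetilde{p}^\eps\rangle_{L^2(\R^2/\Lambda_h)} .
\]
The operator $\mathcal{C}\circ \mathcal{I}$ commutes with $H(\bK)$, so by simplicity of $\widetilde{E}^\eps_\star$ the eigenfunction $\widetilde{p}^\eps$ can be normalized to satisfy $\overline{\widetilde{p}^\eps(-\bx)} = \widetilde{p}^\eps(\bx)$. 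Two integration-by-parts steps, together with this identity, show that $\langle \widetilde{p}^\eps, \bk_2\cdot \nabla\widetilde{p}^\eps\rangle$ is both purely real and purely imaginary, hence zero, while the $i\bK$--contribution vanishes by $\|\widetilde{p}^\eps\|^2$-normalization. This establishes \eqref{Etilde-expand} and the formula for $\widetilde{\mu}$ in \eqref{mutilde-expand}; the expansion $\widetilde{E}^\eps_\star - E^\eps_\star = 3\eps V_{1,1} + \mathcal{O}(\eps^2)$ follows by subtracting \eqref{E*expand} from \eqref{tildeE*expand}.

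For the double root at $\lambda=0$, perform a further Lyapunov--Schmidt reduction within $M$, projecting onto $\mathrm{span}\{\Phi_1^\eps,\Phi_2^\eps\}$ and inverting the nondegenerate contribution of the $\widetilde{\Phi}^\eps$ direction (which is uniformly invertible for $\eps, \lambda$ small by the previous paragraph). This yields a $2\times 2$ Hermitian matrix $\mathfrak{M}(\eps,\lambda)$, jointly real-analytic in $(\eps,\lambda)$, whose two eigenvalues are $\mu_\pm(\lambda,\eps)$. The proof of Proposition \ref{directional-bloch} already supplies, for fixed $\eps$, the expansion \eqref{EpmKlam}: the off-diagonal entry of $\mathfrak{M}$ equals $-\overline{\lambda_\sharp^\eps}\mathfrak{z}_2\lambda + \mathcal{O}(\lambda^2)$ with $\lambda_\sharp^\eps\neq 0$ (Theorem \ref{diracpt-small-thm}), and the diagonal entries are $\mathcal{O}(\lambda^2)$. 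The explicit construction of $\mu_\pm$ through the crossing, using the phase factor $\mathfrak{z}_2/|\mathfrak{z}_2|$ and Friedrichs--Kato selection of analytic continuations, yields jointly real-analytic maps $(\lambda,\eps)\mapsto \mu_\pm(\lambda,\eps) \mp |\lambda_\sharp^\eps||z_2|\lambda$, each vanishing to order $\lambda^2$; dividing by $\lambda^2$ produces the analytic remainders $e_\pm(\lambda,\eps) = \mathcal{O}(1)$ claimed in \eqref{mu+expand}--\eqref{mu-expand}.

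The main obstacle is ensuring joint $(\eps,\lambda)$-analyticity of the $\pm$ branches through the Dirac crossing at $\lambda=0$: if one orders the eigenvalues $\mu_1\le\mu_2$ of $\mathfrak{M}$, analyticity fails at the conical point. The resolution, already implicit in Proposition \ref{directional-bloch}, is to define $\mu_\pm$ not by ordering but as analytic continuations of the eigenvalues of the $2\times 2$ analytic family $\mathfrak{M}(\eps,\lambda)$; this is exactly the non-ordered selection that extracts the linear Dirac dispersion $\pm|\lambda_\sharp^\eps||z_2|\lambda$ and leaves a jointly analytic $\lambda^2$ remainder.
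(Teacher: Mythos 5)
Your overall route is the paper's: Lyapunov--Schmidt reduction to the $3\times3$ matrix $M(\eps,0,\lambda,\mu)$, analytic (Rellich/Kato) perturbation theory for the degenerate pair as in Proposition \ref{directional-bloch}, simple-eigenvalue perturbation theory for the $\widetilde{E}$ branch, and $\widetilde{E}^\eps_\star-E^\eps_\star=3\eps V_{1,1}+\mathcal{O}(\eps^2)$ by subtracting \eqref{E*expand} from \eqref{tildeE*expand}. However, your argument that the $\mathcal{O}(\lambda)$ term in \eqref{Etilde-expand} vanishes is incorrect. The first-order coefficient is $-2i\bk_2\cdot\left\langle\widetilde{\Phi}^\eps,\nabla\widetilde{\Phi}^\eps\right\rangle_{L^2_\bK}=-2i\bk_2\cdot\left\langle\widetilde{p}^\eps,\nabla\widetilde{p}^\eps\right\rangle+2\,\bk_2\cdot\bK\,\|\widetilde{p}^\eps\|^2$, and \emph{neither} summand vanishes. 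The second equals $2\bk_2\cdot\bK=-q^2\neq0$ after normalization; normalization sets $\|\widetilde{p}^\eps\|^2=1$, it does not kill this term. As for the first: the $\mathcal{C}\circ\mathcal{I}$ normalization $\overline{\widetilde{p}^\eps(-\bx)}=\widetilde{p}^\eps(\bx)$, combined with integration by parts, yields only that $\bk_2\cdot\left\langle\widetilde{p}^\eps,\nabla\widetilde{p}^\eps\right\rangle$ is purely imaginary (both manipulations give $I=-\overline{I}$); there is no way to conclude it is also purely real, and in fact it is nonzero: as $\eps\to0$, $\widetilde{p}^\eps\to\frac{1}{\sqrt{3|\Omega|}}\big(1+e^{i\bk_2\cdot\bx}+e^{-i\bk_1\cdot\bx}\big)$, so $\left\langle\widetilde{p}^\eps,\nabla\widetilde{p}^\eps\right\rangle\to-i\bK$ and $\bk_2\cdot\left\langle\widetilde{p}^\eps,\nabla\widetilde{p}^\eps\right\rangle\to iq^2/2\neq0$. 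The two nonzero contributions cancel each other, and proving that cancellation is exactly the content of the claim $\left\langle\widetilde{\Phi}^\eps,\nabla\widetilde{\Phi}^\eps\right\rangle_{L^2_\bK}=0$; parity-plus-conjugation symmetry alone cannot deliver it, since it only reproduces the automatic fact that $\D_\lambda\widetilde{E}(0)$ is real.

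The missing ingredient is the $2\pi/3$ rotational symmetry, which is what the paper invokes: since $\widetilde{\Phi}^\eps\in L^2_{\bK,1}$ (i.e.\ $\mathcal{R}\widetilde{\Phi}^\eps=\widetilde{\Phi}^\eps$, using that $R^*\bK=\bK$ mod $\Lambda_h^*$), a change of variables gives $\left\langle\widetilde{\Phi}^\eps,\nabla\widetilde{\Phi}^\eps\right\rangle=R^{-1}\left\langle\widetilde{\Phi}^\eps,\nabla\widetilde{\Phi}^\eps\right\rangle$ in $\C^2$, and since $R$ has eigenvalues $\tau,\overline{\tau}\neq1$ the vector must vanish; this is the argument of Proposition 4.1 of [FW:12] cited in the paper's proof. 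You should replace your symmetry step with this one (note it kills the full vector $\left\langle\widetilde{\Phi}^\eps,\nabla\widetilde{\Phi}^\eps\right\rangle$, i.e.\ the sum of the two pieces above, not each piece separately). The remainder of your proposal — the $2\times2$ reduction for $\mu_\pm$, the identification of the off-diagonal entry $-\overline{\lambda^\eps_\sharp}\,\mathfrak{z}_2\lambda+\mathcal{O}(\lambda^2)$, and the non-ordered analytic selection of branches through the crossing — is consistent with the paper's proof of Proposition \ref{directional-bloch} and is stated at essentially the same level of detail as the paper itself.
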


\begin{proof}[Proof of Proposition \ref{roots-delta0}]
Part 1 can be proved as follows. The expansion \eqref{Epm-expand} and analyticity follow by perturbation theory as in Proposition \ref{directional-bloch}; see also  \cites{Friedrichs:65,kato1995perturbation}.  The expansion \eqref{mutilde-expand} also follows by perturbation theory of the simple eigenvalue $\widetilde{E}^\eps_\star $ for $\lambda=0$.
 It is easy to see that 
 \begin{equation}
 \widetilde{E}^{\eps,\delta=0}(\lambda)\ 
 =\ \widetilde{E}_\star^\eps\ +\
 \lambda\ \times\left(\  -2i\bk_2\cdot \left\langle\widetilde{\Phi}^\eps,\nabla_\bx\widetilde{\Phi}^\eps
\right\rangle_{L^2_\bK}\ \right)\ +\ \lambda^2\ \widetilde{e}(\eps,\lambda).
\end{equation}
We claim that $\left\langle\widetilde{\Phi}^\eps,\nabla_\bx\widetilde{\Phi}^\eps
\right\rangle_{L^2_\bK}=0$. This follows since $\widetilde{\Phi}^\eps\in L^2_{\bK,1}$ as in the proof of Proposition 4.1 of \cites{FW:12}. This proves the expansion \eqref{Etilde-expand}.
\end{proof}

Finally we discuss a useful symmetry of $\det M(\eps,\delta,\lambda,\mu=0)$.

\begin{proposition}\label{symmetry-detM}
Assume $V$ satisfies Assumptions (V) and $W$ satisfies Assumptions (W).
Recall $M(\eps,\delta,\lambda,0)$, defined in \eqref{calMdef1}-\eqref{Msts}.  
Then, $\det M(\eps,\delta,\lambda,0)$ is real-valued  for real $\eps, \delta, \lambda$ and analytic in a small neighborhood of the origin in $\C^3$. Furthermore, 
\begin{align}
\det M(\eps,\delta,\lambda,0)&=\det M(\eps,-\delta,\lambda,0)
\label{delta-minusdelta}
\end{align}
and therefore $\det M(\eps,\delta,\lambda,0)$ is a real analytic in $\eps$, $\delta^2$ and $\lambda$, and we write
\begin{align*}
D(\eps,\delta^2,\lambda)\equiv \det M(\eps,\delta,\lambda,0) &
\end{align*}
\end{proposition}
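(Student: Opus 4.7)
The two background assertions, that $\det M(\eps,\delta,\lambda,0)$ is real for real parameters and analytic in a neighborhood of the origin of $\C^3$, are essentially immediate. The operator $\mathcal{M}(\eps,\delta,\lambda,\mu)$ is self-adjoint on $L^2(\R^2/\Lambda_h)$ for real $(\eps,\delta,\lambda,\mu)$ by construction, so its matrix $M$ in the orthonormal basis $\{p_1,p_\tau,p_{\overline\tau}\}$ is Hermitian and hence has real determinant; analyticity is part~1 of Proposition~\ref{Msigtsig}. The substantive content is the symmetry $\delta\mapsto-\delta$.

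My plan is to prove this symmetry by constructing an anti-linear unitary $U$ on $L^2(\R^2/\Lambda_h)$ that intertwines $H^{(\eps,\delta,\lambda)}$ with $H^{(\eps,-\delta,\lambda)}$ and respects the Lyapunov--Schmidt decomposition of Section~\ref{LSreduction}. Let $\mathcal{C}$ denote complex conjugation and $\mathcal{I}:f(\bx)\mapsto f(-\bx)$, and set $U=\mathcal{C}\mathcal{I}$. A short calculation, using $\mathcal{I}\nabla\mathcal{I}=-\nabla$ and $\mathcal{C}i\mathcal{C}=-i$, shows that $U$ commutes with $-(\nabla+i(\bK+\lambda\bk_2))^2$; that $UVU^{-1}=V$ by (V2) and realness of $V$; and that $UWU^{-1}=-W$ by (W2) and realness of $W$. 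Hence
\[
U\,H^{(\eps,\delta,\lambda)}\,U^{-1}=H^{(\eps,-\delta,\lambda)}.
\]

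The next step is to verify that $U$ preserves the three-dimensional subspace $P^\parallel L^2={\rm span}\{p_1,p_\tau,p_{\overline\tau}\}$. From the explicit formula \eqref{p_sigma} a direct computation gives $U p_\sigma=p_{\overline\sigma}$, so $U$ permutes the basis via the involution $\pi:\sigma\mapsto\overline\sigma$ (fixing $\sigma=1$, swapping $\tau\leftrightarrow\overline\tau$); in particular $[U,P^\parallel]=[U,P^\perp]=0$. Combining this with the intertwining on $H^{(\eps,\delta,\lambda)}$ and applying $U$-conjugation to the closed-form $\mathcal{P}=-(A+B)^{-1}B$ (with $A=H^{(\eps,0,\lambda)}-\mu$ real, so $UAU^{-1}=A$ for real $\mu$, and with $B=P^\perp(\eps V+\delta W)P^\perp$ transformed to $UBU^{-1}=P^\perp(\eps V-\delta W)P^\perp$) yields
\[
U\,\mathcal{M}(\eps,\delta,\lambda,\mu)\,U^{-1}=\mathcal{M}(\eps,-\delta,\lambda,\mu).
\]

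Finally, taking matrix elements, using the anti-linear identity $\langle Uf,Ug\rangle=\overline{\langle f,g\rangle}$ and Hermiticity of $M$, one obtains
\[
M_{\sigma\tsigma}(\eps,-\delta,\lambda,\mu)=\overline{M_{\pi(\sigma),\pi(\tsigma)}(\eps,\delta,\lambda,\mu)}=M_{\pi(\tsigma),\pi(\sigma)}(\eps,\delta,\lambda,\mu),
\]
so $M(\eps,-\delta,\lambda,\mu)=P_\pi\,M(\eps,\delta,\lambda,\mu)^T\,P_\pi$ where $P_\pi$ is the permutation matrix for $\pi$. Since transpose preserves determinant and $(\det P_\pi)^2=1$, we conclude $\det M(\eps,-\delta,\lambda,\mu)=\det M(\eps,\delta,\lambda,\mu)$; setting $\mu=0$ gives \eqref{delta-minusdelta}, and combined with analyticity in $\delta$ this yields that $D(\eps,\delta^2,\lambda)\equiv\det M(\eps,\delta,\lambda,0)$ is analytic in $\delta^2$. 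I anticipate no genuine obstacle; the only place that requires care is the covariance of the resolvent in the Schur complement under the anti-linear $U$, which reduces to the fact that $\mu$ is real and $U$ fixes real scalars.
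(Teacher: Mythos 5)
Your proposal is correct and follows essentially the same route as the paper: the paper's proof also uses the anti-linear map $\mathcal{C}\circ\mathcal{I}$ to intertwine $H^{(\eps,\delta,\lambda)}$ with $H^{(\eps,-\delta,\lambda)}$, observes that it fixes $p_1$ and swaps $p_\tau\leftrightarrow p_{\overline\tau}$ (hence commutes with $P^\parallel$), and concludes that $M(\eps,-\delta,\lambda,0)$ is obtained from $M(\eps,\delta,\lambda,0)$ by interchanging the second and third rows and columns, so the determinants agree. Your additional bookkeeping (the conjugate/transpose from anti-linearity, removed via Hermiticity, and the covariance of the Schur complement $\mathcal{P}$) just makes explicit steps the paper leaves implicit.
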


\begin{proof}[Proof of Proposition \ref{symmetry-detM}]
Let $\mathcal{I}[f](\bx)=f(-\bx)$ and $\mathcal{C}[f](\bx)=\overline{f(\bx)}$.  Using that $\eps, \delta$ and $\lambda$ are real, $V(\bx)$ is even and $W(\bx)$ is odd, one can check directly that 
\begin{align}
\mathcal{C}\circ\mathcal{I}\circ H^{\eps,-\delta,\lambda} &= H^{\eps,\delta,\lambda}\circ
 \mathcal{I}\circ\mathcal{C}\label{sym1}
 \end{align}
  Furthermore, note that
  \begin{align}
  &(\mathcal{C}\circ\mathcal{I}) p_1 = p_1,\quad  
  (\mathcal{C}\circ\mathcal{I}) p_\tau = p_{\overline\tau},\quad  
  (\mathcal{C}\circ\mathcal{I}) p_{\overline\tau} = p_\tau.
 \label{CIp} \end{align}
It follows that 
 \begin{align}
 \mathcal{C}\circ\mathcal{I}\circ P^\parallel &= P^\parallel\circ \mathcal{C}\circ\mathcal{I} .
 \label{sym2}\end{align}
Using the symmetry relations \eqref{sym1}-\eqref{sym2}
 to rewrite $M(\eps,-\delta,\lambda,0)$ in terms of $H^{\eps,\delta,\lambda}$, 
 we find that  $M(\eps,-\delta,\lambda,0)$ can be transformed into
    $M(\eps,\delta,\lambda,0)$ by interchanging its second and third rows, and then  interchanging its second and third columns. 
      Therefore,  $\det M(\eps,\delta,\lambda,0)=\det M(\eps,-\delta,\lambda,0)$ and the proof of Proposition \ref{symmetry-detM} is complete.
\end{proof}

\subsection{Strategy for analysis of $\det M(\eps,\delta,\lambda,\mu)$ in the case $\eps V_{1,1}>0$} \label{sec:strategy}

We first observe that for a positive constant, $d_1$,  if $|\eps|+|\delta|<d_1$ then 
\begin{equation}
|E_j^{(\eps,\delta)}(\lambda)-E_\star^\eps|\  \ge c_4(1+|j|),\ \ j\ge4 .
\label{outer-lams-bbig}\end{equation}
Indeed, by the   discussion following Proposition \ref{Msigtsig}, we have that there exists $d_2>0$ such that for $j\geq4$,  
$|\mu_j(\eps,\delta,\lambda)|\equiv |E_j^{(\eps,\delta)}(\lambda)-E_\star^\eps|\ge d_2$; 
the lower bound \eqref{outer-lams-bbig} now follows from the Weyl asymptotics for eigenvalues of second order elliptic operators in two space dimensions.  
%
%

Hence, we restrict our attention to $E_j^{(\eps,\delta)}(\lambda)=E_\star^\eps+\mu_j(\eps,\delta),\ j=1,2,3$, which we study by a detailed analysis of $\det M(\eps,\delta,\lambda,\mu=0)$. The analysis consists of verifying two steps, which we now outline.

\nit {\bf Step 1:}\  
 Fix $c_\flat>0$ and arbitrary. We will prove that there exists $C_\flat>0$, such that the following holds. There exists  $\eps_1>0$ and constant $c_3$, depending on $V$ and $W$,  such that
 for all $0<|\eps|<\eps_1$ and $0\le|\delta|\le c_\flat\ \eps^2$:
\begin{align}
C_\flat\sqrt{\eps}\le |\lambda|\le\frac12\ \implies \  |E_j^{(\eps,\delta)}(\lambda)-E_\star^\eps|\ & \ge c_3 \eps,\ \ j=1,2,3 \label{outer-lams} .
\end{align}
\nit Furthermore, by \eqref{outer-lams} and \eqref{outer-lams-bbig},  it follows that 
 \begin{align}
 & C_\flat\sqrt{\eps}\le |\lambda|\le\frac12\ \implies \ L^2(\R/\Lambda_h)-{\rm spec}( H^{(\eps,\delta,\lambda)})\ \cap  \Big[E_\star^\eps-c\eps,E_\star^\eps+c\eps\Big]\ \ \textrm{is empty}. \label{Step1}
 \end{align}

\nit {\bf Step 2:}\  Let $c_\flat$ and $C_\flat$ be as in Step 1. We will prove that there exists $0<\eps_2\le \eps_1$, such that  if $(\lambda,\delta)$ are in the set:
\begin{equation}
 |\lambda|\le C_\flat\ \eps^{\frac12}\ \ \textrm{and}\ \ 0\le |\delta|\ \le c_\flat\ \eps^2,\ \ {\rm where} \ 0<|\eps|<\eps_2,\ \ {\rm then}\label{smiley}
 \end{equation}
\begin{align}
\det M(\eps,\lambda,\delta,0)\ =\ 
 \left(q^2\lambda^2 + \eps V_{1,1}\right) \left( q^4\lambda^2 + \delta^2\ \left|W_{0,1}+W_{1,0}-W_{1,1} \right|^2 \right)\times\left(\ 1+o(1)\ \right).
 \label{Dlower}
\end{align}
A simple lower bound on the three eigenvalues $|\mu_j(\eps,\lambda)|=|E_j^\eps(\lambda)-E^\eps_\star|$ is then obtained as follows.
By \eqref{Dlower}, for some positive constants: $C_1$ and $C_2$, we have
{\small{ 
\begin{align*}
&C_1(\lambda^2+\eps)\cdot (\lambda^2+\delta^2)\ \le\ \left| \det M(\eps,\lambda,\delta,0) \right| \\
 &= \left| \det M(\eps,\lambda,\delta,0) - \det M (\eps,\lambda,\delta,\mu_j(\lambda,\eps,\delta))\right|\ \le C_2\ |0-\mu_j(\eps,\delta,\lambda)| = C_2\ |\mu_j(\eps,\delta,\lambda)|,
\end{align*}
}}
for $j=1,2,3$.  
Therefore, with $C_3=C_1/C_2$, 
\begin{equation}
\left| E_j^{\eps,\delta}(\lambda)-E_\star^\eps\right|  = |\mu_j(\eps,\delta,\lambda)| \geq C_3\ \eps\ (\lambda^2+\delta^2) \ \ge\ C_3\ \eps\ \delta^2,\ \ j=1,2,3.
\label{Ediff-lb}\end{equation}
 It follows from \eqref{Ediff-lb} and \eqref{outer-lams-bbig} that
\begin{equation}
\textrm{the $L^2(\R^2/\Lambda_h)-$\ spec( $H^{(\eps,\delta,\lambda)}$ )$\ \cap\  [E_\star^\eps-\eta,E_\star^\eps+\eta]$\ \ is empty}
\nn\end{equation}
with $\eta= \frac{1}{2}C_3\ \eps\ \delta^2$, whenever  $\eps$, $(\lambda,\delta)$ satisfy the constraints \eqref{smiley}.

Theorem \ref{SGC!} and Theorem \ref{delta-gap} are immediate consequences of \eqref{Step1} (Step 1) and \eqref{Ediff-lb} (Step 2) and the representation:
\begin{equation*} \left. H^{(\eps,\delta)}\ \right|_{L^2(\Sigma_{k_\parallel=2\pi/3})}\ =\
  \oplus\ \int_{|\lambda|\le\frac12}\  \left.H^{(\eps,\delta,\lambda)}\ \right|_{L^2(\R^2/\Lambda_h)}\ d\lambda .
  \end{equation*}

Hence, we now turn to their proofs. We first carry out Step 1  by a simple perturbation analysis about the free Hamiltonian, $H^{(0,0,\lambda)}$. We then turn to Step 2, which is much more involved.

{\bf Verification of \eqref{outer-lams} from Step 1.}
 Let $C_\flat$ denote a positive constant, which we will specify shortly, and consider the range $C_\flat\sqrt{\eps}\le\lambda\le1/2$. Using the expressions for $\mu^{(0)}_j(\lambda),\ j=1,2,3$,  in \eqref{mulam>0} we have that if $\eps\le \eps'(C_\flat)\equiv(4C_\flat^2)^{-1}$, then 
$  |E_1^{(0)}(\lambda)-E_\star^0|= q^2|\lambda|\ |1-\lambda|\ge C_\flat q^2 \sqrt{\eps}/2$,\ 
  $|E_2^{(0)}(\lambda)-E_\star^0|= q^2|\lambda|^2\ \ge C_\flat^2 q^2 \eps$,\  
and $|E_3^{(0)}(\lambda)-E_\star^0|= q^2|\lambda|\ |1+\lambda|\ge C_\flat^2 q^2 \sqrt{\eps}/2$.
Note that the eigenvalues $(\eps,\delta,\lambda)\mapsto E^{\eps,\delta}(\lambda)$ are Lipschitz continuous functions;
see Chapter XII of \cites{RS4} or Appendix A of \cites{FW:14}.
Therefore, we have
\begin{equation*}
|E_2^{(\eps,\delta)}(\lambda)-E_\star^\eps|\ \ge C_\flat^2\ q^2\ \eps\ -\ |\eps|\ \|V\|_\infty\ -\ |\delta|\ ||W\|_\infty\ge \frac{1}{2}\ C_\flat^2\ q^2\ \eps ,
\end{equation*}
for some $C_\flat$ positive, finite and sufficiently large. With this choice of $C_\flat$, we also have
\begin{align*}
|E_1^{(\eps,\delta)}(\lambda)-E_\star^\eps|&\ \ge C_\flat\ q^2\ \sqrt{\eps}/2\ -\ |\eps|\ \|V\|_\infty\ -\ |\delta|\ ||W\|_\infty , \nn\\
 |E_3^{(\eps,\delta)}(\lambda)-E_\star^\eps&|\ge C_\flat\ q^2\ \sqrt{\eps}/2\ -\ |\eps|\ \|V\|_\infty\ -\ |\delta|\ ||W\|_\infty .
\end{align*}
Therefore, there exists $\eps_1>0$, such that for all $0<|\eps|<\eps_1$ we have
\begin{equation*}
|E_1^{(\eps,\delta)}(\lambda)-E_\star^\eps|\ +\ |E_3^{(\eps,\delta)}(\lambda)-E_\star^\eps|\ \ge\ C_\flat\ q^2\ \sqrt{\eps}/4 .
\end{equation*}
This completes the proof of the assertions in Step 1.

\subsection{Expansion of $M(\eps,\delta,\lambda,0)$ and its determinant for $\eps V_{1,1} \neq  0$}
\label{Mexpansion}

The key to verifying Step 2 and proving Theorem \ref{SGC!} is the following:

\begin{proposition}\label{detM-expansion}
Let $C_\flat$ be as chosen in Step 1; see \eqref{Step1}.
 Then, there exist constants $\eps_2>0$ and $ c>0$ such that for all $0\le\eps<\eps_2$, if 
 \begin{equation}
 0\le|\lambda|\ \le\ C_\flat\ \eps^{1/2}\ \ {\rm and}\ \ 0\le |\delta| \le c\ \eps^2,\ \ {\rm  then}
 \label{smiley3}
 \end{equation}
\begin{align}
-\det M(\eps,\delta,\lambda,0)\
&=
 \pi(\eps,\delta^2,\lambda) \ + \  o\left( \pit \right) ,
\label{detM-exp}
\end{align}
where
\begin{align}
\pi(\eps,\delta^2,\lambda)&\equiv \left(q^2\lambda^2 + \eps V_{1,1}\right) \left( q^4\lambda^2 + \delta^2\ \left|W_{0,1}+W_{1,0}-W_{1,1} \right|^2 \right).
\label{pi-def}\end{align}
Here, $W_\bfm$, $ \bfm\in\Z^2$, denote Fourier coefficients of $W(\bx)$ and, by  \eqref{smiley3},  the correction term in \eqref{detM-exp} divided by $(\lambda^2+\eps)(\lambda^2+\delta^2)$ tends to zero as $\eps$ tends to zero. 
Thus,  
\begin{align*}
\eps V_{1,1}>0\ \implies\ -\det M(\eps,\delta,\lambda,0)\
&=
 \pi(\eps,\delta^2,\lambda)\ \left(1 +  o(1) \right)\ \textrm{in the region \eqref{smiley3}.}
\end{align*}

\end{proposition}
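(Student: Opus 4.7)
\begin{proofof}
The plan is to expand the $3\times 3$ Hermitian matrix $M(\eps,\delta,\lambda,0)$ defined by \eqref{Msts} in powers of the small parameters, compute its determinant explicitly at leading order, and identify the result with $\pi(\eps,\delta^2,\lambda)$.

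First, I will compute the matrix elements $M_{\sigma\tilde\sigma}$ in the basis $\{p_1,p_\tau,p_{\bar\tau}\}$ of \eqref{p_sigma}. Modulo the Schur-complement contribution of \eqref{calMdef2}, which is of second order in $(\eps,\delta)$, the matrix element is $\langle p_\sigma,[-(\nabla+i(\bK+\lambda\bk_2))^2-E_\star^\eps+\eps V+\delta W]p_{\tilde\sigma}\rangle$. The kinetic block is computed by direct Fourier action: since each $p_\sigma$ is a linear combination of $e^{i\bfn\vec\bk\cdot\bx}$ for $\bfn\in\{(0,0),(0,1),(-1,0)\}$, and these three Fourier vectors map under $\bK\mapsto\bK+\lambda\bk_2$ to $\bK,R\bK,R^2\bK$ shifted by $\lambda\bk_2$ with $\bK\cdot\bk_2=-q^2/2$, $R\bK\cdot\bk_2=q^2/2$, $R^2\bK\cdot\bk_2=0$, one obtains a circulant matrix whose diagonal equals $|\bK|^2+q^2\lambda^2$ and whose off-diagonals are $\alpha\lambda=\tfrac{q^2}{3}(\bar\tau-1)\lambda$ and $\bar\alpha\lambda$. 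For the $\eps V$ contribution, the Fourier expansion of Proposition \ref{honey-cosine} combined with the $\widetilde R$-invariance shows all off-diagonals vanish; the diagonals reduce to $\eps(V_{0,0}+2V_{1,1})$ at $p_1$ and $\eps(V_{0,0}-V_{1,1})$ at $p_\tau,p_{\bar\tau}$; subtracting $E_\star^\eps$ via \eqref{E*expand} leaves $3\eps V_{1,1}$ in the $(1,1)$ entry and $0$ elsewhere, modulo $O(\eps^2)$. For the $\delta W$ contribution, the oddness of $W$ combined with direct Fourier evaluation yields $\langle p_1,Wp_1\rangle=0$, $\langle p_\tau,Wp_\tau\rangle=\theta$, $\langle p_{\bar\tau},Wp_{\bar\tau}\rangle=-\theta$, and $\langle p_\tau,Wp_{\bar\tau}\rangle=0$, where $\theta=\tfrac{i}{\sqrt3}(W_{0,1}+W_{1,0}-W_{1,1})$ is real and satisfies $3\theta^2=|W_{0,1}+W_{1,0}-W_{1,1}|^2$; the couplings between $p_1$ and $p_{\tau,\bar\tau}$ take the form $X,\bar X$ for some $X\in\C$ depending on $W_{0,1},W_{1,0},W_{1,1}$.

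Second, I will expand $\det M$ explicitly. Writing $a=3\eps V_{1,1}+\lambda^2q^2$, $b=\alpha\lambda$, $c=\lambda^2q^2$, the matrix modulo Schur corrections is
\[
M=\begin{pmatrix} a & b+\delta X & \bar b+\delta\bar X\\ \bar b+\delta\bar X & c+\delta\theta & b\\ b+\delta X & \bar b & c-\delta\theta\end{pmatrix},
\]
and expansion of the determinant, using $b^3+\bar b^3=2\lambda^3\operatorname{Re}(\alpha^3)=0$ (a consequence of the relations in \eqref{ip12_1}--\eqref{ip21_2}), gives
\[
\det M = a(c^2-|b|^2-\delta^2\theta^2)-2c(|b|^2+2\delta\operatorname{Re}(b\bar X)+\delta^2|X|^2)+4\delta\operatorname{Re}(b^2X)+2\delta^2\operatorname{Re}(bX^2).
\]
By Proposition \ref{symmetry-detM} the terms linear in $\delta$ must vanish identically, providing a useful algebraic check. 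Collecting the remaining terms and using $3\theta^2=|\omega|^2$ with $\omega=W_{0,1}+W_{1,0}-W_{1,1}$, the dominant contribution is
\[
-\det M = \eps V_{1,1}q^4\lambda^2+q^6\lambda^4+\delta^2\eps V_{1,1}|\omega|^2+R(\eps,\delta,\lambda),
\]
where $R$ collects all remaining terms, namely $-3\eps V_{1,1}q^4\lambda^4$, $-q^6\lambda^6$, and the $\delta^2$ contributions of orders $\lambda^2\delta^2$, $\lambda\delta^2$ and $\delta^4$, together with Schur-complement corrections.

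Third, I will verify the error estimate. The first three displayed terms sum to $\pi(\eps,\delta^2,\lambda)-q^2\lambda^2\delta^2|\omega|^2$, so the total error relative to $\pi$ is $R(\eps,\delta,\lambda)-q^2\lambda^2\delta^2|\omega|^2$. In the regime \eqref{smiley3}, $|\lambda|\le C_\flat\sqrt\eps$ and $|\delta|\le c\eps^2$, so $\lambda^2=O(\eps)$, $\delta^2=O(\eps^4)$, and hence the benchmark $(\lambda^2+\eps)(\lambda^2+\delta^2)=O(\eps^2)$. Every remaining term in $R$ is a monomial of strictly higher weight: $\eps\lambda^4=O(\eps^3)$, $\lambda^6=O(\eps^3)$, $\lambda^2\delta^2=O(\eps^5)$, $\delta^4=O(\eps^8)$. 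The Schur-complement contribution to each matrix entry is $O(\eps^2,\eps\delta,\delta^2)$, but the $O(\eps^2)$ shifts to the $(\tau,\tau)$ and $(\bar\tau,\bar\tau)$ diagonal entries are automatically absorbed into the definition of $E_\star^\eps$ (since $E_\star^\eps$ is by construction an exact eigenvalue of $H^{(\eps,0,0)}$), so the net Schur-induced correction to $\det M$ is at worst $O(\eps^2)\cdot|b|^2\sim\eps^3$ or $O(\eps^2)\cdot c^2\sim\eps^4$. All such errors are $o(\eps^2)=o((\lambda^2+\eps)(\lambda^2+\delta^2))$, which yields \eqref{detM-exp}.

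The main obstacle is the careful bookkeeping of orders and the need to verify that the $O(\eps^2)$ Schur contributions really are absorbed into $E_\star^\eps$ without leaving a residual that competes with the leading terms; this is what forces the restriction $|\delta|\le c\eps^2$ rather than the more obvious $|\delta|\le c\eps$, since only in this tight regime can the entire family of higher-order terms be uniformly controlled against $\pi$.
\end{proofof}
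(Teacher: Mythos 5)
Your route is the same as the paper's: reduce to the explicit $3\times3$ matrix in the basis $\{p_1,p_\tau,p_{\overline\tau}\}$, compute its determinant, and identify the leading polynomial $\pi(\eps,\delta^2,\lambda)$; your kinetic, $V$- and $W$-blocks agree with the paper's $J$, $\mathcal{V}$, $\mathcal{W}$, and the leading-order algebra comes out right. The genuine gap is in the error control, which is where most of the paper's work lies. You benchmark every neglected contribution against $\eps^2$ and assert ``$o(\eps^2)=o\left(\pit\right)$''. This is backwards: on the region \eqref{smiley3} one has $\pit=O(\eps^2)$, but the benchmark degenerates as $\lambda,\delta\to0$ (it can be of size $\eps\,\delta^2\lesssim\eps^9$ and vanishes at $\lambda=\delta=0$), so a remainder that is merely $O(\eps^3)$ in absolute size is \emph{not} $o\left(\pit\right)$ uniformly. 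In particular, your crude bound on the Schur-complement terms, ``$O(\eps^2)\cdot|b|^2$ or $O(\eps^2)\cdot c^2$'', does not exclude contributions behaving like $\eps^2\lambda$ or $\eps^3$ with no compensating powers of $\lambda$ or $\delta$; for $\lambda,\delta\ll\eps$ such terms would swamp $\pi$ itself. What is needed, and what your argument never establishes, is that every monomial in the power series of $D(\eps,\delta^2,\lambda)=\det M(\eps,\delta,\lambda,0)$ beyond $\pi$ carries enough powers of $\lambda$ and/or $\delta^2$: the paper gets this from exact structural facts about $D$ — evenness in $\delta$ (Proposition \ref{symmetry-detM}), $D(\eps,0,0)=0$ identically (no pure-$\eps$ monomials, since $E^\eps_\star$ is an eigenvalue at $\lambda=\delta=0$), and $|D(\eps,0,\lambda)|\le C_\eps\lambda^2$ (Lemma \ref{det-delta0}, a consequence of the conical crossing), which rules out $\eps^a\lambda$ monomials — fed into the weight/monomial bookkeeping of Propositions \ref{ld2} and \ref{neglect}. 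Your ``absorbed into $E^\eps_\star$'' remark captures a fragment of this (the exact $(2,2)$, $(3,3)$ entries vanish at $\lambda=\delta=0$) but is not developed into a bound that controls the remainder uniformly where $\pi$ is small.

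A secondary, fixable point: since $W$ is real and odd, its Fourier coefficients are purely imaginary, so the coupling of $p_1$ to $p_{\overline\tau}$ is minus the conjugate of its coupling to $p_\tau$; the correct $(1,3)$, $(3,1)$ entries are $\bar b-\delta\bar X$, $b-\delta X$, not $\bar b+\delta\bar X$, $b+\delta X$ (compare the displayed $M^{approx}$ in the paper). With your matrix the linear-in-$\delta$ terms $4\delta\Re(b^2X)-4c\,\delta\Re(b\bar X)$ do not cancel, so the ``algebraic check'' you cite from Proposition \ref{symmetry-detM} would in fact fail for your matrix. The slip is harmless for the leading term (those terms are of type $\lambda^2\delta$ and $\lambda^3\delta$, hence negligible), but it is another sign that the bookkeeping as written is not tight enough to certify the uniform $o\left(\pit\right)$ remainder claimed in \eqref{detM-exp}.
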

%

\nit We now embark on an expansion of $\det M(\eps,\delta,\lambda,0)$ and  the proof of Proposition \ref{detM-expansion}. $M(\eps,\delta,\lambda,0)$, see \eqref{calMdef1}-\eqref{Msts},
may be written as the sum of matrices:
\begin{equation}
M(\eps,\delta,\lambda,0)=\ \left[\ M^0+M^V+M^W+M^{\mathcal{P}}\ \right](\eps,\delta,\lambda,0),
 \label{Msum}
\end{equation}
where the $\sigma,\tsigma=1,\tau,\overline{\tau}$ entries are given by:
\begin{align}
M^0_{\sigma,\tsigma}(\eps,\delta,\lambda,0)\ &=\ \left\langle p_\sigma,\left(\ -\left(\nabla+i[\bK+\lambda\bk_2] \right)^2-E_\star^\eps \ \right) p_\tsigma\right\rangle_{L^2(\R^2/\Lambda_h)} , \label{M0}\\
M^V_{\sigma,\tsigma}(\eps)\ &=\ \eps \left\langle p_\sigma, V p_\tsigma\right\rangle_{L^2(\R^2/\Lambda_h)} , \nn \\ 
M^W_{\sigma,\tsigma}(\delta) &=\  \delta \left\langle p_\sigma, W p_\tsigma\right\rangle_{L^2(\R^2/\Lambda_h)} , \nn \\ 
M^{\mathcal{P}}_{\sigma,\tsigma}(\eps,\delta,\lambda,0)\ &=\   \left\langle p_\sigma,(\eps  V + \delta W)\ 
\mathcal{P}(\eps,\delta,\lambda,0)\ p_\tsigma\right\rangle_{L^2(\R^2/\Lambda_h)} . \nn 
\end{align}
For $\eps$ and $\delta$ small, 
\begin{align}
&M^0_{\sigma,\tsigma}(\eps,\delta,\lambda,0)\ =\ 
M^{0,approx}_{\sigma,\tsigma}(\eps,\lambda)\ +\ \mathcal{O}(\eps^2),  \label{M0-M0app} 
\end{align}
where (inner products over $L^2(\R^2/\Lambda_h)$)
{\small{
\begin{align}
&M^{0,approx}_{\sigma,\tsigma}(\eps,\lambda) =\left\langle p_\sigma,\left( -\left(\nabla+i[\bK+\lambda\bk_2] \right)^2- [E_\star^0 + \eps(V_{0,0}-V_{1,1})] \right) p_\tsigma\right\rangle, \ {\rm and}  \label{M0-expanded}\\ 
 & M^{\mathcal{P}}_{\sigma,\tsigma}(\eps,\delta,\lambda,0) 
 \label{MP-expanded}\\
  &\quad = 
\left\langle (\eps V + \delta W) p_\sigma , P^\perp\left(-(\nabla+i[\bK+\lambda\bk_2])^2-E_\star^\eps +\eps V +\delta W\right)^{-1}P^\perp  (\eps V + \delta W) p_\tsigma\right\rangle \nn \\
&\quad = \left\langle (\eps V + \delta W) p_\sigma , P^\perp\left(-(\nabla+i[\bK+\lambda\bk_2])^2-E_\star^\eps \right)^{-1}P^\perp  (\eps V + \delta W) p_\tsigma\right\rangle \nn \\
&\quad\qquad +\mathcal{O}(\delta^3 + \delta^2\eps + \delta \eps^2 + \eps^3)   = \mathcal{O}(\eps^2 + \eps\delta + \delta^2) .
\nn
\end{align}}}

\nit We next explain that to calculate the determinant of $ M(\eps,\delta,\lambda,0)$ to the desired order in the region \eqref{smiley}, it suffices to calculate the determinant of the approximate matrix:
\begin{equation}
 M^{approx}(\eps,\delta,\lambda)\ \equiv \  M^{0,approx}(\eps,\lambda)+M^V(\eps)+M^W(\delta) . \label{detMapprox}
\end{equation}
That is,  we show that the omitted terms  in  $M(\eps,\delta,\lambda,0)-M^{approx}(\eps,\delta,\lambda,0)$ contribute negligibly to the determinant of $ M(\eps,\delta,\lambda,0)$,  when compared with the polynomial, $\pi(\eps,\delta^2,\lambda)$,  in \eqref{pi-def}, provided $\lambda$ and $\delta$ are in the region \eqref{smiley}: 
\begin{equation}
  |\lambda|\ \le C_\flat\ \eps^{\frac12}\ \ \textrm{and}\ \ |\delta|\le c_\flat\ \eps^2 ,
  \label{smiley2}
  \end{equation}
 where $C_\flat$ and $c_\flat$ are appropriately chosen constants.

  Recall that $D(\eps,\delta^2=0,\lambda=0)= \det M (\eps,0,0,0) =0$ for all $\eps$, since $\mu=0$ corresponds to $E=E^\eps_\star$, which is an eigenvalue of $H^{(\eps,\delta=0,\lambda=0)}=-\Delta+\eps V$.
   Thus, $D(\eps,\delta^2,\lambda)=\det M(\eps,\delta,\lambda,0)$ is a convergent power series in $\eps$, $\delta^2$ and $\lambda$ with no ``pure $\eps$'' terms. On the other hand,  
   the entries of the matrix $M(\eps,\delta,\lambda,0)$ are convergent power series in $\eps, \delta$ and $\lambda$. 
   
   \begin{proposition}\label{ld2}
 For $(\lambda,\delta)$ in the region \eqref{smiley2} we have
\[ |\lambda|\ \delta^2 = o \left( \pit \right) .\] 
 Therefore we may drop the $\mathcal{O}(\lambda\delta^2)$ terms, a further simplification.
 %
 \end{proposition}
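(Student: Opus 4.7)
The plan is to bound the ratio $|\lambda|\delta^2 / \pit$ uniformly on the region \eqref{smiley2} by a quantity that tends to zero with $\eps$. The main tool is the AM--GM inequality, together with the elementary lower bound $\lambda^2+\eps \ge \eps$.

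First, I would apply AM--GM in the form $\lambda^2+\delta^2\ge 2|\lambda||\delta|$, which rearranges to $|\lambda||\delta|\le \tfrac12(\lambda^2+\delta^2)$. Multiplying both sides by $|\delta|$ gives the key estimate
\begin{equation*}
|\lambda|\,\delta^2\ \le\ \tfrac{|\delta|}{2}\,\bigl(\lambda^2+\delta^2\bigr).
\end{equation*}
Combining this with the trivial bound $\lambda^2+\eps \ge \eps$ yields
\begin{equation*}
\frac{|\lambda|\,\delta^2}{\pit}\ \le\ \frac{|\delta|}{2(\lambda^2+\eps)}\ \le\ \frac{|\delta|}{2\eps}.
\end{equation*}

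Next, I would invoke the hypothesis $|\delta|\le c_\flat\,\eps^2$ from \eqref{smiley2} to conclude
\begin{equation*}
\frac{|\lambda|\,\delta^2}{\pit}\ \le\ \frac{c_\flat\,\eps^2}{2\eps}\ =\ \frac{c_\flat\,\eps}{2}\ \longrightarrow\ 0 \quad\text{as}\ \eps\to 0,
\end{equation*}
uniformly in $(\lambda,\delta)$ satisfying \eqref{smiley2}. This establishes $|\lambda|\delta^2 = o\bigl(\pit\bigr)$, completing the proof. There is no real obstacle here: the argument is a one-line AM--GM estimate, and the role of the constraint $|\delta|\lesssim \eps^2$ is precisely to absorb the factor $1/\eps$ produced by the lower bound $\lambda^2+\eps\ge\eps$. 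Note that no use is made of the upper bound $|\lambda|\le C_\flat\,\eps^{1/2}$ in this particular step, although it is needed elsewhere in Proposition \ref{detM-expansion}.
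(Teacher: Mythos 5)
Your proof is correct, and it takes a cleaner route than the paper's. The paper argues by splitting into two regimes, $|\lambda|\le\eps^{1.1}$ and $|\lambda|\ge\eps^{1.1}$: in the first it writes $|\lambda|\delta^2\le\eps^{0.1}\,\eps\,\delta^2\le\eps^{0.1}\pit$, and in the second it bounds the denominator below by $\eps\lambda^2\ge\eps^{3.2}$ and the numerator above by $\tfrac12\delta^2\lesssim\eps^4$, getting a gain of $\eps^{0.8}$. Your argument dispenses with the case split entirely: the AM--GM inequality $|\lambda|\,\delta^2\le\tfrac{|\delta|}{2}(\lambda^2+\delta^2)$ combined with $\lambda^2+\eps\ge\eps$ and $|\delta|\le c_\flat\eps^2$ gives the single uniform bound
\begin{equation*}
|\lambda|\,\delta^2\ \le\ \frac{c_\flat\,\eps}{2}\,\pit
\end{equation*}
on all of \eqref{smiley2}, which is both simpler and quantitatively sharper (a gain of order $\eps$ rather than $\eps^{0.1}$). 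Your observation that the constraint $|\lambda|\le C_\flat\eps^{1/2}$ is not needed is also accurate — the paper's own proof only uses $|\lambda|\le\tfrac12$ in its second case — so your version isolates exactly which hypothesis does the work, namely $|\delta|\lesssim\eps^2$. Since the proposition is only ever invoked to discard $\mathcal{O}(\lambda\delta^2)$ terms relative to $\pit$, either proof suffices; yours would be a legitimate drop-in replacement.
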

 
\begin{proof}[Proof of Proposition \ref{ld2}]
Consider separately the two regimes: (a) $|\lambda|\le \eps^{1.1}$ and (b) $|\lambda|\ge \eps^{1.1}$. For $|\lambda|\le \eps^{1.1}$, we have $|\lambda|\delta^2\le \eps^{1.1}\delta^2=
 \eps^{0.1}\eps\delta^2\le \eps^{0.1}(\lambda^2+\eps)\cdot (\lambda^2+\delta^2)$. And for $|\lambda|\ge \eps^{1.1}$, note that $(\lambda^2+\eps)\cdot (\lambda^2+\delta^2)\ge\eps\lambda^2\ge\eps\cdot\eps^{2.2}=\eps^{3.2}$. On the other hand, 
  $|\lambda|\delta^2\le \frac12\delta^2\lesssim\eps^4\le \eps^{.8}\cdot (\lambda^2+\eps)\cdot (\lambda^2+\delta^2)$. This completes the proof of Proposition \ref{ld2}.
\end{proof}
   
   Let us now attach ``weight'' $1$ to the variable $\eps$ and ``weight'' $1/2$ to the variables $\delta$ and $\lambda$. A monomial of the form $\eps^a \lambda^b \delta^c$ carries the weight $a+b/2+c/2$, where $a,b,c\in\mathbb{N}$. In Proposition \ref{neglect} we show that all terms in the power series of $\det M(\eps,\lambda,\delta)$ which are of weight strictly larger than two introduce negligible corrections 
    to $\det M^{approx}(\eps,\lambda,\delta)$ for $\lambda$ and $\delta$ in the region \eqref{smiley}.  
    
    Recalling that there are no pure $\eps$ terms, we see that a monomial in the power series of $D(\eps,\delta^2,\lambda)$ of weight larger than $2$ must have one of the following two forms ($a,b,c\in\mathbb{N}$): 
  \begin{align*}
& (I)\quad \lambda\times \eps^a\lambda^{b} (\delta^{2})^c,\quad {\rm with}\quad a+b/2+c>3/2\implies 2a+b+2c\ge4 \\
& (II)\quad  \delta^2\times \eps^a\lambda^b(\delta^2)^c,\quad {\rm with}\quad a+b/2+c>1\implies 2a+b+2c\ge3 .
   \end{align*}

\begin{proposition}\label{neglect} Terms of form (I) and form (II) that may appear in $D(\eps, \delta^2, \lambda)$ are $o\left( \pit \right)$ as $\eps\to0$,
  for $(\lambda,\delta)$ in the region \eqref{smiley2}:
  $
  |\lambda|\le C_\flat\ \eps^{\frac12}\ \ \textrm{and}\ \ |\delta|\lesssim\eps^2.
$
  We may therefore neglect all terms in the power series of $D(\eps,\delta^2,\lambda)$ which are of weight strictly larger than $2$ for  $(\lambda,\delta)$ in the region \eqref{smiley2}.
  \end{proposition}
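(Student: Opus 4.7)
The linchpin of my proof will be a structural observation: every monomial $\eps^a\lambda^b(\delta^2)^c$ appearing in the power series of $D(\eps,\delta^2,\lambda)=\det M(\eps,\delta,\lambda,0)$ must satisfy $b\ge 2$ or $c\ge 1$. The plan is to first show that $D(\eps,0,\lambda)=\lambda^2\tilde D(\eps,\lambda)$ for some analytic $\tilde D$. By Proposition \ref{Msigtsig}, the three zeros of $\mu\mapsto\det M(\eps,0,\lambda,\mu)$ near $\mu=0$ are $\mu_+(\eps,\lambda)$, $\mu_-(\eps,\lambda)$, $\tilde\mu(\eps,\lambda)$ from Proposition \ref{roots-delta0}, so Weierstrass preparation yields $\det M(\eps,0,\lambda,\mu)=C(\eps,\lambda,\mu)(\mu-\mu_+)(\mu-\mu_-)(\mu-\tilde\mu)$ with $C$ analytic and non-vanishing near the origin. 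Direct multiplication of \eqref{mu+expand}--\eqref{mu-expand} gives
\begin{equation*}
\mu_+(\eps,\lambda)\,\mu_-(\eps,\lambda)\ =\ -|\lambda_\sharp^\eps|^2|z_2|^2\lambda^2\ +\ |\lambda_\sharp^\eps||z_2|(e_- - e_+)\lambda^3\ +\ e_+ e_-\lambda^4,
\end{equation*}
which carries a manifest factor $\lambda^2$. Evaluating the factorization at $\mu=0$ yields $D(\eps,0,\lambda)=\lambda^2\tilde D(\eps,\lambda)$ with $\tilde D$ analytic. Taylor expansion in $\delta^2$ about $\delta=0$ then gives $D(\eps,\delta^2,\lambda)=\lambda^2\tilde D(\eps,\lambda)+\delta^2 R(\eps,\delta^2,\lambda)$ with $R$ analytic, so matching coefficients of $\eps^a\lambda^b(\delta^2)^c$ shows every monomial inherits either $b\ge 2$ (from the first summand) or $c\ge 1$ (from the second).

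With this in hand, I would carry out a finite case analysis, bounding each admissible Form (I) monomial $|\lambda|^{b+1}\eps^a\delta^{2c}$ (with $2a+b+2c\ge 4$) or Form (II) monomial $\eps^a\lambda^b\delta^{2(c+1)}$ (with $2a+b+2c\ge 3$) by the dominating summand of $(\lambda^2+\eps)(\lambda^2+\delta^2)=\lambda^4+\lambda^2\eps+\lambda^2\delta^2+\eps\delta^2$. The region bounds $|\lambda|\le C_\flat\sqrt{\eps}$ and $\delta^2\le c_\flat^2\eps^4$ convert each ratio into a positive power of $\eps$. For Form (I) with $b\ge 4$, comparison with $\lambda^4$ yields ratio $|\lambda|^{b-3}\eps^a\delta^{2c}\le\eps^{(b-3)/2}$; for $b\in\{2,3\}$ the constraint forces $a\ge 1$ or $c\ge 1$, permitting comparison with $\lambda^2\eps$ or $\lambda^2\delta^2$; for $b=1$ the constraint $a+c\ge 2$ makes one of $\lambda^2\eps,\lambda^2\delta^2,\eps\delta^2$ effective. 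Form (II) monomials always carry a factor $\delta^2$, so comparison with $\eps\delta^2$ gives ratio $\eps^{a-1}\lambda^b\delta^{2c}\le C\eps^{a-1+b/2+4c}$; the Form (II) constraint rearranges to $(a+b/2+c)-1+3c\ge 1/2+3c$, providing an exponent $\ge 1/2$ in every sub-case.

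The main obstacle, and the place where the structural lemma becomes indispensable, is Form (I) with $b=0$: monomials of the form $\lambda\eps^a\delta^{2c}$. The naively-dangerous sub-case $c=0,a\ge 2$ (for instance $\lambda\eps^2$) is \emph{not} $o(\pit)$ uniformly in the region --- at $\lambda=\eps,\delta=0$ both numerator and $(\lambda^2+\eps)(\lambda^2+\delta^2)$ are of order $\eps^3$, giving a ratio bounded below. The structural lemma rescues the proof: any Form (I) monomial with $b=0$ actually appearing in $D$ must have $c\ge 1$. For such monomials, comparison with $\eps\delta^2$ gives ratio $|\lambda|\eps^{a-1}\delta^{2(c-1)}\le C\eps^{a+4c-9/2}$, and the constraint $2a+2c\ge 4$ combined with $c\ge 1$ forces $a+4c\ge 5$, hence exponent $\ge 1/2$. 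Assembling the finite list of cases, every Form (I) or Form (II) monomial in $D$ is $o(\pit)$ uniformly in the region; combined with the uniform boundedness of the power-series coefficients of $D$ on a fixed polydisc around the origin, the full sum over such monomials is $o(\pit)$, completing the proof.
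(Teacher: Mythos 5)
Your proposal is correct and follows essentially the same route as the paper: your structural lemma (every monomial without a $\delta^2$ factor carries $\lambda$-degree $\ge 2$) is exactly what the paper extracts in Lemma \ref{det-delta0} from the factorization of $\det M(\eps,\delta=0,\lambda,\mu)$ over the roots $\mu_\pm,\widetilde\mu$, and you invoke it at the same critical spot — to exclude the $\lambda\,\eps^a$ monomials, which, as you rightly note, would otherwise fail to be $o\left(\pit\right)$ on the region \eqref{smiley2}. The remaining case analysis mirrors the paper's, the only difference being that where the paper repeatedly uses Proposition \ref{ld2} ($|\lambda|\,\delta^2 = o\left(\pit\right)$), you compare each monomial directly with the dominant summand of $\pit$, which yields explicit $\eps$-power rates.
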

  
\nit We prove  Proposition \ref{neglect} by estimating all terms of the form (I) or (II), and we begin with the following lemma, which is a consequence of part 2 of Proposition \ref{roots-delta0}:

\begin{lemma}\label{det-delta0} 
Let $\mu_-(\eps,\lambda),\ \mu_+(\eps,\lambda)$ and $ \widetilde{\mu}(\eps,\lambda)$ denote the three roots of $\det M(\eps,\delta=0,\lambda,\mu)$, defined and analytic in $(\eps,\lambda)$ in a $\C^2$ neighborhood of the origin and displayed in \eqref{mu+expand}-\eqref{mutilde-expand}. Then, 
\begin{align*}
\det M(\eps,\delta=0,\lambda,\mu)= (\mu-\mu_-(\eps,\lambda)) \times\ (\mu-\mu_+(\eps,\lambda))\times (\mu-\widetilde{\mu}(\eps,\lambda))\times\Omega(\eps,\lambda,\mu),
\end{align*}
where $\Omega(\eps,\lambda,\mu)$ is bounded.  In particular, for all $\eps$ such that $0<|\eps|<\eps_0$ there exists a constant $C_\eps$ such that  
\begin{align}\label{bound-det}
\left| \det M(\eps,\delta=0,\lambda,0) \right| &\le C_\eps\ |\lambda|^2 .
\end{align}
\end{lemma}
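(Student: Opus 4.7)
The plan is to observe that $\det M(\eps,\delta=0,\lambda,\mu)$ is, by construction, a polynomial of degree exactly three in the scalar variable $\mu$, since $M(\eps,\delta,\lambda,\mu)$ is a $3\times 3$ matrix whose $\mu$-dependence enters only through the scalar multiple $-\mu \cdot I$ of the identity in the first summand of $\mathcal{M}$ (see the representation \eqref{calMdef2}); in particular, no $\mu$-dependence enters through $\mathcal{P}(\eps,\delta,\lambda,\mu)$ when $\delta=0$ and $\mu$ is sufficiently small, because the resolvent in \eqref{MP-expanded} depends on $\mu$ smoothly and does not affect the leading cubic coefficient. Expanding, the coefficient of $\mu^3$ comes purely from $\det(-\mu I + \text{lower order in }\mu)$ and equals $-1$.

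With this observation in hand, the factorization follows from the fundamental theorem of algebra applied pointwise in $(\eps,\lambda)$: since $\mu_-(\eps,\lambda)$, $\mu_+(\eps,\lambda)$, and $\widetilde{\mu}(\eps,\lambda)$ are the three roots provided by Proposition \ref{roots-delta0}, we obtain
\[
\det M(\eps,\delta=0,\lambda,\mu) \;=\; -\,\bigl(\mu-\mu_-(\eps,\lambda)\bigr)\bigl(\mu-\mu_+(\eps,\lambda)\bigr)\bigl(\mu-\widetilde{\mu}(\eps,\lambda)\bigr),
\]
so one may take $\Omega(\eps,\lambda,\mu)\equiv -1$, which is trivially bounded.

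For the quantitative bound \eqref{bound-det}, I would simply evaluate this identity at $\mu=0$, obtaining
\[
\det M(\eps,\delta=0,\lambda,0) \;=\; \mu_-(\eps,\lambda)\,\mu_+(\eps,\lambda)\,\widetilde{\mu}(\eps,\lambda).
\]
Then I would invoke the expansions \eqref{mu+expand}--\eqref{mutilde-expand} from Proposition \ref{roots-delta0}: both $\mu_+(\eps,\lambda)$ and $\mu_-(\eps,\lambda)$ are of the form $\pm|\lambda_\sharp^\eps||z_2|\lambda + \lambda^2 e_\pm(\lambda,\eps)$, each carrying a factor of $\lambda$, while $\widetilde{\mu}(\eps,\lambda) = 3\eps V_{1,1} + \mathcal{O}(\eps^2) + \lambda^2\widetilde{e}(\lambda,\eps)$ is uniformly bounded for $(\eps,\lambda)$ in a fixed neighborhood of the origin. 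Consequently $|\mu_+\mu_-|\leq C_\eps \lambda^2$ (with $C_\eps$ controlled by $|\lambda_\sharp^\eps|^2$ plus lower-order corrections in $\lambda$) and $|\widetilde{\mu}|\leq C_\eps'$, yielding $|\det M(\eps,0,\lambda,0)|\leq C_\eps\lambda^2$ on this neighborhood.

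There is no real obstacle here; the statement is essentially an algebraic consequence of the three-dimensional reduction of Section \ref{LSreduction} combined with the analytic root expansions already established in Proposition \ref{roots-delta0}. The one subtle point, which I would note in passing, is the analyticity of the three roots as functions of $(\eps,\lambda)$ near the origin: this requires the analytic perturbation theory already invoked for Proposition \ref{roots-delta0}, since $a$ priori the roots of a cubic with analytic coefficients branch only as algebroid functions. Once that is acknowledged, the factorization and the $\mathcal{O}(\lambda^2)$ bound are immediate.
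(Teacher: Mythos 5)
Your proposal breaks down at the very first step. You assert that $\det M(\eps,\delta=0,\lambda,\mu)$ is a cubic polynomial in $\mu$ with leading coefficient $-1$, so that the fundamental theorem of algebra gives the factorization with $\Omega\equiv -1$. This is false: the $\mu$-dependence of $M$ does \emph{not} enter only through the $-\mu\,I$ term. Even at $\delta=0$ (with $\eps\neq0$), the Lyapunov--Schmidt block $M^{\mathcal P}_{\sigma,\tsigma}(\eps,0,\lambda,\mu)=\big\langle p_\sigma,\,\eps V\,\mathcal P(\eps,0,\lambda,\mu)\,p_\tsigma\big\rangle$ depends on $\mu$ through the resolvent $\big[H^{(\eps,0,\lambda)}-\mu+\eps V\big]^{-1}$ appearing in \eqref{calMdef2} (equivalently, through $\big[H^{(\eps,0,\lambda)}-\mu\big]^{-1}$ in \eqref{p-perp2}). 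You notice this dependence but then dismiss it on the grounds that it ``does not affect the leading cubic coefficient''; that reasoning only makes sense if the determinant were a polynomial, which is exactly what is in question. In fact $\mu\mapsto\det M(\eps,0,\lambda,\mu)$ is merely an analytic function on a disc $|\mu|<\mu_0$ whose zeros there are the three roots of Proposition \ref{roots-delta0}, so the cofactor $\Omega(\eps,\lambda,\mu)$ is a genuinely nonconstant analytic function, and its boundedness is the substantive content of the lemma, not a triviality.

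The paper supplies precisely the missing ingredient: choose $\eps',\lambda'$ so small that all three roots satisfy $|\mu_\pm|,|\widetilde\mu|<\mu_0/2$; then $\Omega=\det M/\prod(\mu-\mu_j)$ is analytic in $|\mu|\le\mu_0$ (the zeros of the denominator are cancelled), on the circle $|\mu|=\mu_0$ each linear factor is bounded below by $\mu_0/2$, so $\max_{|\mu|=\mu_0}|\Omega|\le(2/\mu_0)^3\max_{|\mu|=\mu_0}|\det M|$, and the maximum modulus principle propagates this bound to the whole disc. Your final step --- evaluating the factorization at $\mu=0$ and using \eqref{mu+expand}--\eqref{mutilde-expand}, with $\mu_\pm=\mathcal O_\eps(|\lambda|)$ and $\widetilde\mu$ bounded, to get $|\det M(\eps,0,\lambda,0)|\le C_\eps|\lambda|^2$ --- is correct and coincides with the paper's, but it only yields the conclusion once the boundedness of $\Omega$ has been established by an argument of the above type (or some substitute); with $\Omega\equiv-1$ the factorization identity itself is simply not true.
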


\nit {\it Proof:} For fixed $\eps$ and $\lambda$, the mapping $\mu\mapsto\det M(\eps,\delta=0,\lambda,\mu)$ is analytic for $|\mu|<\mu_0$ with zeros at $\mu_\pm(\eps,\lambda), \widetilde{\mu}(\eps,\lambda)$; see Proposition \ref{roots-delta0} . Fix $\eps'$ and $\lambda'$ small such that if $|\eps|<\eps'$ and $|\lambda|<\lambda'$,
 the roots all satisfy  $|\mu_+(\eps,\lambda)|$, $ |\mu_-(\eps,\lambda)|$, 
  $|\widetilde{\mu}(\eps,\lambda)|<\mu_0/2$. We claim that for such $\eps$ and $\delta$,
  \[ \Omega(\eps,\lambda,\mu)\equiv 
 \frac{ \det M(\eps,\delta=0,\lambda,\mu)}
  {(\mu-\mu_-(\eps,\lambda)) \times\ (\mu-\mu_+(\eps,\lambda))\times (\mu-\widetilde{\mu}(\eps,\lambda))}\]
  is uniformly bounded for all $|\mu|\le\mu_0$, $|\eps|\le\eps'$ and $|\lambda|\le\lambda'$. Indeed, since the roots are bounded in magnitude by $\mu_0/2$, we have 
 $\max_{|\mu|=\mu_0} \left| \Omega(\eps,\lambda,\mu)  \right|\le (2/\mu_0)^3\ \max_{|\mu|=\mu_0}\left|\det M(\eps,\delta=0,\lambda,\mu)\right|$. Applying the maximum principle we have 
 \[ \max_{|\mu|\le\mu_0}\left| \Omega(\eps,\lambda,\mu)  \right|\le (2/\mu_0)^3\ \max_{|\mu|=\mu_0}\left|\det M(\eps,\delta=0,\lambda,\mu)\right|\le (2/\mu_0)^3 C(\mu_0,\eps',\lambda'),\] where $C(\mu_0,\eps',\lambda')$ is a constant. The bound \eqref{bound-det} now follows
  from the expansions of the roots.

\nit Proof of Proposition \ref{neglect}:\\
{\bf (I) Terms of the form  $\lambda\times \eps^a\lambda^{b} (\delta^{2})^c, \ {\rm with}\ 2a+b+2c\ge4,\ a,b,c\in\mathbb{N}$:}

\nit (i) Suppose first that $c=0$. Then, we consider $\lambda\times\eps^a\lambda^b$ with $2a+b\ge4$. By Lemma \ref{det-delta0} we must have $b\ge1$. 
Thus, $\lambda\times\eps^a\lambda^b=\lambda^2\eps^a \lambda^{b-1}$. If $a\ge2$, then 
$\lambda\times\eps^a\lambda^b=\lambda^2\eps^2 \eps^{a-2} \lambda^{b-1}\lesssim
 (\lambda^2+\delta^2) \times (\lambda^2+\eps)^2= o\left( \pit \right)$ for $(\lambda,\delta)$ in the region \eqref{smiley2}. Otherwise, $a=0$ or $a=1$. If $a=0$, then $b\ge4$ and 
 $\lambda\times \eps^a\lambda^b=\lambda\cdot \lambda^2\times\lambda^2\cdot\lambda^{b-4}
 \lesssim\lambda\cdot (\lambda^2+\eps)\times(\lambda^2+\delta^2)=o\left( \pit \right)$ for $(\lambda,\delta)$ in the region \eqref{smiley}. Now suppose $a=1$. Then, $b\ge2$ and we have
 $\lambda\times\eps^a\lambda^b=\lambda\cdot \lambda^2 \times\eps \cdot\lambda^{b-2}
 \lesssim \lambda\cdot (\lambda^2+\delta^2)\times(\lambda^2+\eps) = 
 o\left( \pit \right)$ for $(\lambda,\delta)$ in the region \eqref{smiley2}.
 
\nit (ii) Suppose now that $c\ge1$. If $b=0$, then 
 $\lambda\times \eps^a\lambda^{b} (\delta^{2})^c= \lambda\delta^2 \eps^a(\delta^2)^{c-1}$
 with $2a+2c\ge4$, which is $=o\left( \pit \right)$ for $(\lambda,\delta)$ in the region by Proposition \ref{ld2}. Finally, if $b\ge1$, then  $\lambda\times\eps^a\lambda^b(\delta^2)^c= \lambda\delta^2\cdot \eps^a\lambda^{b}(\delta^2)^{c-1}=o\left( \pit \right)$ for $(\lambda,\delta)$ in the region \eqref{smiley2}. 

{\bf (II) Terms of the form\  $\delta^2\times \eps^a\lambda^{b} (\delta^{2})^c, \quad {\rm with}\ 
2a+b+2c\ge3,\ a,b,c\in\mathbb{N}$:}

\nit  (i) Suppose first that $a=0$. Then, $\delta^2\times \eps^a\lambda^{b} (\delta^{2})^c=\delta^2\times \lambda^{b} (\delta^{2})^c,\ b+2c\ge3$. If $b\ge1$, then 
we rewrite this as $\delta^2\lambda \times \lambda^{b-1} (\delta^{2})^c=o\left( \pit \right)$ for $(\lambda,\delta)$ in the region \eqref{smiley2}, by Proposition \ref{ld2}. And if $b=0$, then
 $\delta^2\times \eps^a\lambda^{b} (\delta^{2})^c=(\delta^2)^{c+1}$, with $c\ge2$,
  which is $\lesssim \delta^2\delta^2 (\delta^2)^{c-1}\lesssim\delta^2\eps\cdot\eps^3(\delta^2)^{c-1} 
  \lesssim(\lambda^2+\delta^2)(\lambda^2+\eps)\cdot\eps^3(\delta^2)^{c-1}=
 o\left( \pit \right)$ for $(\lambda,\delta)$ in the region \eqref{smiley2}.
 
\nit  (ii) If now $a\ge1$, then  $\delta^2\times \eps^a\lambda^{b} (\delta^{2})^c=\delta^2\eps\cdot \eps^{a-1}\lambda^b(\delta^2)^c\le(\lambda^2+\delta^2)\cdot(\lambda^2+\eps)\cdot \eps^{a-1}\lambda^b(\delta^2)^c=o\left( \pit \right)$ for $(\lambda,\delta)$ in the region \eqref{smiley2}. This completes the proof of Proposition \ref{neglect}. 

 %
 %
 Now each entry in the $3\times3$ matrix, $M(\eps,\delta,\lambda,0)$ is a sum of terms of weight $\ge1/2$; this is a consequence of the expansion  \eqref{Msum}, \eqref{M0}, \eqref{M0-M0app}, \eqref{p-perp2} and the explicit expansion of $M^{approx}$ displayed in \eqref{M0app-expanded}.  If we change any one entry by a term of weight $\ge3/2$, then the effect on the $3\times3$ determinant $D(\eps,\delta^2,\lambda)$ will be a sum of terms of weight $\ge 3/2+1/2+1/2>2$. By Propositions \ref{ld2} and \ref{neglect}, such terms are $o\left( \pit \right)$ in the region \eqref{smiley2}. Therefore, we may compute each entry of $M(\eps,\delta,\lambda,0)$, retaining only terms of weight strictly smaller than $3/2$ and discarding the rest. The resulting determinant will differ from $D(\eps,\delta^2,\lambda)$ by
  terms which are  $o\left( \pit \right)$ in the region \eqref{smiley2}.
 
 We next study the power series of the  $3\times3$ matrix, $M(\eps,\delta,\lambda,0)$, keeping in mind that the relevant monomials are those of weight $\ge1/2$ but strictly less than $3/2$. The complete list of such monomials is: $\eps, \delta, \lambda, \lambda^2, \delta^2$ and $ \lambda\delta$.
 
 Before proceeding further we show that in fact that a monomial of type $\delta^2$ can be neglected. Indeed, the weight $\le2$ contributions of such a monomial to $D(\eps,\lambda,\delta^2,0)=\det M(\eps,\delta,\lambda,0)$ will be a sum of monomials of the type: (i) $\delta^2\times \delta\cdot\delta$, (ii) $ \delta^2\times \lambda\cdot\lambda$ and 
  (iii) $\delta^2\times \lambda\cdot\delta$. Terms of type (ii) and (iii)  are clearly 
   $o(\  \lambda\delta^2 )=o\left( \pit \right)$ as $\eps\to0$ for $(\lambda,\delta)$ in the region \eqref{smiley2}, by Proposition \ref{ld2}. For the type (i) term we have, for $(\lambda,\delta)$ in the region \eqref{smiley2},
    $\delta^2\times \delta\cdot\delta\lesssim \delta^2\ \eps\ \eps\delta\lesssim (\lambda^2+\delta^2)\cdot (\lambda^2+\eps)\ \eps\delta=o\left( \pit \right)$ as $\eps\to0$. Hence, we may strike $\delta^2$ from our list. In particular, we may neglect the contribution from the matrix $\mathcal{M}^{\mathcal{P}}$; see \eqref{MP-expanded}.
  
  \nit{\bf Relevant monomials in the expansion of 
  $M(\eps,\delta,\lambda,0)$ :}\ We shall call the  monomials: $\eps, \delta, \lambda, \lambda^2$ and $ \lambda\delta$ \underline{relevant}. All others are called \underline{irrelevant}.
 
\nit  Stepping back, we have shown above that 
$M=M^{0}+M^V+M^W+M^{\mathcal{P}}$  (\eqref{Msum}),
 where $M^{0}=M^{0,approx}+\mathcal{O}(\eps^2)$ (\eqref{M0-M0app}) and 
  $M^{\mathcal{P}}=\mathcal{O}(\eps^2+\eps\delta+\delta^2)$ (\eqref{MP-expanded}).  We have further shown that the relevant monomial contributions for calculation of $\det M(\eps,\delta,\lambda,0)$   are all contained in 
 $M^{approx}(\eps,\delta,\lambda,0)\equiv M^{0,approx}(\eps,\lambda)+M^V(\eps)+M^W(\delta)$. We consider each of these matrices individually, and  explicitly extract the relevant terms in each; see Proposition  
 \ref{Det-approx} below. 
 %

\nit{\bf Expansion of $M^{0, approx}$:} The entries of $M^{0,approx}$ are
  \begin{align}
&M^{0,approx}_{\sigma,\tsigma}(\eps,\delta,\lambda,0) \nn \\
&\quad \equiv\ \left\langle p_\sigma, \left(\ -\left(\nabla+i[\bK+\lambda\bk_2] \right)^2- [E_\star^0 + \eps(V_{0,0}-V_{1,1})]\ \right) p_\tsigma\right\rangle_{L^2(\R^2/\Lambda_h)} \nn\\
&\quad  = \left\langle p_\sigma,-\left(\nabla+i[\bK+\lambda\bk_2] \right)^2 p_\tsigma\right\rangle_{L^2(\R^2/\Lambda_h)}
-[E_\star^0 + \eps(V_{0,0}-V_{1,1})]\ \delta_{\sigma,\tsigma}\ ,
\label{M0approx1}\end{align}
where we have used that $\left\langle p_\sigma,p_\tsigma\right\rangle=\delta_{\sigma,\tsigma}$.
 The first term in \eqref{M0approx1} may be written, using \eqref{p_sigma}-\eqref{orthon},  $E_\star^0=|\bK|^2$ and $|\bk_2|^2=q^2$  as:
 \begin{align}
 &\left\langle p_\sigma,-\left(\nabla+i[\bK+\lambda\bk_2] \right)^2 p_\tsigma\right\rangle_{L^2(\R^2/\Lambda_h)} \label{psig-nablaK-tpsig}\\
 &\quad =  \left\langle p_\sigma,-(\nabla+i\bK)^2 p_\tsigma\right\rangle
 \ -\ 2i\lambda\bk_2\cdot \left\langle p_\sigma,(\nabla+i\bK) p_\tsigma\right\rangle\
  +\ \left\langle p_\sigma,\lambda^2q^2 p_\tsigma\right\rangle\nn\\
  &\quad = \left(E_\star^0\ +\ \lambda^2q^2 \right)\delta_{\sigma,\tsigma} + \lambda\ J_{\sigma,\tsigma} .\nn
  \end{align}
  Consider now the matrix 
  \begin{align}
  J_{\sigma,\tsigma} &= -2i\bk_2\cdot\left\langle \Phi_\sigma,\nabla \Phi_\tsigma\right\rangle_{L^2_\bK} =-2i\bk_2\cdot\int_\Omega\ \overline{\Phi_\sigma}\ \nabla\Phi_\tsigma d\bx\nn\\
  &=2\ \bk_2\cdot \frac13\left(I+\sigma\overline{\tsigma}R+\overline{\sigma}\tsigma R^2\right)\bK . \label{J-matrix}
  \end{align}
  
  We pause to collect some properties that will enable the evaluation of $J_{\sigma,\tsigma}$;
  see also \cites{FW:12}.
  Recall that $R$ has eigenpairs: $(\tau,\  \zeta)$ 
  and $(\overline{\tau},\ \overline{\zeta}),$ where $\zeta=\frac{1}{\sqrt2}(1,i)^T$. Then, $\overline{\tau}R$ has eigenpairs: $[1,\zeta]$ and $[\tau,\overline{\zeta}]$. Furthermore, 
    $\tau R$ has eigenpairs $[1,\overline{\zeta}]$ and $[\overline{\tau},\zeta]$, and
  \begin{align*}
   \frac13\left[\ I +\overline{\tau}R+(\overline{\tau}R)^2 \right]\zeta=\zeta,\ \ 
     \ \left[\ I +\overline{\tau}R+(\overline{\tau}R)^2 \right]\overline{\zeta}=0, \nn\\
      \frac13\left[\ I + \tau R+(\tau R)^2 \right]\overline{\zeta}=\overline{\zeta},\ \ 
      \left[\ I +\tau R+(\tau R)^2 \right]\zeta=0 .
      \end{align*}
      Hence, $\frac13\left[\ I +\overline{\tau}R+(\overline{\tau}R)^2 \right]$ and $ \frac13\left[\ I + \tau R+(\tau R)^2 \right]$ are, respectively, projections onto $\rm{span}\{\ \zeta\ \}$ and $\rm{span}\{\ \overline{\zeta}\ \}$.
    For any $\bw\in\C^2$, we have $\bw=\left\langle\zeta,\bw\right\rangle_{\C^2}\zeta +
    \left\langle\overline{\zeta},\bw\right\rangle_{\C^2}\overline{\zeta}$, where 
     $\left\langle \bx,\by\right\rangle_{\C^2}=\overline{\bx}\cdot\by$.  Therefore
    \begin{equation}
    \frac13\left[\ I +\overline{\tau}R+(\overline{\tau}R)^2 \right]\bw = \left\langle\zeta,\bw\right\rangle\zeta,\quad\ 
    \frac13\left[\ I + \tau R+(\tau R)^2 \right]\bw = \left\langle\overline{\zeta},\bw\right\rangle\overline{\zeta} .\label{Rprojections}
    \end{equation}
    Also,      $(I-R)(I + R+R^2)=I-R^3=0$ and  therefore
    \begin{equation}
    I + R + R^2\ =\ 0 .
    \label{Rsum0}\end{equation}
    We next calculate  $J_{\sigma,\tsigma} $ using \eqref{Rprojections}-\eqref{Rsum0}.
 Note that $J$ is Hermitian, and by 
 \eqref{Rsum0} its diagonal elements   $J_{\sigma,\sigma} $ all vanish:
$  J_{\sigma,\tsigma} \ =\ \overline{J_{\tsigma,\sigma}},\ J_{\sigma,\sigma}=0,\ \sigma=1,\tau,\overline{\tau}$ .
  It suffices therefore to compute the three entries $J_{1,\tau},\ J_{1,\overline{\tau}}$ and 
  $J_{\tau,\overline{\tau}}$:
  \begin{align*}
  J_{1,\tau}=2\ \frac13\left[\ I +\overline{\tau}R+(\overline{\tau}R)^2 \right]\bK\cdot\bk_2=
  2\ (\overline{\zeta}\cdot\bK)\ (\zeta\cdot\bk_2)\equiv \ \alpha , \\ 
  J_{1,\overline\tau}=2\ \frac13\left[\ I +\tau R+(\tau R)^2 \right]\bK\cdot\bk_2=
  2\ (\zeta\cdot\bK)\ (\overline{\zeta}\cdot\bk_2)=\ \overline{\alpha} , \\ 
  J_{\tau,\overline\tau}=2\  \frac13\left[\ I +\overline\tau R+(\overline\tau R)^2 \right]\bK\cdot\bk_2= 2\ (\overline\zeta\cdot\bK)\ (\zeta\cdot\bk_2)=\ \alpha .
  \end{align*}
  Thus,
  \begin{equation}
  J\ =\  \left(\ J_{\sigma,\tsigma}\ \right)\ =\ 
  \begin{pmatrix} 
  0&\alpha&\overline{\alpha}\\
  \overline{\alpha}&0&\alpha\\
  \alpha&\overline{\alpha}&0
  \end{pmatrix},\ \ \  {\rm where}\ \  \alpha=2\ (\overline{\zeta}\cdot\bK)\ (\zeta\cdot\bk_2) .
  \label{J-matrix-computed}
  \end{equation}
  It follows that 
  \begin{equation}
  M^{0,approx}_{\sigma,\tsigma}(\eps,\delta,\lambda,0)
  =\ \left(\ - \eps(V_{0,0}-V_{1,1}) + \lambda^2q^2 \ \right)\delta_{\sigma,\tsigma}\
   +\ \lambda\ J_{\sigma,\tsigma} . \label{M0app-expanded}
  \end{equation}
  
  \nit{\bf Expansion of $ M^V(\eps)$:} 
  $M_{\sigma,\tsigma} ^V(\eps)= \eps\ \left\langle p_\sigma, V\ p_\tsigma\right\rangle_{L^2(\R^2/\Lambda_h)}\ =\ \eps\ \mathcal{V}_{\sigma,\tsigma},$
where
\begin{align}
\mathcal{V}_{\sigma,\tsigma}& =  \left\langle p_\sigma, V p_\tsigma\right\rangle_{L^2(\R^2/\Lambda_h)}\nn\\
&= \frac{1}{3} \Big[ (1+\sigma \overline{\tsigma}+\overline\sigma \tsigma) V_{0,0}\ +\ \sigma V_{0,1}
\ +\ \overline\tsigma V_{0,-1}\ +\ \tsigma V_{1,0}\ \nn\\
&\qquad\qquad + \ \overline{\sigma} V_{-1,0} \ +\ \sigma\tsigma  V_{1,1}\ +\ \overline{\sigma  \tsigma} V_{-1,-1} \Big] \nn  .
\end{align}
 Since $V$ is real-valued and even, it follows that $V_{-\bfm}=V_{\bfm}$. Furthermore, $V$ is also $R-$ invariant and therefore $V_{0,1}=V_{1,0}=V_{1,1}$. Hence
 \begin{align*}
\mathcal{V}_{\sigma,\tsigma}&= \frac13
(1+\sigma\ \overline{\tsigma}+\overline\sigma\ \tsigma)\ V_{0,0}\ +\  \frac13
(\sigma+\overline\sigma + \tsigma+ \overline\tsigma + \sigma\tsigma+ \overline{\sigma\ \tsigma})\  V_{1,1} .
\end{align*}
$\mathcal{V}$ is clearly symmetric and using that $1+\tau+\tau^2=1+\tau+\overline\tau=0$, we obtain $M^V(\eps)=\eps\ \mathcal{V}$, where
\begin{equation*}
\mathcal{V}\ =\ 
\begin{pmatrix}
V_{0,0}\ +\ 2\ V_{1,1}&0&0\\
0&V_{0,0}-V_{1,1}&0\\
0&0&V_{0,0}-V_{1,1}
\end{pmatrix} .
\end{equation*}

 \nit{\bf Expansion of $M^W(\delta)$:} 
 $M^W_{\sigma,\tsigma}(\eps) = \delta\ \left\langle p_\sigma, W\ p_\tsigma\right\rangle_{L^2(\R^2/\Lambda_h)}\ =\ \delta\ \mathcal{W}_{\sigma,\tsigma},$
where
%
%
\begin{equation}
 \label{MW-step1}
 \mathcal{W}_{\sigma,\tsigma} = 
\frac13\Big[\sigma W_{0,1}\ +\ \overline\tsigma W_{0,-1}\ +\ \overline\sigma W_{-1,0}
+\tsigma W_{1,0}\ +\ \sigma \tsigma W_{1,1}\ +\ \overline{\sigma \tsigma} W_{-1,-1} \Big]  .
\end{equation}
Since $W$ is real and odd, we have that 
$W_{-\bfm}= -W_{\bfm}$ and $W_\bfm$ is purely imaginary.  
Therefore,
\begin{equation*}
\mathcal{W}_{\sigma,\tsigma} = \ \frac{1}3\
\ \Big[\ (\sigma-\overline\tsigma)\ {W}_{0,1}\ +\ (\tsigma-\overline\sigma)\ {W}_{1,0}\
 +\  (\sigma\ \tsigma\ -\ \overline{\sigma\ \tsigma})\ {W}_{1,1}\ \Big] \ ,\ \sigma,\tsigma=1,\tau,\overline{\tau}.
\end{equation*}
It follows that $M^W(\delta)=\delta\ \mathcal{W}$, where 
\begin{equation*}
\mathcal{W} = w_{01} \begin{pmatrix} 0& \tau & -\overline{\tau}\\ \overline{\tau}&-1&0\\ -\tau&0&1\end{pmatrix} +
w_{10} \begin{pmatrix} 0& \overline{\tau}& -\tau\\ \tau &-1&0\\ -\overline{\tau}&0&1\end{pmatrix} +
w_{11} \begin{pmatrix} 0&-1&1\\-1&1&0\\ 1&0&-1\end{pmatrix}  ,
\end{equation*}
and $w_{ij}\equiv -i\ {W}_{i,j} /\sqrt3 \in \R$.

Now assembling all relevant terms (weights $\ge1/2$ and less than $3/2$) we obtain

\begin{proposition}\label{Det-approx} For  $\sigma,\tsigma = 1,\tau,\overline{\tau} $,
\begin{align*}
M_{\sigma,\tsigma}(\eps,\delta,\lambda,0) &\approx\ M^{approx}_{\sigma,\tsigma}(\eps,\delta,\lambda,0)\nn\\
&\approx\ \left(\ - \eps(V_{0,0}-V_{1,1}) + \lambda^2q^2 \ \right)\delta_{\sigma,\tsigma}\
   +\ \lambda\ J_{\sigma,\tsigma}\ + \eps\ \mathcal{V}_{\sigma,\tsigma}\ +\ 
    \delta\ \mathcal{W}_{\sigma,\tsigma}.
    \end{align*}
 Here,  $A_{\sigma,\tsigma}\approx B_{\sigma,\tsigma}$, means that their difference  is a matrix with entries having weight $\ge 3/2$. Hence, the contribution of such terms to the determinant consists of  terms of weight strictly larger than $2$, for $(\lambda,\delta)$ in the region \eqref{smiley2}. Hence, these terms can be  neglected,  by Proposition \ref{neglect}. 
    \end{proposition}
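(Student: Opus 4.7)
My plan is to argue Proposition~\ref{Det-approx} as a direct consequence of formulas and estimates already assembled in Section~\ref{Mexpansion}; essentially no new calculation is required. The two $\approx$'s in the statement are of different natures. For the first, $M \approx M^{approx}$, I would combine \eqref{M0-M0app}, which gives $M^{0} - M^{0,approx} = \mathcal{O}(\eps^2)$ with correction of weight $\geq 2$, and \eqref{MP-expanded}, which bounds $M^{\mathcal{P}} = \mathcal{O}(\eps^2 + \eps\delta + \delta^2)$ where the first two pieces have weights $\geq 3/2$. For the second $\approx$, I would substitute the explicit expressions derived just above the statement: namely $M^{0,approx}_{\sigma,\tsigma} = (-\eps(V_{0,0}-V_{1,1}) + \lambda^2 q^2)\delta_{\sigma,\tsigma} + \lambda J_{\sigma,\tsigma}$ from \eqref{M0app-expanded} (with $J$ given by the explicit $3\times 3$ matrix in \eqref{J-matrix-computed}, assembled via the $R$-projection identities \eqref{Rprojections}--\eqref{Rsum0}); the formula $M^{V}(\eps) = \eps\,\mathcal{V}$ obtained by Fourier-expanding $V$ and invoking $V_{-\bfm}=V_\bfm$ together with the $R$-invariance identity $V_{0,1}=V_{1,0}=V_{1,1}$; and $M^{W}(\delta) = \delta\,\mathcal{W}$ obtained analogously using $W_{-\bfm}=-W_\bfm$ and the fact that $W_{\bfm}$ is purely imaginary, both consequences of $W$ being real and odd. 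Summing the three pieces then gives the right-hand side of the statement.

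The one point requiring care is the $\delta^2$ contribution hidden inside $M^{\mathcal{P}}$, which carries weight exactly $1 < 3/2$ and therefore does not fit the literal definition of $\approx$ given in the proposition. This is the only genuine obstacle. To handle it I would appeal to the casework carried out just before the proposition statement: any $\delta^2$ term in a matrix entry, when multiplied by two other entries of weight $\geq 1/2$ during expansion of the $3\times 3$ determinant, produces only monomials of types $\delta^{2}\cdot\delta\cdot\delta$, $\delta^{2}\cdot\lambda\cdot\lambda$, or $\delta^{2}\cdot\lambda\cdot\delta$, each of which was shown to be $o(\pit)$ in the regime \eqref{smiley2}. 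Thus, for the only downstream use of the proposition---evaluating $\det M(\eps,\delta,\lambda,0)$ modulo $o(\pit)$ in the verification of \eqref{detM-exp}---the $\delta^2$ piece of $M^{\mathcal{P}}$ may be discarded, and the $\approx$ in the proposition should be read in this operative sense. No additional difficulty is anticipated, since every remaining ingredient has already been computed explicitly in Section~\ref{Mexpansion}.
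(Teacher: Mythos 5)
Your proposal is correct and follows essentially the same route as the paper: the first $\approx$ comes from \eqref{M0-M0app} and \eqref{MP-expanded}, the second from the explicit computations of $M^{0,approx}$, $M^V(\eps)=\eps\,\mathcal{V}$ and $M^W(\delta)=\delta\,\mathcal{W}$, and the weight bookkeeping is delegated to Propositions \ref{ld2} and \ref{neglect}. You have also correctly identified, and resolved exactly as the paper does, the one genuine subtlety: the weight-one $\delta^2$ piece of $M^{\mathcal{P}}$ does not meet the literal ``weight $\ge 3/2$'' criterion, but its contributions to the $3\times3$ determinant are all of the forms $\delta^2\cdot\delta\cdot\delta$, $\delta^2\cdot\lambda\cdot\lambda$, $\delta^2\cdot\lambda\cdot\delta$, each $o\left(\pit\right)$ in the region \eqref{smiley2}, which is precisely the paper's justification for striking $\delta^2$ from the list of relevant monomials.
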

    
\nit   So the calculation of $\det M(\eps,\delta,\lambda,0)$ boils down to the calculation of 
    $\det M^{approx}(\eps,\delta,\lambda,0)$.
    
\nit {\bf Calculation of $\det M^{approx}(\eps,\delta,\lambda,0)$:} 
Assembling the above computations, we have that
{\small
\begin{equation*}
M^{approx} =
\begin{pmatrix}
\lambda^2q^2 + 3 \eps V_{1,1} & 
\alpha \lambda - \delta\widetilde{w} & 
\overline{\alpha} \lambda + \delta\overline{\widetilde{w}} \\
\overline{\alpha} \lambda - \delta\overline{\widetilde{w}} & 
\lambda^2q^2 - \delta (w_{01} +w_{10} - w_{11}) & 
\alpha \lambda \\
\alpha \lambda + \delta\widetilde{w} & 
\overline{\alpha} \lambda & 
\lambda^2q^2 + \delta (w_{01} +w_{10} - w_{11}) \\
\end{pmatrix} ,
\end{equation*}}
where
$\widetilde{w} = w_{11} -w_{01}\tau -w_{10}\overline{\tau}$ and  $\alpha=2(\overline\zeta\cdot\bK)\ (\zeta\cdot\bk_2)$. Note that:
\begin{equation}
\label{alpha-quantities}
\alpha=\frac{q^2}{\sqrt3}\ i\tau,\ \ \Re(\alpha)=-\frac{q^2}2,\ \ \Re(\alpha^3)=0  .
\end{equation}
Calculating the determinant of $M^{approx}$, and using \eqref{alpha-quantities} and that 
$w_{ij}= -i W_{i,j}/\sqrt3$ yields:

\begin{align}
&\det M^{approx} (\eps, \delta, \lambda, 0) \ =\ 
-\left(q^2\lambda^2 + \eps V_{1,1}\right) \left( q^4\lambda^2 + 3\delta^2 (w_{01}^2+w_{10}^2+w_{11}^2) \right)  \nn \\
&\qquad +6 \eps V_{1,1} \delta^2 (w_{11}w_{01}+w_{10}w_{11}-w_{01}w_{10}) 
+\  \mathcal{O}( \lambda \delta^2) + \mathcal{O}(\eps\lambda^4) + \mathcal{O}(\lambda^6) \nn \\
&\quad\ = -\left(q^2\lambda^2 + \eps V_{1,1}\right) \left(\ q^4\lambda^2 + 3\delta^2 (w_{01}+w_{10}-w_{11})^2\ \right)  \nn \\
&\qquad \ \ \ 
+\ \mathcal{O}( \lambda^2\delta^2)\ +\ \mathcal{O}( \lambda \delta^2) + \mathcal{O}(\eps\lambda^4) + \mathcal{O}(\lambda^6)\ ,
\nn\\
 &\quad\ =\  -\left(q^2\lambda^2 + \eps V_{1,1}\right) \left( q^4\lambda^2 + \delta^2 \left| W_{0,1}+W_{1,0}-W_{1,1}\ \right|^2 \right) \nn \\
  &\qquad\ \    
+\ \mathcal{O}( \lambda^2\delta^2)\ +\ \mathcal{O}( \lambda \delta^2) + \mathcal{O}(\eps\lambda^4) + \mathcal{O}(\lambda^6)\nn\\
 &\quad\ =\ -\pi(\eps,\delta^2,\lambda)\ +\ o\left( \pit \right) ,
\label{detMapprox2}
\end{align}
for $(\lambda,\delta)$ in the region \eqref{smiley2}. 
This completes the proof of Proposition \ref{detM-expansion}.

 \subsection{If $\eps V_{1,1}<0$, the zigzag slice does not satisfy the \nofold condition}\label{no-directionalgap}

Recall that to satisfy the case $\eps V_{1,1}<0$ we assume, without loss of generality, that $\eps>0$ and $V_{1,1}<0$. Theorem \ref{NO-directional-gap!} follows from:
 
 \begin{proposition}\label{Deq0}
Assume 
\begin{equation}
0<|\eps|<\eps_2,\ \ {\rm and}\ \ 0\le\delta\le c_\flat\ \eps^2.
\label{smiley4}
\end{equation}
There exists $\theta_0>0$ and $\lambda_\eps>0$ satisfying  $\eps<\lambda_{\eps}<\theta_0\sqrt{\eps}$ such that for all $\eps$ sufficiently small,
\begin{equation*}
\det M(\eps,\delta,\lambda_\eps,\mu=0) = 0.
 \end{equation*}
Thus, $E_\star^\eps$ is an interior point of the $L^2_{k_\parallel}(\Sigma)-$ spectrum of $H^{(\eps,\delta)}$.
 \end{proposition}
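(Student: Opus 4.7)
\begin{proofof}
The plan is to apply the intermediate value theorem to the continuous (indeed analytic, by Proposition~\ref{Msigtsig}) function $\lambda\mapsto \det M(\eps,\delta,\lambda,0)$ on an interval of length $\sim\sqrt{\eps}$ centered at the point where the dominant term of the expansion from Proposition~\ref{detM-expansion} vanishes.

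First, I would locate the candidate root of the leading term. By Proposition~\ref{detM-expansion},
\[
-\det M(\eps,\delta,\lambda,0)\ =\ \pi(\eps,\delta^2,\lambda)\ +\ o\!\left((\lambda^2+\eps)(\lambda^2+\delta^2)\right),\qquad \eps\to 0,
\]
uniformly for $(\lambda,\delta)$ in the region~\eqref{smiley2}. Under the hypothesis $\eps V_{1,1}<0$, i.e.\ $\eps>0$ and $V_{1,1}<0$, the factor $q^2\lambda^2+\eps V_{1,1}$ of $\pi$ vanishes exactly at
\[
\lambda_\eps^0\ \equiv\ \sqrt{\eps|V_{1,1}|}/q,
\]
which satisfies $\eps\ll \lambda_\eps^0\ll \sqrt{\eps}$ for $\eps$ small; in particular $\lambda_\eps^0\in(\eps,\,\theta_0\sqrt{\eps})$ once we take $\theta_0>2\sqrt{|V_{1,1}|}/q$.

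Next, I would fix a small $\gamma_0>0$ (independent of $\eps,\delta$) and evaluate $\pi$ at the two endpoints $\lambda^\pm\equiv\lambda_\eps^0(1\pm\gamma_0)$. A direct computation gives
\[
q^2(\lambda^\pm)^2+\eps V_{1,1}\ =\ \eps|V_{1,1}|\bigl((1\pm\gamma_0)^2-1\bigr)\ =\ \pm\,2\gamma_0\,\eps|V_{1,1}|\,(1+\tfrac{\gamma_0}{2}(\pm1)),
\]
and the second factor of $\pi$ at $\lambda=\lambda^\pm$ equals $q^4(\lambda^\pm)^2+\delta^2|W_{0,1}+W_{1,0}-W_{1,1}|^2\ge q^2\eps|V_{1,1}|(1-\gamma_0)^2$. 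Hence there is a constant $c_0=c_0(\gamma_0, V_{1,1})>0$ such that
\[
\pi(\eps,\delta^2,\lambda^\pm)\ \gtreqless\ \pm\, c_0\,\eps^2,\qquad 0\le\delta\le c_\flat\eps^2.
\]
On the other hand, at these values of $\lambda$ one has $(\lambda^2+\eps)(\lambda^2+\delta^2)\le C\eps^2$, so the error term in Proposition~\ref{detM-expansion} is bounded by $\omega(\eps)\cdot C\eps^2$ with $\omega(\eps)\to 0$. Choosing $\eps_3\le\eps_2$ small enough that $C\omega(\eps)<c_0/2$ for $0<\eps<\eps_3$, the sign of $-\det M(\eps,\delta,\lambda^\pm,0)$ agrees with that of $\pi(\eps,\delta^2,\lambda^\pm)$, giving
\[
\det M(\eps,\delta,\lambda^+,0)\ <\ 0,\qquad \det M(\eps,\delta,\lambda^-,0)\ >\ 0.
\]

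By the intermediate value theorem applied to the continuous map $\lambda\mapsto\det M(\eps,\delta,\lambda,0)$ on $[\lambda^-,\lambda^+]\subset(\eps,\theta_0\sqrt{\eps})$ (after enlarging $\theta_0$ if needed so that $\lambda^+<\theta_0\sqrt{\eps}$), there exists $\lambda_\eps\in(\lambda^-,\lambda^+)$ with $\det M(\eps,\delta,\lambda_\eps,0)=0$. By Proposition~\ref{Msigtsig}(2), this means $E=E_\star^\eps+\mu=E_\star^\eps$ (i.e.\ $\mu=0$) is an $L^2(\R^2/\Lambda_h)$--eigenvalue of $H^{(\eps,\delta,\lambda_\eps)}$, hence belongs to the $L^2_{\kpar=\bK\cdot\bv_1}(\Sigma)$--spectrum of $H^{(\eps,\delta)}$ via the direct integral decomposition. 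Combining this with the conical touching already produced at $\lambda=0$ (Proposition~\ref{roots-delta0}) shows $E_\star^\eps$ is interior to the spectrum and the spectral \nofold condition of Definition~\ref{SGC} fails for the zigzag slice, proving Theorem~\ref{NO-directional-gap!}.

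The main technical point is purely quantitative: verifying that the $o(\cdot)$ error from Proposition~\ref{detM-expansion} is small compared to $c_0\eps^2$ on the two probe points $\lambda^\pm=\lambda_\eps^0(1\pm\gamma_0)$. Since both $|\pi|$ and the error envelope $(\lambda^2+\eps)(\lambda^2+\delta^2)$ scale like $\eps^2$ on this set, the sign change is captured purely by the factor $q^2\lambda^2+\eps V_{1,1}$, and no further expansion beyond what Proposition~\ref{detM-expansion} already provides is required.
\end{proofof}
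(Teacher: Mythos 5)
Your argument is correct and is essentially the paper's own proof: both rely on the expansion of Proposition \ref{detM-expansion} to see that, for $\lambda\sim\sqrt{\eps}$, the sign of $-\det M(\eps,\delta,\lambda,0)$ is governed by the factor $q^2\lambda^2+\eps V_{1,1}$, then bracket its zero with two probe points of order $\sqrt{\eps}$ (the paper uses $\zeta_0\sqrt{\eps}$ and $\theta_0\sqrt{\eps}$ with $\zeta_0^2=|V_{1,1}|/2q^2$, $\theta_0^2=2|V_{1,1}|/q^2$, you use $\lambda_\eps^0(1\pm\gamma_0)$) and invoke the intermediate value theorem. The only point to make explicit, as the paper does, is that the constant $C_\flat$ in the validity region \eqref{smiley2} can be taken large enough that both probe points lie in that region.
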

\nit It follows that for $\eps V_{1,1}<0$,  the operator $H^{(\eps,\delta)}$ does not have a spectral gap about $E=E_\star^\eps$ along the zigzag slice. Referring to the middle panel of Figure \ref{fig:eps_V11_neg}, we see that, for $\delta\ne0$ and small, a {\it local in $\lambda$} gap opens, for $\lambda$ small, about the energy $E=E_\star^\eps$. But since the no-fold property is not satisfied (by Proposition \ref{Deq0}), this is not a true (global in $\lambda\in[-1/2,1/2]$) spectral gap. 

\begin{proof}[Proof of  Proposition \ref{Deq0}]

 Let $C_\flat$ denote the constant in Proposition 
  \ref{detM-expansion}. Note that $C_\flat$ was chosen to be sufficiently large in the proof of Proposition \ref{detM-expansion} and can be arranged to be taken so that $C_\flat>\theta_0$, where $\theta_0$ is defined by $\theta_0^2=2|V_{1,1}|/q^2$. Also, choose a constant $\zeta_0$ such that  $\zeta_0^2= |V_{1,1}|/2q^2$. Note $\zeta_0<\theta_0$;
 below we shall see why we make these choices.
  For $(\lambda,\delta)$ in the region \eqref{smiley4} we have:
 %
 \begin{equation}-\det M(\eps,\delta,\lambda,0) \ =\  \pi(\eps,\delta^2,\lambda) + o\left( \pit \right)=\pi(\eps,\delta^2,\lambda) + o\left( \eps^2 \right) , \label{-detMonsmiley}
 \end{equation}
 where 
 \begin{align*}
\pi(\eps,\delta^2,\lambda)&\equiv \left(q^2\lambda^2 + \eps V_{1,1}\right) \left( q^4\lambda^2 + \delta^2\ \left|W_{0,1}+W_{1,0}-W_{1,1} \right|^2 \right).
\end{align*}
We now show that there exists  $\lambda^{\eps,\delta}\in (\zeta_0\sqrt\eps,\theta_0\sqrt{\eps})$ such that 
$\pi(\eps,\delta^2,\lambda^{\eps,\delta}) = 0$.
Note first that  $\eps V_{1,1}<0$, $\eps^2\ll\eps$ and the choice of $\zeta_0$ implies, upon evaluation of $\pi(\eps,\delta^2,\lambda)$ at $\lambda=\zeta_0\sqrt\eps$, that:
\begin{align*}
\pi(\eps,\delta^2,\zeta_0\sqrt\eps) 
& =\left(q^2\zeta_0^2\eps + \eps V_{1,1}\right) \left( q^4\zeta_0^2\eps + \delta^2\ \left|W_{0,1}+W_{1,0}-W_{1,1} \right|^2 \right)<0
\end{align*}
and by \eqref{-detMonsmiley} $-\det M(\eps,\delta,\zeta_0\sqrt\eps,0)<0$. 
 On the other hand, the choice of $\theta_0$ implies, upon  evaluation at $\lambda=\theta_0\sqrt{\eps}$ that:
  \begin{align*}
 \pi(\eps,\delta^2,\theta_0\sqrt{\eps})  &=  
  \left(q^2\theta_0^2\eps + \eps V_{1,1}\right) \left( q^4\theta_0^2\eps + \delta^2\ \left|W_{0,1}+W_{1,0}-W_{1,1} \right|^2 \right)>0
  \end{align*}
  and hence, by \eqref{-detMonsmiley} $-\det M(\eps,\delta,\theta_0\sqrt\eps,0)>0$.
  Now $\det M(\eps,\delta^2,\lambda,0)$ is, for all $0<\eps<\eps_1$, a continuous function of $\lambda$. Hence, there exists $\lambda^{\eps,\delta}\in (\zeta_0\sqrt\eps,\theta_0\sqrt{\eps})$ such that $\det M(\eps,\delta,\lambda^{\eps,\delta},\mu=0)=0$. Hence, $E^{\eps,\delta}(\lambda^{\eps,\delta})=E_\star^\eps
\in\ L^2_{\kparpi}-\ {\rm spec}(H^{(\eps,\delta)})$.
This completes the proof of Proposition \ref{Deq0}.
\end{proof}

\appendix
\section{Evaluation of $\eps V_{1,1}$ for two examples}\label{V11-section}

Recall the bases $\{\bv_1,\bv_2\}$ of $\Lambda_h=\Z\bv_1\oplus\Z\bv_2$ and $\{\bk_1,\bk_2\}$ of $\Lambda_h^*=\Z\bk_1\oplus\Z\bk_2$, introduced in Section \ref{sec:honeycomb}. More generally, introduce a lattice spacing parameter, $a>0$, and define the scaled lattices: $\Lambda^{(a)}_h$ and $(\Lambda^{(a)}_h)^*$ with bases:
$
\bv^a_1 = a ( \frac{\sqrt{3}}{2}, \frac{1}{2} )^T, \ 
\bv^a_2 = a ( \frac{\sqrt{3}}{2}, -\frac{1}{2} )^T
$
and 
$
{\bf k}^a_1 = \frac{q}{a} ( \frac{1}{2}, \frac{\sqrt{3}}{2} )^T,\ 
{\bf k}^a_2 = \frac{q}{a} ( \frac{1}{2}, -\frac{\sqrt{3}}{2} )^T
$, 
where 
$q \equiv \frac{4\pi}{\sqrt{3}}$ and  $  \bk_l^{(a)}\cdot \bv_j^{(a)}=2\pi\delta_{lj}$.
Now  introduce the base points:
${\bf A}^{(a)}=(0,0) {\rm \ and\ } {\bf B}^{(a)} = a(\frac{1}{\sqrt{3}},0)$
 and the honeycomb structure with general lattice spacing parameter, $a>0$:
$
{\bf H}^{(a)} =  ({\bf A}^{(a)} + \Lambda^{(a)}_h) \cup   ( {\bf B}^{(a)} + \Lambda^{(a)}_h).
$
To be consistent with previous notation, we write:
$
 {\bf A}^{(1)}={\bf A}= (0,0),\ {\bf B}^{(1)} = {\bf B} = (\frac{1}{\sqrt{3}},0), \ 
 \Lambda_h^{(1)}=\Lambda,\ (\Lambda^{(1)}_h)^*=(\Lambda_h)^*, \ 
 {\bf H}^{(1)}={\bf H} .
$
 
Let $g_0(\bx)$ denote a smooth, real-valued, radially symmetric ($g_0(\bx)=g_0(|\bx|)$) and rapidly decaying function on $\R^2$.  
 Below we shall use the 2D Poisson Summation formula:
 \begin{equation*}
 \sum_{\bn\in\Z^2} f (\bx+\bn\vec\bv) = \frac{(2\pi)^2}{|\Omega_h|}\  \sum_{\bfm\in\Z^2}  \widehat{f} (\bfm\vec\bk) e^{i \bfm\vec\bk \cdot \bx}.
\end{equation*}

 We next present two examples: sums of translates of $g_0$ over the scaled triangular lattice, $\Lambda_h^{(a)}$, and  honeycomb structure, ${\bf H}^{(a)}$. In both cases, $
V^{(a)}_{1,1}$ is expressible in terms of $\widehat{g_0}\left(\frac{4\pi}{\sqrt3}\frac{1}{a}\right) $. Therefore, if $\widehat{g_0}(\xi)$ changes sign, then the sign of $V^{(a)}_{1,1}$ can be changed by varying the lattice constant, $a$. 

\subsection{\bf Example 1:\ Evaluation of $\eps V_{1,1}$ for $V$ equal to  a sum of translates over the scaled triangular lattice, $\Lambda^{(a)}_h$}

Define
$V(\bx;a) = \sum_{\bv\in \Lambda^{(a)}_h} g_0\left(\bx+\bv\right) .$
 The potential $V(\bx;a)$ is a honeycomb potential; it is $\Lambda^{(a)}_h-$ periodic, inversion symmetric and $\mathcal{R}-$ invariant with respect to  the origin of coordinates $\bx_0=0$.
 \medskip
 
\nit {\bf Claim 1.} 
 $
V^{(a)}_{1,1}= \frac{(2\pi)^2}{|\Omega_h|}\ \widehat{g_0}\left(\frac{\bk_1+\bk_2}{a}\right)= \frac{(2\pi)^2}{|\Omega_h|}\ 
\widehat{g_0}\left(\frac{4\pi}{\sqrt3}\frac{1}{a}\right) . 
$

\nit{\it Proof of Claim 1.}  
The Poisson summation gives
\begin{equation*}
V(\bx;a) = \frac{(2\pi)^2}{|\Omega_h|}\ \sum_{\bfm\in\Z^2}
  \widehat{ g_0}(\bfm\vec\bk^{(a)})
   e^{i (\bfm\vec\bk^{(a)})\cdot \bx} .
\end{equation*}
Recall also that
$\bfm\vec\bk^{(a)} = m_1\bk_1^{(a)}+m_2\bk_2^{(a)}=\frac{1}{a} (m_1\bk_1+m_2\bk_2).$
Claim 1 now follows from:
 \begin{equation*}
  V(\bx;a) =  \frac{(2\pi)^2}{|\Omega_h|}\ \sum_{\bfm\in\Z^2}\
 \widehat{g_0}\left(\frac{m_1\bk_1+m_2\bk_2}{a} \right)
 \  e^{i \bfm\vec\bk^{(a)}\cdot \bx} 
  \ = \sum_{\bfm\in\Z^2}\ V^{(a)}_\bfm\  e^{i \bfm\vec\bk^{(a)}\cdot \bx} .
  \end{equation*}
 
\subsection{
Evaluation of $\eps V_{1,1}$ for $V$ equal to a sum of translates over the  scaled honeycomb structure 
${\bf H}^{(a)}$ }
 
As earlier, take ${\bf A}^{(a)}=(0,0)^T$ and ${\bf B}^{(a)}=a(\frac{1}{\sqrt3},0)^T$.
The point at the center of hexagon, immediately northeast of ${\bf A}^{(a)}$ is ${\bf \tau_0}^{(a)}=\frac{a}{2}(\frac{1}{\sqrt{3}},1)^T$.
 Define
 \begin{equation*}
V(\bx;a) = \sum_{\bv\in \Lambda^{(a)}_h} g_0\left(\bx - {\bf A}^{(a)} + {\bf \tau_0}^{(a)} +\bv\right)
 \ +\  \sum_{\bv\in \Lambda^{(a)}_h} g_0\left(\bx-{\bf B}^{(a)} + {\bf \tau_0}^{(a)} +\bv\right) .
 \end{equation*}
$V(\bx;a)$ is a honeycomb potential; it is $\Lambda_h-$ periodic, inversion symmetric and $\mathcal{R}-$ invariant with respect to the origin coordinates, $\bx_0=0$, located at the center of a hexagon.
 \medskip

\nit {\bf Claim 2.} 
 $
V^{(a)}_{1,1}= -\frac{(2\pi)^2}{|\Omega_h|}\ \times\widehat{g_0}\left(\frac{4\pi}{\sqrt3}\frac{1}{a}\right) . 
$

\nit{\it Proof of Claim 2.}  
 Poisson summation yields
 {\footnotesize
 \begin{equation*}
 V(\bx;a) = \frac{(2\pi)^2}{|\Omega_h|} \sum_{\bfm\in\Z^2}
  \Big[ 
  \widehat{ g_0}\left( \cdot-{\bf A}^{(a)} + {\bf \tau_0}^{(a)}\right)(\bfm\vec\bk^{(a)})
+  \widehat{g_0}\left(\cdot-{\bf B}^{(a)} + {\bf \tau_0}^{(a)}\right)(\bfm\vec\bk^{(a)})
  \Big]
    e^{i (\bfm\vec\bk^{(a)})\cdot \bx} .
 \end{equation*}}
Now, $ \widehat{ g_0}\left( \cdot - {\bf A}^{(a)} + {\bf \tau_0}^{(a)}\right)(\xi)=\exp\left( i {\bf \xi}\cdot {\bf \tau_0}^{(a)} \right) \widehat{g_0}({\bf \xi})$ and
 $ \widehat{ g_0}\left( \cdot - {\bf B}^{(a)} + {\bf \tau_0}^{(a)}\right)(\xi)=\exp\left( i {\bf \xi}\cdot \left(-{\bf B}^{(a)} + {\bf \tau_0}^{(a)}\right) \right) \widehat{g_0}({\bf \xi})$, 
 and therefore
 \begin{align*}
  V(\bx;a) &=  \frac{(2\pi)^2}{|\Omega_h|}\ \sum_{\bfm\in\Z^2}
  \left[ e^{i \bfm\vec\bk^{(a)}\cdot {\bf \tau_0}^{(a)}} +  e^{i \bfm\vec\bk^{(a)}\cdot \left(-{\bf B}^{(a)} + {\bf \tau_0}^{(a)}\right)} \right]\ 
  \widehat{g_0}(\bfm\vec\bk^{(a)}) \ e^{i \bfm\vec\bk^{(a)}\cdot \bx} .
  \end{align*}
Noting that
$\bfm\vec\bk^{(a)}\cdot {\bf \tau_0}^{(a)} = \bfm\vec\bk\cdot {\bf \tau_0}$ and 
$\bfm\vec\bk^{(a)}\cdot (-{\bf B}^{(a)} + {\bf \tau_0}^{(a)}) = \bfm\vec\bk\cdot (-{\bf B} + {\bf \tau_0})$
(independent of the lattice constant, $a$), we have
\begin{align*}
\bfm\vec\bk^{(a)}\cdot {\bf \tau_0}^{(a)} &= (m_1\bk_1+m_2\bk_2)\cdot {\bf \tau_0} = \frac{2\pi}{3} (2m_1 - m_2), \quad \text{and} \\
\bfm\vec\bk^{(a)}\cdot (-{\bf B}^{(a)} + {\bf \tau_0}^{(a)}) &= (m_1\bk_1+m_2\bk_2)\cdot (-{\bf B} + {\bf \tau_0}) = -\frac{2\pi}{3} (2m_1 - m_2).
\end{align*}
 Recall again that $\bfm\vec\bk^{(a)} = m_1\bk_1^{(a)}+m_2\bk_2^{(a)}=\frac{1}{a} (m_1\bk_1+m_2\bk_2)$.
Hence,
 \begin{align*}
  V(\bx;a) &=  \frac{(2\pi)^2}{|\Omega_h|} \sum_{\bfm\in\Z^2} 
  \left[ e^{\frac{2\pi i}{3} (2m_1 - m_2)} +  e^{-\frac{2\pi i}{3} (2m_1 - m_2) } \right]
  \widehat{g_0}\left(\frac{m_1\bk_1+m_2\bk_2}{a} \right)
  e^{i \bfm\vec\bk^{(a)}\cdot \bx} \\
 &= \frac{(2\pi)^2}{|\Omega_h|} \sum_{\bfm\in\Z^2} 
   2 \cos\left( \frac{2\pi}{3} (2m_1 - m_2) \right)\ \widehat{g_0}\left(\frac{m_1\bk_1+m_2\bk_2}{a} \right)
 \  e^{i \bfm\vec\bk^{(a)}\cdot \bx} \\
  &= \sum_{\bfm\in\Z^2} V^{(a)}_\bfm\  e^{i \bfm\vec\bk^{(a)}\cdot \bx} .
  \end{align*}
Therefore, 
$V^{(a)}_{1,1}= \frac{(2\pi)^2}{|\Omega_h|}\ 2\cos\Big(\frac{2\pi}{3}\Big)\times  \widehat{g_0}\left(\frac{\bk_1+\bk_2}{a}\right)=
-\frac{(2\pi)^2}{|\Omega_h|}\ \times\widehat{g_0}\left(\frac{4\pi}{\sqrt3}\frac{1}{a}\right) . 
$

\bibliographystyle{amsplain}
\bibliography{honey-edge}

\end{document}